\begin{document}
\let\oldacl\addcontentsline
\renewcommand{\addcontentsline}[3]{}
\title{Phase shadow: A noise-tolerant path to global quantum property estimation}

 \date{\today}
\author{Qingyue Zhang~\orcidlink{0009-0000-5638-9746}}
\affiliation{Fudan University, Shanghai 200433, China}
\author{Dayue Qin~\orcidlink{0000-0001-9225-8129}}
\affiliation{Fudan University, Shanghai 200433, China}
\author{Zhou You~\orcidlink{0000-0002-6140-2092}}
\affiliation{Fudan University, Shanghai 200433, China}
\author{Feng Xu~\orcidlink{0000-0002-7015-1467}}
\email{fengxu@fudan.edu.cn}
\affiliation{Fudan University, Shanghai 200433, China}

\author{Jens Eisert~\orcidlink{0000-0003-3033-1292}}
\email{jense@zedat.fu-berlin.de}
\affiliation{Freie Universit\"at Berlin, 14195 Berlin, Germany}
\affiliation{Helmholtz-Zentrum Berlin f\"ur Materialien und Energie, 14109 Berlin, Germany}

\author{You Zhou~\orcidlink{0000-0003-0886-077X}}
\email{you\_zhou@fudan.edu.cn}
\affiliation{Fudan University, Shanghai 200433, China}

\begin{abstract}

Measuring global quantum properties—such as the fidelity to complex multipartite states—is both an essential and experimentally challenging task. Classical shadow estimation offers favorable sample complexity, but typically relies on many-qubit circuits that are difficult to realize on current platforms. We propose the robust phase shadow scheme, a measurement framework based on random circuits with controlled-$Z$ as the unique entangling gate type, tailored to architectures such as trapped ions and neutral atoms. Leveraging tensor diagrammatic reasoning, we rigorously analyze the induced circuit ensemble and show that phase shadows match the performance of full Clifford-based ones. Importantly, our approach supports a noise-robust extension via purely classical post-processing, enabling reliable estimation under gate-dependent noise where existing techniques often fail. Additionally, by exploiting structural properties of random stabilizer states, we design an efficient post-processing algorithm that resolves a key computational bottleneck in previous shadow protocols. Our results enhance the practicality of shadow-based techniques, providing a robust and scalable route for estimating global properties in noisy quantum systems.

\textbf{As compared with the previous version of this manuscript, this version introduces a novel Generalized RPS framework to estimate arbitrary gate-dependent noises for general Clifford circuits.}


\end{abstract}

\maketitle

Learning properties 
of quantum systems constitutes both a fundamental challenge and a 
key enabler 
for the development of quantum technologies~\cite{eisert2020quantum,kliesch2021theory}. 
Recent advances 
in randomized measurements~\cite{Efficient,elben2023randomized, cieslinski2024analysing} and shadow estimation~\cite{aaronson2019shadow,huang2020predicting} have enabled sample-efficient characterization of 
large quantum systems, bypassing the need for impractical full tomography. In 
this framework, a randomly chosen unitary is applied to the state, followed by a projective measurement. Repeated experiments yield unbiased estimators—classical shadows—that 
can be reused to efficiently predict many observables of interest.

The sampling performance of shadow estimation depends critically on two factors: the choice of the random unitary ensemble and the target observable~\cite{huang2020predicting,Hu2022Hamiltonian,hu2023classical,zhang2024minimal,park2023resource}. For local observables, such as few-body Pauli operators, independent single-qubit rotations (or their variants) are sufficient~\cite{zhang2021experimental,hadfield2022measurements,nguyen2022optimizing,wu2023overlapped,van2022hardware,liu2023group,gresch2025guaranteed}. In contrast, estimating global observables—such as fidelities to highly entangled many-body states—typically requires more structured ensembles, such as global Clifford circuits~\cite{huang2020predicting,helsen2023thrifty,mao2025qudit}. While Clifford measurements can significantly suppress statistical errors via entanglement, their practical use is hindered by the substantial experimental overhead of implementing a large number of many-qubit gates. On noisy quantum platforms, gate imperfections can accumulate and bias the outcome, offsetting the theoretical advantages and posing a major obstacle for reliable estimation. 

\begin{figure}
    \centering
    \includegraphics[width=\linewidth]{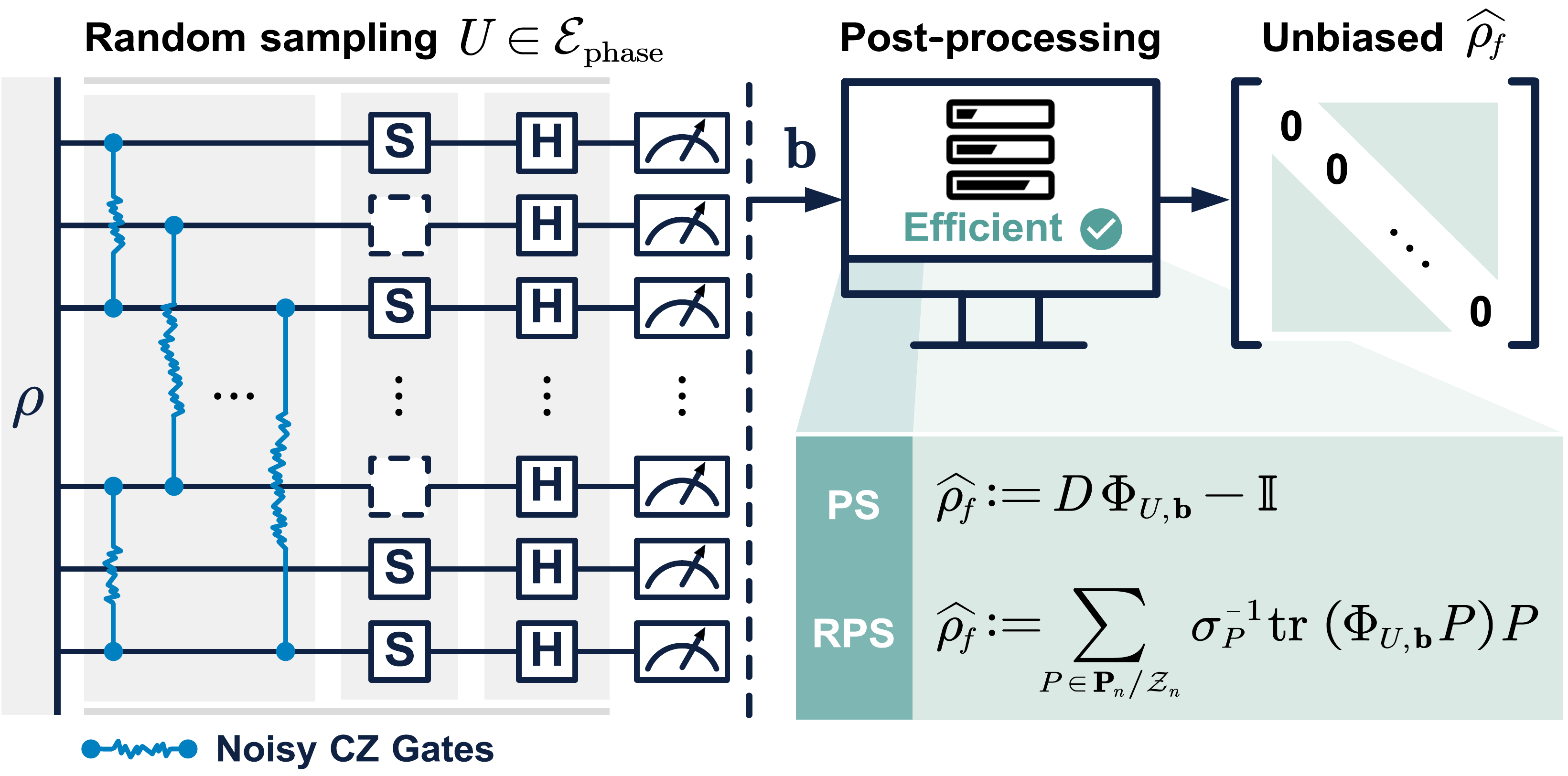}
    \caption{
    Outline of the (robust) phase shadow framework. The random phase circuit is in the form 
    ${CZ}$-${S}$-${H}$
before the final computational basis measurement, where the layer of two-qubit ${CZ}$ gates may suffer gate-dependent noise. For both noiseless (\cref{th:PshadowMain}) and noisy scenarios (\cref{th:PshadowNoise}), via proper and efficient post-processing, 
our framework can return unbiased estimator of the off-diagonal part of $\rho$, say $\rho_f$, with favorable statistical performance.
    }
    \label{fig:outline1}
\end{figure}


While several approaches have been proposed to improve the practicality of shadow estimation,
key limitations of the scheme remain. Some approximate global Clifford ensembles with low-depth circuits~\cite{bertoni2024shallow, akhtar2023scalable, schuster2024random}, while others mitigate noise via calibration and sophisticated classical post-processing~\cite{chen2021robust, koh2022classical, onorati2024noise}, or combine both strategies~\cite{hu2025demonstration, farias2024robust}. These methods often rely on idealized noise models—particularly gate independence—that are not always guaranteed or expected to hold in 
realistic settings~\cite{helsen2022general, liu2024group}. Moreover, post-processing in shallow-circuit schemes typically depends on matrix-product-state heuristics, which fail for generic global observables. Indeed, biases may persist even under these strategies~\cite{schuster2024random,cioli2024approximate,chen2021robust,brieger2025stability}. Thus, a practically robust and efficient protocol for estimating global properties under realistic quantum noise remains elusive~\cite{brieger2025stability}.

In this work, we propose a systematic framework referred to as \emph{robust phase shadow} (RPS) based on random circuits composed of randomly placed controlled-${Z}$ (${CZ}$) gates and single-qubit measurements (\cref{fig:outline1}). As the sole entangling gate, ${CZ}$ is native to a number of platforms like those of 
cold atoms in controlled collisions \cite{Mandel-Nature-2003}, Rydberg atoms~\cite{evered2023high} and trapped ions~\cite{figgatt2019parallel}, enabling hardware-efficient implementation while naturally incorporating characteristic noise features of these architectures. 
Exploiting tensor diagrammatic reasoning, we construct an unbiased estimator of the target state with rigorous performance guarantees, and systematically incorporate gate-dependent noise into the analysis—allowing us to develop the robust version that remains accurate under realistic experimental conditions. Additionally, we provide an efficient post-processing algorithm for 
estimating fidelities to arbitrary stabilizer states, 
without relying on restrictive assumptions~\cite{bertoni2024shallow,akhtar2023scalable}. These features make RPS a powerful tool for benchmarking quantum platforms and a cornerstone for 
robust quantum learning and phase-gate-based algorithms.

Beyond the RPS protocol mentioned above, which primarily focuses on the dominant Pauli-$Z$-type errors of the diagonal circuits, we also develop a framework termed generalized robust phase shadow, which can handle arbitrary gate-dependent  noises. This framework is applicable not only to phase circuits, but also to general Clifford circuits, such as random Clifford circuits and the shallow invariants. To provide theoretical guarantees, we propose a strictly unbiased estimator, accompanied by explicit variance bounds and efficient post-processing.

\section*{Results}
\subsection*{Statistical property of phase circuits} 
At the heart of shadow estimation lies the choice of the random unitary ensemble. Before presenting our main framework, we first define the ensemble employed in this work and analyze its key statistical properties, with a particular focus on deriving the second and third moments.
Consider an $n$-qubit 
system
equipped with a Hilbert space $\mc{H}_D=\mc{H}_2^{\otimes n}$ and denote with $\{\ket{\mb{b}}\}$ the computational basis specified by the binary string $\mb{b}\in\{0,1\}^n$.
The underlying quantum circuit structure of the random unitary takes the ${CZ-S-H}$ form, where ${CZ}=\operatorname{diag(1, 1, 1, -1)}$, ${S} = \operatorname{diag}(1, i)$ and ${H}$ is the 
Hadamard gate. As shown in \cref{fig:outline1}, the 2-qubit ${CZ}$ and 1-qubit ${S}$  gate are randomly chosen, and the final ${H}$  is fixed, so 
that   $U={H}^{\otimes n}U_A\in \mc{E}_{\text{phase}}$
with 
\begin{equation}\label{Eq:diagonalcircuits}
U_A=\prod_{0\leq i<j\leq n-1}{CZ}_{i,j}^{A_{i,j}}\prod_{k\in{[n]}}{S}_k^{A_{k,k}},
\end{equation} 
where $A$ is a random symmetric binary matrix encoding the interaction graph, with each entry $A_{i,j} \in \{0,1\},i\leq j$ \emph{independently and identically distributed} (i.i.d.) with equal probability. Note $U_A$ only contains diagonal gates that would change the phase on computational bases~\cite{nakata2014generating,nechita2021graphical,liu2024predicting}.

To proceed, 
let $\Phi_{U,\mb{b}}:= U^{\dag} \ket{\mb{b}}\bra{\mb{b}}U$, 
and define the $m$-th moment function of an unitary ensemble $\mc{E}$ as $\mb{M}_{\mc{E}}^{(m)} := D^{-1}\mbb{E}_{U\sim \mc{E}} \sum_{\mb{b}} \Phi_{U,\mb{b}}^{\otimes m}$. For this, one finds the following properties for the introduced phase circuit ensemble $\mc{E}_{\text{phase}}$. 

\begin{prop}[2nd and 3rd moments]\label{prop:23moment}
The $m=2,3$-th moment functions of the diagonal ensemble $\mc{E}_{{\text{phase}}}$ defined around \cref{Eq:diagonalcircuits} reads
\begin{equation}\label{eq:23moment}
    \mb{M}_{\mc{E}_{\text{phase}}}^{(m)}: = D^{-m} \bigcup_{\pi \in S_m}V_n(\pi).
    \end{equation}
\end{prop}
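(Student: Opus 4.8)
The plan is to reduce the moment operator to a composition of commuting one‑parameter averages and then evaluate the resulting tensor network diagrammatically.

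\emph{Reduction.} First I would peel off the fixed Hadamard layer, writing $\Phi_{U,\mathbf{b}}^{\otimes m}=(U_A^{\dagger})^{\otimes m}(P_{\mathbf{b}})^{\otimes m}U_A^{\otimes m}$ with $P_{\mathbf{b}}=H^{\otimes n}\ket{\mathbf{b}}\bra{\mathbf{b}}H^{\otimes n}$, and regroup the $m$ replica copies qubit by qubit so that $\mathcal{H}_D^{\otimes m}\cong(\mathcal{H}_2^{\otimes m})^{\otimes n}$; then $\sum_{\mathbf{b}}(P_{\mathbf{b}})^{\otimes m}=\bigotimes_{k=1}^{n}X_0$ with $X_0:=(\ket{+}\bra{+})^{\otimes m}+(\ket{-}\bra{-})^{\otimes m}$. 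Because every gate in $U_A$ is diagonal and all of them commute, the superoperator $X\mapsto\mathbb{E}_A\,U_A^{\otimes m}X(U_A^{\dagger})^{\otimes m}$ factorizes into mutually commuting pieces, one per entry of $A$: $\mathcal{S}_k=\tfrac12(\mathrm{id}+\mathrm{Ad}_{(S_k^{\dagger})^{\otimes m}})$ for diagonal entries and $\mathcal{C}_{ij}=\tfrac12(\mathrm{id}+\mathrm{Ad}_{CZ_{ij}^{\otimes m}})$ for off‑diagonal ones. Applying the $\mathcal{S}$‑layer maps each $X_0$ to $R:=\tfrac12\sum_{v\in\{I,Z,S,S^{\dagger}\}}(v\ket{+}\bra{+}v^{\dagger})^{\otimes m}$, built from the four equatorial single‑qubit stabilizer states, leaving
\[
\mathbf{M}^{(m)}_{\mathcal{E}_{\text{phase}}}=\frac{1}{D\,2^{\binom{n}{2}}}\sum_{G}U_G^{\otimes m}\,R^{\otimes n}\,U_G^{\otimes m},\qquad U_G=\prod_{(i,j)\in G}CZ_{ij},
\]
with the sum over all graphs $G$ on $n$ vertices.

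\emph{Diagrammatic evaluation.} The core step is to carry out this graph sum, and here the tensor‑diagram calculus of the paper is the natural tool. I would expand $R$ in the basis of the commutant of the diagonal single‑qubit Clifford group acting on the $m$ replica legs --- a basis of ``phase patterns'' in which $CZ^{\otimes m}$ acts by conjugation as a permutation --- and note the structural fact that for $m\le 3$ these patterns biject with $S_m$ together with the few extra ``diagonal'' objects that the right‑hand side $\bigcup_{\pi\in S_m}V_n(\pi)$ assembles. Drawing each $CZ$ as a phase gadget, conjugation by $CZ^{\otimes m}$ fixes every computational‑basis projector and, more generally, preserves this pattern basis; summing $\tfrac12(\mathrm{id}+\mathrm{Ad}_{CZ^{\otimes m}})$ over all edges of all graphs then reorganizes the network into a single graph‑independent operator, which one identifies with $D^{-m}\bigcup_{\pi\in S_m}V_n(\pi)$, the prefactors conspiring to give $D^{-m}$.

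\emph{Main obstacle.} The hard part is exactly this last step: it requires a closed description of how $\mathrm{Ad}_{CZ^{\otimes m}}$ acts on the per‑qubit pattern basis, followed by the combinatorial statement that for $m\le 3$ the edge sums produce no operators outside the claimed family --- equivalently, that the only ``globally consistent'' joint replica patterns are those carried by $S_m$. This is intrinsically $m$‑specific (it breaks at $m=4$, paralleling the fact that the stabilizer/Clifford structure is a $3$‑ but not a $4$‑design), so I would settle $m=2$ first, where only the identity, the swap, and one extra diagonal operator occur (already visible at $n=1$, where $\mathbf{M}^{(2)}=\tfrac14(I+\mathrm{SWAP}-\sum_b\ket{bb}\bra{bb})$), and then treat $m=3$ with the same pattern‑propagation lemma. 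As an independent check one can match the overlaps $\mathrm{Tr}[\mathbf{M}^{(m)}(O_1\otimes\cdots\otimes O_m)]$ against a spanning family of test operators.
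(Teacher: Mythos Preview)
Your reduction is correct and is precisely the paper's: peel off $H^{\otimes n}$, factor the diagonal twirl into commuting per-qubit $S$-averages $\mathcal{S}_k$ and per-pair $CZ$-averages $\mathcal{C}_{ij}$, and note that each single-qubit factor becomes $R=\tfrac12(\mathbb{I}_4+\mathbb{S}_2-\Delta_2)$ for $m=2$ (your $n=1$ check is exactly this). The paper phrases the same thing in the computational basis, recording the $S$-layer as a constraint $C_1$ (namely $x_l+w_l\equiv y_l+s_l\bmod 4$) and the $CZ$-layer as a constraint $C_2$.

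The gap is in the step you yourself flag as the obstacle, and two points there need sharpening. First, $\mathrm{Ad}_{CZ_{ij}^{\otimes m}}$ does not \emph{permute} your pattern basis: because $CZ$ is diagonal it multiplies each computational-basis operator by a sign, so every $\mathcal{C}_{ij}$ is a projector that either fixes or annihilates a given pattern. Second --- and this is the whole content of the proposition --- you must identify \emph{which} $n$-qubit tensors survive all the $\mathcal{C}_{ij}$ simultaneously, and your ``pattern-propagation lemma'' is only named, not stated or proved. The paper's argument is fully explicit here. For $m=2$ it groups the six $C_1$-survivors into the three mutually disjoint blocks $\{\Delta_2,\ \mathbb{I}_4-\Delta_2,\ \mathbb{S}_2-\Delta_2\}$ and checks by a one-line phase computation that $\mathcal{C}_{ij}$ annihilates exactly the mixed pairs $(\mathbb{I}_4-\Delta_2)\otimes(\mathbb{S}_2-\Delta_2)$; this collapses the trinomial expansion of $R^{\otimes n}$ to the two binomials $(\Delta_2+(\mathbb{I}_4-\Delta_2))^{\otimes n}+(\Delta_2+(\mathbb{S}_2-\Delta_2))^{\otimes n}-\Delta_2^{\otimes n}=\mathbb{I}_D^{\otimes 2}\cup\mathbb{S}_D$. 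For $m=3$ it shows that the $CZ$-constraint on qubits $i,j$ holds iff the two single-qubit operators share a common $V_1(\pi)$, which forces one global $\pi\in S_3$ across all $n$ sites. That sign bookkeeping is the proof, and it is absent from your outline. Two minor corrections: the union $\bigcup_\pi V_n(\pi)$ contains no ``extra diagonal objects'' --- the $\Delta$-terms appear only when you rewrite the bitwise OR as an inclusion--exclusion sum; and the $m\le 3$ restriction here is a feature of the modular arithmetic in $C_1$ (the $\bmod\,4$ constraint from $\{I,S,Z,S^\dagger\}$), not of Clifford $3$-design structure, so that analogy is a red herring.
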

Here, $\pi$ is the element of the $m$-th order symmetric group $S_m$, with its unitary representation on $\mc{H}_D^{\otimes m}$ as $V_n(\pi)$ (i.e., $V_n(\pi)\bigotimes_{i=1}^{m} \ket{\mb{b}_{i}}=\bigotimes_{i=1}^{m} \ket{\mb{b}_{\pi^{-1}(i)}}$~\cite{mele2024introduction}). The union operation here is defined as the bitwise OR operation for operators written in the computational basis.
The full proof of \cref{prop:23moment} is left to Supplementary Note 3, which also presents the generalization to $m>3$.  
\begin{figure*}
    \centering
    \includegraphics[width=\linewidth]{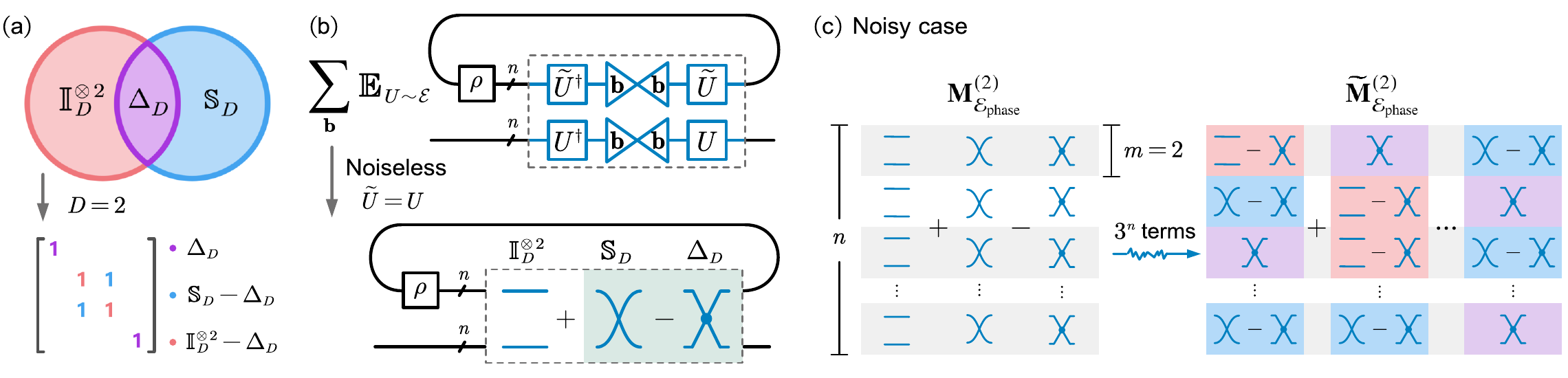}
    \caption{
    (a) An illustration of \cref{eq:23moment} of $m=2$ case. The union of the two permutation unitaries 
    $\mbb{I}_D^{\otimes 2}\cup \mbb{S}_D=\mbb{I}_D^{\otimes 2}+ \mbb{S}_D-\Delta_D$, with the parity projector $\Delta_D$ being their intersection, say $\mbb{I}_D^{\otimes 2}\cap \mbb{S}_D=\Delta_D$. For $D=2$, the corresponding matrix representation is also shown. 
    (b) A tensor diagram illustration of the (noiseless) measurement channel in \cref{Eq:idealchannel}. By 
    tracing the first copy, one obtains the output proportional to $\id_D+\rho_f$. (c) In the presence of noise, the moment function expands from a compact form (left) to a sum of $3^n$ tensor-product terms (right), each built from qubit-level operators ${\Delta_2,(\sw_2 - \Delta_2),(\mathbb{I}_2^{\otimes 2} - \Delta_2)}$, as shown by the colored blocks.
}
    \label{fig:2copy}
\end{figure*}
For comparison, the moment of the Haar ensemble (or any unitary $m$-design) reads 
$\mb{M}^{(m)}_{\mc{E}_\text{Haar}}=\mc{O}(D^{-m})\sum_{\pi \in S_m}V_n(\pi)$ (see, e.g., Ref.~\cite{mele2024introduction}).
Neglecting for a moment the coefficients of the same order of $D$, the essential difference lies in the operations `$\bigcup$' and `$\sum$' to organize all permutation unitaries. For instance of $m=2$, \cref{eq:23moment} reads $\mb{M}_{\mc{E}_{\text{phase}}}^{(2)}=D^{-2}(\mbb{I}_D^{\otimes 2}\cup \mbb{S}_D)=D^{-2}(\mbb{I}_D^{\otimes 2}+\mbb{S}_D-\Delta_D)$, with two permutation unitaries, the identity $\mbb{I}_D^{\otimes 2}$ 
and the swap operator $\mbb{S}_D$. Here, 
$\Delta_D=\Delta_2^{\otimes n}$ the 
parity projector with $\Delta_2=\ketbra{0}{0}^{\otimes 2}+\ketbra{1}{1}^{\otimes 2}$. The minus sign of $\Delta_D$ reflects the union operation since it is shared by $\mbb{I}_D^{\otimes 2}$ and $\mbb{S}_D$, as shown in \cref{fig:2copy}(a).
By contrast, the counterpart of the 
Haar ensemble (or the Clifford) shows 
$\mb{M}^{(2)}_{\mc{E}_\text{Haar}}=2(\mbb{I}_D^{\otimes 2}+\mbb{S}_D)/(D(D+1))$~\cite{mele2024introduction}. 


We remark that there are previous works~\cite{nakata2014generating,bremner2016average,nechita2021graphical} studying the statistical properties of diagonal quantum circuits. Our result, however, provides the most compact circuit implementation using only random \( {CZ} \) and ${S}$  gates. Notably, the union operation `$\bigcup$' introduced in \cref{eq:23moment} clearly highlights both the connection and distinction between random diagonal circuits and their Haar-random counterpart, and plays a crucial role in the development of the subsequent shadow estimation protocol. In particular, the approach developed in the proof 
(presented in the Methods section), grounded in this structural insight, enables the construction of unbiased estimators, effective variance control and a natural extension to general noisy scenarios.

\subsection*{Phase shadow framework}
Equipped with phase circuit ensemble $\mc{E}_{\text{phase}}$, we can introduce the \emph{phase shadow} (PS) 
framework first for the noiseless scenario. 
Shadow estimation is a quantum-classical hybrid process. On the quantum side, a random unitary $U\in\mc{E}$ is applied to an unknown quantum state $\rho\mapsto U\rho U^{\dagger}$ and one measures it in the computational basis to get the
classical outcome word $\mb{b}$, 
with conditional probability $\Pr(\mb{b}|U)=\langle \mb{b}| U\rho U^{\dag} |\mb{b}\rangle=\tr(\rho \Phi_{U,\mb{b}})$. On the classical side, one `prepares' a quantum snapshot $\Phi_{U,\mb{b}}$ on the classical computer.

Taking the expectation on both $U$ and $\mb{b}$, the whole process could be effectively written as a channel 
\begin{equation}\label{Eq:idealchannel}
\begin{aligned}
\mc{M}_{\mc{E}}(\rho):
    &=\mbb{E}_{U\sim\mc{E},\mb{b}}\ \Phi_{U,\mb{b}}=\sum_{\mb{b}}\mbb{E}_{U\sim\mc{E}} \tr_{1}(\rho\otimes \id \text{ }\Phi_{U,\mb{b}}^{\otimes 2})\\
    &=D \tr_1(\rho\otimes \id\ \mb{M}_{\mc{E}}^{(2)}),
\end{aligned}
\end{equation}
where $\tr_{1}$ denotes the (partial) trace out the first copy, and in the second line we relate 
it to 
the predefined moment 
function.  

At first glance, one might expect that replacing the full Clifford ensemble~\cite{huang2020predicting} in \cref{Eq:idealchannel} with \(\mathcal{E}_{\text{phase}}\) would introduce considerable bias, since \(\mathcal{E}_{\text{phase}}\) is only an approximate design and deviates significantly 
from the full Clifford group~\cite{nakata2014generating}. 
However, we show that an \textit{unbiased estimator} is still achievable—not 
for the entire quantum state \(\rho\), but specifically for its \textit{off-diagonal component}, 
denoted by \(\rho_f\).

As illustrated  in \cref{fig:2copy}(b), 
by using \cref{prop:23moment} with $m=2$, one has the output of the channel $\mc{M}_{\mc{E}_{\text{phase}}}(\rho)=D^{-1}(\id_D+\rho_f)$, where $\rho_f=\rho-\rho_d$ and the diagonal part is $\rho_d=\tr_1(\Delta_D\rho)=\sum_{\mb{b}}\ketbra{\mb{b}}{\mb{b}}\rho\ketbra{\mb{b}}{\mb{b}}$. 
While it is true that \(\mathcal{E}_{\text{phase}}\) is not tomographically complete—the channel output lacks information about the diagonal part \(\rho_d\)—this is not a fundamental limitation. The diagonal component \(\rho_d\) can be efficiently estimated via direct measurement in the computational basis. 
This leads to a practical unbiased reconstruction protocol, as formalized in the theorem below.


\begin{theorem}[Unbiased and efficient recovery with phase shadow estimation]\label{th:PshadowMain}
Suppose that one conducts phase shadow estimation using the random unitary ensemble 
$\mc{E}_{{\text{phase}}}$, the unbiased estimator of $\rho_f$ in a single-shot 
gives 
\begin{equation}\label{eq:shadow2}
\widehat{\rho_f}:=D\Phi_{U,\bb}-\id,
\end{equation}
with $U$ the applied 
unitary and $\bb$ the measurement result,
such that $\mathbb{E}_{\{U,\mb{b}\}}\widehat{\rho_f} = \rho_f$. 
The expectation value $o_f=\tr(O \rho_f)$ can be estimated for observables $O$ as $\widehat{o_f}=\tr(O\widehat{\rho_f})$
with 
\begin{equation}\label{eq:varm}
\begin{aligned}
\text{Var}(\widehat{o_f}) \leq 3\|O_f\|_2^2
\end{aligned}
\end{equation}
being its variance 
and $O_f$  the off-diagonal part of $O$. 
\end{theorem}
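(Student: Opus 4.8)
The plan is to treat the two claims separately: unbiasedness follows at once from the second moment in \cref{prop:23moment}, and the variance bound from the third moment. For unbiasedness, by \cref{prop:23moment} with $m=2$ (equivalently the channel identity around \cref{Eq:idealchannel}) one has $\mbb{E}_{\{U,\bb\}}\Phi_{U,\bb}=\mc{M}_{\mc{E}_{\text{phase}}}(\rho)=D^{-1}(\id+\rho_f)$, so that $\mbb{E}\,\widehat{\rho_f}=D\cdot D^{-1}(\id+\rho_f)-\id=\rho_f$, and hence $\mbb{E}\,\widehat{o_f}=\tr(O\,\mbb{E}\,\widehat{\rho_f})=\tr(O\rho_f)=o_f$.

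For the variance I would first reduce the whole analysis to the off-diagonal part $O_f$ of $O$. The key observation is that for every $U=H^{\otimes n}U_A\in\mc{E}_{\text{phase}}$ and every outcome $\bb$, the snapshot $\Phi_{U,\bb}=U_A^{\dag}H^{\otimes n}\ketbra{\bb}{\bb}H^{\otimes n}U_A$ has diagonal part exactly $D^{-1}\id$: since $U_A$ is diagonal and $H^{\otimes n}\ket{\bb}$ has all amplitudes of modulus $D^{-1/2}$, every diagonal entry of $\Phi_{U,\bb}$ equals $D^{-1}$, independently of $A$ and $\bb$. Writing $O=O_d+O_f$ with $O_d$ the diagonal part, this gives $\tr(O_d\widehat{\rho_f})=D\tr(O_d\Phi_{U,\bb})-\tr(O_d)=0$ and, since $\tr O_f=0$, the exact single-shot identity $\widehat{o_f}=\tr(O\widehat{\rho_f})=D\,\tr(O_f\Phi_{U,\bb})$.

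Next I would square this identity and average over $U\sim\mc{E}_{\text{phase}}$ and over $\bb$ (weighted by $\Pr(\bb|U)=\tr(\rho\Phi_{U,\bb})$), which turns the second moment of the estimator into a third moment of the ensemble:
\begin{equation*}
\mbb{E}[\widehat{o_f}^2]=D^3\,\tr\!\big((\rho\otimes O_f\otimes O_f)\,\mb{M}^{(3)}_{\mc{E}_{\text{phase}}}\big)=\tr\!\Big((\rho\otimes O_f\otimes O_f)\textstyle\bigcup_{\pi\in S_3}V_n(\pi)\Big)
\end{equation*}
by \cref{prop:23moment} with $m=3$. The heart of the proof is evaluating this trace. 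Using that $\bigcup_{\pi\in S_3}V_n(\pi)$ is the $0/1$ operator with a $1$ at $(\mb{X},\mb{Y})$ precisely when the ordered triple of basis strings $\mb{X}$ is a coordinate permutation of $\mb{Y}$, I would split the sum according to which of the three components of $\mb{Y}$ coincide (all distinct, exactly two equal, all equal). Because $O_f=O_f^{\dag}$ has vanishing diagonal entries and $\tr O_f=0$, the vast majority of terms drop out, and after routine bookkeeping — using $\tr\rho=1$ to collapse the remaining $\mb{y}_1$-sum into a Frobenius norm — one is left with
\begin{equation*}
\mbb{E}[\widehat{o_f}^2]=\|O_f\|_2^2+2\,\tr(\rho_f O_f^2).
\end{equation*}
(Equivalently one may first compute the simpler permutation sum $\sum_{\pi\in S_3}\tr((\rho\otimes O_f\otimes O_f)V_n(\pi))=\|O_f\|_2^2+2\tr(\rho O_f^2)$ and then subtract the over-counting that the union removes, which equals $2\tr(\rho_d O_f^2)$; this is the same finite computation either way.)

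To finish, write $\tr(\rho_f O_f^2)=\tr(\rho O_f^2)-\tr(\rho_d O_f^2)$. Since $O_f^2\succeq 0$ while $\rho$ and its diagonal part $\rho_d$ are positive semidefinite with unit trace, $\tr(\rho_d O_f^2)\ge 0$ and $\tr(\rho O_f^2)\le\|O_f^2\|_\infty=\|O_f\|_\infty^2\le\|O_f\|_2^2$, so $\tr(\rho_f O_f^2)\le\|O_f\|_2^2$; combined with $\mbb{E}\,\widehat{o_f}=o_f$ this yields $\text{Var}(\widehat{o_f})\le\mbb{E}[\widehat{o_f}^2]\le 3\|O_f\|_2^2$. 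I expect the main obstacle to be the third step: the bitwise-``OR'' union defining $\mb{M}^{(3)}_{\mc{E}_{\text{phase}}}$ — the very feature that separates the phase ensemble from a genuine unitary $3$-design — makes this variance calculation more delicate than the standard Clifford-shadow one, since one must keep careful track of which coincidence patterns of the triple survive the off-diagonal, traceless structure of $O_f$ and with what multiplicity each is counted inside the union.
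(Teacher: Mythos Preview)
Your proposal is correct and follows essentially the same route as the paper: both establish unbiasedness from the $m=2$ moment, reduce $\widehat{o_f}$ to $D\tr(O_f\Phi_{U,\bb})$, express $\mbb{E}[\widehat{o_f}^2]$ via the $m=3$ moment $\bigcup_{\pi\in S_3}V_n(\pi)$, and arrive at the exact identity $\mbb{E}[\widehat{o_f}^2]=\|O_f\|_2^2+2\tr(\rho O_f^2)-2\tr(\rho_d O_f^2)$ (your $\|O_f\|_2^2+2\tr(\rho_f O_f^2)$ is the same expression) before bounding by $3\|O_f\|_2^2$. Your explicit observation that every snapshot $\Phi_{U,\bb}$ has diagonal part exactly $D^{-1}\id$ (hence $\tr(O_d\widehat{\rho_f})=0$ deterministically) is a clean justification of a step the paper leaves implicit, and your ``coincidence-pattern'' bookkeeping is equivalent to the paper's tensor-diagrammatic inclusion--exclusion on $V_n(\pi_{(23)})\cup V_n(\pi_{(123)})\cup V_n(\pi_{(132)})$.
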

Here, $\|A\|_2\coloneq{\tr(AA^{\dag})}^{1/2}$ denotes the 
Frobenius norm. 
The variance bound stated in \cref{eq:varm} is derived from the \(m=3\) case of \cref{prop:23moment}. A sketch of the proof is provided in the Methods section, where we present an elegant derivation using tensor diagrammatic reasoning, grounded in the union (and intersection) operation of tensor indices introduced before.

For $\rho_d$, the computational basis result $\ketbra{\bb}$ can faithfully act as an estimator $\widehat{\rho_d}=\ketbra{\bb}{\bb}$. The corresponding variance can be bounded by the spectral norm $\text{Var}(\widehat{o_d})\leq \|O_d\|_{\infty}^2$ with $\widehat{o_d}=\tr(O\widehat{\rho_d})$. 
Suppose that one repeats the PS protocol for $N_f$ rounds to get the shadow set $\{\rho_f^{(i)}\}$, and the computational-basis measurement for $N_d$ rounds to get $\{\rho_d^{(j)}\}$, the full estimator of $\rho$ thus delivers $\widehat{\rho}=N_f^{-1}\sum_i \rho_f^{(i)}+N_d^{-1}\sum_j \rho_d^{(j)}$. At the same time, $\widehat{o}=\tr(O\widehat{\rho})$ has the variance $\text{Var}(\widehat{o})= N_f^{-1}\text{Var}(\widehat{o_f})+N_d^{-1}\text{Var}(\widehat{o_d})$
\comments{
\begin{equation}\label{eq:varMfMd}
\begin{aligned}
\text{Var}(\widehat{o})= N_f^{-1}\text{Var}(\widehat{o_f})+N_d^{-1}\text{Var}(\widehat{o_d})\\
\leq 3N_f^{-1}\|O_f\|^2_2+N_d^{-1}\|O\|_{\infty}^2,
\end{aligned}
\end{equation}
}
under the total measurement time $N=N_f+N_d$. By properly choosing $N_f/N_d$, one further has $\text{Var}(\widehat{o})\leq 4N^{-1}\|O\|^2_2$, with the detailed analysis left to the Supplementary Note 3. The bound is almost the same to the previous result $3N^{-1}\|O\|^2_2$ resorting to the full Clifford group~\cite{huang2020predicting}.

As such, PS can predict many nonlocal observables $\{O_k\}_{k=1}^L$ with a total sample complexity $N\sim \log(L)\max \text{Var}(\widehat{o})$ using the collected shadow sets if $\|O\|^2_2$ is bounded, just like the original shadow protocol~\cite{huang2020predicting}. One important special case is constituted by the fidelity estimation of many multipartite entangled states $O=\Psi:=\ketbra{\Psi}$, with $\|\Psi\|^2_2=1$. We also numerically confirm such an advantage for the GHZ state fidelity in \cref{fig:Hadamard-bias}(a).
Note that Ref.~\cite{park2023resource} employs equatorial stabilizer measurements for shadow estimation, resulting in a circuit structure similar to ours. However, our diagrammatic analysis of the phase circuit ensemble provides an explicit expression for an unbiased estimator of \(\rho_f\) with significantly simpler post-processing in  \cref{eq:shadow2} and a much tighter variance bound in \cref{eq:varm}. 
Furthermore, leveraging the introduced diagrammatic framework, our PS admits a natural extension to realistic noisy scenarios, making it both theoretically insightful and practically robust.

\begin{figure}
    \centering
    \includegraphics[width=\linewidth]{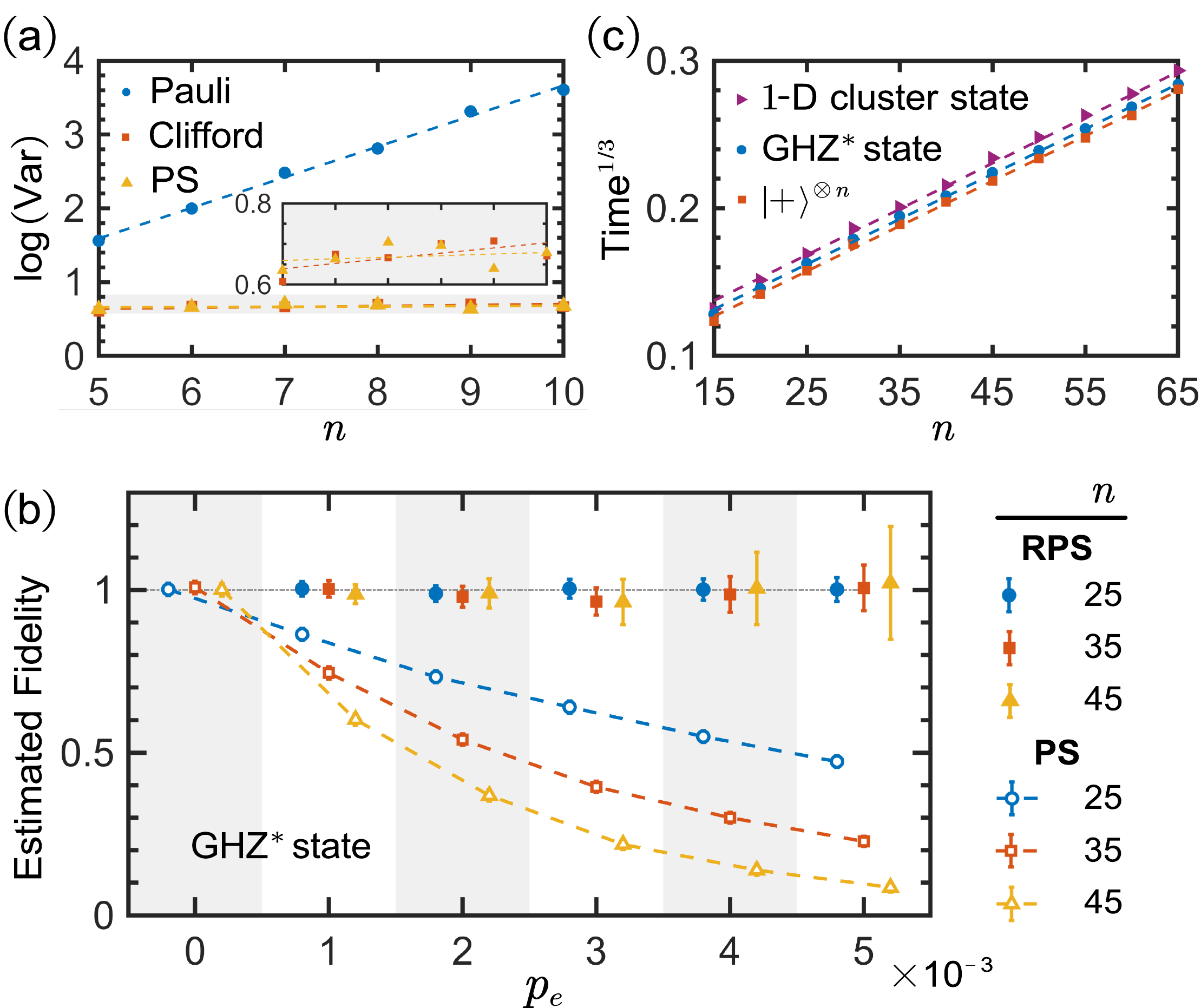}
    \caption{(a) Variance of fidelity estimation for the GHZ* state (noiseless case), comparing Pauli, Clifford measurements, and PS~\cite{huang2020predicting} under different qubit number $n$ of the system. The GHZ* state is locally equivalent to the \emph{canonical} GHZ state~\cite{zhou2019detecting}. (b) Fidelity estimation of the GHZ* state using  RPS and PS measurements for $n \in \{25,35,45\}$ qubits under different noise level $p_e$ ($N = 50,\!000$ snapshots for each data point).  (c) 
    Cubic root of the post-processing time for fidelity estimation of three graph states using RPS ($n = 15$--$65$, $N = 10,\!000$). The details on the numerical simulations are left to Supplementary Note 8.}
    \label{fig:Hadamard-bias}
\end{figure}

\subsection*{Robust estimation under realistic gate noise}
Noise poses a major challenge in implementing any shadow estimation protocol, particularly in Clifford measurement schemes that rely heavily on two-qubit gates. Indeed, gate-dependent noise remains a significant obstacle in notions of quantum learning~\cite{chen2021robust,brieger2025stability}. We show that the proposed PS can be further enhanced to mitigate such noise in practical scenarios, namely \emph{robust phase shadow} (RPS).

Note that the layer of random ${S}$  gates and the final fixed  ${H}$ gates in \cref{fig:outline1} consist solely of single-qubit operations, which is more accurate compared to the 2-qubit gates. Indeed, they serve to select the measurement basis as either $X$ or  $Y$, and several methods exist for mitigating the noise effects in the final readout~\cite{gluza2020quantum}. 
Therefore, our focus here is on the noise characteristics of the complex ${CZ}$ layer. The noisy ${CZ}$ gate acting on a qubit pair $(i,j)$ can reasonably realistically be modeled as  
    $\widetilde{{CZ}}_{i,j}:={CZ}_{i,j}\times{ZZ}( \theta_{i,j})$, with ${ZZ}( \theta_{i,j})=\mathrm{exp}(-{i} \frac{\theta_{i,j}}{2} {Z}_i {Z}_j)$
and each $\theta_{i,j} \sim \mc{N}(0,\sigma^2)$ and 
being i.i.d.~\cite{lotshaw2023modeling}.  
As we show in Supplementary Note 4, the 
noise
effect results in a Pauli channel, $\mc{D}(\rho) = (1-p_e)\rho +p_e {Z}_i {Z}_j \rho {Z}_i {Z}_j $ with $p_e\approx {\sigma^2}/{4}$, depending on whether the ${CZ}$ gate is operated. As such, the total noise is gate-dependent on the given layout pattern of the ${CZ}$ layer. 
We additionally consider a more general noise model 
including ${IZ}$ and ${ZI}$ errors as well, common to Rydberg atom architectures~\cite{cong2022hardware,evered2023high}.


Due to impact of noise, the 
Born probability observed in the experiment is changed to \(\Pr(\mathbf{b}|U) = \text{Tr}(\Phi_{\widetilde{U},\mathbf{b}} \rho)\), with \(\widetilde{U}\) denoting the noisy version of the intended unitary \(U\). Consequently, the measurement channel in \cref{Eq:idealchannel} would undergo a significant 
and intricate transformation. Specifically, the related moment function $\mb{M}_{\mc{E}_{\text{phase}}}^{(2)}$ which originally admits a compact form in \cref{prop:23moment}, now decomposes into a summation of $3^n$ terms involving 
qubit-level operators, like 
\begin{equation*}
    \Delta_2^{\otimes n_1} \otimes (\mathbb{I}_2^{\otimes 2} - \Delta_2)^{\otimes n_2} \otimes (\sw_2 - \Delta_2)^{\otimes n_3}, 
\end{equation*}
with \( n_1 + n_2 + n_3 = n \), as shown diagrammatically in \cref{fig:2copy}(c).

If one continues to use the noiseless post-processing strategy as \cref{eq:shadow2} of \cref{th:PshadowMain}, it would introduce significant bias in the estimation due to accumulated noise effects, as observed numerically in \cref{fig:Hadamard-bias}(b). As such, accurately determining the inverse of the current complex quantum channel becomes essential. Fortunately, the channel (or equivalently the moment function) can be decomposed into the Pauli basis, with its coefficients being computable in a scalable manner.

\begin{prop}[Pauli decomposition]\label{prop:noisyChannel}
In the RPS protocol, suppose that the noisy $\text{CZ}$ gates 
are applied during the experiment. The second-order moment function in \cref{Eq:idealchannel} in the noisy scenario admits a Pauli decomposition as
\begin{equation}\label{Eq:P2P}  
\widetilde{\mathbf{M}}_{\mc{E}_{{\text{phase}}}}^{(2)} = D^{-3} \sum_{{P} \in \mb{P}_n} \sigma_{P} \, {P} \otimes {P},  
\end{equation}  
and the coefficient of the Pauli operator in the form \( {P} = \id_2^{\otimes n_1}\otimes  {Z}^{\otimes n_2}\otimes \{{X},{Y}\}^{\otimes n_3} \) reads 
\begin{equation}\label{Eq:P2Pha}
\begin{aligned}
\sigma_P = \sum_{s=0}^{n_1 + n_2} \sum_{t=\max(0,s-n_1)}^{\min(s, n_2)} &(-1)^t \binom{n_1}{s-t} \binom{n_2}{t}   \\
&\times(1 - p_e)^{(n_1 + n_2 - s)n_3} (p_e)^{sn_3},
    \end{aligned}
\end{equation}
with $p_e$ being the noise rate of the ${CZ}$ gates.
\end{prop}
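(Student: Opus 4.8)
The plan is to compute, directly in the computational basis, the effective two‑copy object governing the noisy measurement channel, $\widetilde{\mathbf M}^{(2)}_{\mc E_{\text{phase}}}=D^{-1}\mbb{E}_{A}\sum_{\mathbf b}\Phi_{\widetilde U,\mathbf b}\otimes\Phi_{U,\mathbf b}$, the mismatched moment built from the noisy outcome operator $\Phi_{\widetilde U,\mathbf b}$ (first copy) and the ideal classical snapshot $\Phi_{U,\mathbf b}$ (second copy), and then read off its Pauli coefficients. First I would use that $ZZ(\theta_{i,j})$, $CZ$ and $S$ all commute, so the noise layer commutes to the front and $\Phi_{\widetilde U,\mathbf b}=\big(\prod ZZ(-\theta)\big)\,\Phi_{U,\mathbf b}\,\big(\prod ZZ(\theta)\big)$; the per‑edge Gaussian average (see Appendix~\ref{Ap:rnm}) then turns this into the composed commuting dephasing channel $\widetilde{\mc D}_A=\bigcirc_{e\in E(A)}\mc D_e$ with $\mc D_e(\sigma)=(1-p_e)\sigma+p_e\,Z_e\sigma Z_e$. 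Since the $\mc D_e$ commute and act diagonally, $\widetilde{\mc D}_A$ multiplies a matrix unit $\ket{\mathbf x}\!\bra{\mathbf y}$ by a scalar $\Lambda_A(\mathbf x\oplus\mathbf y)$ that factorizes over the edges of $A$ (the phase from a subset of edges is the product of per‑edge phases), giving $\Lambda_A(d)=(1-2p_e)^{\mathrm{cut}_A(\mathrm{supp}\,d)}$, where $\mathrm{cut}_A(T)$ counts edges of $A$ with exactly one endpoint in $T$.

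Next I would substitute $\Phi_{U_A,\mathbf b}=D^{-1}\sum_{\mathbf x,\mathbf y}(-1)^{\mathbf b\cdot(\mathbf x\oplus\mathbf y)}\,\overline{\omega_A(\mathbf x)}\,\omega_A(\mathbf y)\,\ket{\mathbf x}\!\bra{\mathbf y}$, with quadratic phase $\omega_A(\mathbf x)=(-1)^{\sum_{i<j}A_{i,j}x_ix_j}\,i^{\sum_kA_{k,k}x_k}$, attach the extra scalar $\Lambda_A$ on the first copy, and perform the three averages in turn. The sum over $\mathbf b$ pins $\mathbf x_1\oplus\mathbf y_1=\mathbf x_2\oplus\mathbf y_2=:d$. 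The average over the diagonal entries $A_{k,k}$ (the random $S$ layer) reproduces the constraint already appearing in the proof of \cref{prop:23moment}, namely $\mathrm{supp}(d)\subseteq\mathrm{supp}(e)$ with $e:=\mathbf x_1\oplus\mathbf x_2$. The genuinely new step is the average over the off‑diagonal entries $A_{i,j}$: each pair $(i,j)$ contributes a factor $\mbb{E}_{A_{i,j}}\!\big[(1-2p_e)^{A_{i,j}[\,|\{i,j\}\cap T|=1\,]}(-1)^{A_{i,j}q_{i,j}}\big]$ with $T=\mathrm{supp}(d)$ and $q_{i,j}=e_id_j\oplus e_jd_i$, and a short case split (both endpoints outside $T$; both inside $T$; exactly one inside) reduces this to: $e$ must be constant—hence $\equiv1$—on $T$, and each surviving configuration acquires the weight $\prod_{i\in T}\prod_{j\notin T}(1-p_e)^{1-e_j}p_e^{e_j}=(1-p_e)^{(|\bar T|-s)|T|}\,p_e^{\,s|T|}$ with $s:=|\mathrm{supp}(e)\cap\bar T|$.

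Finally I would repackage into the Pauli basis. For fixed $d$, the remaining sum over $\mathbf x_1$ collapses via $\sum_{\mathbf x_1}\ket{\mathbf x_1}\!\bra{\mathbf x_1\oplus d}\otimes\ket{\mathbf x_1\oplus e}\!\bra{\mathbf x_1\oplus e\oplus d}=D^{-1}\sum_{\mathbf w}(-1)^{\mathbf w\cdot e}(Z^{\mathbf w}X^{d})\otimes(Z^{\mathbf w}X^{d})$, which already displays the $P\otimes P$ structure. Summing the still‑free part of $e$ over $\bar T$ factorizes bitwise—no sign on the $n_1$ identity sites, a sign $(-1)$ on the $n_2$ sites carrying a $Z$—and gives $(-1)^{\mathbf w\cdot d}[(1-p_e)^{n_3}+p_e^{n_3}]^{n_1}[(1-p_e)^{n_3}-p_e^{n_3}]^{n_2}$ with $n_3=|T|$; the prefactor $(-1)^{\mathbf w\cdot d}$ exactly cancels the phase incurred when rewriting $Z^{\mathbf w}X^{d}$ as the Hermitian Pauli $P$, so the coefficients stay real, the overall normalization comes out as $D^{-3}$, and expanding the two binomials $[(1-p_e)^{n_3}\pm p_e^{n_3}]^{n_{1,2}}$ and collecting powers of $p_e$ reproduces \cref{Eq:P2Pha}. (Equivalently one may group Paulis by support pattern to recover the $3^n$‑term atomic decomposition sketched in \cref{fig:2copy}(c) and then substitute $\Delta_2=\tfrac12(I^{\otimes 2}+Z^{\otimes 2})$, $\mbb{S}_2-\Delta_2=\tfrac12(X^{\otimes 2}+Y^{\otimes 2})$, etc.) I expect the main obstacle to be this off‑diagonal ($CZ$‑placement) average, together with the bookkeeping needed to keep the $T$‑versus‑$\bar T$ split, the constraint $e|_T\equiv1$, and the Pauli phase conventions mutually consistent; the stated generalization to $IZ,ZI,ZZ$ noise follows the same route with an enlarged per‑edge average.
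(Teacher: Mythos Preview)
Your proposal is correct and, at its core, matches the paper's argument: both compute $\widetilde{\mathbf M}^{(2)}$ in the computational basis, reduce the noise to a per-edge dephasing, perform the $\mathbf b$-, $S$-, and $CZ$-averages to obtain constraints and weights, and then project onto Paulis. In the paper's language your sets $T=\mathrm{supp}(d)$, $\mathrm{supp}(e)\cap\bar T$, $\bar T\setminus\mathrm{supp}(e)$ are exactly the $I_k$, $I_j$, $I_i$ of Lemma~\ref{thm: noisy 2-moment}, and your weight $(1-p_e)^{(|\bar T|-s)|T|}p_e^{s|T|}$ is the paper's $(1-p_e)^{ik}p_e^{jk}$.

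The organizational difference is in the last step. The paper first packages the result into the $3^n$ atomic blocks $\Delta_2^{I_i}\otimes(\mathbb I_4-\Delta_2)^{I_j}\otimes(\mathbb S_2-\Delta_2)^{I_k}$ and then extracts $\sigma_P$ by tracing against $P\otimes P$ (Eq.~\eqref{Eq:sigmaP}). You bypass this by using the identity $\sum_{x_1}\ket{x_1}\bra{x_1\oplus d}\otimes\ket{x_1\oplus e}\bra{x_1\oplus e\oplus d}=D^{-1}\sum_{w}(-1)^{w\cdot e}(Z^wX^d)^{\otimes2}$ to go straight to the Pauli basis, and the free sum over $e|_{\bar T}$ then factorizes bitwise into the closed form $\sigma_P=[(1-p_e)^{n_3}+p_e^{n_3}]^{n_1}[(1-p_e)^{n_3}-p_e^{n_3}]^{n_2}$, which the paper never writes down explicitly. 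Expanding these two binomials and reindexing by $s=a+t$ recovers \eqref{Eq:P2Pha}. Your route is slightly more direct and yields this compact product form as a bonus; the paper's route makes the connection to the diagrammatic $3^n$-term expansion (Fig.~\ref{fig:2copy}(c)) more transparent and feeds naturally into the $m=3$ variance analysis. You correctly note at the end that the two are equivalent.
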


The proof of \cref{prop:noisyChannel} is left to Supplementary Note 5, along with more detailed information about the following generalization.
Here 
we assume that all ${CZ}$ gates share the same noise rate $p_e$. Such uniform assumption can be relaxed 
and the Pauli coefficients $\sigma_P$ 
can still be efficiently calculated using a low-order approximation, which matches well with the exact values. 
Moreover, in the Methods section, we also 
extend the result to noisy models with additional single-qubit $Z$ error, which happens naturally in Rydberg atom systems~\cite{cong2022hardware,evered2023high}.

Denote the set of Pauli operators in the $Z$ basis as $\mc{Z}_n=\{\id_2,Z\}^{\otimes n}$. We show in Supplementary Note 5 that $\sigma_P=0$, if $P\in \mc{Z}_n$, except $\sigma_{\id_D}=D$ in \cref{Eq:P2Pha}. Note that each $\sigma_P\in \mb{P}_n/\mc{Z}_n$ can be calculated in $\mc{O}(n^2)$.  By inserting \cref{Eq:P2P} into \cref{Eq:idealchannel}, it is direct to see that the noisy quantum channel shows $\widetilde{\mc{M}}_{\mc{E}_{\text{phase}}}(\rho) = D^{-1}\id_D+D^{-2} \sum_{P \in \mb{P}_n/\mc{Z}_n} \sigma_P \tr(\rho P)P$, and we remark that all $\sigma_P=1$ for the noiseless case ($p_e=0$).

This decomposition enables the practical inversion of the channel, facilitating unbiased estimation in the realistic noisy scenario, 
equipped with rigorous performance guarantees.


\begin{theorem}[Guarantees of unbiased and efficient recovery under realistic noise]\label{th:PshadowNoise} 
Suppose that one conducts 
robust phase shadow estimation using the random unitary ensemble $\mc{E}_{{\text{phase}}}$ with noisy ${CZ}$ gates, then the unbiased estimator of \(\rho_f\) in a single-shot is given by
\begin{equation} \label{Eq:InvNoisychannelPauli}  
\widehat{\rho_f}_{\text{robust}} := \sum_{P \in \mb{P}_n/\mathcal{Z}_n} \sigma_{P}^{-1} \, \tr(\Phi_{U,\mathbf{b}} {P}) \, {P},  
\end{equation}

where \(\sigma_P\) is the Pauli coefficient of the forward noisy channel from \cref{prop:noisyChannel}, ensuring \(\mathbb{E}_{\{U,\mathbf{b}\}} \widehat{\rho_f}_{\text{robust}} = \rho_f\).  
The expectation value  $o_f=\tr(O\rho_f)$ can be estimated as \(\widehat{o_f}_{\text{robust}} = \tr(O \widehat{\rho_f}_{\text{robust}})\), and its variance is
\begin{equation} \label{Eq:Var_first}  
\text{Var}(\widehat{o_f}_{\text{robust}}) \lesssim \Theta(1)  \exp(\frac{n^2 p_e}{2})  \|O_f\|_2^2  
\end{equation}  
when $np_e\ll 1$, where \(O_f\) is the off-diagonal part of \(O\).  
\end{theorem}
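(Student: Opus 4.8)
\emph{Proof strategy.} The unbiasedness part is immediate, so I would dispatch it first. By \cref{prop:noisyChannel} the noisy channel $\widetilde{\mc{M}}_{\mc{E}_{\text{phase}}}$ acts on the off-diagonal Pauli sector $\mb{P}_n/\mc{Z}_n$ as the diagonal map $P\mapsto D^{-1}\sigma_P\,P$, hence is invertible there; expanding $\Phi_{U,\mb{b}}=D^{-1}\sum_{P\in\mb{P}_n}\tr(\Phi_{U,\mb{b}}P)\,P$, reweighting by $\sigma_P^{-1}$ and dropping the $\mc{Z}_n$-part yields exactly \cref{Eq:InvNoisychannelPauli}. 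Taking $\mathbb{E}_{\{U,\mb{b}\}}$ replaces $\Phi_{U,\mb{b}}$ by $\widetilde{\mc{M}}_{\mc{E}_{\text{phase}}}(\rho)=D^{-1}\id_D+D^{-2}\sum_{P\in\mb{P}_n/\mc{Z}_n}\sigma_P\tr(\rho P)\,P$; the $\sigma_P^{-1}$ weights cancel the $\sigma_P$, and Pauli orthonormality leaves $D^{-1}\sum_{P\in\mb{P}_n/\mc{Z}_n}\tr(\rho P)\,P=\rho_f$.

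For the variance I would first fold the post-processing into a deconvolved observable $\widetilde{O}:=\sum_{P\in\mb{P}_n/\mc{Z}_n}\sigma_P^{-1}\tr(OP)\,P$, so that $\widehat{o_f}_{\text{robust}}=\tr(\widetilde{O}\,\Phi_{U,\mb{b}})$ and $\text{Var}(\widehat{o_f}_{\text{robust}})\le\mathbb{E}_{\{U,\mb{b}\}}[\tr(\widetilde{O}\,\Phi_{U,\mb{b}})^2]$, the expectation being over the \emph{noisy} Born law. The structural fact I would use is that the $Z$-type Pauli errors of the noisy $CZ$ layer commute through every diagonal gate in $U_A$, so for a fixed interaction graph $A$ the noisy outcome law factorizes as $\Pr(\mb{b}\mid U_A)=\tr(\Phi_{U_A,\mb{b}}\,\mc{N}_A(\rho))$, with $\mc{N}_A=\prod_{(i,j):A_{i,j}=1}\mc{D}_{i,j}$ the product of per-gate Pauli channels. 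Since each $\mc{D}_{i,j}$ is self-adjoint, this rewrites the variance as $\text{Var}(\widehat{o_f}_{\text{robust}})\le\tr(\rho\,Y)\le\|Y\|_\infty$ with $Y:=\mathbb{E}_A\sum_{\mb{b}}\mc{N}_A(\Phi_{U_A,\mb{b}})\,\tr(\widetilde{O}\Phi_{U_A,\mb{b}})^2=\tr_{2,3}[(\id\otimes\widetilde{O}\otimes\widetilde{O})\widetilde{T}^{(3)}]\succeq 0$ and $\widetilde{T}^{(3)}:=\mathbb{E}_A(\mc{N}_A\otimes\id\otimes\id)\big(\sum_{\mb{b}}\Phi_{U_A,\mb{b}}^{\otimes 3}\big)$. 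So the task reduces to controlling a third moment of $\mc{E}_{\text{phase}}$ in which the noise acts only on the copy contracted with the (worst-case) input state.

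To control $\widetilde{T}^{(3)}$ I would run the $m=3$ version of the tensor-diagrammatic computation underlying \cref{prop:23moment} and \cref{prop:noisyChannel}: for fixed $A$, $\sum_{\mb{b}}\Phi_{U_A,\mb{b}}^{\otimes 3}$ has the structure that yields $\propto\bigcup_{\pi\in S_3}V_n(\pi)$ after the graph average, and inserting $\mc{N}_A$ on the first copy multiplies the component carrying a Pauli $P$ there by $\prod_{(i,j):A_{i,j}=1}\lambda_{ij}(P)$ with $\lambda_{ij}(P)\in\{1,1-2p_e\}$ according to whether $P$ commutes with $Z_iZ_j$. Because the same $A$ controls both the moment structure and the noise pattern, the average over $A$ produces binomial sums over $CZ$ placements of the same type as in \cref{Eq:P2Pha}, now for three copies; carrying out the inclusion--exclusion of the union, $Y$ would split --- up to these noise-dressed coefficients --- into the familiar noiseless pieces: a ``diagonal'' piece contracting to $\sum_{P}\sigma_P^{-2}\tr(OP)^2/D$, a $\tr(\rho\,\widetilde{O}^2)/D^2$-type piece, and lower-order inclusion--exclusion remainders, each bounded (for $np_e\ll 1$) by $(\max_P\sigma_P^{-1})^2$ times the corresponding noiseless quantity, which by the $m=3$ analysis behind \cref{th:PshadowMain} is $\Theta(1)\|O_f\|_2^2$. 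Finally, the dominant $s=0$ term of \cref{Eq:P2Pha} gives $\sigma_P\approx(1-p_e)^{(n_1+n_2)n_3}$ with $(n_1+n_2)n_3\le\lfloor n/2\rfloor\lceil n/2\rceil\le n^2/4$, so $\max_P\sigma_P^{-1}\le(1-p_e)^{-n^2/4}(1+o(1))=\exp(n^2p_e/4)(1+o(1))$ when $np_e\ll 1$; squaring gives the stated prefactor $\exp(n^2p_e/2)$, while $\Theta(1)$ absorbs the remaining constants. The hypothesis $np_e\ll1$ is what makes the Neumann-type expansions of $\sigma_P^{-1}$ and $\widetilde{T}^{(3)}$ converge, keeps every $\sigma_P$ positive (so \cref{Eq:InvNoisychannelPauli} is well defined), and makes the noise-induced corrections to the noiseless moment structure subleading.

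The hard part, I expect, will be the bookkeeping of $\widetilde{T}^{(3)}$: showing that after the graph average the off-diagonal ($P\neq Q$) contributions and the sectors where $\sigma_P$ is smallest do not blow up beyond $(\max_P\sigma_P^{-1})^2$, i.e.\ that the noise factors riding on the $\rho$-copy are genuinely absorbed by the $\sigma_P^{-1}$ deconvolution weights. This is exactly where the three-copy union/intersection machinery of \cref{prop:23moment} earns its keep. By contrast, treating the single-qubit $S$ and $H$ layers --- and the extra $IZ,ZI$ errors of the End Matter extension --- which only rescale or permute the $\{X,Y\}$ sectors, should be comparatively routine.
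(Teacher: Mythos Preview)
Your unbiasedness argument and overall variance framework (fold the inversion into a deconvolved observable $\widetilde{O}=\mathring{O_f}$, reduce to a noisy third moment acting on $\rho\otimes\mathring{O_f}\otimes\mathring{O_f}$, then extract the $e^{n^2p_e/2}$ from $\max_P\sigma_P^{-1}\le e^{n^2p_e/4}$) match the paper exactly.

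The gap is in your picture of what $\widetilde{T}^{(3)}$ looks like. You write that after the graph average ``$Y$ would split --- up to these noise-dressed coefficients --- into the familiar noiseless pieces''. This is not what happens. In the noiseless case, the random $CZ$ average annihilates every three-copy configuration outside $\bigcup_{\pi\in S_3}V_n(\pi)$; with noise, it only \emph{suppresses} them by powers of $p_e$. Concretely, the paper expands $\widetilde{\mathbf{M}}^{(3)}$ in the ten qubit-level operators $B_{k,r}$ and finds each three-copy configuration carries a weight $(1-p_e)^{N^{(1)}}p_e^{\,N^{(2)}}0^{N^{(3)}}$; the support $\{N^{(3)}=0\}$ is strictly larger than the permutation union $\{N^{(2)}=N^{(3)}=0\}$. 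The variance therefore splits into a term \textcircled{1} supported on the union (this part behaves as you expect and is bounded by $3\tr(\mathring{O_f}^2)$ essentially as in \cref{th:PshadowMain}) and a genuinely new term \textcircled{2} supported \emph{off} the union, which has no noiseless analogue and is not controlled by ``$(\max_P\sigma_P^{-1})^2\times$ noiseless''. To show \textcircled{2} is also $\Theta(1)\tr(\mathring{O_f}^2)$, the paper partitions the off-union configurations into finitely many classes $\mc{C}_{\mc{B}}$ indexed by which $B_{k,r}$ appear (subgraphs of eight possible graphs), proves a combinatorial lower bound $N^{(2)}\ge N_I^{(1)}-\tfrac12|\mc{B}(1)|$ on the number of $p_e$ factors, and then uses Cauchy--Schwarz together with the Pauli form $B_{k,r}=\tfrac14\sum_{P_0,P_1}\pm P_0P_1\otimes P_0\otimes P_1$ to bound each class by $\sum_k\binom{n}{k}|\mc{B}(1)|^k p_e^{\,k-|\mc{B}(1)|/2}\tr(\mathring{O_f}^2)$, which is $\Theta(1)\tr(\mathring{O_f}^2)$ when $np_e\ll 1$.

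So the mechanism for term \textcircled{2} is not ``noise factors absorbed by the $\sigma_P^{-1}$ weights'' as you anticipate; it is direct $p_e$-suppression of the off-union sector balanced against the combinatorial blow-up of how many such configurations there are. Your identification of ``the hard part'' is roughly right in spirit, but the actual obstacle is that the noisy third moment lives on a larger set than you assume, and bounding the extra piece is a separate (and the longest) argument.
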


To the best of our knowledge, the estimator in \cref{Eq:InvNoisychannelPauli} is the first to incorporate challenging yet natural gate-dependent noise and achieve strict unbiasedness. 

As shown in \cref{Eq:Var_first}, 
the variance scales as $\exp(n^2 p_e/2)$, which is, indeed, a universal feature reflecting the fundamental bias-variance trade-off relation~\cite{cai2023quantum}, with supporting numerical results  presented in Supplementary Note 8. Importantly, for current quantum platforms typically with about $n=50$ qubits and gate error rates $p_e$ in the range of $10^{-3} \sim 10^{-2}$, one has $np_e \ll 1$ and $n^2p_e/2 = \mc{O}(1)$. As such, the statistical variance introduced by error-mitigation-like post-processing remains moderate in the current regime, thus ensuring estimation efficiency. 
In \cref{fig:Hadamard-bias}(b), we compare the performance of RPS and PS.  For PS, the total noise accumulates over ${CZ}$ gates, leading to a significant bias. For RPS, under feasible sampling time, it can perform unbiased estimation with acceptable statistical uncertainty, demonstrating strong potential for near-term quantum applications. 


\subsection*{Efficient post-processing}
By \cref{th:PshadowNoise}, the estimator for an observable \( O \) is given by
\begin{equation} \label{Eq:noiseOf}  
\widehat{o_f}_{\text{robust}} = \sum_{P \in \mb{P}_n/\mathcal{Z}_n} \sigma_P^{-1} \, \tr(\Phi_{U,\mathbf{b}} P) \, \tr(O P),
\end{equation}
where the summation of $P$ in \cref{Eq:noiseOf} nominally involves \( 4^n - 2^n \) Pauli terms. The estimation remains efficient when $O$ admits a polynomial-size Pauli decomposition (e.g., for local observables). At first glance, when the observable is stabilizer states $O = \Psi = V^\dagger \ket{\mb{0}}\bra{\mb{0}} V$ where $V$ is Clifford, it seems that exact post-processing is intractable in general, because $\Psi$ typically contains exponentially many ($2^n$) Pauli terms. That said, while \cref{th:PshadowNoise} guarantees favorable sampling complexity, computing individual estimators remains challenging. This mirrors limitations in existing protocols--for instance, tensor-network approaches often require heuristic approximations of $\sigma_P^{-1}$~\cite{bertoni2024shallow,akhtar2023scalable}.

Nevertheless, we demonstrate that the classical post-processing remains efficient in our protocol ($\mathcal{O}(n^3)$) even for stabilizer states. This tractability result, proven in Supplementary Note 7, is further supported by numerical simulations across various target states (see \cref{fig:Hadamard-bias}(c)).

\begin{prop}[Efficient classical post-processing (informal)]\label{Prop:PostProcess} Suppose that the observable $O$ is a stabilizer state, the post-processing of robust phase estimation using \cref{Eq:InvNoisychannelPauli} is efficient, specifically  $\mathcal{O}(n^3)$, in expectation.
\end{prop}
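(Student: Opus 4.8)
The plan is to exploit that \emph{both} tensor factors entering the per-snapshot estimator~\cref{Eq:noiseOf} are overlaps with \emph{pure stabilizer states}, so that the nominal $4^n-2^n$-term sum collapses onto the intersection of two stabilizer groups, which for the random phase ensemble is generically trivial. First I would note that $U=H^{\otimes n}U_A$ is Clifford and $\ket{\mb{b}}$ is a computational-basis state, hence $\Phi_{U,\mb{b}}=U^\dagger\ketbra{\mb{b}}{\mb{b}}U$ is a pure stabilizer state; likewise $O=\Psi=V^\dagger\ketbra{\mb{0}}{\mb{0}}V$. For any pure stabilizer state one has $\tr(\ketbra{\psi}{\psi}P)\in\{0,\pm1\}$, nonzero precisely when the unsigned Pauli string $P$ lies in the maximal isotropic (Lagrangian) subspace $M_\psi\subseteq\mathbb{F}_2^{2n}$ encoding its stabilizer group. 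Consequently only the $P$ in $M:=M_\Phi\cap M_\Psi$ (and outside $\mc{Z}_n$) contribute to~\cref{Eq:noiseOf}, leaving at most $2^{\dim M}$ terms.

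Next I would bound the cost of evaluating this reduced sum for a fixed snapshot $(U,\mb{b})$. A stabilizer tableau for $M_\Psi$ is available efficiently (e.g.\ in $\mc{O}(n^2)$ from a Clifford description of $V$), a generating matrix for $M_\Phi$ is read off directly (see below), and a basis of $M$ then follows by Gaussian elimination over $\mathbb{F}_2$ in $\mc{O}(n^3)$ time, regardless of $\dim M$. For each of the $\le 2^{\dim M}$ surviving Paulis $P$ I would compute $\sigma_P^{-1}$ from~\cref{Eq:P2Pha} — which depends only on the type $(n_1,n_2,n_3)$ of $P$ and costs $\mc{O}(n^2)$ by~\cref{prop:noisyChannel} — together with the two stabilizer signs $\tr(\Phi_{U,\mb{b}}P)$ and $\tr(\Psi P)$, recovered by expanding $P$ in the chosen generators and tracking Pauli phases, again $\mc{O}(n^2)$; summing these yields $\widehat{o_f}_{\text{robust}}$. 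The per-snapshot cost is thus $\mc{O}\!\left(n^3+2^{\dim M}n^2\right)$, so everything reduces to controlling $\mbb{E}[2^{\dim M}]$.

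The heart of the argument is this last expectation. Conjugating $X_m$ through $U_A$ shows that $\mc{S}_\Phi$ is generated by $\{(-1)^{b_m}U_A^\dagger X_m U_A\}_m$, and $U_A^\dagger X_m U_A$ has symplectic vector $(e_m\mid A_m)$, where $A_m$ is the $m$-th row of $A$ — the diagonal bit $A_{mm}$ arising from the $S$ gate, which turns $X_m$ into $Y_m$. Hence $M_\Phi=\operatorname{rowspace}[\,I_n\mid A\,]$ with $A$ a \emph{uniformly random symmetric} matrix over $\mathbb{F}_2$, and, crucially, $M_\Phi$ does not depend on the outcome $\mb{b}$. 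For a fixed nonzero $v=(x\mid z)$ one has $v\in M_\Phi\iff xA=z$: if $x=0$ this forces $z=0$, while if $x\neq 0$ the $\mathbb{F}_2$-linear map $A\mapsto xA$ on symmetric matrices is surjective (pick $i$ with $x_i=1$ and realize any target with a matrix supported on row and column $i$), so $\Pr_A[xA=z]=2^{-n}$. Since $|M_\Psi|=2^n$,
\[
\mbb{E}\big[2^{\dim M}\big]=\mbb{E}\,\big|M_\Phi\cap M_\Psi\big|=1+\sum_{0\neq v\in M_\Psi}\Pr[v\in M_\Phi]\;\le\;1+2^n\cdot 2^{-n}=2 .
\]
Combined with the per-snapshot bound, the expected post-processing cost is $\mc{O}(n^3)+\mbb{E}[2^{\dim M}]\cdot\mc{O}(n^2)=\mc{O}(n^3)$, which is the claim; the diagonal contribution is cheaper still, since $\tr(\Psi\,\ketbra{\mb{b}}{\mb{b}})=|\langle\Psi|\mb{b}\rangle|^2$ is a single stabilizer amplitude.

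The main obstacle will be the probabilistic lemma above: establishing the precise symplectic form $M_\Phi=\operatorname{rowspace}[\,I_n\mid A\,]$ (carefully tracking the $CZ$ and $S$ conjugations and checking $M_\Phi$ is Lagrangian, i.e.\ that $A$ is symmetric), and proving surjectivity of $A\mapsto xA$ on symmetric matrices over $\mathbb{F}_2$ — this is exactly what makes $\Pr[v\in M_\Phi]$ uniformly $2^{-n}$ and the intersection generically $\{0\}$. A minor point to dispatch is well-definedness of $\sigma_P^{-1}$, which holds because~\cref{Eq:InvNoisychannelPauli} restricts the sum to $\mb{P}_n/\mc{Z}_n$, where $\sigma_P\neq 0$.
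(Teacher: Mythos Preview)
Your proposal is correct. The collapse of~\cref{Eq:noiseOf} onto $M=M_\Phi\cap M_\Psi$ and the per-snapshot cost accounting match the paper's \cref{Algo:EMeffpostproc} and its analysis in Appendix~\ref{Ap:EC}. The substantive difference is in how $\mbb{E}[2^{\dim M}]=\mc{O}(1)$ is established. The paper does \emph{not} analyze $M_\Phi$ directly for the phase ensemble; instead it upper-bounds the phase-ensemble average by the average over \emph{all} stabilizer states via the crude inequality $\mbb{E}_{\mc{E}_{\text{phase}}}[\,\cdot\,]\le\tfrac{|\mc{E}_{\text{Cl}}|}{|\mc{E}_{\text{phase}}|}\,\mbb{E}_{\mc{E}_{\text{Cl}}}[\,\cdot\,]$, shows the cardinality ratio is at most $e$, and then invokes external stabilizer-geometry counting results (Theorems~15--16 of Ref.~\cite{garcia2017geometry}) to control $\mbb{E}_{\mc{E}_{\text{Cl}}}[2^{n_g}]$. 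Your route is more elementary and entirely self-contained: identifying $M_\Phi=\operatorname{rowspace}[I_n\mid A]$ with $A$ uniform on $\operatorname{Sym}_n(\mathbb{F}_2)$, using surjectivity of the linear map $A\mapsto xA$ for $x\neq 0$ to get $\Pr[v\in M_\Phi]=2^{-n}$ exactly, and concluding $\mbb{E}[2^{\dim M}]\le 2$ by linearity of expectation. This yields a sharper explicit constant and avoids the external reference, at the cost of being tailored to $\mc{E}_{\text{phase}}$; the paper's Clifford detour applies verbatim to any sub-ensemble of comparable size and, because it bounds all moments $\mbb{E}[2^{m\,n_g}]$, also delivers the tail bound $\Pr[2^{n_g}\ge 2^c]=\mc{O}(2^{-2c})$ that your first-moment argument alone does not immediately provide.
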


The result relies on a key observation about shadow estimation: since unitaries $U$ are uniformly sampled from $\mathcal{E}$, it is more reasonable to quantify the expected post-processing cost by averaging over $U \in \mathcal{E}$. Crucially, we observe that the average number of shared Pauli operators between the stabilizer groups of $\Phi_{U,\mathbf{b}}$ and $\Psi$ in \cref{Eq:noiseOf} remains $\mathcal{O}(1)$ under uniform sampling of $U\in\mathcal{E}_{\text{phase}}$ (or the full Clifford group).

The concrete post-processing workflow for stabilizer-state observables is also provided in \cref{Algo:EMeffpostproc} in
the Methods section, leveraging the overlap property and stabilizer formalism~\cite{aaronson2004improved} to maintain efficiency, 
which extends to other protocols relying on random Clifford measurements~\cite{huang2020predicting,bertoni2024shallow,akhtar2023scalable}.

\subsection*{Generalized robust phase shadow under gate-dependent noise}
The original RPS construction establishes a foundational framework for robust estimation, successfully revealing the intricate twirling structure of the ${\rm S}$--${\rm CZ}$--${\rm H}$ architecture under Pauli-$Z$-type noise. This model is highly effective for many realistic platforms where $Z$-type dephasing is the dominant error source, allowing for the elimination of a significant portion of experimental noise. To further enhance this capability and address residual errors from other noise channels, we here extend the protocol to a \textit{generalized robust phase shadow (generalized RPS)} framework. This generalization applies the RPS framework to arbitrary Clifford circuits subject to arbitrary gate-dependent noise on all quantum gates, ensuring comprehensive noise resilience. Full proofs and details are provided in Supplementary Note 10.

In this setting, we first construct a modified estimator by replacing the coefficient $\sigma_{P}$ in \cref{Eq:InvNoisychannelPauli} with a \emph{circuit-dependent} factor $\sigma(P,U)$. This modification exploits the knowledge of the sampled Clifford unitary $U$ and corresponding noise channel  during classical post-processing. Specifically, suppose the circuit implementation of $U$ be decomposed into a sequence of gates $g_1,g_2, \dots, g_m$.  After each gate $g_j$, a gate-dependent Pauli channel $\Lambda_j$ acts on the support of $g_j$,
\begin{equation}
  \Lambda_j(Q)=\alpha_{j,Q} Q,\qquad Q\in\mathbb{P}_k,
\end{equation}
where $k$ is the number of qubits on which the noise channel $\Lambda_j$ acts and the real coefficients satisfy $0<\alpha_{j,Q}\le 1$. In the ideal noiseless case, the propagation of a Pauli operator $P\in\mathbb{P}_n$ through the circuit is given by 
\begin{equation}
P_0=P,\qquad P_j=g_j P_{j-1} g_j^\dagger,\quad j=1,\dots,m.
\end{equation} 
When there is noise, we obtain a multiplicative attenuation factor for noise inversion
\begin{equation}\label{Eq:sigmaPU}
    \sigma(P,U) = \prod_{j=1}^{m}\alpha_{j,P_j},
\end{equation}
which depends both on the input Pauli $P$ and on the specific circuit $U$. Note that the dependence on $U$ arises from a circuit-specific noise model.

This leads to the following generalization of \cref{th:PshadowNoise}.

\begin{theorem}[Generalized RPS under gate-dependent Pauli noise]\label{thm:genRPS}

Consider robust phase shadow estimation based on an ensemble $\mc{E}_{phase}$ as above, where each gate $g_j$ is followed by a Pauli channel $\Lambda_j$ with eigenvalues $\alpha_{j,Q}$ and $\sigma(P,U)$ is defined in \cref{Eq:sigmaPU}. Then the generalized single-shot estimator
\begin{equation}\label{Eq:InvNoisychannelPauliU}
  \widehat{\rho_f}_{\mathrm{gen}}
  =
  \sum_{P\in\mathbb{P}_n/\mathcal{Z}_n}
      \sigma(P,U)^{-1}\,\tr(\Phi_{U,\mathbf{b}}P)\,P
\end{equation}
is unbiased on the off-diagonal part, i.e., $\mathbb{E}_{U,\mathbf{b}}\bigl[\widehat{\rho_f}_{\mathrm{gen}}\bigr] = \rho_f.$

For stabilizer-state observables $O=V^\dagger\ket{\mathbf{0}}\bra{\mathbf{0}}V$ with Clifford $V$, the corresponding estimator $\widehat{o_f}_{\mathrm{gen}}={\rm tr}(O\,\widehat{\rho_f}_{\mathrm{gen}})$ has variance bounded by
\begin{equation}
\mathrm{Var}\bigl(\widehat{o_f}_{\mathrm{gen}}\bigr)
  \;\le\;
  \Theta(1)\,
  \sqrt{\,
    \mathbb{E}_{U}
    \Bigl[
      \max_{P\in\mathbb{P}_n/\mathcal{Z}_n}\sigma(P,U)^{-4}
    \Bigr]
  }.
  \label{Eq:stab-var-final-main}
\end{equation}
\end{theorem}

Besides, the post-processing of  \cref{Eq:InvNoisychannelPauliU} for stabilizer states is also efficient along with \cref{Prop:PostProcess}, as the calculation of the coefficient $\sigma(P,U)$ costs only $O(n^2)$ for phase shadow.

Here, we present some remarks on generalized RPS. First, we emphasize that the generalized RPS framework is capable of handling general gate-dependent noise models. While Theorem~\ref{thm:genRPS} addresses Pauli noise, by incorporating randomized compiling techniques~\cite {wallman2016noise}, which
convert arbitrary noise into Pauli noise, the generalized formalism can handle arbitrary noise. Second, unlike prior assumptions of uniform noise scale, here every gate $g_j$, including single-qubit operations, is assigned a distinct Pauli channel $\Lambda_j$. This flexibility allows the protocol to leverage gate-specific error profiles obtained via standard gate calibration~\cite{blume2017demonstration,erhard2019characterizing,magesan2012efficient}, rather than calibrating a global parameter in~\cite{chen2021robust}. Furthermore, the framework can be extended to arbitrary circuit structures, including random Clifford measurement and its shallow invariants~\cite{schuster2024random,bertoni2024shallow}, thereby providing a unified and strictly unbiased estimation strategy for randomized measurement.


To validate the efficiency of the generalized RPS framework, we perform comprehensive numerical simulations in \cref{fig:comparison}, comparing it against standard protocols under realistic, gate-dependent noise conditions. The simulations utilize the gate-dependent depolarizing noise model for all gates, where error rates vary across different gates and qubits, mimicking the non-uniform noise level of the realistic hardware (see Supplementary Note~10 for details).

The numerical results, visualized in~\cref{fig:comparison}(a), reveal the contrast of the estimation bias. The generalized RPS estimator maintains strict unbiasedness across all noise levels. By accurately inverting the specific noise channel $\Lambda_j$ for each gate, it recovers the ideal expectation values with no systematic error, confirming the theoretical predictions of \cref{thm:genRPS}. In contrast, the \emph{robust shadow estimation protocol}  (RSE) fails to achieve unbiased estimation. This comparison underscores that precise, gate-dependent noise inversion by the generalized RPS framework is essential for high-precision shadow estimation for global properties.

To demonstrate the universality of our framework, we extend the numerical validation beyond the native ${CZ-S-H}$ structure to standard random Clifford measurements.  By applying our generalized formalism to the random Clifford ensemble, we verify that the property of strict unbiasedness holds regardless of the underlying circuit structure. Furthermore, as shown in Fig.~\ref{fig:comparison}(b), the generalized RPS exhibits a significantly lower variance compared to the generalized Clifford estimator. This advantage arises due to the constant-factor reduction in circuit depth enabled by the ${CZ-S-H}$ circuit structure, highlighting the practical utility of the phase shadow for randomized measurements.


\begin{figure*}[t]
    \centering
    \includegraphics[width=0.95\linewidth]{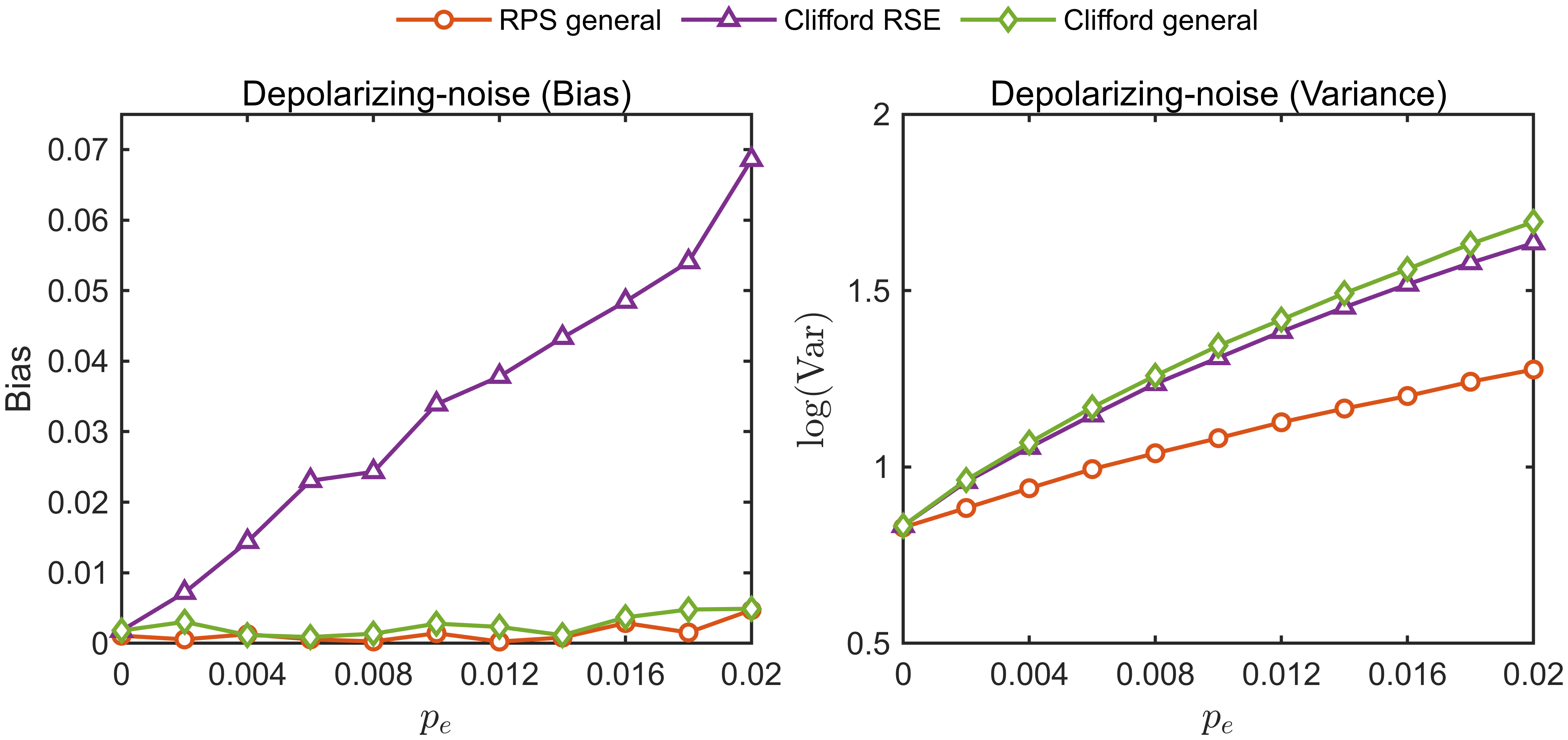}
    \caption{(Left) Estimation bias and (right) logarithmic estimation variance for 10-qubit random stabilizer states, comparing generalized RPS (orange) and its extension to random Clifford circuits (green) with the Clifford RSE protocol (purple)~\cite{chen2021robust}. Each data point is computed using $N=10^4$ measurement snapshots and averaged over 100 random stabilizer states. The RSE protocol is calibrated using $10^6$ runs for each noise level. The details on the numerical simulations are left to Supplementary Note 10.}
    \label{fig:comparison}
\end{figure*}

\section*{Discussion}
Our work shows that robust and scalable estimation of global quantum properties can be achieved through hardware-native circuits in realistic quantum platforms and suitable structure-aware design. RPS strikes a natural balance between local random measurements and global Clifford schemes, avoiding exponential overhead while ensuring statistical guarantees under challenging gate-dependent noise. Furthermore, the generalized RPS framework further extends this versatility to general gate-dependent noise and arbitrary randomized measurement schemes. These strengths render RPS a highly promising framework for robust benchmarking, effective error mitigation, and hybrid quantum-classical learning at scale.  


Our results demonstrate that the proposed method establishes practical feasibility for deployment on state-of-the-art devices, effectively bridging the gap between theoretical performance and hardware constraints. Quantitatively, considering a 50-qubit system with state-of-the-art gate fidelities exceeding $99.5\%$ (as demonstrated in recent Rydberg atom arrays~\cite{evered2023high,bluvstein2024logical}), the sampling overhead due to variance scaling is approximately a factor of $10^2-10^3$ scale. This overhead is well within the capabilities of current or near-future experimental platforms. 

Looking ahead, several promising directions arise: generalizing phase-random circuits to other native two-qubit gates like iSWAP and $\sqrt{\text{SWAP}}$ to better match other hardware platforms~\cite{gao2025establishing,zhang2023scalable}, developing shallow-depth circuit variants to further reduce experimental complexity~\cite{dalzell2024random, schuster2024random}, and exploring applications like cross-platform verification~\cite{CrossDevice,elben2020cross} and the verification of quantum error correction~\cite{gullans2021quantum}. Such extensions can further enhance the practical impact of RPS and contribute to the co-design of noise-resilient protocols tailored to specific quantum architectures, which seems urgently needed to make further progress in the field. 
On a higher level, we expect such increasingly simple and hardware-tailored robust schemes of read-out and benchmarking to become important when scaling up quantum hardware while maintaining predictive
power, and eventually reaching a regime of \emph{fault-tolerant application-scale quantum} (FASQ) \cite{MindTheGaps} machines.


\comments{\red{long version:}
Our work demonstrates that robust and scalable estimation of global quantum properties is possible using hardware-native resources and structure-aware design. The RPS framework offers a middle ground between the simplicity of local random measurements and the expressiveness of global Clifford schemes. By leveraging ${CZ}$-only entangling layers and diagrammatic analysis, RPS avoids the exponential cost typically associated with global measurements while maintaining rigorous statistical guarantees even in the presence of gate-dependent noise.
These feaurures makes RPS broadly applicable to quantum benchmarking, error mitigation, and hybrid quantum-classical learning protocols.

Looking forward, the structural simplicity and robustness of RPS suggest several promising directions for experimental realization and generalization. First, the ${CZ}$-based design may be naturally extended to other native entangling gates such as iSWAP or $\sqrt{\text{SWAP}}$, broadening compatibility across different hardware platforms. Second, beyond fidelity estimation, the RPS framework may be adapted for other tasks including cross-platform state comparison, overlap estimation, and hybrid quantum benchmarking. Third, shallow-circuit variants of RPS could further reduce experimental cost while retaining robustness, especially when combined with variational or noise-aware sampling strategies. Finally, the underlying phase-random circuit ensemble may prove useful in broader contexts of quantum information theory, such as the construction of random codes for quantum error correction or randomized decoupling schemes for noise suppression. These directions highlight the versatility of phase-based random circuits and their potential as foundational tools in near-term quantum technologies.
}

\section*{Methods}

Here, we present important information about the arguments in the main text. We start by providing some definitions. For an operator $A$ represented in the computational basis as a Boolean matrix, we say that $\ket{i}\bra{j} \in A$ if and only if the matrix element $A_{i,j} = 1$.
The \textit{union} (\textit{intersection}) 
of two operators $A$ and $B$, denoted $A \cup(\cap) B$, 
is the one whose matrix representation is the element-wise logical OR(AND) of $A$ and $B$:
$(A \cup(\cap) B)_{i,j} = A_{i,j} \lor(\land) B_{i,j}$. Note that $A\cap B$ is indeed the Hadamard product of the corresponding matrices.

\subsection*{Moment function of phase shadow}

Here, we provide a proof sketch of \cref{prop:23moment} with $m=2$, i.e., $\mb{M}_{\mc{E}_{\text{phase}}}^{(2)}=D^{-2}(\id_D^{\otimes 2}\cup \mbb{S}_D^{\otimes 2})$ with the diagrammatic derivation. The proof of $m=3$ and more general cases are left 
to be shown in Supplementary Note 3.
Recall the definition of the moment function 
$\mb{M}_{\mc{E}_{\text{phase}}}^{(2)}= D^{-1}\mbb{E}_{U\sim \mc{E}_{\text{phase}}} \sum_{\mb{b}} \Phi_{U,\mb{b}}^{\otimes 2}$, with $\Phi_{U,\mb{b}}=U^{\dag} \ket{\mb{b}}\bra{\mb{b}}U$ and $U={H}^{\otimes n}U_A\in \mc{E}_{\text{phase}}$.

Let us first consider a single qubit, say $n=1$, one has $\Phi_{U,b}= U_A^{\dag}{H} {X}^{b}\ket{{0}}\bra{0} {X}^{b} {H}U_A=U_A^{\dag}{Z}^{b}\ket{+}\bra{+} {Z}^{{b}}U_A=\frac1{2}\sum_{x,y\in\{0,1\}}U_A^{\dag}Z^{{b}}\ket{{x}}\bra{{y}}Z^{{b}}U_A$. The measurement result $b$ acts effectively as a random ${Z}$ gate, and we expand $\ket{+}\bra{+}$ in the computational basis. The moment function now becomes 
\begin{equation}
\mb{M}_{\mc{E}_{\text{phase}}}^{(2)}\!\!=\!\frac1{4}\mbb{E}_{\{U_A, b\}} \!\!\!\sum_{x,y,w,s} \!\! (U_A^{\dag}{Z}^{b})^{\otimes 2} \!\ket{x}\!\bra{y}\otimes \ket{w}\!\bra{s}\! ({Z}^{b}U_A)^{\otimes 2},\end{equation}
and $\ket{w}\bra{s}$ is for the second copy. In the $n=1$ case, $U_A$ is just from $\{\id_2,{S}\}$, and together with ${Z}^{b}$, composes a random gate set $\{{S}^k\}_{k=0}^3$. It is clear that $U_A$ is diagonal and only changes the phase of the operator $\ket{x,w}\bra{y,s}$ in the 2-copy computational basis. So the average or twirling effect of random $U_A {Z}^b$ would turn the coefficient of some operators into zero.  Here, we find that there are six out of $16$ survive, and they belongs to three non-overlapping operators, which are
$\Delta_2 = \ket{0,0}\bra{0,0} + \ket{1,1}\bra{1,1}$,
$\id_4 - \Delta_2 = \ket{0,0}\bra{1,1} + \ket{1,1}\bra{0,0}$ and
$\mathbb{S}_2 - \Delta_2 = \ket{0,1}\bra{1,0} + \ket{1,0}\bra{0,1}$. Consequently, the result shows $1/4(\Delta_2+(\id_4 - \Delta_2)+(\mathbb{S}_2 - \Delta_2))=1/4(\id_4 + \mathbb{S}_2 - \Delta_2)=(1/4)(\id_4\cup \mbb{S}_2)$, which finishes the proof for $n=1$. 

One can generalize it to the $n$-qubit case as follows. Since the operation of random ${S}$ gates and the projective measurement to get $\mb{b}$ are both independently conducted for each qubit. So without the twirling effect of random $CZ$ layer in $U_A$, one has the $n$-qubit two-copy result directly as the tensor product of single-qubit ones,  $D^{-2}(\id_4 + \mathbb{S}_2 - \Delta_2)^{\otimes n}$. The effect of ${CZ}$ twirling is to transform $(\id_4+ \mbb{S}_2-\Delta_2)^{\otimes n }$ to $\id_4^{\otimes n}+ \mbb{S}_2^{\otimes n}-\Delta_2^{\otimes n}$, which is explained as follows.

Let $\mc{B} = \{\Delta_2,\id_4-\Delta_2,\mbb{S}_2-\Delta_2\}$. For any two-qubit two-copy operator $B_1 \otimes B_2$ with $B_1, B_2 \in \mc{B}$. Here, the twirling result under a random ${CZ}$ gate reads $\mbb{E}_{U\sim\{\id^{\otimes 2},{CZ}\}} U^{\dagger\otimes2} B_1\otimes B_2 U^{\otimes2}=0$, iff $\{B_1,B_2\} =\{\id_4-\Delta_2,\mbb{S}_2-\Delta_2\}$, otherwise it keeps unchanged. 
\comments{
\begin{equation}\label{Eq:EMCZTwir}
    \begin{split}
        &\mbb{E}_{U\sim\{\id^{\otimes 2},{CZ}\}} U^{\dagger\otimes2} B_1\otimes B_2 U^{\otimes2} \\&= 
    \begin{cases}
        B_1 \otimes B_2 & \text{if } \{B_1,B_2\} \neq \{\id_4-\Delta_2,\mbb{S}_2-\Delta_2\}, \\
        0               & \text{otherwise}.
    \end{cases}\\
    \end{split}
\end{equation}
}
That is, among the nine possible combinations of the two-copy operators, two of them are twirled to be zero.

One can take $(\id_4+ \mbb{S}_2-\Delta_2)^{\otimes n}$ as a trinomial expansion of the three possible qubit-level operators in $\mc{B}$, thus containing all possible terms in the form of \(
\Delta_2^{\otimes n_1} \otimes (\mathbb{I}_4 - \Delta_2)^{\otimes n_2} \otimes (\sw_2 - \Delta_2)^{\otimes n_3}
\). By random ${CZ}$ gates on all qubit pairs, such operators would be twirled to zero iff $n_2n_3\neq 0$. The operators that does not satisfy this condition must be from two binomials $\{\mathbb{I}_4 - \Delta_2,\Delta_2\}^{\otimes n}$ or $\{\mathbb{S}_2 - \Delta_2,\Delta_2\}^{\otimes n}$. 
As a result, one has
\begin{equation}
    (\id_4+ \mbb{S}_2-\Delta_2)^{\otimes n }\xrightarrow{\text{CZ}} \id_4^{\otimes n}+ \mbb{S}_2^{\otimes n}-\Delta_2^{\otimes n}=\id_D^{\otimes 2}\cup\mbb{S}_D,
\end{equation}
which concludes the proof.

\subsection*{Estimation variance of phase shadow}
We show in Supplementary Note 3 that 
the variance of the  estimator in \cref{eq:shadow2} can be related to the $m=3$ moment function $\mb{M}_{\mc{E}_{\text{phase}}}^{(3)}$ as $\text{Var}(\widehat{o_f}) 
\leq D^3 {\rm tr} (\rho\otimes O_f \otimes O_f\text{ } \mb{M}_{\mc{E}_{\text{phase}}}^{(3)})={\rm tr}(\rho\otimes O_f \otimes O_f\text{ } \bigcup_{\pi \in S_3}V_n(\pi))$, by \cref{prop:23moment}.
    
    \comments{
    \begin{equation}
    \begin{split}
\text{Var}(\widehat{o_f}) &\leq \mathbb{E}\ [\tr(O_f\widehat{\rho_f})]^2
=D^3 \tr(\rho\otimes O_f \otimes O_f\text{ } \mb{M}_{\mc{E}_{\text{phase}}}^{(3)}) \\&= \tr[\rho\otimes O_f \otimes O_f\text{ } \bigcup_{\pi \in S_3}V_n(\pi)],
    \end{split}
    \end{equation}
where in the last we insert the result of $\mb{M}_{\mc{E}_{\text{phase}}}^{(3)}$ in \cref{prop:23moment}.
}

Since the operator $O_f$ is off-diagonal, one does not need to consider $V_n(\pi)$ that does not permute the second or the third copy. In particular, $\tr[\rho\otimes O_f \otimes O_f\text{ }\ket{\mb{x,w,z}}\bra{\mb{y,s,t}}]=0$ for any $\ket{\mb{x,w,z}}\bra{\mb{y,s,t}}\in V_n(\pi_{()})\cup  V_n(\pi_{(12)})\cup V_n(\pi_{(13)})$, because the elements of this set satisfy either $\mb{w}=\mb{s}$ or $\mb{z}=\mb{t}$. As such, only elements from $V_n(\pi_{(23)})\cup V_n(\pi_{(123)})\cup V_n(\pi_{(132)})$ are worth considering. As shown in \cref{fig:EMm3} (a), 
by using the principle of inclusion-exclusion, one finds 
\begin{equation}\label{Eq:EMvarfinal}
   \begin{split}
\text{Var}(\widehat{o_f}) &\leq  \tr\big\{\rho\otimes O_f \otimes O_f [V_n(\pi_{(23)})+V_n(\pi_{(123)})+V_n(\pi_{(132)})\\
&\ \ \ \ -V_n(\pi_{(23)})\cap V_n(\pi_{(123)})-V_n(\pi_{(23)})\cap V_n(\pi_{(132)})]\big\}\\
&=\tr(\rho)\tr(O_f^2)+2\tr(\rho O_f^2)-2\sum_{\mb{x}}\bra{\mb{x}}\rho\ket{\mb{x}}\bra{\mb{x}}O_f^2\ket{\mb{x}}\\
& \leq \tr(O_f^2)+2\tr(\rho O_f^2)-2\tr(\rho_d O_f^2)\leq 3\tr(O_f^2).
\end{split}
\end{equation}
Indeed, all common elements of  $V_n(\pi_{(123)})$ and $V_n(\pi_{(132)})$ are contained in $V_n(\pi_{(23)})$.
The intersections for the last two terms in the second line like $V_n(\pi_{(23)})\cap V_n(\pi_{(123)})$ equal the Hadamard product of them, leading to a new tensor (shown in \cref{fig:EMm3} (b)) and thus the last term in the third line (see \cref{fig:EMm3} (c)).

\begin{figure}
    \centering
    \includegraphics[width=\linewidth]{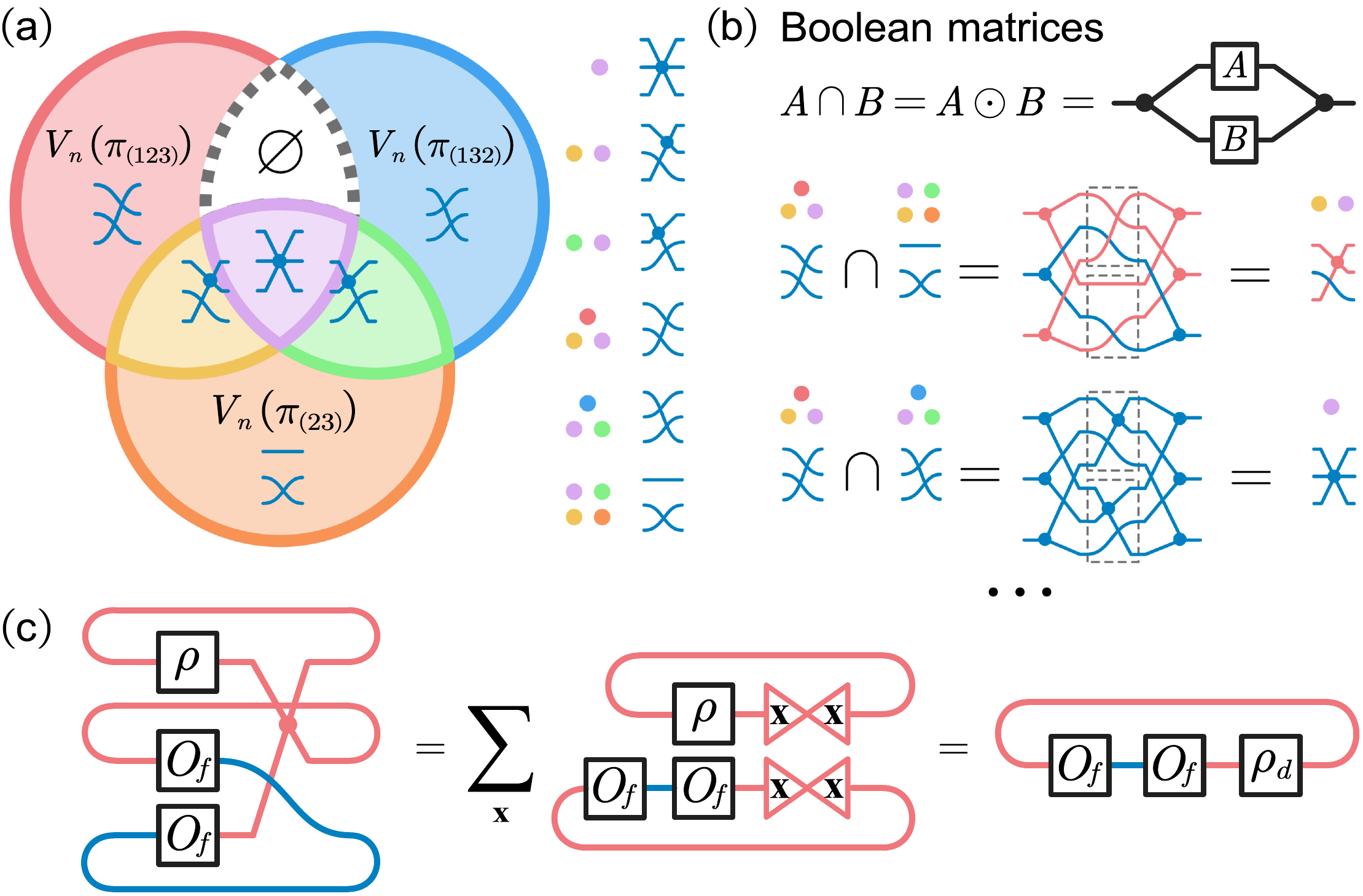}    
    \caption{A tensor diagrammatic illustration of \cref{Eq:EMvarfinal}. (a) Using the Principle of Inclusion-Exclusion, we transform $V_n(\pi_{(23)})\cup V_n(\pi_{(123)})\cup V_n(\pi_{(132)})$ to an alternating sum of Boolean tensors. We observe that the common elements of  $V_n(\pi_{(123)})$ and $V_n(\pi_{(132)})$ are all contained in $V_n(\pi_{(23)})$. (b) The intersection like $V_n(\pi_{(123)})\cap V_n(\pi_{(23)})$ and $V_n(\pi_{(123)})\cap V_n(\pi_{(132)})$ equal to the Hadamard product of them, respectively. (c) The contribution of the tensor $V_n(\pi_{(123)})\cap V_n(\pi_{(23)})$ to the estimation variance. By reconfiguring the tensor network (untwist the blue line), we derive the simplification $\sum_{\mb{x}}\bra{\mb{x}}\rho\ket{\mb{x}}\bra{\mb{x}}O_f^2\ket{\mb{x}} = \tr(\rho_dO_f^2)$.}

    \label{fig:EMm3}
\end{figure}

\subsection*{Efficient post-processing scheme} 
Here, we give a proof sketch of \cref{Prop:PostProcess} and an efficient post-processing scheme of robust phase shadow. First, consider the observable being a Pauli operator $O=Q \in \mb{P}_n/\mathcal{Z}_n$ in \cref{Eq:noiseOf}, one only need consider the term $Q$ in the summation and thus has $\widehat{Q} 
=D~\sigma_Q^{-1}\bra{\mb{b}}UQU^{\dagger}\ket{\mb{b}}=D~\sigma_Q^{-1}\bra{\mb{0}}{X}^{\mb{b}}UQU^{\dagger}{X}^{\mb{b}}\ket{\mb{0}}$, with ${X}^{\mb{b}}:=\bigotimes_i {X}_i^{b_i}$.  Note that $|\mb{0}\rangle \langle\mb{0}| = D^{-1}\sum_{\mb{a}}Z^{\mb{a}}$ as a stabilizer state \cite{aaronson2004improved}. As a result, one can apply the Tableau formalism \cite{aaronson2004improved} for Clifford operation of ${X}^{\mb{b}}U$ on $Q$ to check whether the resulting Pauli operator has overlap with $Z^{\mb{a}}$, which takes $\mc{O}(n^3)$ time. And it takes $\mc{O}(n^2)$ time to compute the value of $\sigma_Q$ by \cref{prop:noisyChannel}. As such, $\mc{O}(n^3)$ in total is sufficient. This efficient approach naturally extends to polynomial terms of Pauli operators in $O$.

Second, for the challenging stabilizer state, $O =\Psi=V^{\dagger}|\mb{0}\rangle \langle\mb{0}|V$, with $V$ being some Clifford unitary. Denote
the stabilizers of $\Psi$ and $\Phi_{U,\mb{b}} = U^{\dag}{X}^{\mb{b}}\ket{\mb{0}}\bra{\mb{0}}{X}^{\mb{b}} U$ as $S_V$ and $S_{X^{\mb{b}}U}$, and the stabilizer group as $\mb{S}_V$ and $\mb{S}_{X^{\mb{b}}U}$. And one has $\Psi=D^{-1}\sum S_V$, similar for $\Phi_{U,\mb{b}}$.  In this way, the estimator in \cref{Eq:noiseOf} reads 
\begin{equation}\label{eq:Mstab}
\tr(\widehat{\Psi_f}) = D^{-2} \sum_{P\in \mb{P}_n/\mathcal{Z}_n} \sum_{\substack{S_V\in \mb{S}_V \\ S_{X^{\mb{b}}U}\in \mb{S}_{X^{\mb{b}}U}}} \sigma_P^{-1}\tr( S_{X^{\mb{b}}U} P)\tr( S_V P)
.    
\end{equation}
Let us use $[S_V]$ to denote the corresponding phaseless Pauli operator of $S_V$, and the phaseless stabilizer group $[\mb{S}_V]$. In fact, one has $[\mb{S}_{X^{\mb{b}}U}]=[\mb{S}_U]$. In this way, it is clear that one should consider $P\in[\mb{S}_U]\cap[\mb{S}_V]\cap \mb{P}_n/\mathcal{Z}_n$ in \cref{eq:Mstab}, otherwise the term returns zero. To determine all possible $P$ in this subset, which in general quantifies the post-processing cost,
equivalently, one only needs to find all $\mb{a}$ such that $[UV^{\dag}(Z^{\mb{a}})VU^{\dag}]\in \mc{Z}_n$. Indeed, such $\mb{a}$ compose a subspace in the binary field (say such $Z^{\mb{a}}$ as a group), we denote the rank of it as $n_g(U,V)$, and we prove in Supplementary Note 7 the following.

\begin{lemma}[Time complexity]\label{prop:time_complexity}
Let $V$ be any fixed Clifford unitary and $U$ be sampled uniformly from the phase unitary ensemble $\mathcal{E}_{\text{phase}}$. The expectation of the random variable $2^{n_g(U,V)}$ which quantifies the number of the shared stabilizers (without phase) of $U$ and $V$, and thus the post-processing cost, shows $\mathbb{E}_{U\sim\mathcal{E}_{\text{phase}}}[2^{n_g(U,V)}] = \mathcal{O}(1)$.
\end{lemma}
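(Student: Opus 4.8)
\textit{Proof strategy for \cref{prop:time_complexity}.} The plan is to convert $\mbb{E}_{U}[2^{n_g(U,V)}]$ into a sum of elementary probabilities indexed by the stabilizer generators of $\Psi=V^{\dagger}\ketbra{\mb 0}{\mb 0}V$, and to evaluate each probability via the symplectic (phaseless-Pauli) action of a random $U\in\mc{E}_{\text{phase}}$. First, I would note that $2^{n_g(U,V)}$ equals the number of $\mb a\in\{0,1\}^n$ for which $UV^{\dagger}Z^{\mb a}VU^{\dagger}$ lies, up to phase, in $\mc{Z}_n$; since the solution set is the kernel of a map linear in $\mb a$, it is a subspace of dimension $n_g(U,V)$, and recalling $[\mb S_{X^{\mb b}U}]=[\mb S_U]$ it is independent of the measurement outcome. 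Linearity of expectation then gives
\begin{equation}
\mbb{E}_{U\sim\mc{E}_{\text{phase}}}\bigl[2^{n_g(U,V)}\bigr]=\sum_{\mb a\in\{0,1\}^n}\Pr_{U}\!\bigl[\,[UV^{\dagger}Z^{\mb a}VU^{\dagger}]\in\mc{Z}_n\,\bigr].
\end{equation}

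Next I would compute the symplectic action. Representing a phaseless Pauli by $(\mb x|\mb z)\in\mathbb{F}_2^{2n}$, conjugation by the diagonal layer $U_A$ fixes the $Z$-part and maps $(\mb x|\mb z)\mapsto(\mb x|\mb z+A\mb x)$, where $A$ is exactly the random symmetric binary matrix of \cref{Eq:diagonalcircuits} — the $S$ gates supply the diagonal of $A$, the $CZ$ gates its off-diagonal part — while the final $H^{\otimes n}$ swaps the two blocks, so $U$ acts by $(\mb x|\mb z)\mapsto(\mb z+A\mb x|\mb x)$. Writing $V^{\dagger}Z^{\mb a}V$ in phaseless form as $(\mb x_{\mb a}|\mb z_{\mb a})$, an injective $\mathbb{F}_2$-linear function of $\mb a$ that vanishes iff $\mb a=\mb 0$, the event $[UV^{\dagger}Z^{\mb a}VU^{\dagger}]\in\mc{Z}_n$ becomes the vanishing of the $X$-part, i.e.\ $A\mb x_{\mb a}=\mb z_{\mb a}$.

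The remaining, and key, ingredient is the distribution of $A\mb x$ for fixed $\mb x$. I would show that for every nonzero $\mb x\in\mathbb{F}_2^n$ the map $A\mapsto A\mb x$ on symmetric matrices with i.i.d.\ uniform entries on and above the diagonal is $\mathbb{F}_2$-linear and surjective — surjectivity by exhibiting, for any target vector, a symmetric $A$ supported on a single row and column through an index $k$ with $x_k=1$ — hence $A\mb x$ is uniform and $\Pr_A[A\mb x=\mb z]=2^{-n}$. Plugging this back: the term $\mb a=\mb 0$ contributes $1$; any $\mb a\neq\mb 0$ with $\mb x_{\mb a}=\mb 0$ forces $\mb z_{\mb a}\neq\mb 0$ and contributes $0$; and each of the at most $2^n-1$ remaining $\mb a$ contributes $2^{-n}$. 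Summing, $\mbb{E}_{U}[2^{n_g(U,V)}]\le 1+(2^n-1)2^{-n}<2=\mc{O}(1)$, uniformly in $n$ and $V$ (the bound is essentially saturated for $V=H^{\otimes n}$, where it equals $2-2^{-n}$).

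The main obstacle is the uniformity claim of the third step: the symmetric matrix $A$ produced by the random $CZ$/$S$ pattern has correlated entries, and one must verify that these correlations do not obstruct $A\mb x$ from being equidistributed for an arbitrary fixed direction $\mb x$. Once this is settled — together with careful phaseless bookkeeping of the $H$-$S$-$CZ$ conjugation, so that "landing in $\mc{Z}_n$" is correctly read off as the vanishing of one linear expression — the rest follows immediately from linearity of expectation and counting.
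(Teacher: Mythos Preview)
Your argument is correct and takes a genuinely different, more direct route than the paper.

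The paper first relaxes the average over $\mathcal{E}_{\text{phase}}$ to an average over the full Clifford (stabilizer-state) ensemble at the cost of the ratio $|\mathcal{E}_{\text{Cl}}|/|\mathcal{E}_{\text{phase}}|<e$, and then invokes the stabilizer inner-product counting of Garc\'{\i}a--Markov--Cross to bound $\sum_k 2^{k}\,\mathcal{L}_n(n-k)/|\mathcal{E}_{\text{Cl}}|$. Your approach bypasses both steps: you write $2^{n_g(U,V)}=\sum_{\mb a}\mathbb{1}\{[UV^\dagger Z^{\mb a}VU^\dagger]\in\mathcal{Z}_n\}$, apply linearity of expectation, and use that for the phase ensemble the symplectic action is exactly $(\mb x|\mb z)\mapsto(\mb z+A\mb x\,|\,\mb x)$ with $A$ uniform over symmetric $\mathbb{F}_2$-matrices, so the event reduces to $A\mb x_{\mb a}=\mb z_{\mb a}$, which for $\mb x_{\mb a}\neq 0$ has probability $2^{-n}$ by surjectivity of $A\mapsto A\mb x$. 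This is self-contained, exploits the explicit structure of $\mathcal{E}_{\text{phase}}$, and yields the sharp constant $\mathbb{E}[2^{n_g}]\le 2-2^{-n}$. The paper's detour, on the other hand, is ensemble-agnostic (any sub-ensemble of stabilizer states with bounded density would work) and also delivers higher moments $\mathbb{E}[2^{m\,n_g}]=\mathcal{O}(1)$ for constant $m$, hence the tail bound; your method would need an additional expansion of $(\sum_{\mb a}\mathbb{1}\{\cdot\})^m$ to recover those.

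Regarding your stated ``main obstacle'': there is none. The free parameters of $A$ (diagonal plus strict upper triangle) are independent Bernoulli$(1/2)$ by construction of $\mathcal{E}_{\text{phase}}$, so $A$ is uniform on symmetric matrices; the only ``correlation'' is the symmetry constraint $A_{ij}=A_{ji}$, which your surjectivity argument already respects.
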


In the following, we use the 
(${Z}$)-Tableau formalism \cite{aaronson2004improved} to efficiently determine all such ${Z}^{\mb{a}}$ and summarize the whole post-precessing  in \cref{Algo:EMeffpostproc}. For a Clifford unitary $W$, the operation on single-qubit operator ${Z}_i$ neglecting the phase shows $[W^{\dagger} {Z}_{i}W] = \prod_{j=0}^{n-1} \sqrt{-1}^{\gamma_{i,j}\delta_{i,j}}{X}_{j}^{\gamma_{i,j}} {Z}_{j}^{\delta_{i,j}}$, and the matrix $[{C},{D}]=[{{[{{\gamma }_{i,j}}]}_{n\times n}},{{[{{\delta }_{i,j}}]}_{n\times n}}]$ are called the ${Z}$-Tableau without phase.

\begin{algorithm}[H]
    \caption{\justifying Efficient post-processing of the robust phase shadow of stabilizer-state observables}
    \label{Algo:EMeffpostproc}

    \KwInput{\justifying Random unitary $U \in \mathcal{E}_{\text{phase}}$, projective measurement result $\lvert \mb{b} \rangle$, and stabilizer-state observable $\Psi=V^\dagger \lvert \mb{0} \rangle \langle \mb{0} \rvert V$ with $V$ being Clifford.}
    \KwOutput{Fidelity estimator $\widehat{\Psi_f}$ in \cref{Eq:noiseOf}.}

    \vspace{0.5em} 

    $F \leftarrow 0$\;
    Represent the Z-Tableau of $W=VU^\dag$ as $[{C}, {D}]$\;
    Perform Gaussian elimination on $[{C} \ {D}]$ to obtain the row echelon form $\begin{bmatrix}
    {C}_1 & {D}_1 \\
    \varnothing & {D}_2
    \end{bmatrix}$, with $\varnothing$ being an $n_g\times n$ null matrix and let ${D}_2=[\mb{c}_0,\mb{c}_1,\dots, \mb{c}_{n_g-1}]^T$\;
    
    \For{$\mb{v} \in \{0,1\}^{n_g}$}{
        $\mb{a} \leftarrow \sum_{j=0}^{n_g-1}v_j\mb{c}_j$\;
        \If{$P=[V^{\dag}Z^{\mb{a}}V]\in \mb{P}_n/\mathcal{Z}_n$}{
            Calculate the value $\sigma_P$ via \cref{Eq:InvNoisychannelPauli}\;
            $F \leftarrow F + \sigma_P^{-1}\bra{\mb{b}}U^{\dagger}PU\ket{\mb{b}}\bra{\mb{0}}V^{\dagger}PV\ket{\mb{0}}$\;
        }
    }
    \KwRet{$F$}\;
\end{algorithm}
The computational costs for steps 2, 3, 7 (single-Pauli case) and 8 are all $O(n^3)$ using the Tableau formalism. Given that steps 7 and 8 are iterated $2^{n_g(U,V)}$ times, 
the overall cost of \cref{Algo:EMeffpostproc} is consequently $\mathcal{O}(n^3) + \mathcal{O}(n^3 \cdot 2^{n_g(U,V)})$, which is $\mathcal{O}(n^3)$ by expectation based on \cref{prop:time_complexity}. In addition, for the estimator of diagonal part $\widehat{\Psi_d}=\bra{\mb{b}}\Psi\ket{\mb{b}}$, it can also be calculated in $\mathcal{O}(n^3)$ similarly. More detailed information about the efficient post-processing scheme is left to Supplementary Note 7.

\subsection*{Extended noise model} In the RPS protocol, we utilize the ${ZZ}$ Pauli error of $CZ$ gates as the gate-dependent noise. Here, we present an extended noise model that also considers ${IZ}$ and $ZI$ errors, which are native in the neutral atom platforms ~\cite{wu2022erasure,cong2022hardware}. In this way, we denote the noisy two-qubit gates $\widetilde{{CZ}}= {CZ}\circ\Lambda$, where $\Lambda(\rho) = (1-\frac{3}{4}p_e)\rho +\frac{p_e}{4} ZI \rho ZI +\frac{p_e}{4} IZ \rho IZ +\frac{p_e}{4} ZZ \rho ZZ$. Using the noise model, one can also achieve unbiased estimation using the estimator in \cref{Eq:InvNoisychannelPauli}, where the coefficient changed to $\sigma_P =  \sum_{s=0}^{n_1 + n_2} \sum_{t=\max(0,s-n_1)}^{\min(s, n_2)} (-1)^t \binom{n_1}{s-t} \binom{n_2}{t}  (1 - \frac{p_e}{2})^{(n_1 + n_2 - s)n_3+\frac{n_3(n_3-1)}{2}} (\frac{p_e}{2})^{sn_3}$ for Pauli in the form of $ P = \id_2^{\otimes n_1}
\otimes  {Z}^{\otimes n_2}\otimes \{{X},{Y}\}^{\otimes n_3}$. See the detailed proof in Supplementary Note 9. Hence, the post-processing scheme is also efficient for Pauli operators and stabilizer states as shown in the previous section. Finally, numerical results show that the estimation variance under the updated noise model also scales exponentially with $n^2p_e$, just as \cref{Eq:Var_first}.

\section*{Data availability} The data that support the findings of
this study are openly available \cite{RPSData}.

\section*{Code availability} The code used to create the data is openly available \cite{RPSData}.





\section*{Acknowledgments}
This work has been supported by the Innovation Program for Quantum Science and Technology Grant Nos.~2024ZD0301900 and 2021ZD0302000, the National Natural Science Foundation of China (NSFC) Grant No.~12205048, the Shanghai Science and Technology Innovation Action Plan Grant No.~24LZ1400200, Shanghai Pilot Program for Basic Research - Fudan University 21TQ1400100 (25TQ003), and the start-up funding of Fudan University.
It has also been supported by the BMFTR (MuniQCAtoms), the Munich Quantum Valley, Berlin Quantum, the Clusters of Excellence (ML4Q, MATH+), and the DFG (CRC 183).
\section*{Author contributions}
Y.Z. and Q.Z. conceived the initial idea for the project, with substantial contributions from J.E. and F.X. Q.Z. and Y.Z. developed the theory of robust phase shadow, with substantial inputs from J.E., D.Q. and Z.Y. Moreover, Z.Y. and Q.Z. performed the numerical analysis and prepared the figures. All authors contributed substantially to the final manuscript.

\section*{Competing interests}
The authors declare no competing interests.

\section*{Correspondence} Correspondence and requests for materials should be addressed to F.~Xu, J.~Eisert, or Y.~Zhou.






\newpage
\onecolumngrid

\begin{appendix}
\bigskip

\section*{Supplementary Note 1. --Notations and preliminaries}\label{App:UaA}

At the beginning of this appendix, we introduce some 
basic notations and conventions. Throughout this work, we employ set operations—including intersection, union, and belonging—to describe the relationship between Boolean matrices. Let $A$ and $B$ be two Boolean matrices or operators with compatible dimensions. The union and intersection of $A$ and $B$ 
are, respectively, defined as bit-wise OR and AND operations as
\begin{eqnarray}
    C = A \cup B:\; c_{i,j} = a_{i,j} \vee b_{i,j},\\
    C = A\cap B:\;c_{i,j} = a_{i,j}\wedge b_{i,j},
\end{eqnarray} where $a_{i,j},b_{i,j}$ and $c_{i,j}$ are matrix elements of $A,B$ and $C$. We illustrate these operations in \cref{fig:prelim}(a). The intersection is nothing but the Hadamard product, whose diagrammatic representation is shown in \cref{fig:prelim}(c). Here, the concept of the Boolean matrices can be extended to operators that are Boolean in the computational basis. We say $\ket{i}\bra{j}\in A$, if $\bra{i}A\ket{j} = 1$, i.e., $a_{i,j}=1$. Also, given two matrices $A,B$ with compatible dimensions, we say $A\subseteq B$ if $b_{i,j}=1$ for all $(i,j)$ such that $a_{i,j}=1$.

%
Second, we introduce the notation for sets of finite integers. Let $[n]=\{0,1,\dots ,n-1\}$, then we denote with $I_p \subset [n]$ a subset of $[n]$ with $p$ elements, while $\mc{I}_p$ denotes the collection of all subsets of $[n]$ with $p$ elements. Consequently, when $I_p$ refers to a set of qubits, $A^{I_p}$ implies applying the operator $A$ to every qubit indexed in $I_p$. In addition to the conventional notation of the union operation of sets, we introduce the \emph{summation} of sets $I_p \in \mathcal{I}_p$. When $I_p^{(1)}\cap I_p^{(2)}  = 0$, we say that  
\begin{equation}
I_p^{(1)}+I_p^{(2)} \coloneq I_p^{(1)}\cup I_p^{(2)}. 
\end{equation}
In the remainder of this manuscript, whenever we use $I_p^{(1)}+I_p^{(2)}$, it means  that $I_p^{(1)}\cap I_p^{(2)}  = 0$. 
%
%
Third, we introduce the representation of the symmetry group on the Hilbert space of the qubit system. Supposing that $\pi$ is an element of the symmetric group $S_m$, we denote with $V_n(\pi)$ the representation of $\pi$ on the $n$-qubit system, i.e., 
\begin{equation}
V_n(\pi)\ket{\psi_0}\otimes\ket{\psi_1}\otimes\cdot\cdot\cdot\otimes\ket{\psi_{m-1}} = \ket{\psi_{\pi^{-1}(0)}}\otimes\ket{\psi_{\pi^{-1}(1)}}\otimes\cdot\cdot\cdot\otimes\ket{\psi_{{\pi^{-1}(m-1)}}},
\end{equation}
for any $\ket{\psi_i}\in \mc{H}^{D}$ with $D=2^n$. We illustrate the 2nd and 3rd order symmetric group in \cref{fig:prelim}(b).

Finally, we introduce the group of Pauli operators. The Pauli group for $n$-qubit quantum system is $\mbb{P}_n=\{\pm \sqrt{\pm{1}}\times\{\id_2,
X,{Y},{Z}\}\}^{\otimes n}$, where $\id_2$, ${X,Y,Z}$ are the single-qubit Pauli operators. We refer to $\id_d$ as the $d$-dimensional identity matrix, while $\id_D$ will sometimes be abbreviated as $\id$. We will  also both use $I$ or $\id$, depending on whether we would like to emphasize whether we mean the identity matrix or the identity element of the Pauli group. 
For convenience, we denote the phaseless Pauli operators as $\mb{P}_n=\{\id_2,{X},{Y},{Z}\}^{\otimes n}$, and $\mb{P}^*_n={P}_n \setminus \id$ as the non-identity phaseless Pauli operators. The Pauli 
 Clifford group is defined as the normalizer of the Pauli group, i.e., $U^{\dag}PU\in \mbb{P}_n, \forall P\in \mathbb{P}_n$. A Clifford operator is determined by its action on $U^{\dag}{X}_iU$ and $U^{\dag}{Z}_iU$ for $i\in[n]$, because ${X}_i$ and ${Z}_i$ are the generators of the Pauli group, where ${X}_i = \id_2^{\otimes i-1}\otimes {X}\otimes \id_2^{\otimes n-i}$ and similarly for ${Z}_i$. 

\begin{figure}
    \centering
    \includegraphics[width=.85\linewidth]{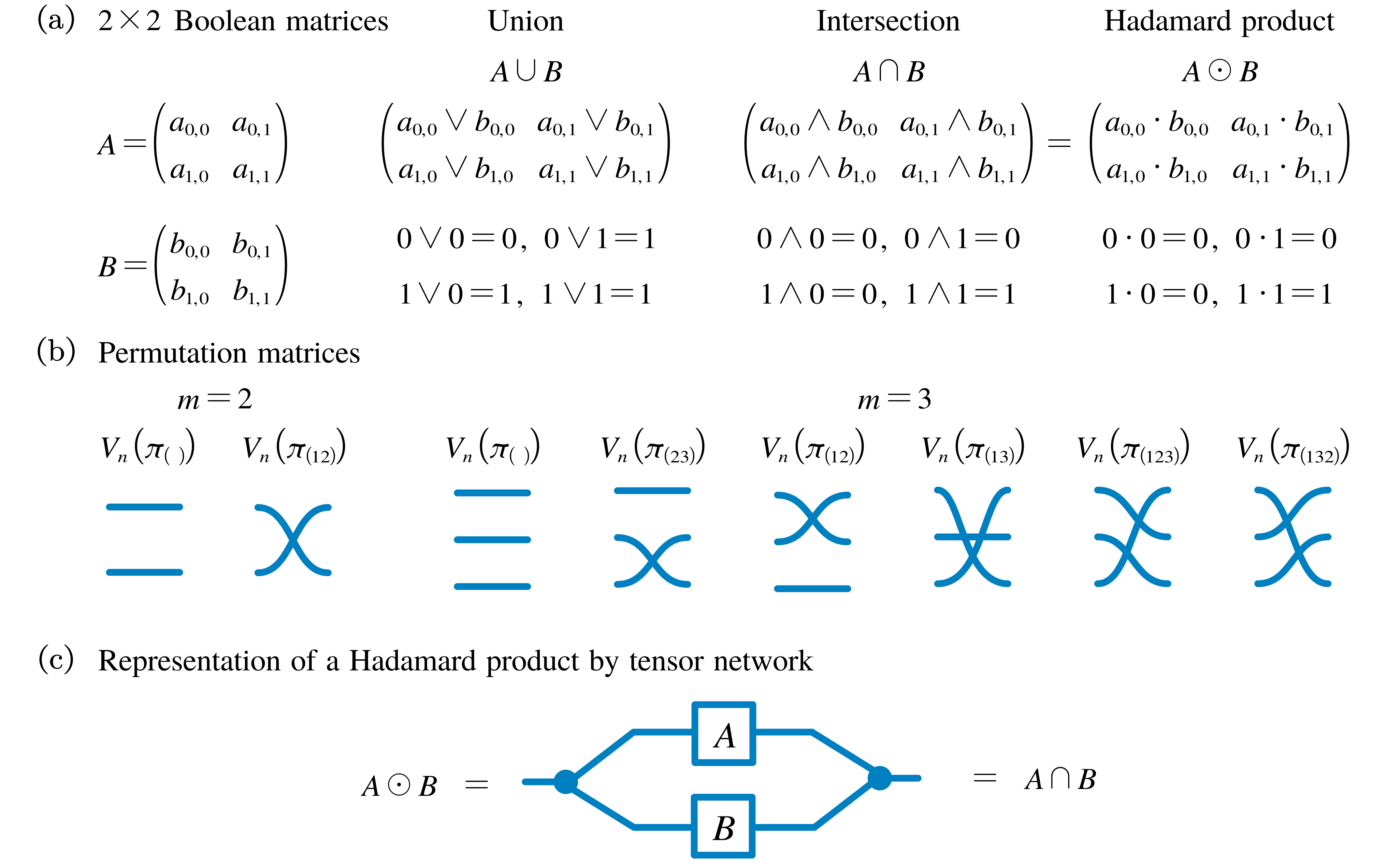}
    \caption{Graphical illustration of the notations used in this manuscript.}
    \label{fig:prelim}
\end{figure}




\section*{Supplementary Note 2. --Ensemble of general diagonal circuits}\label{Ap:DiagEnsem}

In this work, we employ circuits with a ${CZ-S-H}$ structure, which we refer to as phase circuits, as our main tool for achieving randomized measurements. In this section, we introduce a more general diagonal circuit ensemble, of which the phase circuit ensemble is a special case. A random diagonal circuit incorporates three levels of randomness. The first level corresponds to applying single random diagonal gates. The second level corresponds to applying diagonal gates on all possible qubit subsets, which gives rise to a layer of random diagonal circuits. The third level corresponds to concatenating multiple layers. 
Here, we provide a detailed introduction to the structure of the ensemble of general diagonal circuits via the three levels of randomness. At the first level, we uniformly sample $p$-qubit diagonal gates from the ensemble 
\begin{equation}
    \mc{E}_{p, q, I_p} = \{\operatorname{diag} (1, 1, 1, \dots, e^{i\frac{2\pi}{2^q}k})\mid k = 0, 1, \cdots, 2^q-1\},
\end{equation}
where $p<n$ is the number of qubits, $q\geq 1$ is an integer, and $I_p$ denotes the subset of $[n]$ of size $p$. $I_{p}$ specifies the qubits on which the gate is being applied. For example, when $p=2, q=1$, the ensemble becomes the $\{\mbb{I}_4,{CZ}\}$. When $p=1$ and $q=2$, the ensemble becomes $\{\mbb{I}_2, {S, S^2=Z, S^3=S}^\dagger\}$. 

At the second level, we apply a random gate from the ensemble $\mc{E}_{p,q,I_p}$ to all possible subsets of $[n]$ with size $p$. As a result, we obtain a layer of diagonal gates uniformly sampled from the ensemble
\begin{equation}\label{Eq:blockDiagonal}
    \mc{E}_{p,q} = \left\{ \prod_{I_p \in \mathcal{I}_p} U_{p,q,I_p} \mid U_{p,q,I_p} \in \mc{E}_{p,q,I_p} \right\}.
\end{equation}
For example, the ensemble $\mc{E}_{1,2}$ means that a single-qubit gate is sampled uniformly and independently from the set $\{\mathbb{I}_2, {S}, {Z}, {S}^{\dagger}\}$. Likewise, the ensemble $\mc{E}_{2,1}$ means that a two-qubit gate is sampled uniformly and independently from the set $\{\mathbb{I}_2^{\otimes 2},{CZ}\}$.

At the third level, we concatenate multiple layers of diagonal circuits and an additional layer of Hadamard gates on every qubits to obtain the final circuit $U_{\text{diag}} = U_{p,p}U_{p-1,p}\dots U_{1,p}H^{\otimes n}$, where $U_{p,p}\in\mc{E}_{p,p}, U_{p-1,p}\in \mc{E}_{p-1,p},\dots, U_{1,p}\in \mc{E}_{1,p}$. The collection of all $U_{\text{diag}}
$ constitutes the unitary ensemble of general diagonal circuits given by
\begin{equation}\label{Eq:defGeneralEns}
    \mc{E}_{\text{diag}}^{(p)} = \{ U_{p,p}U_{p-1,p}\dots U_{1,p}{H}^{\otimes n}|U_{p,p}\in\mc{E}_{p,p},U_{p-1,p}\in \mc{E}_{p-1,p},\dots ,U_{1,p}\in \mc{E}_{1,p}\}. 
\end{equation}

As a special example, the unitary ensemble $\mc{E}_{\text{diag}}^{(2)}$ follows the circuit structure ${CS-S-H}$, which is not Clifford due to the ${CS}$ gates. Classical shadow tomography, in general, requires that the random circuit can be efficiently simulated classically. Therefore, we replace the ${CS}$  gates with the ${CZ}$ gates, obtaining the ensemble of phase circuits $\mc{E}_\text{phase}$ introduced in the main text. As we will demonstrate in Supplementary Note 3(B), this replacement leaves the 2nd and 3rd moments of the ${CS-S-H}$ ensemble and ${CZ-S-H}$ ensemble unchanged.

\section*{Supplementary Note 3. --Moment functions of diagonal circuit ensembles}\label{ap:preProp-23moment}

The moment functions, particularly, 
the 2nd and 3rd moments, are crucial for computing the inverse map of randomized measurements and conducting theoretical performance analysis. In this section, we introduce the definition of the moment function and derive the moment function of the first-level ensemble $\mc{E}_{p,q,I_p}$. 
Denote the $m$-copy twirling channel over a unitary ensemble $\mathcal{E}$ as
\begin{equation}
\Lambda_{\mathcal{E}}^{(m)}(\cdot) \coloneq \mathbb{E}_{U \in \mathcal{E}} U^{\dagger\otimes m} (\cdot) U^{ \otimes m},
\end{equation}
the definition of the moment function is as follows.
\begin{definition}[Moment functions]\label{def:n-moment}
For an $n$-qubit system, the $m$-th moment  function of $\mc{E}$ is defined as\begin{equation}
   \mb{M}_{\mc{E}}^{(m)} \coloneq\frac{1}{2^n}\sum_{\mb{b}}\Lambda_{\mc{E}}^{(m)}(|\mb{b}\rangle\langle \mb{b}|^{\otimes m}) .
\end{equation}  
\end{definition}
Introducing $\Phi_{U,\mb{b}} \coloneq U^{\dagger}|\mb{b}\rangle\langle \mb{b}| U$, we can rewrite the $m$-th moment function as 
\begin{equation}
\mb{M}_{\mc{E}}^{(m)} = \frac{1}{2^n}\mbb{E}_{U\sim \mc{E}} \sum_{\mb{b}} \Phi_{U,\mb{b}}^{\otimes m}. 
\end{equation} 
Before we are in the position to explore the moment function $\mb{M}_{\mc{E}_\text{phase}}^{(m)}$ of the phase circuit ensemble $\mathcal{E}_\text{phase}$, it is necessary to introduce the following lemma. 


\begin{lemma}[Twirling channels]\label{lem:twirling_diagonal}
Let $\mathbf{\hat{x}}$ and $\mathbf{\hat{y}}$ be two $mn$-bit binary numbers. The twirling channel of the first-level ensemble $\mc{E}_{p,q,I_p}$, when applied to an $m$-copy $n$-qubit operator $|\mathbf{\hat{x}}\rangle\langle \mathbf{\hat{y}}|$ in the computational basis, is given by
\begin{equation}
    \Lambda_{p,q,I_p}^{(m)}(|\mathbf{\hat{x}}\rangle\langle \mathbf{\hat{y}}|) =
\begin{cases}
|\mathbf{\hat{x}}\rangle\langle \mathbf{\hat{y}}|, & \sum\limits_{t=0}^{m-1} \prod\limits_{j\in I_p} x_{t,j} \equiv \sum\limits_{t=0}^{m-1} \prod\limits_{j\in I_p} y_{t,j} \pmod{2^q}, \\
0, & \text{otherwise}.
\end{cases}
\end{equation} where $x_{t,j}$ and $y_{t,j}$, respectively, label the binary element on the $j$-th qubit of the $t$-th copy of $\mathbf{\hat{x}},\mathbf{\hat{y}}$.
\end{lemma}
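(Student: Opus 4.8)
The plan is to evaluate the twirling channel directly on a computational-basis operator $\ket{\mathbf{\hat{x}}}\bra{\mathbf{\hat{y}}}$ and reduce the ensemble average to a finite geometric sum over roots of unity. First I would unpack the structure of $\mc{E}_{p,q,I_p}$: each element is diagonal and indexed by $k\in\{0,1,\dots,2^q-1\}$, and as a $p$-qubit gate its only nontrivial diagonal entry is the one corresponding to all $p$ qubits being in state $\ket{1}$. Hence, acting on a single-copy $n$-qubit basis state $\ket{\mathbf{x}}$ with the gate supported on the qubits $I_p$, this element multiplies $\ket{\mathbf{x}}$ by $\exp\!\big(\tfrac{2\pi i}{2^q}\,k\prod_{j\in I_p}x_j\big)$, since $\prod_{j\in I_p}x_j$ is exactly the indicator of ``all qubits in $I_p$ equal $1$''. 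Taking the $m$-fold tensor power then gives $U^{(k)\otimes m}\ket{\mathbf{\hat{x}}} = \exp\!\big(\tfrac{2\pi i}{2^q}\,k\sum_{t=0}^{m-1}\prod_{j\in I_p}x_{t,j}\big)\ket{\mathbf{\hat{x}}}$, with the analogous identity for the bra $\bra{\mathbf{\hat{y}}}$.

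Next I would conjugate $\ket{\mathbf{\hat{x}}}\bra{\mathbf{\hat{y}}}$ by $U^{(k)\otimes m}$; the two phases combine into a single overall factor $\exp\!\big(\tfrac{2\pi i}{2^q}\,k\,\delta\big)$ while the operator itself is left unchanged, where $\delta \coloneq \sum_{t=0}^{m-1}\prod_{j\in I_p}x_{t,j}-\sum_{t=0}^{m-1}\prod_{j\in I_p}y_{t,j}$ is an integer. The last step is the expectation over the uniform choice of $k$: by the standard $2^q$-th-root-of-unity identity, $\frac{1}{2^q}\sum_{k=0}^{2^q-1}\exp\!\big(\tfrac{2\pi i}{2^q}\,k\,\delta\big)$ equals $1$ when $\delta\equiv 0\pmod{2^q}$ and vanishes otherwise. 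Substituting back yields precisely the claimed case distinction: $\Lambda_{p,q,I_p}^{(m)}(\ket{\mathbf{\hat{x}}}\bra{\mathbf{\hat{y}}})$ equals $\ket{\mathbf{\hat{x}}}\bra{\mathbf{\hat{y}}}$ when $\sum_{t}\prod_{j\in I_p}x_{t,j}\equiv\sum_{t}\prod_{j\in I_p}y_{t,j}\pmod{2^q}$ and $0$ otherwise.

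Since every step is a short and routine computation, there is no genuine obstacle here; the only care required is the bookkeeping of which binary coordinates enter the phase — the product over $j\in I_p$ within each copy $t$, summed over the copies $t=0,\dots,m-1$ — and noting that $\delta$ is an integer so that the root-of-unity filter applies verbatim. As a sanity check I would observe that the lemma specializes correctly to the building blocks used elsewhere in the paper: $p=2,\,q=1$ recovers the $\{\id_2^{\otimes 2},\,{CZ}\}$ twirl, whose $m=2$ behaviour underlies the proof of \cref{prop:23moment}, while $p=1,\,q=2$ recovers the random-$S$-gate twirl; and by linearity the statement extends from basis operators to arbitrary operators expanded in the computational basis.
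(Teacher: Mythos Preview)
Your proposal is correct and follows essentially the same approach as the paper: write the action of each diagonal gate $U^{(k)}$ on a computational-basis vector as a phase, tensor to $m$ copies, conjugate to obtain a single phase $e^{i\frac{2\pi}{2^q}k\delta}$, and then average over $k$ using the root-of-unity identity. The only cosmetic difference is the sign of $\delta$ (the paper writes $\sum_t\prod_j y_{t,j}-\sum_t\prod_j x_{t,j}$, matching the convention $\Lambda^{(m)}(\cdot)=\mathbb{E}_U\,U^{\dagger\otimes m}(\cdot)U^{\otimes m}$), but since the geometric-sum filter depends only on $\delta\bmod 2^q$ this is immaterial.
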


\begin{proof}
Consider a unitary operator $U = \text{diag}(1,1,1,\dots,e^{i\frac{2\pi}{2^q}k})$ acting on the qubit set $I_p$. Its action on $|\mathbf{x}\rangle$ is given by
\begin{equation}
    U^{\otimes m} |\mathbf{\hat{x}}\rangle = e^{i\frac{2\pi}{2^q} k \sum_{t=0}^{m-1} \prod_{j\in I_p} x_{t,j}} |\mathbf{\hat{x}}\rangle.
\end{equation}
Thus, applying the twirling channel on $|\mathbf{\hat{x}}\rangle\langle \mathbf{\hat{y}}|$ yields
\begin{equation}
\begin{aligned}
\Lambda_{p,q,I_p}^{(m)}(|\mathbf{\hat{x}}\rangle\langle \mathbf{\hat{y}}|)
&= \mathbb{E}_{U \in \mc{E}_{p,q,I_p}} U^{\dagger\otimes m} |\mathbf{\hat{x}}\rangle\langle \mathbf{\hat{y}}| U^{ \otimes m} \\
&= \frac{1}{2^q} \sum_{k=0}^{2^q-1} e^{i\frac{2\pi}{2^q} k \left( \sum_{t=0}^{m-1} \prod_{j\in I_p} y_{t,j} - \sum_{t=0}^{m-1} \prod_{j\in I_p} x_{t,j} \right)} |\mathbf{\hat{x}}\rangle\langle \mathbf{\hat{y}}| \\
&=
\begin{cases}
|\mathbf{\hat{x}}\rangle\langle \mathbf{\hat{y}}|, & \sum\limits_{t=0}^{m-1} \prod\limits_{j\in I_p} x_{t,j} \equiv \sum\limits_{t=0}^{m-1} \prod\limits_{j\in I_p} y_{t,j} \pmod{2^q}, \\
0, & \text{otherwise}.
\end{cases}
\end{aligned}
\end{equation}
This completes the proof. 
\end{proof}

Using \cref{lem:twirling_diagonal}, one can show that 
\begin{eqnarray}\label{Eq: twirling_composition_1}
    \Lambda_{p, q, I_p}^{(m)} \circ \Lambda_{p', q', I_p'}^{(m)}  (O) &=& \Lambda_{p, q, I_p}^{(m)}(O) \cap \Lambda_{p', q', I_p'}^{(m)} (O),\\
     \Lambda_{p, q, I_p}^{(m)} (O) \cap \Lambda_{p, q, I_p}^{(m)} (O') &=& 0,\; \forall O\cap O' = 0,
\end{eqnarray} 
for $m$-copy Boolean operators $O$ and $O^\prime$ in the computational basis with the compatible dimension, and we can extend these results to the twirling channel $\Lambda_{p,q}^{(m)}$ with the second-level ensemble $\mc{E}_{p,q}$ in \cref{Eq:blockDiagonal}
\begin{eqnarray}
    \Lambda^{(m)}_{p, q} \circ \Lambda^{(m)}_{p', q'}  (O) &=& \Lambda^{(m)}_{p, q}(O) \cap \Lambda^{(m)}_{p', q'} (O),\\
     \Lambda^{(m)}_{p, q} (O) \cap \Lambda^{(m)}_{p, q} (O') &=& 0,\; \forall O\cap O' = 0.
\end{eqnarray} 
Since the intersection operation is commutative, the composition of twirling on the $m$-copy operator is also mutually commutative.
%
%
Equipped with these results, we can proceed to prove the moment functions. In Supplementary Note 3(A), we derive the moment function for phase circuit ensembles (Proposition 1 in the main text), and in Supplementary Note 3(B), we prove it for general diagonal circuit ensembles.

\subsection*{A. Proof of Proposition 1}\label{ap:Prop-23moment-1}

Using $\ket{\mb{b}} = X^{\mb{b}}\ket{\mb{0}}\coloneq(\prod_{i:b_i=1}X_i)\ket{\mb{0}}$, we can write $\Phi_{U,\mb{b}}$ for $U\in \mc{E}_\text{phase}$ as
\begin{equation}\label{eq:PhiAb}
\begin{split}
    \Phi_{U,\mb{b}}= U^{\dagger}\ket{\mb{b}}\bra{\mb{b}} U&=U_A^{\dag}\ {H}^{\otimes n} X^{\mb{b}}\ket{\mb{0}}\bra{\mb{0}} X^{\mb{b}} {H}^{\otimes n}  U_A\\
    &=U_A^{\dag}\ Z^{\mb{b}} {H}^{\otimes n} \ket{\mb{0}}\bra{\mb{0}}  {H}^{\otimes n} Z^{\mb{b}}\ U_A\\
    &=\frac{1}{2^n}U_A^{\dag}\ Z^{\mb{b}} \sum_{\mb{x},\mb{y}}\ket{\mb{x}}\bra{\mb{y}} Z^{\mb{b}}\ U_A.
\end{split}
 \end{equation}
Taking the mean of $\Phi_{U,\mb{b}}$ over all $\mb{b}$, we obtain 
\begin{equation}\label{eq:sumPhiAb}
\frac{1}{2^n}\sum_\mb{b}{\Phi_{U,\mb{b}} } =\frac{1}{2^n}\sum_{\mb{x},\mb{y}} U_A^{\dag} \Lambda_{1,1}(\ket{\mb{x}}\bra{\mb{y}}) U_A,
\end{equation} where we have used $\Lambda_{1,1} (\bullet) = \frac{1}{2^n}\sum_{\mb{b}} Z^{\mb{b}}(\bullet) Z^{\mb{b}}$. Extending this result to the $m$-copy case, we arrive at 
\begin{equation}\label{eq:sumPhiAb_mcopy}
\frac{1}{2^n}\sum_{\mb{b}}{\Phi_{U,\mb{b}}^{\otimes m} } =\frac{1}{2^{mn}}\sum_{\mb{\hat{x}},\mb{\hat{y}}\in\{0,1\}^{mn}} U_A^{\dagger\otimes m}  \Lambda_{1,1}^{(m)} (|\mathbf{\hat{x}}\rangle\langle \mathbf{\hat{y}}|) U_A^{ \otimes{m}}.
\end{equation}
Taking the average of \cref{eq:sumPhiAb_mcopy} over $U\sim\mc{{E}}_\text{phase}$, we obtain the moment function of the phase circuit ensemble as
\begin{equation}\label{eq:momentAb}
\begin{split}
        \mb{M}_{\mc{E}_\text{phase}}^{(m)}\coloneq\frac{1}{2^n}\mathbb{E}_{U\sim\mc{{E}}_\text{phase}} \sum_\mb{b}{\Phi_{U,\mb{b}}^{\otimes m} } &=\frac{1}{2^{mn}}\sum_{\mb{\hat{x}},\mb{\hat{y}}\in\{0,1\}^{mn}}  \Lambda_{A}^{(m)} \circ \Lambda_{1,1}^{(m)}(|\mathbf{\hat{x}}\rangle\langle \mathbf{\hat{y}}|)\\
        &=\frac{1}{2^{mn}}\sum_{\mb{\hat{x}},\mb{\hat{y}}\in\{0,1\}^{mn}}  \Lambda_{2,1}^{(m)} \circ \Lambda_{1,2}^{(m)}\circ \Lambda_{1,1}^{(m)}(|\mathbf{\hat{x}}\rangle\langle \mathbf{\hat{y}}|)\\
        &=\frac{1}{2^{mn}}\sum_{\mb{\hat{x}},\mb{\hat{y}}\in\{0,1\}^{mn}}  \Lambda_{2,1}^{(m)}\circ \Lambda_{1,2}^{(m)}(|\mathbf{\hat{x}}\rangle\langle \mathbf{\hat{y}}|),
\end{split}
\end{equation}
where we have used $\Lambda_A^{(m)}=\Lambda_{2,1}^{(m)}\circ \Lambda_{1,2}^{(m)}$. From this derivation, we observe that sampling single-qubit gates independently of the set $\{\id_2, {S}\}$ for each qubit achieves the same effect as $\Lambda_{1,2}^{(m)}$ twirling, since there is already $\Lambda_{1,1}^{(m)}$ twirling stemming from the measurement result in the computational basis. Hence, we can simplify and set the single-qubit gates sampled from $\{\id_2, {S}\}$ to achieve the same effect as $\Lambda_{1,2}^{(m)}$. Next, we prove the results of Proposition 1 for the cases  $m=2$ and $m=3$ separately.

First, we derive the moment function for the $m=2$ case. We denote the two-copy operators as $ \ket{\mb{\hat{x}}} = \ket{\mb{x}}\otimes \ket{\mb{w}} = \ket{\mb{x,w}}$, and $ \bra{\mb{\hat{y}}} = \bra{\mb{y}}\otimes \bra{\mb{s}} = \bra{\mb{y,s}}$. Then  \cref{eq:momentAb} becomes
\begin{equation}
\begin{split}
    \mb{M}_{\mc{E}_\text{phase}}^{(2)}&=\frac{1}{2^{2n}}\sum_{\mb{x},\mb{w},\mb{y},\mb{s}\in\{0,1\}^{n}}  \Lambda_{2,1}^{(2)}\circ \Lambda_{1,2}^{(2)}(\ket{\mb{x,w}}\bra{\mb{y,s}}).
\end{split}
\end{equation}
 Then, utilizing \cref{lem:twirling_diagonal}, we obtain  
\begin{equation}\label{Eq:P2tem}
\begin{split}
    \mb{M}_{\mc{E}_\text{phase}}^{(2)}&=\frac{1}{2^{2n}}\sum_{(\mb{x},\mb{w},\mb{y},\mb{s})\in C_1\cap C_2}  \ket{\mb{x,w}}\bra{\mb{y,s}},\\
\end{split}
\end{equation}
where the sets (constraints on the matrix elements) $C_1, C_2$ are  given by
\begin{equation}
    \begin{split}
        C_1 &= \{(\mb{x},\mb{w},\mb{y},\mb{s})|x_l+w_l\equiv y_l+s_l\pmod4, \forall l\in [n]\},\\
        C_2 &= \{(\mb{x},\mb{w},\mb{y},\mb{s})|x_ix_j+w_iw_j\equiv y_iy_j+s_is_j\pmod2, \forall i\neq j\in[n]\}.
    \end{split}
\end{equation}
 $C_1$ and $C_2$ are introduced, respectively, due to the twirling $\Lambda^{(2)}_{1,2}$ and $\Lambda_{2,1}^{ (2)}$. 
We first analyze the single-qubit twirling $\Lambda_{1,2}^{(2)}$. Suppose that
\begin{equation}
  \ket{\mb{x,w}}\bra{\mb{y,s}} = \bigotimes_{l=0}^{n-1}\ket{x_l ,w_l }\bra{y_l ,s_l},
\end{equation}
we enumerate all of six elements $\ket{x_l, w_l }\bra{y_l, s_l}$ that satisfy the constraint $C_1$ in  \cref{fig:p=2noiseless}(a), finding that they are the matrix elements of $\id_2^{\otimes 2}$ or $\mbb{S}_2$. 
Therefore, 
\begin{equation}\label{Eq:2temp1}
    \sum_{(\mb{x},\mb{w},\mb{y},\mb{s})\in C_1}  \ket{\mb{x,w}}\bra{\mb{y,s}} = (\id_2^{\otimes 2}\cup \mbb{S}_2)^{\otimes n} = (\id_2^{\otimes 2}+ \mbb{S}_2-\Delta_2)^{\otimes n}.
\end{equation} 
The derivation of the second equality in \cref{Eq:2temp1} is illustrated as a tensor diagram in \cref{fig:p=2noiseless}(b). 

Moreover, these six one-qubit operators can be grouped into three mutually orthogonal operators: These are $\Delta_2 = \ket{0,0}\bra{0,0}+\ket{1,1}\bra{1,1}$, $\id_4-\Delta_2 = \ket{0,0}\bra{1,1}+\ket{1,1}\bra{0,0}$, and $\mbb{S}_2-\Delta_2 = \ket{0,1}\bra{1,0}+\ket{1,0}\bra{0,1}$. Then, we can write \cref{Eq:2temp1} 
as 
\begin{equation}\label{Eq:2temp2}
\begin{split}
    \sum_{(\mb{x},\mb{w},\mb{y},\mb{s})\in C_1}  \ket{\mb{x,w}}\bra{\mb{y,s}}  &= (\id_2^{\otimes 2}+ \mbb{S}_2-\Delta_2)^{\otimes n}\\
    &=[\Delta_2 +(\id_4-\Delta_2)+(\mbb{S}_2-\Delta_2)]^{\otimes n}\\
   &=\sum_{I_i+I_j+I_k=[n]} \Delta_2^{I_i}\otimes (\id_4-\Delta_2)^{I_j} \otimes (\mbb{S}_2-\Delta_2)^{I_k},
\end{split}
\end{equation}
where $I_i\in\mc{I}_i,I_j\in\mc{I}_j,I_k\in\mc{I}_k$ are mutually disjoint and $I_i+I_j+I_k = [n]$. 
Given two operators $\ket{x_i, w_i} \bra{y_i , s_i}$ and $\ket{x_j, w_j} \bra{y_j , s_j}$ from $\id_2^{\otimes 2}\cup \mbb{S}_2$, the constraint $C_2$ is violated ($x_ix_j+w_iw_j \not\equiv  y_iy_j+s_is_j \pmod2 $) if and only if
\begin{itemize}
    \item  $\ket{x_i, w_i} \bra{y_i , s_i} \in \id_4-\Delta_2 $ and 
    $\ket{x_j, w_j} \bra{y_j , s_j} \in \mathbb{S}_2-\Delta_2 $, or 
    \item $ \ket{x_i, w_i} \bra{y_i , s_i} \in \mathbb{S}_2-\Delta_2 $ and  $\ket{x_j, w_j} \bra{y_j , s_j} \in \id_4-\Delta_2 $.
\end{itemize}
As a result, the operator $ \ket{\mb{x,w}}\bra{\mb{y,s}} \in \Delta_2^{I_i}\otimes (\id_4-\Delta_2)^{I_j} \otimes (\mbb{S}_2-\Delta_2)^{I_k}$ violates the constraint $C_2$ if and only if $|I_j|\neq0$ and $|I_k|\neq0$ together. Therefore, imposing the constraint $C_2$ on \cref{Eq:2temp2}, we obtain 
\begin{equation}\label{Eq:M2phase_final}
\begin{split}
    \sum_{(\mb{x},\mb{w},\mb{y},\mb{s})\in C_1\cap C_2}  \ket{\mb{x,w}}\bra{\mb{y,s}}&=\sum_{I_i+I_j+I_k=[n]} \Delta_2^{I_i}\otimes (\id_4-\Delta_2)^{I_j} \otimes (\mbb{S}_2-\Delta_2)^{I_k}\delta_{\{|I_j||I_k|=0\}}\\
    &=\id_4^{\otimes n}+ \mbb{S}_2^{\otimes n}-\Delta_2^{\otimes n},
\end{split}
\end{equation}
where $\delta_{\{|I_j||I_k|=0\}}$ equals to 1 if and only if $|I_j||I_k|=0$. By plugging \cref{Eq:M2phase_final} into  \cref{Eq:P2tem}, we finally obtain the moment function for the $m=2$ case as 
\begin{equation}\label{Eq:Phi_2}
    \mb{M}_{\mc{E}_\text{phase}}^{(2)}= \frac{1}{2^{2n}}(\id_4^{\otimes n}+ \mbb{S}_2^{\otimes n}-\Delta_2^{\otimes n})=\frac{1}{2^{2n}}(\id_4^{\otimes n}\cup \mbb{S}_2^{\otimes n}).
\end{equation}

Second, we prove the moment function in the $m=3$ case. Denote the 3-copy operators $\ket{\mb{\hat{x}}}\bra{\mb{\hat{y}}}$ as $\ket{\mb{x}}\bra{\mb{y}}\otimes\ket{\mb{w}}\bra{\mb{s}}\otimes\ket{\mb{z}}\bra{\mb{t}}=\ket{\mb{x,w,z}}\bra{\mb{y,s,t}}$ ,   \cref{eq:momentAb} becomes 
\begin{equation}\label{Eq:Phi_3}
    \mb{M}_{\mc{E}_\text{phase}}^{(3)}  = \frac{1}{2^{3n}}\sum_{\mb{x},\mb{w},\mb{z},\mb{y},\mb{s},\mb{t}\in\{0,1\}^{n}}  \Lambda_{1,2}^{(3)}\circ \Lambda_{2,1}^{(3)}(\ket{\mb{x,w,z}}\bra{\mb{y,s,t}}). 
\end{equation} 
Using  \cref{lem:twirling_diagonal}, we find that 
\begin{equation}\label{Eq:P3tem}
\begin{split}
    \mb{M}_{\mc{E}_\text{phase}}^{(3)}&=\frac{1}{2^{3n}}\sum_{(\mb{x},\mb{w},\mb{z},\mb{y},\mb{s},\mb{t})\in C_1\cap C_2}  \ket{\mb{x,w,z}}\bra{\mb{y,s,t}},
\end{split}
\end{equation}
where the sets $C_1,C_2$ are updated to the 3-copy case, showing
\begin{equation}
    \begin{split}
        C_1 &= \{(\mb{x},\mb{w},\mb{z},\mb{y},\mb{s},\mb{t})|x_l+w_l+z_l\equiv y_l+s_l+t_l\pmod 4, \forall l\in [n]\},\\
        C_2 &= \{(\mb{x},\mb{w},\mb{z},\mb{y},\mb{s},\mb{t})|x_ix_j+w_iw_j+z_iz_j\equiv y_iy_j+s_is_j+t_it_j\pmod2, \forall i\neq j\in[n]\}.\\
    \end{split}
\end{equation} 
Suppose that
\begin{equation}
    \ket{\mb{x,w,z}}\bra{\mb{y,s,t}} = \bigotimes_{l=0}^{n-1}\ket{x_l , w_l , z_l}\bra{y_l , s_l , t_l},
\end{equation}
we enumerate the matrix elements $\ket{x_l , w_l , z_l}\bra{y_l , s_l , t_l}$ belonging to $C_1$ along with their corresponding permutation group elements in \cref{fig:p=3noiseless}(a), which shows that these matrix elements simultaneously belong to either 2 or 6 permutation operations from $\{V_n(\pi)|\pi\in S_3\}$, and their summation constitutes the union of the permutation operations, i.e., 
$\sum_{(x_l,w_l,z_l,y_l,s_l,t_l)\in C_1}\ket{x_l , w_l , z_l}\bra{y_l , s_l , t_l} = \bigcup_{\pi\in S_3 }V_1(\pi)$. Therefore, 
\begin{equation}\label{Eq:3temp1}
    \sum_{(\mb{x},\mb{w},\mb{z},\mb{y},\mb{s},\mb{t})\in C_1}  \ket{\mb{x,w,z}}\bra{\mb{y,s,t}} 
    = [\cup_{\pi\in S_3}V_1(\pi)]^{\otimes n}.
\end{equation}
Moreover, we find that given the operators $\ket{x_i , w_i , z_i}\bra{y_i , s_i , t_i}$
and $\ket{x_j ,w_j ,z_j}\bra{y_j ,s_j ,t_j}$ from $\bigcup_{\pi\in S_3 }V_1(\pi)$, the second constraint is satisfied  ($x_ix_j+w_iw_j+z_iz_j\equiv y_iy_j+s_is_j+t_it_j\pmod 2$) if and only if they belong to a common one-qubit permutation operator from $\{V_1(\pi)|\pi\in S_3\}$, as indicated by  \cref{fig:p=3noiseless}. 
In this way, the effect of
the constraint $C_2$ enforces all $\ket{x_l , w_l , z_l}\bra{y_l , s_l , t_l}$
with $(l\in [n])$ to belong to a common $V_1(\pi)$, and thus any $(\mb{x},\mb{w},\mb{z},\mb{y},\mb{s},\mb{t})\in C_1\cap C_2$ should satisfy $\ket{\mb{x,w,z}}\bra{\mb{y,s,t}}\in V_n(\pi)=\bigotimes_{l=0}^{n-1} V^{(l)}_1(\pi)$. Hence,  we conclude that $\sum_{(\mb{x},\mb{w},\mb{z},\mb{y},\mb{s},\mb{t})\in C_1\cap C_2}\ket{\mb{x,w,z}}\bra{\mb{y,s,t}} \subseteq \bigcup_{\pi \in S_3}V_n(\pi)$.  

At the same time, if one selects some operators from the permutation operators, say $\ket{\mb{x,w,z}}\bra{\mb{y,s,t}}\in V_n(\pi)$, it is clear that both $C_1$ and $C_2$ is satisfied, that is, $\bigcup_{\pi \in S_3}V_n(\pi) \subseteq \sum_{(\mb{x},\mb{w},\mb{z},\mb{y},\mb{s},\mb{t})\in C_1\cap C_2}\ket{\mb{x,w,z}}\bra{\mb{y,s,t}}$. The completes the proof for the case of $m=3$.


\begin{figure}
    \centering
    \includegraphics[width=.85\linewidth]{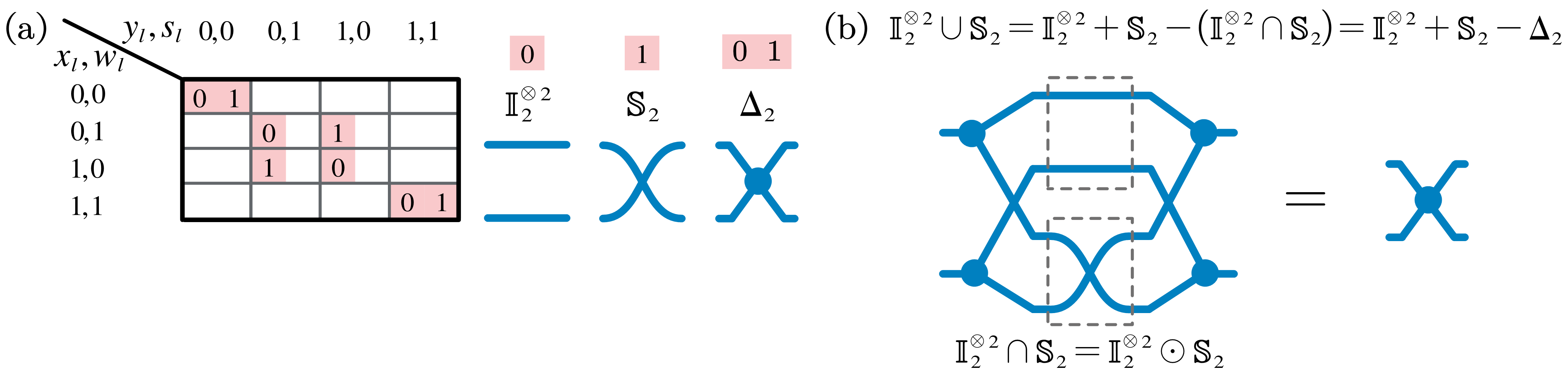}
    \caption{(a) An illustration of the matrix elements $\ket{x_l, w_l }\bra{y_l, s_l }$ that lie in $C_1$ and their corresponding permutation operators. For each element, the number marked in red indicates the permutation operator(s) to which $\ket{x_l, w_l }\bra{y_l, s_l }$ 
    belongs.  (b) a tensor diagram showing that the intersection and the union of $\id_2^{\otimes 2} $ and $\mbb{S}_2$. 
}
    \label{fig:p=2noiseless}
\end{figure}

\begin{figure}
    \centering
    \includegraphics[width=\linewidth]{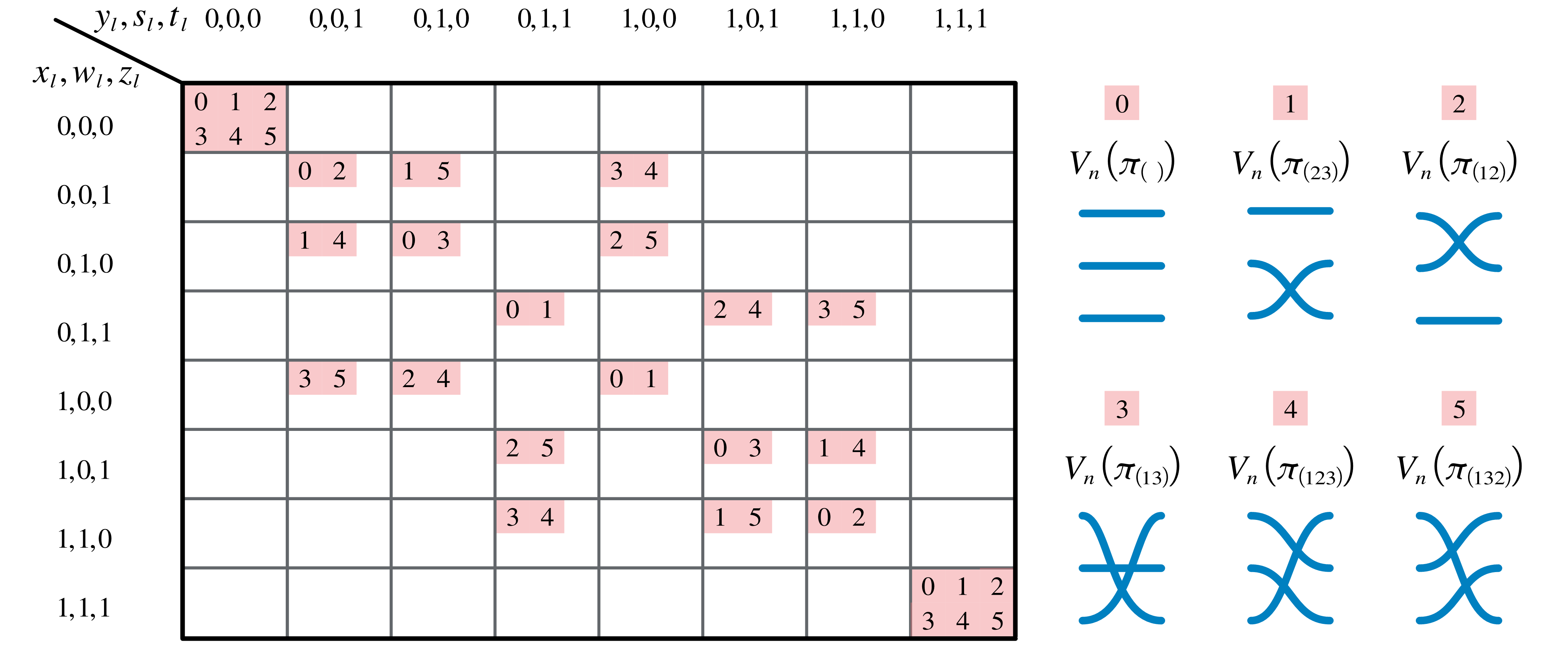}
    \caption{An illustration of the 20 possible strings of 
    $(x_l, w_l, z_l, y_l, s_l, t_l)$ is shown. For each string, the number marked in red indicates the permutation operator(s) to which $\ket{x_l , w_l , z_l}\bra{y_l , s_l , t_l}$ belongs. For example, $
\ket{0,0,1}\bra{0,0,1} \in V_1(\pi_{()}),\ V_1(\pi_{(12)})$.   As an example, consider  $\ket{x_i, w_i, z_i } \bra{y_i, s_i, t_i} = \ket{0,0,1}\bra{0,0,1}, \quad
\ket{x_j, w_j, z_j} \bra{y_j, s_j, t_j} = \ket{0,1,1}\bra{1,0,1}$.  Since both operators belongs to $\ V_1(\pi_{(12)})$, we have  $x_ix_j+w_iw_j+z_iz_j-y_iy_j-s_is_j-t_it_j = 0$.
}
    \label{fig:p=3noiseless}
\end{figure}

\subsection*{B. Moment function of general diagonal circuits}\label{Ap:MomentGeneral}

In the previous subsection, we
have proven Proposition 1 which concerns the 2nd and 3rd moment functions of the phase circuit ensemble $\mc{E}_\text{phase}$. In this subsection, we present and prove the $m$-th moment functions for the general diagonal circuit ensemble $\mc{E}^{(p)}_{\text{diag}}$ defined in \cref{Eq:defGeneralEns}. In this proof, we will not resort to enumerating all the matrix elements in diagrams like the previous proof of $\mc{E}_{\text{phase}}$.

\begin{lemma}[$m$-th moment function]\label{lem:m-moment}
The $m$-th moment function of the general diagonal circuit ensemble $\mc{E}_{\text{diag}}^{(p)}$ is given by
\begin{equation}
    \mb{M}_{\mc{E}_{\text{diag}}^{(p)}}^{(m)} =  \frac{1}{2^{mn}}\bigcup_{\pi \in S_m} V_n(\pi),
    \label{Eq: Nakata_spirit}
\end{equation}
     where $m<2^p$ and $n>p$.
\end{lemma}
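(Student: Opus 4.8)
The plan is to run the same reduction as in the $m=2,3$ proofs of \cref{ap:Prop-23moment-1}, but to replace the case-by-case enumeration of surviving matrix elements with a single combinatorial lemma. First I would rewrite $\Phi_{U,\mb{b}}$ for $U\in\mc{E}^{(p)}_{\text{diag}}$ exactly as in \cref{eq:PhiAb}, so that after averaging over the computational-basis outcome and over every layer of diagonal gates the moment function becomes
\begin{equation}
\mb{M}^{(m)}_{\mc{E}^{(p)}_{\text{diag}}}=\frac{1}{2^{mn}}\sum_{\mb{\hat{x}},\mb{\hat{y}}\in\{0,1\}^{mn}}\Lambda^{(m)}_{p,p}\circ\cdots\circ\Lambda^{(m)}_{1,p}\circ\Lambda^{(m)}_{1,1}\bigl(\ket{\mb{\hat{x}}}\bra{\mb{\hat{y}}}\bigr),
\end{equation}
the extra $\Lambda^{(m)}_{1,1}$ coming from the measurement and being absorbed into the $j=1$ layer (its mod-$2$ constraint is implied by the mod-$2^p$ constraint of $\Lambda^{(m)}_{1,p}$, using the composition rules in \cref{ap:preProp-23moment}). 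By \cref{lem:twirling_diagonal}, each $\Lambda^{(m)}_{j,p}$ keeps $\ket{\mb{\hat{x}}}\bra{\mb{\hat{y}}}$ iff $\sum_{t}\prod_{l\in S}x_{t,l}\equiv\sum_{t}\prod_{l\in S}y_{t,l}\pmod{2^p}$ for every $S$ of size $j$, and annihilates it otherwise. Since both sides lie in $\{0,1,\dots,m\}$ and $m<2^p$, these congruences collapse to plain equalities, giving $\mb{M}^{(m)}_{\mc{E}^{(p)}_{\text{diag}}}=2^{-mn}\sum_{(\mb{\hat{x}},\mb{\hat{y}})\in C}\ket{\mb{\hat{x}}}\bra{\mb{\hat{y}}}$, where $C$ is the set of pairs with $\sum_t\prod_{l\in S}x_{t,l}=\sum_t\prod_{l\in S}y_{t,l}$ for all $|S|\le p$.

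It then remains to identify $\sum_{(\mb{\hat{x}},\mb{\hat{y}})\in C}\ket{\mb{\hat{x}}}\bra{\mb{\hat{y}}}$ with $\bigcup_{\pi\in S_m}V_n(\pi)$ as Boolean operators. Writing $\mb{\hat{x}}$ as the list of its $m$ copies $\mb{x}^{(0)},\dots,\mb{x}^{(m-1)}\in\{0,1\}^n$ and noting $\prod_{l\in S}x_{t,l}=[S\subseteq\operatorname{supp}\mb{x}^{(t)}]$, the pair $(\mb{\hat{x}},\mb{\hat{y}})$ lies in $C$ exactly when, for every $S$ with $|S|\le p$, the number of copies whose support contains $S$ agrees between $\mb{\hat{x}}$ and $\mb{\hat{y}}$; and $\ket{\mb{\hat{x}}}\bra{\mb{\hat{y}}}\subseteq V_n(\pi)$ for some $\pi\in S_m$ exactly when the multisets $\{\operatorname{supp}\mb{x}^{(t)}\}_t$ and $\{\operatorname{supp}\mb{y}^{(t)}\}_t$ coincide. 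The inclusion $\bigcup_\pi V_n(\pi)\subseteq\sum_{C}\ket{\mb{\hat{x}}}\bra{\mb{\hat{y}}}$ is immediate, since permuting the copies leaves every symmetric sum $\sum_t\prod_{l\in S}x_{t,l}$ unchanged. The reverse inclusion is the combinatorial core of the argument: the degree-$\le p$ support counts of a multiset of at most $2^p-1$ subsets of $[n]$ determine the multiset.

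For this I would pass to the integer difference $g(T)=\#\{t:\operatorname{supp}\mb{x}^{(t)}=T\}-\#\{t:\operatorname{supp}\mb{y}^{(t)}=T\}$ over subsets $T\subseteq[n]$, whose ``up-sums'' $\hat{g}(S):=\sum_{T\supseteq S}g(T)$ vanish for all $|S|\le p$ and whose positive part has total mass $\sum_T\max(g(T),0)\le m<2^p$. The statement reduces to the claim that \emph{every integer-valued $g\neq0$ with $\hat{g}(S)=0$ for all $|S|\le p$ satisfies $\sum_T\max(g(T),0)\ge 2^p$}, which then forces $g\equiv0$. I would prove this claim by induction on $n$ (the case $n\le p$ being vacuous, since $\hat{g}(S)=0$ for all $S\subseteq[n]$ gives $g=0$ by inclusion--exclusion / M\"obius inversion on the Boolean lattice): in the inductive step pick a coordinate $\ell$ on which $g$ is not concentrated, split $g=g_0+g_1$ into the parts with $\ell\notin T$ and $\ell\in T$, and restrict both to subsets of $[n]\setminus\{\ell\}$; one checks that $g_1'$ then satisfies the level-$(p-1)$ condition while $g_0'+g_1'$ satisfies the level-$p$ one. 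If $g_0'+g_1'\neq0$, the induction hypothesis at $(n-1,p)$ gives $2^p\le\sum(g_0'+g_1')^+\le\sum g^+$; if $g_0'+g_1'=0$, then $g$ is antisymmetric along the $\ell$-direction, so $\sum g^+=\|g_0'\|_1=2\sum(g_0')^+\ge 2\cdot 2^{p-1}$ by the induction hypothesis at $(n-1,p-1)$. This induction — getting right the bookkeeping of which sub-condition survives the coordinate split, and handling the degenerate branch — is the part I expect to require real care; everything preceding it is a routine transcription of the $m=2,3$ computations already in \cref{ap:Prop-23moment-1}. Combining the two inclusions identifies $C$ with the support of $\bigcup_{\pi\in S_m}V_n(\pi)$, which is precisely the statement of \cref{lem:m-moment}.
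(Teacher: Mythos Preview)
Your reduction to the constraint set $C$ and the easy inclusion $\bigcup_\pi V_n(\pi)\subseteq\sum_{C}\ket{\mb{\hat{x}}}\bra{\mb{\hat{y}}}$ are identical to the paper's. The difference is in the hard direction. The paper proceeds in two stages: first it upgrades the constraints from ``all-ones patterns $\mathbf{1}_k$'' to ``all patterns $\mathbf{b}_k$ of length $\le p$'' via a downward induction on Hamming weight, and then it invokes Lemma~1 of Nakata et al., a minimal-counterexample argument that locates the smallest $p'>p$ where marginals disagree and shows, by tracking even/odd weight classes after deleting common rows, that this forces $m\ge 2^{p'-1}\ge 2^p$. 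Your route packages both stages into a single induction on $n$: reformulating the problem as a lower bound $\sum_T g(T)^+\ge 2^p$ on the positive mass of any nonzero integer function on $2^{[n]}$ with vanishing up-sums through level $p$, and splitting off one coordinate at a time. The coordinate split is correct (the verifications $\widehat{g_0'+g_1'}(S)=\hat g(S)$ and $\hat g_1'(S)=\hat g(S\cup\{\ell\})$ go through, and in the antisymmetric branch $\hat g_0'(\emptyset)=0$ gives $\|g_0'\|_1=2\sum(g_0')^+$ as you claim). The ``not concentrated'' proviso is unnecessary---any $\ell$ works, with the degenerate situations automatically absorbed into Case~1---so you can drop it. Your argument is arguably more self-contained and avoids the external lemma; the paper's is more explicit about what the extremal configurations look like (they are the even-weight / odd-weight patterns on a $p'$-subset), which is nice structurally but not needed for the statement.
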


\begin{proof}
First, similar to the proof in  Supplementary Note 3(A), we  simplify the moment function $\mb{M}_{\mc{E}_{\text{diag}}^{(p)}}^{(m)}$ by taking the mean of $\Phi_{U,\mb{b}}^{\otimes m}$ over all choices of $\mb{b}$, and express the average over $U \sim \mc{E}_{\text{diag}}^{(p)}$ as the $m$-copy twirling channel, i.e.,
\begin{equation}
\begin{split}
\frac{1}{2^n}\mbb{E}_{U\sim\mc{E}_{\text{diag}}^{(p)}} \sum_{\mb{b}} \Phi_{U,\mb{b}}^{\otimes m}&=\frac{1}{2^{mn}}\sum_{\mb{\hat{x}},\mb{\hat{y}}\in\{0,1\}^{mn}}   \Lambda_{p,p}^{^{(m)}} \circ\Lambda_{p-1,p}^{{(m)}}\circ\dots \circ\Lambda_{1,p}^{(m)}\circ \Lambda_{1,1}^{(m)}(|\mathbf{\hat{x}}\rangle\langle \mathbf{\hat{y}}|)\\
&=\frac{1}{2^{mn}}\sum_{\mb{\hat{x}},\mb{\hat{y}}\in\{0,1\}^{mn}}    \Lambda_{p,p}^{(m)}(|\mathbf{\hat{x}}\rangle\langle \mathbf{\hat{y}}|) \cap\Lambda_{p-1,p}^{{(m)}}(|\mathbf{\hat{x}}\rangle\langle \mathbf{\hat{y}}|)\cap\dots \cap\Lambda_{1,p}^{{(m)}} (|\mathbf{\hat{x}}\rangle\langle \mathbf{\hat{y}}|)\cap\Lambda_{1,1}^{{(m)}}(|\mathbf{\hat{x}}\rangle\langle \mathbf{\hat{y}}|)\\
&=\frac{1}{2^{mn}}\sum_{ (\mathbf{\hat{x}},\mathbf{\hat{y}})\in C_{p,p}\cap C_{p-1,p}\cap\dots \cap C_{1,p}}   |\mathbf{\hat{x}}\rangle\langle \mathbf{\hat{y}}|,
\end{split}
\label{Eq: Nakata_pre}
\end{equation}
where, according to \cref{lem:twirling_diagonal}, the sets $C_{p,q}$ are defined as
\begin{equation}\label{Eq:Cpq_generral}
    C_{p,q} \coloneq \{(\mathbf{\hat{x}},\mathbf{\hat{y}})|\sum_{t=0}^{m-1} \prod_{j\in I_p}x_{t,j} = \sum_{t=0}^{m-1} \prod_{j\in I_p}y_{t,j}, \forall I_p\in\mc{I}_p\}.
\end{equation}
In \cref{Eq:Cpq_generral}, we omit the modulo $2^q$ operation because $m<2^q$.

The procedures of proving $\sum_{ (\mathbf{\hat{x}},\mathbf{\hat{y}})\in C_{p,p}\cap C_{p-1,p}\cap\dots \cap C_{1,p}}   \ket{\mb{\hat{x}}}\bra{\mb{\hat{y}}} = \bigcup_{\pi \in S_m} V_n(\pi)$ can be divided into two parts. 

\textbf{Part 1:} The proof of $\bigcup_{\pi \in S_m} V_n(\pi)\subseteq\sum_{ (\mathbf{\hat{x}},\mathbf{\hat{y}})\in C_{p,p}\cap C_{p-1,p}\cap\dots \cap C_{1,p}}   \ket{\mb{\hat{x}}}\bra{\mb{\hat{y}}}$. 
This is equivalent to saying that every element of the permutation operators $\ket{\mb{\hat{x}}}\bra{\mb{\hat{y}}} \in V_n(\pi)$ will obey the constraints 
$ (\mathbf{\hat{x}},\mathbf{\hat{y}})\in C_{p,p}\cap C_{p-1,p}\cap\dots \cap C_{1,p}$, which can be checked since for all $|\mathbf{\hat{x}}\rangle\langle \mathbf{\hat{y}}| \in V_n(\pi)$. We find that $y_{t,j} = x_{\pi^{-1}(t),j}$, and then 
\begin{equation}
   \sum_{t=0}^{m-1} \prod_{j\in I_k}x_{t,j} = \sum_{t=0}^{m-1} \prod_{j\in I_k}x_{\pi^{-1}(t),j} = \sum_{t=0}^{m-1} \prod_{j\in I_k}y_{t,j} 
\end{equation}
for all $k\leq p$.

\textbf{Part 2:} The proof of $\sum_{ (\mathbf{\hat{x}},\mathbf{\hat{y}})\in C_{p,p}\cap C_{p-1,p}\cap\dots \cap C_{1,p}}   |\mathbf{\hat{x}}\rangle\langle \mathbf{\hat{y}}|\subseteq\bigcup_{\pi \in S_m} V_n(\pi)$. 
We now regard $\mathbf{\hat{x}}$ as an $m\times n$ matrix, whose rows correspond to copies and columns correspond to qubit indices.
Define, for any subset $I_k \subseteq [n]$, the matrix $\mathbf{\hat{x}}_{I_k}$ as the $m \times k$ matrix formed by the columns(qubits) indexed by $I_k$ from $\mathbf{\hat{x}}$. For a binary string $\mathbf{b}_k \in \{0,1\}^k$, define $N_{\mathbf{\hat{x}}}(I_k, \mathbf{b}_k)$ as the number of rows(copies) in $\mathbf{\hat{x}}_{I_k}$ that equal to $\mathbf{b}_k$. With these notations, we can restate the conditions in Eq.~\eqref{Eq:Cpq_generral}. For example,  the set $C_{1,q}$ ($q>1$) includes the elements $(\mathbf{\hat{x}},\mathbf{\hat{y}})$ that 
\begin{equation}
    \forall I_1\in\mc{I}_1,\text{  }N_{\mb{\hat{x}}}(I_1, \{1\}) = N_{\mb{\hat{y}}}(I_1, \{1\}).
    \label{Eq:Nakata_ex1}
\end{equation}
Likewise, for $m< 2^q$, the set $C_{k,q}$ includes the elements $(\mathbf{\hat{x}},\mathbf{\hat{y}})$ that 
\begin{equation}
    \forall I_k\in \mc{I}_k,\text{  }N_{\mb{\hat{x}}}(I_k, \mb{1}_k) = N_{\mb{\hat{y}}}(I_k, \mb{1}_k), 
    \label{Eq:Nakata_exk} 
\end{equation}
where $\mb{1}_k$ denotes a $k$-bit row vector with all elements being 1. This equation holds because the term $\prod_{j\in I_k}x_{t,j}$ in Eq.~\eqref{Eq:Cpq_generral} is nonzero only if $x_{t,j}=1$ for all $j\in I_k$.

As such, we now aim to prove that every element $ (\mathbf{\hat{x}},\mathbf{\hat{y}})\in C_{p,p}\cap C_{p-1,p}\cap\dots \cap C_{1,p}$ satisfies
\begin{equation}
    \forall k \leq p, \mb{b}_k \in \{0,1\}^{k}, I_k\subseteq[n], N_{\mb{\hat{x}}}(I_k , \mb{b}_k) = N_{\mb{\hat{y}}}(I_k , \mb{b}_k). 
    \label{Eq: Nakata_prop1}
\end{equation} 
We establish this by induction: if \cref{Eq: Nakata_prop1} holds when the weight of $\mb{b}_k$ is $l (l\leq k)$, then it also holds when the weight of $\mb{b}_k$ is $l-1$. Initially,  \cref{Eq: Nakata_prop1} holds when $\mb{b}_k$ is an all-ones row string, i.e., $\mb{b}_k = \mb{1}_k$, which corresponds directly to the constraint set $C_{k,p}$. 

Now we consider an arbitrary $(k-1)$-weight binary string 
$\mb{1}_k^{(k-1)}$, 
where all but one bit are 1 and the remaining bit is 0. Let 
$I_{k-1}$ 
denote the subset of indices corresponding to the ones in 
$\mb{1}_k^{(k-1)}$. From the constraint $C_{k-1,p}$, 
we have
\begin{equation}
    \begin{split}
        N_{\mb{\hat{x}}}(I_{k-1} , \mb{1}_{k-1}) &= N_{\mb{\hat{y}}}(I_{k-1} , \mb{1}_{k-1})\\
       \Rightarrow N_{\mb{\hat{x}}}(I_k , \mb{1}_k) + N_{\mb{\hat{x}}}(I_k , \mb{1}_k^{(k-1)}) &=  N_{\mb{\hat{y}}}(I_k , \mb{1}_k)+N_{\mb{\hat{y}}}(I_k , \mb{1}_k^{(k-1)}).
    \end{split}
\end{equation}
From \cref{Eq:Nakata_exk} we know that $N_{\mb{\hat{x}}}(I_k = \mb{1}_k) =  N_{\mb{\hat{y}}}(I_k = \mb{1}_k) $. This equality implies that $N_{\mb{\hat{x}}}(I_k = \mb{1}_k^{(k-1)}) =  N_{\mb{\hat{y}}}(I_k = \mb{1}_k^{(k-1)}) $. Thus, \cref{Eq: Nakata_prop1} has been proven for all $(k-1)$-weight strings $\mb{b}_k$.

We now proceed inductively for general $l$-weight strings $\mb{1}_k^{(l)}$, where $l < k-1$. 
Let  $I_l$ be the set of indices corresponding to the ones in $\mb{1}_k^{(l)}$. Suppose that we already have  \begin{equation}
   N_{\mb{\hat{x}}}(I_k , \mb{1}_k^{(l^{'})}) =  N_{\mb{\hat{y}}}(I_k , \mb{1}_k^{(l^{'})})
\end{equation} 
for all $\mb{1}_k^{(l^{'})}$ and $l^{'}>l$. From the 
constraint $C_{l,p}$, it follows that
\begin{equation}
    \begin{split}
        N_{\mb{\hat{x}}}(I_{l} , \mb{1}_{l}^{ }) &=  N_{\mb{\hat{y}}}(I_{l} , \mb{l}_l^{})\\
        \Rightarrow \sum_{\mb{1}_{k,+}^{(l)}-\mb{1}_{k}^{(l)}\succeq\mb{0}}N_{\mb{\hat{x}}}(I_{k} , \mb{1}_{k,+}^{(l) }) &=  \sum_{\mb{1}_{k,+}^{(l)}-\mb{1}_{k}^{(l)}\succeq\mb{0}}N_{\mb{\hat{y}}}(I_{k} , \mb{1}_{k,+}^{(l) }),
    \end{split}
\end{equation}
where $\mb{1}_{k,+}^{(l)} - \mb{1}_k^{(l)} \succeq \mb{0}$ means that every 1 in $\mb{1}_k^{(l)}$ remains 1 in 
$\mb{1}_{k,+}^{(l)}$. 
$\succ \mb{0} $ further requires that at least one 0 in $\mb{1}_k^{(l)}$ 
becomes 1 in $\mb{1}_{k,+}^{(l)}$.
Using the induction hypothesis
\begin{equation}
N_{\mb{\hat{x}}}(I_{k} , \mb{1}_{k,+}^{(l) }) =  N_{\mb{\hat{y}}}(I_{k} , \mb{1}_{k,+}^{(l) }) \text{,  }\forall \mb{1}_{k,+}^{(l)}-\mb{1}_{k}^{(l)}\succ\mb{0}.
\end{equation}
We can cancel all terms in the above sum except the one corresponding to $\mb{1}_k^{(l)}$, proving that
\begin{equation}
    N_{\mb{\hat{x}}}(I_k , \mb{1}_k^{(l)}) =  N_{\mb{\hat{y}}}(I_k , \mb{1}_k^{(l)}). 
\end{equation}
Hence, \cref{Eq: Nakata_prop1} is established for all $l$-weight strings $\mb{b}_k$. 
Therefore, the claim in \cref{Eq: Nakata_prop1} holds universally.

In the next, we use \cref{Eq: Nakata_prop1} to prove that $N_{\mathbf{\hat{x}}}(I_n , \mathbf{b}_n) = N_{\mathbf{\hat{y}}}(I_n , \mathbf{b}_n)
\quad \text{for all } \mathbf{b}_n \in \{0,1\}^n$, which directly entails the validity 
of $\ket{\mathbf{\hat{x}}}\bra{\mathbf{\hat{y}}} \in \bigcup_{\pi \in S_m} V_n(\pi)$.
Note that the proof has already been carried out in Lemma 1 of Ref.~\cite{nakata2014generating} by proving its contraposition. For completeness, we restate the proof here.

Assume there exist  $\mathbf{\hat{x}}, \mathbf{\hat{y}}, \mathbf{b}_n$ such that
\begin{equation}
    N_{\mathbf{\hat{x}}}(I_n , \mathbf{b}_n) \neq N_{\mathbf{\hat{y}}}(I_n , \mathbf{b}_n).
\end{equation}
Then, by \cref{Eq: Nakata_prop1}, there must exist an integer $p' > p$, a subset $I_{p'} \in \mathcal{I}_{p'}$, and a binary string $\mathbf{b}_{p'} \in \{0,1\}^{p'}$ such that
\begin{equation}
    N_{\mathbf{\hat{x}}}(I_{p'} , \mathbf{b}_{p'}) \neq N_{\mathbf{\hat{y}}}(I_{p'} , \mathbf{b}_{p'}),
\end{equation}
while for all 
$I_{p'-1} \subset I_{p'}$ and all corresponding $\mathbf{b}_{p'-1} \in \{0,1\}^{p'-1}$, it holds that
\begin{equation}
    N_{\mathbf{\hat{x}}}(I_{p'-1} , \mathbf{b}_{p'-1}) = N_{\mathbf{\hat{y}}}(I_{p'-1} , \mathbf{b}_{p'-1}).
\end{equation}
We now consider $\hat{\mathbf{x}}_{I_{p'}}$ and $\hat{\mathbf{y}}_{I_{p'}}$, each of size 
$m \times p'$. 
Suppose some rows in $\hat{\mathbf{x}}_{I_{p'}}$ and 
$\hat{\mathbf{y}}_{I_{p'}}$ are identical, we can remove the identical rows from both matrices. Let $\tilde{\mathbf{x}}_{I_{p'}}$ and 
$\tilde{\mathbf{y}}_{I_{p'}}$ denote the resulting
$m' \times p'$ 
matrices, where the remaining rows are distinct between the two.
Without loss of generality, assume $\tilde{\mathbf{x}}_{I_{p'}}$ 
contains 
$ h$ all-zero rows (i.e., 
$\mathbf{0}_{p'}$), while $\tilde{\mathbf{y}}_{I_{p'}}$
contains no such row. Since
\begin{equation}
N_{\mathbf{\hat{x}}}(I_{p'-1} ,\mathbf{b}_{p'-1}) = N_{\mathbf{\hat{y}}}(I_{p'-1} , \mathbf{b}_{p'-1}) \quad \forall I_{p'-1} \subset I_{p'},
\end{equation}
each missing all-zero row in $ \tilde{\mathbf{y}}_{I_{p'}}$ must be compensated by a weight-1 row. Thus, 
$ \tilde{\mathbf{y}}_{I_{p'}}$ must contain at least $ h$ weight-1 rows.
Continuing this reasoning, since all weight-1 patterns in $\tilde{\mathbf{x}}_{I_{p'}}$ are absent while $\tilde{\mathbf{y}}_{I_{p'}}$ must have them, $ \tilde{\mathbf{x}}_{I_{p'}}$ must contain at least $ h$ weight-2 rows. Inductively, we find
$\tilde{\mathbf{x}}_{I_{p'}}$ contains 
$h$ copies of each even-weight binary vector of length $p'$,
 and zero copies of all odd-weight vectors.
Hence, the total number of rows in $\tilde{\mathbf{x}}_{I_{p'}}$ is given by 
\begin{equation}
m' = 2^{p'-1} \cdot h,
\end{equation}
which implies
\begin{equation}
m \geq m' = 2^{p'-1}h \geq 2^p,
\end{equation}
contradicting the assumption of Lemma~\ref{lem:m-moment} that $m < 2^p$.
This contradiction completes the proof.
\end{proof}

We remark that fully continuously random diagonal circuits \cite{nakata2014generating,nechita2021graphical} also yield the same result with $\mathcal{E}_{\text{diag}}^{(p)}$, as shown in \cref{lem:m-moment}. Furthermore, the discrete variability (or degree of discretization) of the gates can be further reduced, similar to the method proposed in Ref.~\cite{nakata2014generating}.

Finally, we compare the results between $\mc{E}_{\text{diag}}^{(p)}$ and $\mc{E}_{\text{phase}}$ presented in the previous two subsections. 
Note that $\mc{E}_{\text{diag}}^{(p)}$ is not merely a simple generalization of $\mc{E}_{\text{phase}}$: one can not directly obtain $\mc{E}_\text{phase}$ by letting $p=2$. However, we can still prove the moment function of $\mc{E}_\text{phase}$ using \cref{lem:m-moment} with a little bit extra work as follows. For $m=2 \; \text{and }3$, the moment function of $\mc{E}_\text{phase}$ reads 
\begin{equation}
    \mb{M}_{\mc{E}_\text{phase}}^{(m)} =\frac{1}{2^{mn}}\sum_{\mb{\hat{x}},\mb{\hat{y}}\in C_{1,2}\cap C_{2,1}}  \ket{\mb{\hat{x}}}\bra{\mb{\hat{y}}},
\end{equation}
while the moment function of $\mc{E}_{{\text{diag}}}^{(p)}$ becomes 
\begin{equation}
    \mb{M}_{\mc{E}_{{\text{diag}}}^{(p)}}^{(m)} =\frac{1}{2^{mn}}\sum_{\mb{\hat{x}},\mb{\hat{y}}\in C_{1,2}\cap C_{2,2}}  \ket{\mb{\hat{x}}}\bra{\mb{\hat{y}}}, 
\end{equation} where the sets $C_{1, 2}$, $C_{2, 1}$ and $C_{2, 2}$ depend on $m$ implicitly as in Eq.~\eqref{Eq:Cpq_generral}. The difference between $C_{2, 1}$ and $C_{2, 2}$ is that: $C_{2, 1}$ requires $ \sum_{t=0}^{m-1} \prod_{j\in I_p}x_{t,j} \equiv \sum_{t=0}^{m-1} \prod_{j\in I_p}y_{t,j} \pmod2, \forall I_p\in\mc{I}_p\ $, while $C_{2, 2}$ requires $ \sum_{t=0}^{m-1} \prod_{j\in I_p}x_{t,j} \equiv \sum_{t=0}^{m-1} \prod_{j\in I_p}y_{t,j} \pmod4, \forall I_p\in\mc{I}_p\ $. It is easy to show that $C_{1, 2}\cap C_{2, 2} \subseteq C_{1, 2}\cap C_{2, 1}$. In the next, we show that $C_{1, 2}\cap C_{2, 1} \subseteq C_{1, 2}\cap C_{2, 2}$ in the cases of $m=2$ and $3$. 

When $m=2$, both $\mb{\hat{x}}$ and $\mb{\hat{y}}$ have a size of $2\times n$. We consider two columns $\mathbf{x}_1, \mathbf{x}_2$ from $\mathbf{\hat{x}}$ and $\mathbf{y}_1, \mathbf{y}_2$ from $\mathbf{\hat{y}}$. Here, we denote $|\mb{x}|$ as the number of nonzero elements in $\mb{x}$. The difference between $C_{2, 1}$ and $C_{2, 2}$ is that $C_{2, 1}$ allows $|\mathbf{x}_1\cdot \mathbf{x}_2|=2$ and $|\mathbf{y}_1\cdot \mathbf{y}_2|=0$ while $C_{2, 2}$ do not (we have assumed $|\mathbf{x}_1\cdot \mathbf{x}_2|\geq |\mathbf{y}_1\cdot \mathbf{y}_2|$ without loss of generality).  Due to $C_{1, 2}$, $|\mathbf{x}_1| = |\mathbf{y}_1|$ and $|\mathbf{x}_2| = |\mathbf{y}_2|$, which ensures that the case where $|\mathbf{x}_1\cdot \mathbf{x}_2|=2$ and $|\mathbf{y}_1\cdot \mathbf{y}_2|=0$ does not occur. Therefore, $C_{1, 2}\cap C_{2, 2} =  C_{1, 2}\cap C_{2, 1}$ in the case of $m=2$.

When $m=3$,  the only possible case where $C_{2,1}$ is satisfied and $C_{2, 2}$ is violated is when there exist two $m$-bit columns $\mathbf{x}_1, \mathbf{x}_2$ from $\mathbf{\hat{x}}$ and $\mathbf{y}_1, \mathbf{y}_2$ from $\mathbf{\hat{y}}$ such that $|\mb{x}_1\cdot\mb{x}_2 - \mb{y}_1\cdot\mb{y}_2| = 2$ (we have ruled out $|\mb{x}_1\cdot\mb{x}_2 - \mb{y}_1\cdot\mb{y}_2| = 3$ for the same reason in the case of $m=2$). Without loss of generality, we assume that $\mb{x}_1\cdot\mb{x}_2>\mb{y}_1\cdot\mb{y}_2$. If $\mb{x}_1\cdot\mb{x}_2 = 3$, then $\mb{x}_1 = \mb{x}_2 = [1,1,1]^T$. Since $|\mb{x}_1| = |\mb{y}_1|, |\mb{x}_2| = |\mb{y}_2|$, $\mb{y}_1\cdot\mb{y}_2 = 3$ $> 1$. If $\mb{x}_1\cdot\mb{x}_2 = 2$, then $|\mb{y}_1|,|\mb{y}_2|\geq 2$, hence $\mb{y}_1\cdot\mb{y}_2 \neq 0$. Therefore $|\mb{x}_1\cdot\mb{x}_2 - \mb{y}_1\cdot\mb{y}_2| \neq 2$ , and we obtain $C_{1, 2}\cap C_{2, 2} =  C_{1, 2}\cap C_{2, 1}$ in the case of $m=3$.  

In conclusion, we obtain for $m=2,3$,
\begin{equation}\label{eq:momentAb3}
    \mb{M}_{\mc{E}_\text{phase}}^{(m)} = \mb{M}_{\mc{E}_{{\text{diag}}}^{(p)}}^{(m)}= \frac{1}{2^{mn}}\bigcup_{\pi \in S_m} V_n(\pi). 
\end{equation}

\subsection*{C. Proof of  Theorem 1}\label{Ap:ProofP1}

{First, we prove  Eq. (4) in the main text
as follows.} By plugging the result of \cref{Eq:Phi_2} into  Eq. (3) in the main text, we obtain 
\begin{equation}\label{Eq:idealchannelproof}
    \begin{split}
       \mc{M}(\rho)
       & =2^n\tr_{1}(\rho\otimes \id \text{ } \mb{M}_{\mc{E}_\text{phase}}^{(2)})\\
       & =2^{-n}\tr_{1}[\rho\otimes \id \text{ } (\mbb{I}_2^{\otimes n}+\mbb{S}_2^{\otimes n}-\Delta_2^{\otimes n})].
    \end{split}
\end{equation}
Then, based on the derivation using tensor diagrams in \cref{fig:channeloriginal}, we obtain $\mc{M}(\rho) = 2^{-n} (\id + \rho_f)$. Noting that $\Phi_{U,\bb}$ is an estimation of $\mc{M}(\rho)$, we find that $\widehat{\rho_f}=2^{n}\Phi_{U,\bb}-\id$ by linearity. 
\begin{figure}
    \centering
    \includegraphics[width=\linewidth]{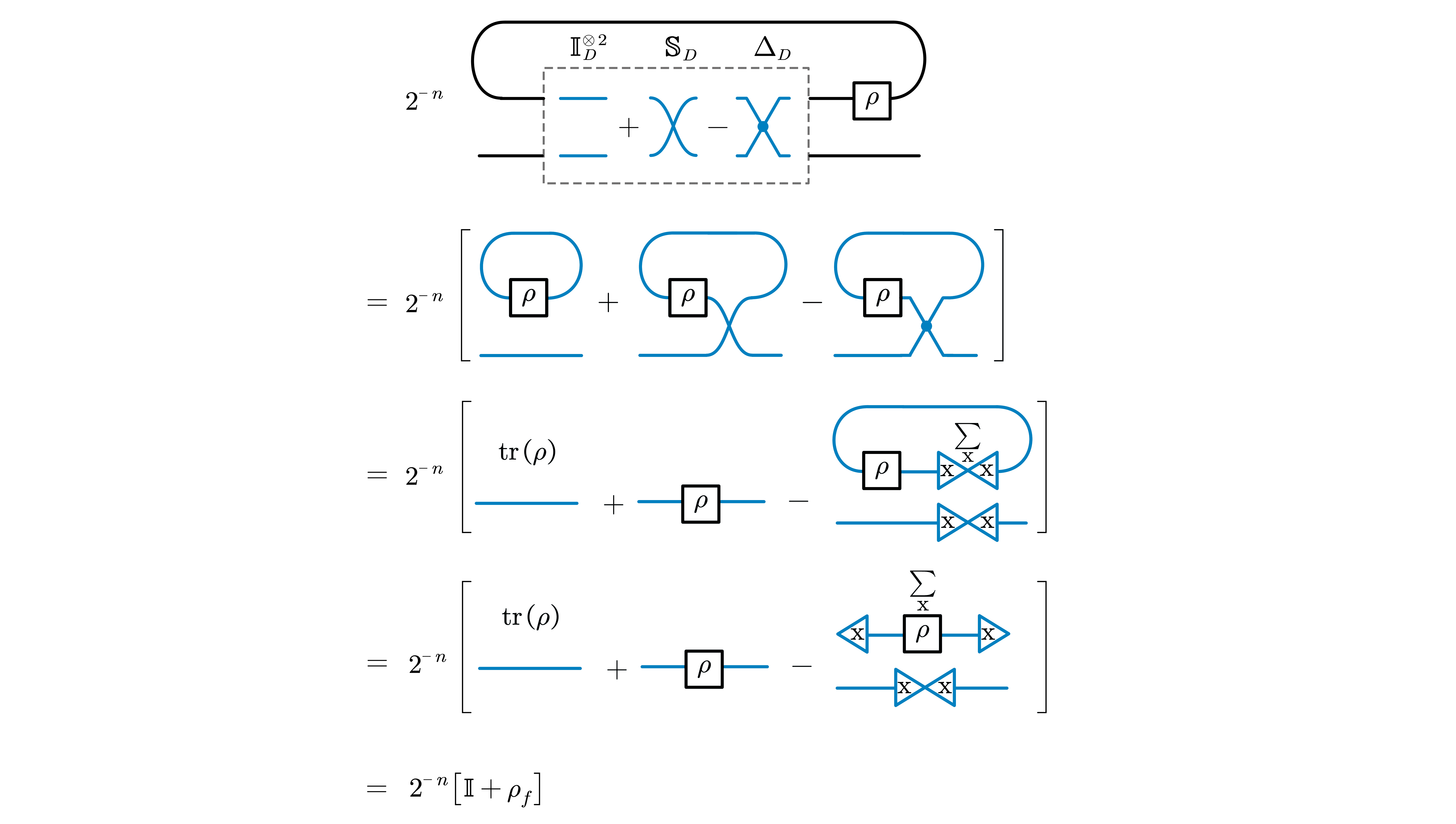}
    \caption{The tensor diagram on the proof of  Eq. (4) in the main text.}
    \label{fig:channeloriginal}
\end{figure}

{Second, we prove  Eq. (5) in the main text
as follows.}
By plugging the linear channel $\widehat{\rho_f}=2^{n}\Phi_{U,\bb}-\id$ into the estimator $\widehat{o_f} = \tr(O_f\widehat{\rho_f})$, we obtain the upper bound of the variance that
\begin{equation}\label{Eq:varfin}
\begin{split}
\text{Var}(\widehat{o_f}) \leq\mathbb{E}\ [\tr(O_f\widehat{\rho_f})]^2&=\mbb{E}_{U\sim \mc{E}_\text{phase}}\sum_{\mb{b}} \Pr(\mb{b}|U) \tr(O_f \widehat{\rho_f})^2\\
&=2^{2n} \mbb{E}_{U\sim \mc{E}_\text{phase}}\sum_{\mb{b}}\tr(\rho\otimes O_f \otimes O_f\text{ } \Phi_{U,\mb{b}}^{\otimes 3})\\
&=2^{3n} \tr(\rho\otimes O_f \otimes O_f\text{ } \mb{M}_{\mc{E}_\text{phase}}^{(3)}) = \tr(\rho\otimes O_f \otimes O_f\text{ } \bigcup_{\pi \in S_3}V_n(\pi)),
    \end{split}
    \end{equation}
where we have applied the result of the third moment function from Proposition 1.
    
Notice that $\tr[\rho\otimes O_f \otimes O_f\text{ }\ket{\mb{x,w,z}}\bra{\mb{y,s,t}}]=0$ for all $\ket{\mb{x,w,z}}\bra{\mb{y,s,t}}\in V_n(\pi_{()})\cup  V_n(\pi_{(12)})\cup V_n(\pi_{(13)})$ because the diagonal elements of $O_f$ are zero and the elements $\ket{\mb{x,w,z}}\bra{\mb{y,s,t}}$ of this set satisfy either  $\mb{w}=\mb{s}$ or $\mb{z}=\mb{t}$. As a result, we only need to consider the elements from (even only a part of ) $V_n(\pi_{(23)})\cup V_n(\pi_{(123)})\cup V_n(\pi_{(132)})$ in \cref{Eq:varfin}. As is shown in \cref{fig:m3}, the inclusion-exclusion principle transforms $V_n(\pi_{(23)})\cup V_n(\pi_{(123)})\cup V_n(\pi_{(132)})$ to an alternating sum of five Boolean tensors, as can be seen from
\begin{equation}\label{Eq:varfinal}
    \begin{split}
\text{Var}(\widehat{o_f}) &\leq \tr(\rho\otimes O_f \otimes O_f\text{ } \bigcup_{\pi \in S_3}V_n(\pi))\\
&=\tr[\rho\otimes O_f \otimes O_f\text{ } V_n(\pi_{(23)})\cup V_n(\pi_{(123)})\cup V_n(\pi_{(132)})]\\
&=\tr[\rho\otimes O_f \otimes O_f\text{ }(V_n(\pi_{(23)})+V_n(\pi_{(123)})+V_n(\pi_{(132)})-V_n(\pi_{(23)})\cap V_n(\pi_{(123)})-V_n(\pi_{(23)})\cap V_n(\pi_{(132)}))]\\
&= \tr(\rho)\tr(O_f^2)+2\tr(\rho O_f^2)-\sum_{\mb{x},\mb{z}}\bra{\mb{x}}\rho\ket{\mb{x}}\bra{\mb{z}}O_f\ket{\mb{x}}\bra{\mb{x}}O_f\ket{\mb{z}}-\sum_{\mb{x},\mb{w}}\bra{\mb{x}}\rho\ket{\mb{x}}\bra{\mb{x}}O_f\ket{\mb{w}}\bra{\mb{w}}O_f\ket{\mb{x}}\\
& \leq \tr(O_f^2)+2\tr(\rho O_f^2)\leq 3\tr(O_f^2)=3\|O_f\|_2^2,    
\end{split}
    \end{equation}
where we have used the fact that $O_f$ is Hermitian, which results in 
\begin{equation}
\begin{split}
&\sum_{\mb{x},\mb{z}}\bra{\mb{x}}\rho\ket{\mb{x}}\bra{\mb{z}}O_f\ket{\mb{x}}\bra{\mb{x}}O_f\ket{\mb{z}}\geq 0, \\
&\sum_{\mb{x},\mb{w}}\bra{\mb{x}}\rho\ket{\mb{x}}\bra{\mb{x}}O_f\ket{\mb{w}}\bra{\mb{w}}O_f\ket{\mb{x}}\geq 0,
\end{split}
\end{equation}
and 
\begin{equation}
    \tr(\rho O_f^2)\leq \|O_f^2\|_\infty\leq \tr(O_f^2)=\|O_f\|_2^2.
\end{equation}
\begin{figure}
    \centering
    \includegraphics[width=\linewidth]{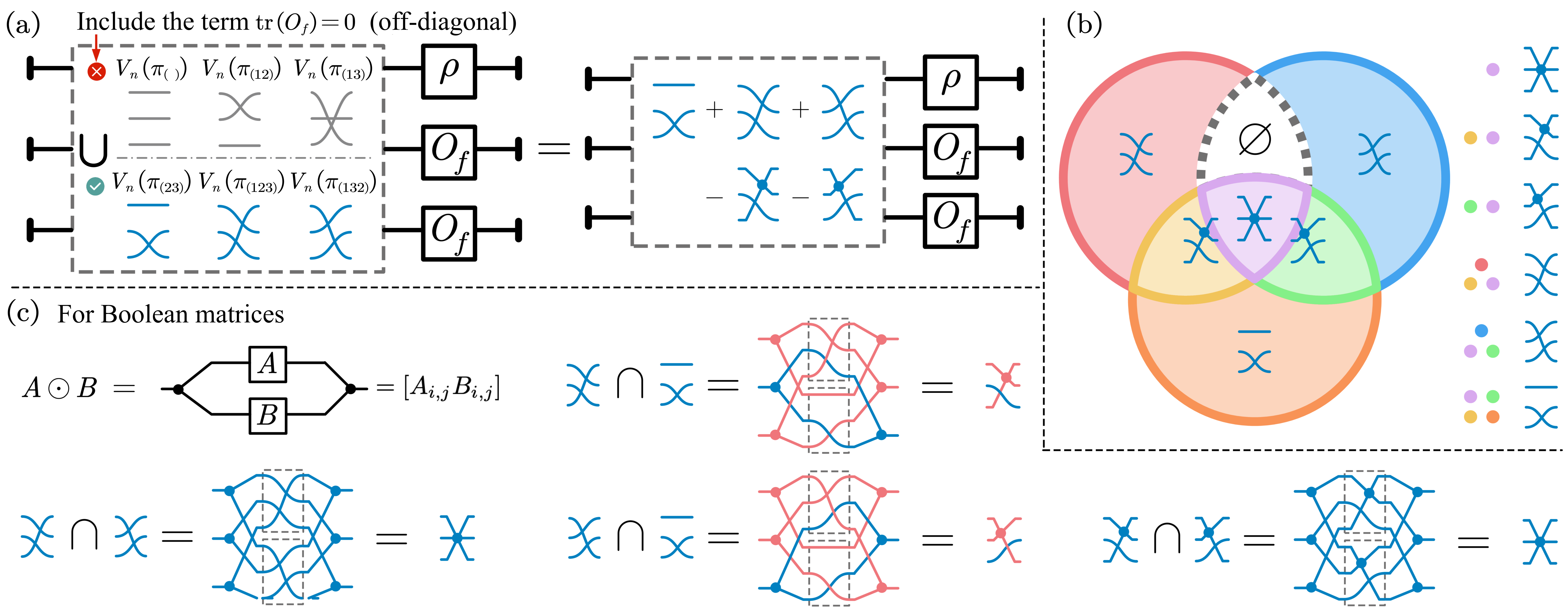}    
    \caption{The tensor diagrammatic illustration explaining   \cref{Eq:varfinal}. (a) A diagrammatic description that only three of six symmetric elements are applied in the equation, and the union of these three elements can be transformed to a summation of five operators. (b) We provide a Venn diagram to visualize this transformation. (c) The intersections of the symmetric elements are computed, where a key point is that for Boolean matrices, the intersection operation is equivalent to the Hadamard (elementwise) product.    }
    \label{fig:m3}
\end{figure}

Having proven Theorem 1, we add a comment on the total estimation variance of the phase shadow protocol. As shown in the main text, one can directly measure the computational basis to obtain $\widehat{\rho_d}=\ket{\mb{b}}\bra{\mb{b}}$. In this way, we obtain an unbiased estimator of $\tr(O\rho_d)$: $\widehat{o_d}=\tr(O\widehat{\rho_d})$. Thus, the upper bound of the estimation variance is given by
\begin{equation}
    \text{Var}(\widehat{o_d})\leq \mbb{E}_{\mb{b}} \widehat{o_d}^2 = \sum_{\mb{b}}\bra{\mb{b}}\rho\ket{\mb{b}}\bra{\mb{b}}O\ket{\mb{b}}^2\leq\max_{\mb{b}}\bra{\mb{b}}O\ket{\mb{b}}^2 = \|O_d\|_\infty^2=\|O_d^2\|_\infty\leq \|O_d\|_2^2.
\end{equation}
Suppose one repeats the phase shadow protocol for $N_f$ rounds to get the shadow data $\{\rho_f^{(i)}\}$, and computational-basis measurement for $N_d$ rounds to get $\{\rho_d^{(j)}\}$, the full estimator shows $\widehat{\rho}=N_f^{-1}\sum_i \rho_f^{(i)}+N_d^{-1}\sum_j \rho_d^{(j)}$, and $\widehat{o}=\tr(O\widehat{\rho})$ has the variance $\text{Var}(\widehat{o})= N_f^{-1}\text{Var}(\widehat{o_f})+N_d^{-1}\text{Var}(\widehat{o_d})$ under the total measurement time $N=N_f+N_d$. By choosing $N_f=3N_d$, one further has 
\begin{equation}
\begin{split}
        \text{Var}(\widehat{o})=N_f^{-1}\text{Var}(\widehat{o_f})+N_d^{-1}\text{Var}(\widehat{o_d})\leq N_f^{-1}\times 3\|O_f\|_2^2+N_d^{-1}\|O_d\|_2^2= \frac{4}{3}N^{-1}\times 3\|O_f\|_2^2+4N^{-1}\|O_d\|_2^2=4\|O\|_2^2,
\end{split}
\end{equation}
where the last equality holds because 
\begin{equation}
    \|O\|_2^2=\tr(O^2) =\tr[(O_f+O_d)^2]= \tr(O_f^2)+\tr(O_d^2)=\|O_f\|_2^2+\|O_d\|_2^2.
\end{equation}

\section*{Supplementary Note 4. --Realistic noise models}\label{Ap:rnm}

In the RPS protocol, we only need to account for the noise on controlled-${Z}$ (${CZ}$) gates, as other noises on the single qubits at the end of the circuit can be addressed through standard measurement error mitigation techniques~\cite{gluza2020quantum}.
On realistic noisy quantum computers, the noise on the ${CZ}$ gate is the \textit{biased} noise~\cite{roffe2023bias,bonilla2021xzzx}, where certain types of noise are dominant. In the RPS protocol, we model the noise on the ${CZ}$ gate as ${ZZ}$-type noise. We have also considered an extended biased noise model, which includes all two-qubit ${Z}$-type Pauli errors, and further provides a robust approach for this extended noise model in Supplementary Note 9. We present the mathematical modeling and physical origination of the noise models in the following.
Here, we model a noisy ${CZ}$ gate acting on a qubit pair $(i,j)$ as
\begin{equation}\label{Eq:noisemodelMain}
    \widetilde{{CZ}}_{i,j}\coloneq{CZ}_{i,j}{ZZ}( \theta_{i,j}), \ {ZZ}( \theta_{i,j})=\exp(-{i} \frac{\theta_{i,j}}{2} {Z}_i {Z}_j),
\end{equation}
where $\theta_{i,j} \sim \mc{N}(0,\sigma^2)$ are \emph{independently and identically distributed} (i.i.d.) random variables. This noise model is realistic, as two-qubit gates are typically implemented using two-qubit Ising couplings, where the rotation angle behaves as a random variable~\cite{maslov2018use,bravyi2022constant}. 
The average effect of this random rotation results in a noisy quantum channel
\begin{equation}\label{Eq:noimodel}
\begin{split}
        \mbb{E}_{\theta_{i,j}}  \widetilde{{CZ}}_{i,j}\rho \widetilde{{CZ}}_{i,j}^{\dagger} &={CZ} \cdot \mbb{E}_{\theta_{i,j}}[\exp(-{i} \frac{\theta_{i,j}}{2} {Z}_i {Z}_j) \rho \exp({i} \frac{\theta_{i,j}}{2} {Z}_i {Z}_j)] \cdot {CZ}^{\dag}\\
        &={CZ} \cdot \mbb{E}_{\theta_{i,j}}[(\cos\frac{\theta_{i,j}}{2}\id_2-{i}\sin\frac{\theta_{i,j}}{2}{Z}_i{Z}_j) \rho  (\cos\frac{\theta_{i,j}}{2}\id_2+{i}\sin\frac{\theta_{i,j}}{2}{Z}_i{Z}_j) ] \cdot{CZ}^{\dag}\\
        &={CZ} \cdot [\mbb{E}_{\theta_{i,j}} \cos^2\frac{\theta_{i,j}}{2}\rho+\mbb{E}_{\theta_{i,j}} \sin^2\frac{\theta_{i,j}}{2}{Z}_i{Z}_j\rho {Z}_i{Z}_j]  \cdot{CZ}^{\dag}\\
        &={CZ}\cdot \mc{D}(\rho)\cdot {CZ}^{\dag},\\
\end{split}
\end{equation}
where $\mc{D}(\rho) \coloneq \frac{1}{2}(1+e^{-\sigma^2/2})\rho +\frac{1}{2}(1-e^{-\sigma^2/2}) {Z}_i{Z}_j \rho {Z}_i {Z}_j $. In this way, we define the error rate of ${CZ}$ gates to be 
\begin{equation}
p_{e} \coloneq \frac{1}{2}(1-e^{-\sigma^2/2})\approx \frac{\sigma^2}{4}.
\end{equation}

Our focus on this type of noise stems from its physical foundations in realistic platforms like trapped ions and Rydberg atoms. Here, we provide a brief description of the noise source in these platforms.

\textit{Trapped ion platform.}
In trapped ion systems, ${CZ}$ gates can be implemented using \emph{global Mølmer–Sørensen} (GMS) gates~\cite{maslov2018use}. These gates offer native implementation of entangling operations and support efficient execution of Clifford circuits~\cite{bravyi2022constant}. \comments{A global ${CZ}$ layer may be realized as
\begin{equation}
  \prod_{0\leq i<j\leq n-1}{{CZ}_{i,j}^{A_{i,j}}},  
\end{equation}
where $A_{i,j}$ are i.i.d.\  Bernoulli random variables. However, assigning these independently is challenging in practice~\cite{van2021constructing}.} Specifically, we consider the implementation of ${CZ}$ gates based on GMS interaction
according to ${CZ}_{i,j} ={S}_i^{\dag}{S}_j^{\dag} {H}^{\otimes 2} {GMS}_{i,j}(\frac{\pi}{2}){H}^{\otimes 2}$, and the ${GMS}$ gate corresponds to an XX-type interaction of the form
%
\begin{equation}
   {GMS}_{i,j}(\theta) \coloneq {XX}_{i,j}(\theta) =\exp(-{i}\frac{\theta}{2}{X}_i {X}_j) .
\end{equation}
We discuss the noise of the GMS gate~\cite{zhang2025robust,martinez2022analytical,manovitz2017fast} as follows. Experimental study~\cite{lotshaw2023modeling} shows that the dominant gate-dependent noise source of the MS gates is the vibrational mode frequency fluctuation. These fluctuations translate into  over-rotations, which are effectively modeled by the same noisy ${CZ}$ form $ \widetilde{{CZ}} = {CZ} \times{ZZ}( \theta_{i,j})$, where $\theta_{i,j} \sim \mc{N}(0,\sigma^2)$ and are i.i.d.\  for every ${CZ}$ gate, as mentioned in \cref{Eq:noisemodelMain}.

\textit{Rydberg atom platform.}  Rydberg atom arrays are known to exhibit dominant ${Z}$-type errors and leakage~\cite{evered2023high,cong2022hardware}. Crucially, atom loss can be directly detected and converted into an erasure error~\cite{wu2022erasure}, which can in turn be converted into ${Z}$-type Pauli errors~\cite{sahay2023high}, especially in alkaline-earth Rydberg atom arrays.  For example, it is estimated that for metastable ${}^{171}{\text{Yb}}$~\cite{wu2022erasure}, $98\%$ of errors can be converted to erasure errors, resulting in a Pauli-${Z}$-type error imposing on the ${CZ}$ gates.

For two-qubit ${CZ}$ gates, the dominant Pauli-${Z}$-type errors include IZ, ZI, and ZZ errors.  Specifically, we denote the noisy two-qubit gates $\widetilde{{CZ}}= {CZ}\circ\Lambda$, where
\begin{equation}
    \Lambda(\rho) = (1-\frac{3}{4}p_e)\rho +\frac{p_e}{4} {ZI} \rho {ZI} +\frac{p_e}{4} {IZ} \rho {IZ} +\frac{p_e}{4} {ZZ} \rho {ZZ}.
\end{equation}
In Supplementary Note 9, we provide a detailed discussion of this extended noise model, including its measurement channel, the estimation variance, and the efficient post-processing procedure in phase shadow estimation.

In summary, both the trapped ion platform and the Rydberg atom platform naturally give rise to gate-dependent biased Pauli ${Z}$-type noise, validating the noise model used throughout our analysis and simulation of the RPS protocol.

\section*{Supplementary Note 5. --Proof of  Proposition 2}\label{ap:Prop-noisyChannel}

To obtain the measurement channel of the RPS protocol, it is important to consider a noisy version of the moment function. Here, we define and calculate the moment function of noisy phase circuits in Supplementary Note 5(A), and complete the proof of Proposition 2 in Supplementary Note 5(B). 

\subsection*{A. Second moment function of noisy phase circuits}\label{ap:Prop-23moment-noisy}
When subjected to noise, the randomized measurement channel becomes 
\begin{equation}\label{Eq:C1}
\mc{\widetilde{M}}(\rho)\coloneq\sum_{\mb{b}}\mbb{E}_{U\sim\mc{E}_\text{phase}} \Pr(\mb{b}|
\widetilde{U})\Phi_{U,\mb{b}}= 2^n \tr_{1}(\rho \otimes \id \text{ }\widetilde{\mb{M}}_{\mc{E}_\text{phase}}^{(2)}),
\end{equation}
where  $\widetilde{U}$ represents the noisy random circuit and the second noisy moment function $\widetilde{\mb{M}}_{\mc{E}_\text{phase}}^{(2)}$ is defined as 
\begin{equation}
    \widetilde{\mb{M}}_{\mc{E}_\text{phase}}^{(2)}  \coloneq2^{-n}\mathbb{E}_{U\sim \mc{E}_\text{phase}}   \sum_\mb{b}\widetilde{U}^{\dagger}\ket{\mb{b}}\bra{\mb{b}}\widetilde{U} \otimes U^{\dagger}\ket{\mb{b}}\bra{\mb{b}}U.
\end{equation}
The following lemma presents the second moment function of phase circuits under the noise given by \cref{Eq:noimodel}.

\begin{lemma}[Second moment functions of 
$\mc{E}_\text{phase}$]\label{thm: noisy 2-moment}
The second moment function of the phase circuit ensemble $\mc{E}_\text{phase}$ with noisy ${CZ}$ gate is 
\begin{equation}
    \widetilde{\mathbf{M}}_{\mc{E}_{\text{phase}}}^{(2)} =2^{-2n} \sum_{I_i,I_j,I_k} (1-p_{e})^{i\times k}(p_{e})^{j\times k}\Delta_2^{I_i}\otimes (\id_4-\Delta_2)^{I_j} \otimes (\mbb{S}_2-\Delta_2)^{I_k},
\label{Eq: channel_noise}
\end{equation}
where $I_i\in\mc{I}_i,I_j\in\mc{I}_j,I_k\in\mc{I}_k$ are mutually disjoint and $I_i+I_j+I_k = [n]$. 
\end{lemma}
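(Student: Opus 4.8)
The plan is to mirror the two–step structure of the noiseless $m=2$ derivation of \cref{prop:23moment} (the version in the End Matter): first resolve the single–qubit twirl generated by the random $S$ gates and the computational–basis measurement, then resolve the $CZ$–layer twirl — except that now every $CZ$ drags along an extra averaged Pauli–$ZZ$ error on the first (noisy) copy.

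\emph{Step 1: reduce to a noisy $CZ$–layer twirl.} Writing the noisy circuit as $\widetilde U=U\,\Theta$ with $\Theta=\prod_{i<j}ZZ(\theta_{i,j})^{A_{i,j}}$ (legitimate because $CZ$, $ZZ$ and the $S$ gates are all diagonal, hence commute), one has $\Phi_{\widetilde U,\mb b}=\Theta^{\dagger}\Phi_{U,\mb b}\Theta$, so
\[
\widetilde{\mb M}^{(2)}_{\mc E_{\text{phase}}}=2^{-n}\,\mathbb E_{A,\mb b}\Bigl[\bigl(\textstyle\prod_{A_{i,j}=1}\mc D^{(1)}_{i,j}(\Phi_{U,\mb b})\bigr)\otimes\Phi_{U,\mb b}\Bigr],
\]
where $\mc D^{(1)}_{i,j}$ is the $p_e$–rate $Z_iZ_j$–Pauli channel of \cref{Eq:noimodel} acting on the first copy only. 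Since $\mc D^{(1)}_{i,j}$ commutes with every diagonal single–qubit gate and with $Z^{\mb b}$, it can be pushed onto the $CZ$ layer; consequently the single–qubit twirl $\Lambda^{(2)}_{1,1}\circ\Lambda^{(2)}_{1,2}$ is untouched by the noise and, exactly as in the noiseless proof, maps $\ketbra{\mb+}{\mb+}^{\otimes 2}$ (averaged over $\mb b$) to $2^{-2n}[\Delta_2+(\id_4-\Delta_2)+(\mbb S_2-\Delta_2)]^{\otimes n}=2^{-2n}\sum_{I_i+I_j+I_k=[n]}\Delta_2^{I_i}\otimes(\id_4-\Delta_2)^{I_j}\otimes(\mbb S_2-\Delta_2)^{I_k}$. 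It then remains to apply the noisy $CZ$ layer to this trinomial expansion.

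\emph{Step 2: the per–pair superoperator (the crux).} Averaging over $A_{i,j}\in\{0,1\}$ turns the entangling layer into $\bigotimes_{i<j}\tfrac12\bigl[\mathrm{id}+\mc C_{i,j}\circ\mc D^{(1)}_{i,j}\bigr]$, where $\mc C_{i,j}$ is conjugation by $CZ_{i,j}\otimes CZ_{i,j}$ on both copies. These channels mutually commute, and each is diagonal on the nine–dimensional span of $\{B_a\otimes B_b:B\in\{\Delta_2,\id_4-\Delta_2,\mbb S_2-\Delta_2\}\}$: $\mc C_{i,j}$ is the noiseless map (it fixes seven of the nine combinations and flips the sign of $(\id_4-\Delta_2)\otimes(\mbb S_2-\Delta_2)$ and $(\mbb S_2-\Delta_2)\otimes(\id_4-\Delta_2)$), while $\mc D^{(1)}_{i,j}$ follows from the per–slot relations $Z^{(1)}\Delta_2Z^{(1)}=+\Delta_2$ and $Z^{(1)}(\id_4-\Delta_2)Z^{(1)}=-(\id_4-\Delta_2)$, $Z^{(1)}(\mbb S_2-\Delta_2)Z^{(1)}=-(\mbb S_2-\Delta_2)$. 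I would tabulate the nine resulting eigenvalues of $\tfrac12[\mathrm{id}+\mc C_{i,j}\mc D^{(1)}_{i,j}]$ — the bookkeeping is cleanest in the Boolean–tensor / Hadamard–product picture of \cref{fig:2copy}(c) — the upshot being that the eigenvalue depends only on which of the three blocks $I_i,I_j,I_k$ the two qubits lie in, and the values come out so that the cross combinations annihilated in the noiseless case are now merely damped by a power of $p_e$ rather than killed outright.

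\emph{Step 3: assemble.} Because the $CZ$–layer superoperator factorizes over pairs and is diagonal on the product basis, the weight of $\Delta_2^{I_i}\otimes(\id_4-\Delta_2)^{I_j}\otimes(\mbb S_2-\Delta_2)^{I_k}$ is just the product over all $\binom n2$ pairs of the corresponding per–pair eigenvalue; collecting the contributions of the $i\times k$ and $j\times k$ cross pairs (pairs inside a single block contribute $1$) yields the factor $(1-p_e)^{i\times k}(p_e)^{j\times k}$, and restoring the $2^{-2n}$ prefactor gives \cref{Eq: channel_noise}. As a consistency check, at $p_e=0$ every term with $jk>0$ vanishes and one recovers $\id_4^{\otimes n}+\mbb S_2^{\otimes n}-\Delta_2^{\otimes n}$, i.e.\ \cref{prop:23moment}. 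The main obstacle is Step 2: correctly tracking the sign interference between the $\mc C_{i,j}$ flips and the $\mc D^{(1)}_{i,j}$ flips, so that the formerly–zero combinations reappear with exactly the weight $p_e$ and the surviving ones acquire exactly the stated power of $(1-p_e)$; everything before and after is routine commutation and counting.
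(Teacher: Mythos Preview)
Your three-step plan is essentially the paper's own proof, just rephrased in superoperator language: the paper likewise isolates the single-qubit twirl first (\cref{eq:noise_channel}), then computes a per-pair coefficient $c_{i,j}$ depending on the pair of blocks $(B_a,B_b)\in\{\Delta_2,\id_4-\Delta_2,\mbb S_2-\Delta_2\}^2$ (\cref{Eq: channel_noise_ij}--\cref{eq: c_ij}), and finally multiplies over all pairs to get \cref{Eq: channel_noise_ij2}. Your ``eigenvalue of $\tfrac12[\mathrm{id}+\mc C_{i,j}\mc D^{(1)}_{i,j}]$'' is exactly their $c_{i,j}$.

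There is, however, a concrete error in your Step~2 that would derail the tabulation. You claim $Z^{(1)}(\id_4-\Delta_2)Z^{(1)}=-(\id_4-\Delta_2)$, but in fact $\id_4-\Delta_2=\ket{0,1}\bra{0,1}+\ket{1,0}\bra{1,0}=\tfrac12(II-ZZ)$, which \emph{commutes} with $Z\otimes I$, so the sign is $+1$, not $-1$. (You may have been misled by a typo in the paper's End Matter, which writes $\id_4-\Delta_2=\ket{0,0}\bra{1,1}+\ket{1,1}\bra{0,0}$; those elements do not even survive the $\Lambda^{(2)}_{1,2}$ twirl.) With your wrong sign, the eigenvalue of $\tfrac12[\mathrm{id}+\mc C_{i,j}\mc D^{(1)}_{i,j}]$ on $(\id_4-\Delta_2)\otimes(\mbb S_2-\Delta_2)$ comes out $0$ rather than $p_e$, and on $\Delta_2\otimes(\id_4-\Delta_2)$ it comes out $1-p_e$ rather than $1$ --- contradicting the very assembly you assert in Step~3. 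With the corrected signs $s_{\Delta_2}=s_{\id_4-\Delta_2}=+1$, $s_{\mbb S_2-\Delta_2}=-1$, the nine eigenvalues are $1$ for all pairs except $(\Delta_2,\mbb S_2-\Delta_2)\mapsto 1-p_e$ and $(\id_4-\Delta_2,\mbb S_2-\Delta_2)\mapsto p_e$ (and their transposes), and your Step~3 then goes through verbatim.
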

 \begin{proof}
 
Under the noise model introduced in  \cref{Eq:noimodel}, the noisy gate reads $\widetilde{U} = {H}^{\otimes n} \widetilde{U_A}$, and the second moment function is given by
\begin{equation}
\begin{aligned}
    \widetilde{\mathbf{M}}_{\mc{E}_\text{phase}}^{(2)}  &=2^{-n}\mathbb{E}_{U\sim\mc{E}_\text{phase}}   \sum_\mb{b}\widetilde{U}^{\dagger}\ket{\mb{b}}\bra{\mb{b}}\widetilde{U} \otimes U^{\dagger}\ket{\mb{b}}\bra{\mb{b}}U\\
    &=2^{-n}\mathbb{E}_{U\sim\mc{E}_\text{phase}}   \sum_\mb{b}\widetilde{U_A}^{\dagger}{H}^{\otimes n}\ket{\mb{b}}\bra{\mb{b}}{H}^{\otimes n}\widetilde{U_A} \otimes U_A^{\dagger}{H}^{\otimes n}\ket{\mb{b}}\bra{\mb{b}}{H}^{\otimes n}U_A\\
    &=2^{-n}\mathbb{E}_{U_A}   \sum_\mb{b} \widetilde{U_A}^{\dagger}Z^{\mb{b}}{H}^{\otimes n}\ket{\mb{0}}\bra{\mb{0}}{H}^{\otimes n}Z^{\mb{b}}\widetilde{U_A} \otimes U_A^{\dagger}Z^{\mb{b}}{H}^{\otimes n}\ket{\mb{0}}\bra{\mb{0}}{H}^{\otimes n}Z^{\mb{b}}U_A\\
    &=2^{-3n}\mathbb{E}_{U_A}   \sum_\mb{b} \sum_{\mb{x,w,y,s}}\widetilde{U_A}^{\dagger}Z^{\mb{b}}\ket{\mb{x}}\bra{\mb{y}}Z^{\mb{b}}\widetilde{U_A} \otimes U_A^{\dagger}Z^{\mb{b}}\ket{\mb{w}}\bra{\mb{s}}Z^{\mb{b}}U_A\\
    &=2^{-3n}\mathbb{E}_{U_A}\sum_\mb{b} \sum_{\mb{x,w,y,s}}(\widetilde{U_A}^{\dagger}Z^{\mb{b}}\otimes U_A^{\dagger}Z^{\mb{b}})\text{ }\ket{\mb{x,w}}\bra{\mb{y,s}}\text{ }(Z^{\mb{b}}\widetilde{U_A}\otimes Z^{\mb{b}}U_A)\\
    &=2^{-2n}\mathbb{E}_{U_A} \sum_{\mb{x,w,y,s}}\widetilde{U_A}^{\dagger}\otimes U_A^{\dagger}\text{ }\Lambda_{1,1}^{(2)}(\ket{\mb{x,w}}\bra{\mb{y,s}})\text{ }\widetilde{U_A}\otimes U_A.\\
\end{aligned}
\label{eq:noise_channel}
\end{equation}

The twirling channel $\mathbb{E}_{U_A}\widetilde{U_A}^\dagger \otimes U_A^\dagger (\cdot) \widetilde{U_A}\otimes U_A$ is the composition of $\widetilde{\Lambda}_{{A,ij}}^{(2)}(\cdot)\coloneq\mathbb{E}_{A_{i,j}\in\{0,1\}} \widetilde{{CZ}}^{\dagger A_{i,j}}_{i,j}\otimes {CZ}_{i,j}^{\dagger A_{i,j}}(\cdot)\widetilde{{CZ}}_{i,j}^{A_{i,j}}\otimes {CZ}_{i,j}^{A_{i,j}}$ for all $i<j$. The twirling of $\widetilde{\Lambda}_{{A,ij}}^{(2)}$ on $\ket{\mb{x,w}}\bra{\mb{y,s}}$ is

\begin{equation}\label{Eq: channel_noise_ij}
\begin{aligned}
\widetilde{\Lambda}^{(2)}_{{A,ij}}(\ket{\mb{x,w}}\bra{\mb{y,s}})
&=\mathbb{E}_{A_{i,j}} \widetilde{{CZ}}^{\dagger A_{i,j}}_{i,j}\otimes {CZ}_{i,j}^{\dagger A_{i,j}}(\ket{\mb{x,w}}\bra{\mb{y,s}})\widetilde{{CZ}}_{i,j}^{A_{i,j}}\otimes {CZ}_{i,j}^{A_{i,j}}\\
&=\mathbb{E}_{A_{i,j}}  \widetilde{{CZ}}_{i,j}^{\dagger A_{i,j}}(\ket{\mb{x}}\bra{\mb{y}})\widetilde{{CZ}}_{i,j}^{A_{i,j}}\otimes {{CZ}}_{i,j}^{\dagger A_{i,j}}(\ket{\mb{w}}\bra{\mb{s}}){{CZ}}_{i,j}^{A_{i,j}}\\
&=\mathbb{E}_{A_{i,j}}  {{CZ}}_{i,j}^{\dagger A_{i,j}}\mc{D}(\ket{\mb{x}}\bra{\mb{y}}){{CZ}}_{i,j}^{A_{i,j}}\otimes {{CZ}}_{i,j}^{\dagger A_{i,j}}(\ket{\mb{w}}\bra{\mb{s}}){{CZ}}_{i,j}^{A_{i,j}}\\
&=\mathbb{E}_{A_{i,j}}  {{CZ}}_{i,j}^{\dagger A_{i,j}}[1-p_e+p_e(-1)^{x_i+x_j-y_i-y_j}](\ket{\mb{x}}\bra{\mb{y}}){{CZ}}_{i,j}^{A_{i,j}}\otimes {{CZ}}_{i,j}^{\dagger A_{i,j}}(\ket{\mb{w}}\bra{\mb{s}}){{CZ}}_{i,j}^{A_{i,j}}\\
&=\frac{1}{2}  \{1+(-1)^{x_i x_j +w_i w_j-y_i y_j -s_i s_j}\{1+p_{e}[(-1)^{x_i+x_j-y_i-y_j}-1]\}\}\ket{\mb{x,w}}\bra{\mb{y,s}}\\
&=\begin{cases}
    \ket{\mb{x,w}}\bra{\mb{y,s}} & \text{if } T_{i,j}^{(1)}\equiv0 \pmod{2}, T_{i,j}^{(2)}\equiv0 \pmod{2} \\
(1-p_{e})\ket{\mb{x,w}}\bra{\mb{y,s}} & \text{if } T_{i,j}^{(1)}\equiv0\pmod{2}, T_{i,j}^{(2)}\equiv1 \pmod{2}\\
0& \text{if } T_{i,j}^{(1)}\equiv1\pmod{2}, T_{i,j}^{(2)}\equiv0 \pmod{2}\\
p_{e}\ket{\mb{x,w}}\bra{\mb{y,s}}& \text{if } T_{i,j}^{(1)}\equiv1\pmod{2}, T_{i,j}^{(2)}\equiv1 \pmod{2}\\
\end{cases},\\
\end{aligned}
\end{equation}
where we have denoted $T_{i,j}^{(1)} \coloneq x_i x_j +w_i w_j-y_i y_j -s_i s_j$ and $T_{i,j}^{(2)} \coloneq x_i+x_j-y_i-y_j$. In the noiseless case ($p_e=0$), $c_{i,j}$ is binary (0 or 1), determined by $T_{i,j}^{(1)}$ alone. In contrast, $T_{i,j}^{(2)}$ modulates outcomes—some $0$ become $p_e$ and some $1$ become $1-p_e$ in the noisy case. One can observe that the result is either zero or the multiplication of the elements from $\{1-p_{e},p_{e}\}$.
With the results in \cref{Eq: channel_noise_ij}, the noisy second moment function reads
\begin{equation}
    \begin{aligned}
        \widetilde{\mathbf{M}}_{\mc{E}_\text{phase}}^{(2)}  &=2^{-2n}\mathbb{E}_{U_A} \sum_{\mb{x,w,y,s}}\widetilde{U_A}^{\dagger}\otimes U_A^{\dagger}\text{ }\Lambda_{1,1}^{(2)}(\ket{\mb{x,w}}\bra{\mb{y,s}})\text{ }\widetilde{U_A}\otimes U_A\\
        &=2^{-2n}\sum_{\mb{x,w,y,s}}  \widetilde{\Lambda}^{(2)}_{{A,01}}\circ\widetilde{\Lambda}^{(2)}_{{A,02}}\circ\dots \circ\widetilde{\Lambda}^{(2)}_{{A,n-2,n-1}}\circ\Lambda_{1,2}^{(2)}(\ket{\mb{x,w}}\bra{\mb{y,s}})\\
        &=2^{-2n}\sum_{\mb{x+w\equiv y+s} \pmod{4}} \widetilde{\Lambda}^{(2)}_{{A,01}}\circ\widetilde{\Lambda}^{(2)}_{{A,02}}\circ\dots \circ\widetilde{\Lambda}^{(2)}_{{A,n-2,n-1}}(\ket{\mb{x,w}}\bra{\mb{y,s}})\\
        &=2^{-2n} \sum_{\mb{x}+\mb{w}\equiv\mb{y}+\mb{s} \pmod{4}} \ket{\mb{x,w}}\bra{\mb{y,s}}\prod_{i<j}c_{i,j},
    \end{aligned}
    \label{Eq: channel_noise_all}
\end{equation}
where the coefficient is given by
\begin{equation}
    c_{i,j} =\begin{cases}
    1 & \text{if } T_{i,j}^{(1)}\equiv0 \pmod{2}, T_{i,j}^{(2)}\equiv0 \pmod{2} \\
(1-p_{e}) & \text{if } T_{i,j}^{(1)}\equiv0\pmod{2}, T_{i,j}^{(2)}\equiv1 \pmod{2}\\
0& \text{if } T_{i,j}^{(1)}\equiv1\pmod{2}, T_{i,j}^{(2)}\equiv0 \pmod{2}\\
p_{e}& \text{if } T_{i,j}^{(1)}\equiv1\pmod{2}, T_{i,j}^{(2)}\equiv1 \pmod{2}\\
\end{cases}. \label{eq: c_ij}
\end{equation}
For an operator $\ket{\mb{x,w}}\bra{\mb{y,s}}$, let $N^{(1)}_{\mb{x,w,y,s}}$, $N^{(2)}_{\mb{x,w,y,s}}$, and $N^{(3)}_{\mb{x,w,y,s}}$ denote the number of times $(1-p_{e})$, $p_{e}$, and $0$ appear in the set $\{c_{i,j}\mid i<j\}$, respectively. Then, we can write the noisy moment function as 
\begin{equation}
    \widetilde{\mathbf{M}}_{\mc{E}_\text{phase}}^{(2)} =2^{-2n} \sum_{\mb{x}+\mb{w}\equiv\mb{y}+\mb{s} \pmod{4}} \ket{\mb{x,w}}\bra{\mb{y,s}} (1-p_{e})^{N^{(1)}_{\mb{x,w,y,s}}} (p_{e})^{N^{(2)}_{\mb{x,w,y,s}}} 0^{N^{(3)}_{\mb{x,w,y,s}}}.
\label{Eq: channel_noise_ij_3}
\end{equation}

\begin{figure}
    \centering
    \includegraphics[width=\linewidth]{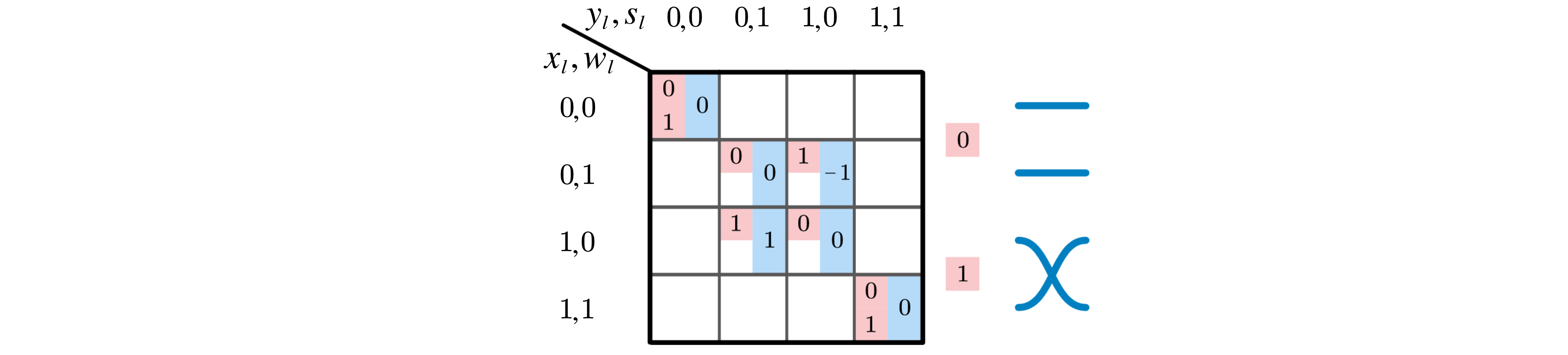}
    \caption{An illustration of six possible strings of $(x_l, w_l, y_l, s_l)$. For each string, the number marked in red indicates the permutation operator(s) to which $\ket{x_l, w_l }\bra{y_l, s_l}$ belongs. For example, $\ket{0,0}\bra{0,0} \in \id_2^{\otimes 2}, \mbb{S}_2$.  The number marked in blue represents the value of $x_l - y_l$.}
    \label{fig:p=2connection2}
\end{figure}

From now on, the key task is to determine the values of $N^{(1)}_{\mb{x,w,y,s}}$, $ N^{(2)}_{\mb{x,w,y,s}} $ and $N^{(3)}_{\mb{x,w,y,s}}$. We denote 
\begin{equation}    
\ket{\mb{x,w}}\bra{\mb{y,s}} = \bigotimes_{l=0}^{n-1} \ket{x_l , w_l }\bra{y_l , s_l}.
\end{equation}
The summation in \cref{Eq: channel_noise_ij_3} is over the matrix elements $\ket{\mb{x,w}}\bra{\mb{y,s}}$ satisfying $\mb{x}+\mb{w}\equiv\mb{y}+\mb{s} \pmod{4}$, therefore it requires that $x_l + w_l  = y_l + s_l $ for all $l$. As highlighted in red in \cref{fig:p=2connection2}, such  $\ket{x_l, w_l} \bra{y_l, s_l }$ are matrix elements of $\id_2^{\otimes 2}$ or $\mbb{S}_2$
Note that if there exists at least one permutation element $\pi$ such that both operators $\ket{x_i, w_i} \bra{y_i , s_i}$ and $\ket{x_j, w_j} \bra{y_j , s_j}$ belong to $V_1(\pi)$, then $T_{i,j}^{(1)} = 0 \ (\text{mod }\, 2)$. Otherwise, $T_{i,j}^{(1)} = 1 \ (\text{mod }\, 2)$.
 The value of $T_{i,j}^{(2)}$ depends on the differences $x_i - y_i$ and $x_j - y_j$,  and the values of $x_i - y_i$ is highlighted in blue in \cref{fig:p=2connection2}.
Integrating these discussions and considering the dependence of $c_{i,j}$ on $T^{(1)}_{i,j}$ and $T^{(2)}_{i,j}$ as in  \cref{eq: c_ij}, we can summarize that
 $c_{i,j}=1-p_{e} $ if 
\begin{itemize}
    \item $\ket{x_i, w_i} \bra{y_i , s_i} \in \Delta_2 $ and $\ket{x_j, w_j} \bra{y_j , s_j} \in \mathbb{S}_2-\Delta_2 $, or 
    \item $ \ket{x_i, w_i} \bra{y_i , s_i} \in \mathbb{S}-\Delta_2 $  and $ \ket{x_j, w_j} \bra{y_j , s_j} \in \Delta_2$.
\end{itemize}
$c_{i,j}=p_{e} $ 
if 
\begin{itemize}
    \item $\ket{x_i, w_i} \bra{y_i , s_i} \in \id_4-\Delta_2 $ and $ \ket{x_j, w_j} \bra{y_j , s_j} \in \mathbb{S}_2-\Delta_2$, or 
    \item $ \ket{x_i, w_i} \bra{y_i , s_i} \in \mathbb{S}_2-\Delta_2 $ and  $\ket{x_j, w_j} \bra{y_j , s_j}\in \id_4 - \Delta_2 $.
\end{itemize}
There is no case such that $c_{i,j}=0$, i.e., $N^{(3)}_{\mb{x,w,y,s}} = 0$. In the other cases, the coefficient $c_{i,j}=1 $. As an example, consider  $\ket{x_i, w_i} \bra{y_i , s_i} = \ket{0,1}\bra{0,1}, \quad
\ket{x_j, w_j} \bra{y_j , s_j} = \ket{0,1}\bra{1,0}$.  Since these two operators have no common $V_1(\pi)$, we have  $T_{i,j}^{(1)} = 1$.  
Furthermore,  $T_{i,j}^{(2)} = (x_i - y_i) + (x_j - y_j) = 0 - 1 = -1,$  
which leads to the corresponding coefficient  $c_{i,j} = p_{e}$. 

Here, one finds that given a qubit-level operator in the form of $\Delta_2^{\otimes i}\otimes (\id_4-\Delta_2)^{\otimes j}\otimes (\mbb{S}_2-\Delta_2)^{\otimes k}$, its constituent operators $\ket{\mb{x,w}}\bra{\mb{y,s}}$ all share the same coefficient $\prod_{i<j}c_{i,j}$. Specifically,  $ N^{(1)}_{\mb{x,w,y,s}}=i,k$ and $ N^{(2)}_{\mb{x,w,y,s}}=j,k$ for all operators $\ket{\mb{x,w}}\bra{\mb{y,s}}\in\Delta_2^{\otimes i}\otimes (\id_4-\Delta_2)^{\otimes j}\otimes (\mbb{S}_2-\Delta_2)^{\otimes k}$. In this way, we can group the operators into a qubit-level operator to further obtain that 
\begin{equation}
\begin{split}
   \widetilde{\mathbf{M}}_{\mc{E}_\text{phase}}^{(2)} &= 2^{-2n} \sum_{\mb{x}+\mb{w}\equiv\mb{y}+\mb{s} \pmod{4}} \ket{\mb{x,w}}\bra{\mb{y,s}} (1-p_{e})^{ N^{(1)}_{\mb{x,w,y,s}}} (p_{e})^{ N^{(2)}_{\mb{x,w,y,s}}} \\
   &=2^{-2n} \sum_{I_i+I_j+I_k=[n]} \Delta_2^{I_i}\otimes (\id_4-\Delta_2)^{I_j} \otimes (\mbb{S}_2-\Delta_2)^{I_k}(1-p_{e})^{|I_i|\cdot|I_k|}  (p_{e})^{{|I_j|\cdot|I_k|}}\\
   &=2^{-2n} \sum_{I_i+I_j+I_k=[n]} \Delta_2^{I_i}\otimes (\id_4-\Delta_2)^{I_j} \otimes (\mbb{S}_2-\Delta_2)^{I_k}(1-p_{e})^{i,k}  (p_{e})^{j,k},\\
\end{split}
\label{Eq: channel_noise_ij2}
\end{equation}
where $I_i\in\mc{I}_i,I_j\in\mc{I}_j,I_k\in\mc{I}_k$ are mutually disjoint and $I_i+I_j+I_k = [n]$. 
\end{proof}

\subsection*{B. Proof of  Eq. (7)}\label{Ap:ProofP2}

As shown in   \cref{Eq: channel_noise_ij2},  $\widetilde{\mathbf{M}}_{\mc{E}_{\text{phase}}}^{(2)}$ can be expressed as a  summation of tensor products of $\Delta_2$, $\id_4 - \Delta_2$, and $\mbb{S}_2 - \Delta_2$, which can be written as a linear combination of 2-copy Pauli operators as
\begin{equation}
    \Delta_2 = \frac{1}{2}(\id_2 \id_2 + {ZZ}), \quad \id_4 - \Delta_2 = \frac{1}{2}(\id_2 \id_2 - {ZZ}), \quad \mbb{S}_2 - \Delta_2 = \frac{1}{2}({XX} + {YY}).
\end{equation}
This implies that $\widetilde{\mathbf{M}}_{\mc{E}_\text{phase}}^{(2)}$ can be written as a linear combination of 2-copy Pauli operators
\begin{equation}\label{Eq: P2P}
    \widetilde{\mathbf{M}}_{\mc{E}_\text{phase}}^{(2)} =2^{-3n} \sum_{P\in\mb{P}_n} \sigma_P {P} \otimes {P}.
\end{equation}
Here, we aim to compute the detailed value of $\sigma_P$. Given a Pauli operator in the form ${P} =  \id_2^{I_{n_1}} \otimes {Z}_2^{I_{n_2}}\otimes \{{X},{Y}\}^{ I_{n_3}}$, where $I_{n_1}\in \mc{I}_{n_1}, I_{n_2}\in \mc{I}_{n_2}, I_{n_3}\in \mc{I}_{n_3}$ and $I_{n_1}+I_{n_2}+I_{n_3}= [n],$ we have
\begin{equation}\label{Eq:sigmaP}
\begin{split}
         \sigma_{P} &=2^n\tr(\widetilde{\mathbf{M}}_{\mc{E}_\text{phase}}^{(2)} \text{ }{P}\otimes {P} )\\
        &= 2^{-n}\sum_{I_{i},I_{j},I_{k}:I_{i}+I_{j}+I_{k}=[n]} (1-p_{e})^{{i}\times k}(p_{e})^{j\times k}\tr\left[\left(\Delta_2^{I_{i}}\otimes (\id_4-\Delta_2)^{I_{j}} \otimes (\mbb{S}_2-\Delta_2)^{I_{k}}\right)\left( {P}\otimes {P}\right)\right]\\
        &= 2^{-n}\sum_{I_{k}=I_{n_3},I_{i}+I_{j}=I_{n_1}+I_{n_2}} (1-p_{e})^{{i}\times k}(p_{e})^{j\times k}\tr[\left(\Delta_2^{I_{i}}\otimes (\id_4-\Delta_2)^{I_{j}} \otimes (\mbb{S}_2-\Delta_2)^{I_{k}}\right)\left( {P}\otimes {P}\right)]\\
        &=\sum_{I_{k}=I_{n_3},I_{i}+I_{j}=I_{n_1}+I_{n_2}}(1-p_{e})^{i\times n_3}(p_{e})^{j\times n_3}(-1)^{|I_j\cap I_{n_2}|}\\
        &=\sum_{s=0}^{n_1+n_2}\sum_{I_{n_1+n_2-s}, I_s: I_{n_1+n_2-s}+I_s = I_{n_1}+I_{n_2}}(1-p_{e})^{(n_1+n_2-s)n_3}(p_{e})^{sn_3}(-1)^{|I_{n_2}\cap I_s|}\\
        &=\sum_{s=0}^{n_1+n_2}\sum_{t:0\leq t\leq n_2, 0\leq s-t\leq n_1}(-1)^t C_{n_2}^t C_{n_1}^{s-t} (1-p_{e})^{(n_1+n_2-s)n_3}(p_{e})^{sn_3},
\end{split}
\end{equation}
which completes the proof. The coefficient $\sigma_P$ can be efficiently calculated using $\mc{O}(n^2)$ time. 

At this point, we add some remarks on the 
coefficient 
$\sigma_P$. We now prove that  \cref{Eq:sigmaP} approximately equals to $(1-p_{e})^{n_3(n-n_3)}\approx e^{-p_e n_3(n-n_3)
}$ when $np_e\ll1$.
We find that  \cref{Eq:sigmaP} can be 
interpreted as a power series in terms of the variable $s$, with each term referred to as the \textit{$s$-th order component}
\begin{equation}
\sigma_P = \sum_{s=0}^{n_1+n_2}
\underbrace{
\left[
\sum_{\substack{0 \leq t \leq n_2 \\ 0 \leq s - t \leq n_1}}
(-1)^t C_{n_2}^t C_{n_1}^{s-t} 
\right]
}_{\text{$s$-th order coefficient}} (1 - p_e)^{(n_1+n_2)n_3}(\frac{p_e}{1-p_e})^{sn_3}.    
\end{equation}
For example, when $s=0,1,2$, the $s$-th order terms are respectively $(1-p_{e})^{n_3(n-n_3)}$, $(n_1-n_2)(1-p_{e})^{n_3(n-n_3-1)}p_e^{n_3}$ and $\frac{(n_1-n_2)^2-(n_1+n_2)}{2}(1-p_{e})^{n_3(n-n_3-2)}p_e^{2n_3}$ (suppose $n_1,n_2,n_3\geq 2$). Here we discuss the $s$-th order coefficient, which is upper bounded as
\begin{equation}
    \sum_{\substack{0 \leq t \leq n_2 \\ 0 \leq s - t \leq n_1}}
(-1)^t C_{n_2}^t C_{n_1}^{s-t}\leq \sum_{\substack{0 \leq t \leq n_2 \\ 0 \leq s - t \leq n_1}}
 C_{n_2}^t C_{n_1}^{s-t} \leq sn^s,
\end{equation}
and the equality holds only when $s=0$. Note that when $n p_e \ll 1$, the upper bounded of the $s$-th order term is $ sn^s(p_e)^{sn_3}\leq s(np_e)^{s}$ , which decays exponentially as $s$ increases. 
Consequently, the series is dominated by the terms with small $s$, and higher-order components contribute negligibly. This implies that, for sufficiently small bit error probabilities $p_e$, the dominant contribution to $\sigma_P\approx(1-p_{e})^{n_3(n-n_3)}$ comes from the lowest-order terms in the series.
As a result, it is often sufficient to retain only the  $0$-th order term, which significantly simplifies the analysis and computation. 
If we only consider the $0$-th order terms, we can extend the result to the case when error rates of ${CZ}$ gates vary across different qubits as
\begin{equation}\label{Eq:appsigma}
    \sigma_P \approx(1-p_e)^{n_3(n-n_3)}=\prod_{s\in I_{n_1}+I_{n_2},t\in I_{n_3}} (1-p_e^{(s,t)}),
\end{equation}
 where $p_e^{(s,t)}$ denotes the error rate of ${CZ}$ implementing on the $s-$ and $t-$th qubit.

\section*{Supplementary Note 6. --Proof of  Theorem 2}\label{Ap:ProofTh1}

We do this in several steps.
\subsection*{A. Proof of  Eq. (8)}

Here, we prove  Eq. (8) by showing that $\widehat{\rho_f}_{\text{robust}}$ is indeed an unbiased estimator of $\rho_f$. Utilizing the result of Proposition 2, the proof is derived as 
    \begin{equation}
        \begin{split}
            \mathbb{E}_{\{U,\mathbf{b}\}} \widehat{\rho_f}_{\text{robust}} &= \sum_{\mb{b}}\mathbb{E}_{U\sim \mc{E}_\text{phase}} \Pr(\mb{b}|
\tilde{U})\sum_{P \in \mathbf{P}_n/\mathcal{Z}_n} \sigma_P^{-1} \, \tr(\Phi_{U,\mathbf{b}} P) \, P\\
&=2^n\sum_{P \in \mathbf{P}_n/\mathcal{Z}_n} \sigma_P^{-1} \, \tr( \widetilde{\mathbf{M}}_{\mc{E}_\text{phase}}^{(2)}\text{ }\rho\otimes P) \, P\\
&=2^{-2n}\sum_{P \in \mathbf{P}_n/\mathcal{Z}_n} \sigma_P^{-1} \, \tr( \sum_{P'\in\mb{P}_n}\sigma_{P'} P'\rho\otimes P'P) \, P\\
&=2^{-n}\sum_{P \in \mathbf{P}_n/\mathcal{Z}_n} \sigma_P^{-1} \,\sigma_{P} \tr(  P\rho) \, P=\rho_f.\\
        \end{split}
    \end{equation}

\subsection*{B. Third moment function of noisy phase circuits}\label{Ap:Variance}

Here, we introduce the third moment function of noisy phase circuits, which is necessary for proving the estimation variance given by Eq. (9).  We define the third moment function of noisy phase circuits as 
\begin{equation}\label{Eq:defM3App}
    \widetilde{\mathbf{M}}_{\mc{E}_\text{phase}}^{(3)} \coloneq \frac{1}{2^n}\mbb{E}_{U\sim\mc{E}_\text{phase}}\sum_\mb{b}{\Phi}_{\widetilde{U},\mb{b}}\otimes\Phi_{U,\mb{b}}\otimes \Phi_{U,\mb{b}}. 
\end{equation}
Substituting $ {\Phi}_{\widetilde{U},\mb{b}} = \widetilde{U_A}^{\dagger}{H}^{\otimes n}\ket{\mb{b}}\bra{\mb{b}}{H}^{\otimes n}\widetilde{U_A} $ and $ \Phi_{U,\mb{b}} =  U_A^{\dagger}{H}^{\otimes n}\ket{\mb{b}}\bra{\mb{b}}{H}^{\otimes n}U_A$ into \cref{Eq:defM3App}, we now arrive at
\begin{equation}\label{eq:noise_variance_first}
    \begin{split}
    \widetilde{\mathbf{M}}_{\mc{E}_\text{phase}}^{(3)} &= {2^{-n}}\mbb{E}_{U\sim\mc{E}_\text{phase}}\sum_\mb{b}{\Phi}_{\widetilde{U},\mb{b}}\otimes\Phi_{U,\mb{b}}\otimes \Phi_{U,\mb{b}}\\
        &=2^{-n}\mathbb{E}_{U\sim\mc{E}_\text{phase}}   \sum_\mb{b}\widetilde{U_A}^{\dagger}{H}^{\otimes n}\ket{\mb{b}}\bra{\mb{b}}{H}^{\otimes n}\widetilde{U}_A \otimes U_A^{\dagger}{H}^{\otimes n}\ket{\mb{b}}\bra{\mb{b}}{H}^{\otimes n}U_A\otimes U_A^{\dagger}{H}^{\otimes n}\ket{\mb{b}}\bra{\mb{b}}{H}^{\otimes n}U_A\\
    &=2^{-4n}\mathbb{E}_{U_A}   \sum_\mb{b} \sum_{\mb{x,w,z,y,s,t}}\widetilde{U_A}^{\dagger}Z^{\mb{b}}\ket{\mb{x}}\bra{\mb{y}}Z^{\mb{b}}\widetilde{U}_A \otimes U_A^{\dagger}Z^{\mb{b}}\ket{\mb{w}}\bra{\mb{s}}Z^{\mb{b}}U_A\otimes U_A^{\dagger}Z^{\mb{b}}\ket{\mb{z}}\bra{\mb{t}}Z^{\mb{b}}U_A\\
    &=2^{-4n}\mathbb{E}_{U_A}\sum_\mb{b} \sum_{\mb{x,w,z,y,s,t}}(\widetilde{U_A}^{\dagger}Z^{\mb{b}}\otimes U_A^{\dagger}Z^{\mb{b}}\otimes U_A^{\dagger}Z^{\mb{b}})\text{ }\ket{\mb{x,w,z}}\bra{\mb{y,s,t}}\text{ }(Z^{\mb{b}}\widetilde{U_A}\otimes Z^{\mb{b}}U_A\otimes Z^{\mb{b}}U_A)\\
    &=2^{-3n}\mathbb{E}_{U_A} \sum_{\mb{x,w,z,y,s,t}}\widetilde{U_A}^{\dagger}\otimes U_A^{\dagger}\otimes U_A^{\dagger}\text{ }\Lambda_{1,1}^{(3)}(\ket{\mb{x,w,z}}\bra{\mb{y,s,t}})\text{ }\widetilde{U_A}\otimes U_A\otimes U_A.
    \end{split}
\end{equation}
In \cref{eq:noise_variance_first}, $\mathbb{E}_{U_A} \widetilde{U_A}^{\dagger}\otimes U_A^{\dagger}\otimes U_A^{\dagger}(\cdot)\widetilde{U_A}\otimes U_A\otimes U_A$ is a concatenation of $\widetilde{\Lambda}_{{A,ij}}^{(3)}(\cdot)\coloneq\mathbb{E}_{A_{i,j}\in\{0,1\}} \widetilde{{CZ}}^{\dagger A_{i,j}}_{i,j}\otimes {CZ}_{i,j}^{\dagger A_{i,j}}\otimes {CZ}_{i,j}^{\dagger A_{i,j}}(\cdot)\widetilde{{CZ}}_{i,j}^{A_{i,j}}\otimes {CZ}_{i,j}^{A_{i,j}}\otimes {CZ}_{i,j}^{A_{i,j}}$, for all  $i< j$. Similar to the $m=2$ moment counterpart in  \cref{Eq: channel_noise_ij}, the action of $\widetilde{\Lambda}_{{A,ij}}^{(3)}$ on $\ket{\mb{x,w,z}}\bra{\mb{y,s,t}}$ is given by
 \begin{equation}
    \begin{split}
            \widetilde{\Lambda}^{(3)}_{{A,ij}}(\ket{\mb{x,w,z}}\bra{\mb{y,s,t}})&=\begin{cases}\ket{\mb{x,w,z}}\bra{\mb{y,s,t}}
     & \text{if } T_{i,j}^{(1)}\equiv0 \pmod{2}, T_{i,j}^{(2)}\equiv0 \pmod{2} \\
(1-p_{e})\ket{\mb{x,w,z}}\bra{\mb{y,s,t}} & \text{if } T_{i,j}^{(1)}\equiv0\pmod{2}, T_{i,j}^{(2)}\equiv1 \pmod{2}\\
0 & \text{if } T_{i,j}^{(1)}\equiv1 \pmod{2}, T_{i,j}^{(2)}\equiv0 \pmod{2}\\
p_{e}\ket{\mb{x,w,z}}\bra{\mb{y,s,t}}& \text{if } T_{i,j}^{(1)}\equiv1 \pmod{2}, T_{i,j}^{(2)}\equiv1 \pmod{2}\\
\end{cases},\\
    \end{split}
\end{equation}
where the coefficients are $T_{i,j}^{(1)} \coloneq x_i x_j +w_i w_j +z_i z_j-y_i y_j -s_i s_j-t_i t_j$ and $T_{i,j}^{(2)}\coloneq x_i+x_j-y_i-y_j$. Based on the above results, we have
\begin{equation}\label{eq:noise_variance_second}
    \begin{split}
        \widetilde{\mathbf{M}}_{\mc{E}_\text{phase}}^{(3)} 
    &=2^{-3n}\mathbb{E}_{U_A} \sum_{\mb{x,w,z,y,s,t}}\widetilde{U_A}^{\dagger}\otimes U_A^{\dagger}\otimes U_A^{\dagger}\text{ }\Lambda_{1,1}^{(3)}(\ket{\mb{x,w,z}}\bra{\mb{y,s,t}})\text{ }\widetilde{U_A}\otimes U_A\otimes U_A\\
    &=2^{-3n}\sum_{\mb{x,w,z,y,s,t}}  \widetilde{\Lambda}^{(3)}_{{A,01}}\circ\widetilde{\Lambda}^{(3)}_{{A,02}}\circ\dots \circ\widetilde{\Lambda}^{(3)}_{{A,n-2,n-1}}\circ\Lambda_{1,2}^{(3)}(\ket{\mb{x,w,z}}\bra{\mb{y,s,t}})\\
        &=2^{-3n}\sum_{\mb{x+w+z\equiv y+s+t} \pmod{4}} \widetilde{\Lambda}^{(3)}_{{A,01}}\circ\widetilde{\Lambda}^{(3)}_{{A,02}}\circ\dots \circ\widetilde{\Lambda}^{(3)}_{{A,n-2,n-1}}(\ket{\mb{x,w,z}}\bra{\mb{y,s,t}})\\
        &=2^{-3n} \sum_{\mb{x}+\mb{w}+\mb{z}\equiv\mb{y}+\mb{s}+\mb{t} \pmod{4}} \ket{\mb{x,w,z}}\bra{\mb{y,s,t}}\prod_{i<j}c_{i,j},\\
    \end{split}
\end{equation}
where $c_{i,j}$ is the coefficient that satisfies
\begin{equation}
    c_{i,j} = \begin{cases}1
     & \text{if } T_{i,j}^{(1)}\equiv0 \pmod{2}, T_{i,j}^{(2)}\equiv0 \pmod{2} \\
1-p_{e} & \text{if } T_{i,j}^{(1)}\equiv0\pmod{2}, T_{i,j}^{(2)}\equiv 1 \pmod{2}\\
0 & \text{if } T_{i,j}^{(1)}\equiv1 \pmod{2}, T_{i,j}^{(2)}\equiv0 \pmod{2}\\
p_{e}& \text{if } T_{i,j}^{(1)}\equiv1 \pmod{2}, T_{i,j}^{(2)}\equiv1 \pmod{2}\\
\end{cases}\\. \label{eq: cij}
\end{equation}
Hence, we can write $\widetilde{\mathbf{M}}_{\mc{E}_\text{phase}}^{(3)}$ as
\begin{equation}
    \widetilde{\mathbf{M}}_{\mc{E}_\text{phase}}^{(3)} = 2^{-3n} \sum_{\mb{x}+\mb{w}+\mb{z}=\mb{y}+\mb{s}+\mb{t} }\ket{\mb{x,w,z}}\bra{\mb{y,s,t}} (1-p_{e})^{N^{(1)}_{\mb{x,w,z,y,s,t}}} (p_{e})^{N^{(2)}_{\mb{x,w,z,y,s,t}}} 0^{N^{(3)}_{\mb{x,w,z,y,s,t}}},
\end{equation}
where ${N^{(1)}_{\mb{x,w,z,y,s,t}}}$, ${N^{(2)}_{\mb{x,w,z,y,s,t}}}$, ${N^{(3)}_{\mb{x,w,z,y,s,t}}}$ are the occurrence of $(1-p_e), p_e, 0$ in the product $\prod_{i<j}c_{i,j}$, respectively. 

Next, we determine the values of $N^{(1)}_{\mb{x,w,z,y,s,t}}, N^{(2)}_{\mb{x,w,z,y,s,t}}, N^{(3)}_{\mb{x,w,z,y,s,t}}$. We first express the full operator as a tensor product of local operators: \begin{equation}
    \ket{\mb{x,w,z}}\bra{\mb{y,s,t}} = \bigotimes_{l=0}^{n-1} \ket{x_l, w_l, z_l} \bra{y_l, s_l, t_l}.
\end{equation}
As shown in \cref{fig:p=3connection}, we identify a total of 20 operators in the form $\ket{x_l, w_l, z_l}\bra{ y_l ,s_l ,t_l}$ that satisfy the condition $x_l + w_l + z_l = y_l + s_l + t_l$. These operators $\ket{x_l, w_l, z_l}\bra{ y_l, s_l, t_l}$ belong to the operators in $\{V_1(\pi)|\pi\in S_3\}$. To maintain consistency with the labels in \cref{fig:p=3connection}, we introduce some aliases for the symmetric group elements, which are $\pi_{0} = \pi_{()}, \pi_{1} = \pi_{(23)}, \pi_{2} = \pi_{(12)}, \pi_{3} = \pi_{(13)}, \pi_{4} = \pi_{(123)}, \pi_{5} = \pi_{(132)}$.

\begin{figure}
    \centering
    \includegraphics[width=.9\linewidth]{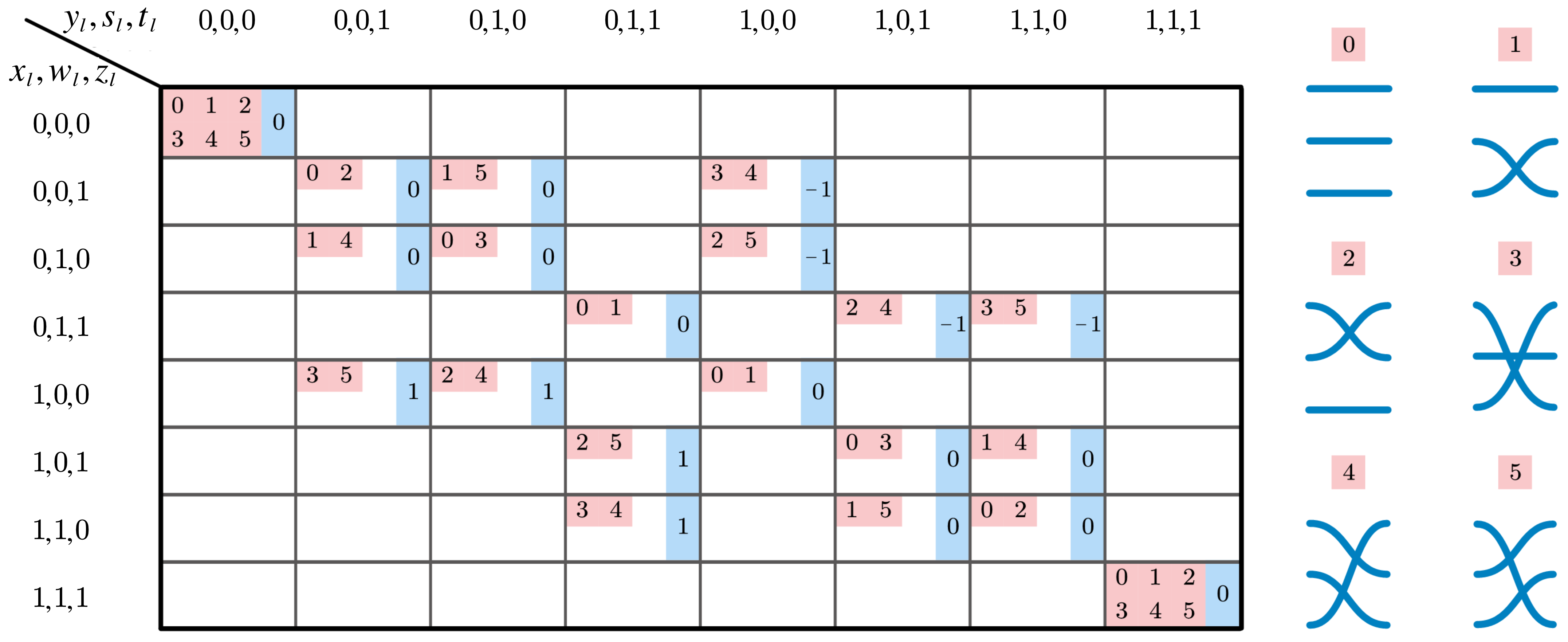}
    \caption{Illustration of the 20 possible strings of $(x_l, w_l, z_l, y_l, s_l, t_l)$. For each string, the number marked in red indicates the permutation operator(s) to which $\ket{x_l , w_l , z_l}\bra{y_l , s_l , t_l}$ belongs. For example, $\ket{0,0,1}\bra{0,0,1} \in V_1(\pi_{()})\cap V_1(\pi_{(12)})$, which are labeled as 0 and 2, respectively. The number marked in blue is the value of $x_l - y_l$.   }
    \label{fig:p=3connection}
\end{figure}
%
Based on these insights, we now observe the following:
\begin{itemize}
    \item The value of $T_{i,j}^{(1)}$ depends on whether there exist $\pi\in S_3$ such that both $\ket{x_i, w_i ,z_i} \bra{y_i, s_i ,t_i}\in V_1({\pi})$ and $\ket{x_j, w_j,z_j} \bra{y_j ,s_j, t_j}\in V_1({\pi})$. If such $\pi$ exists, then $T_{i,j}^{(1)} \equiv 0\ \pmod 2$. Otherwise, $T_{i,j}^{(1)} \equiv 1\ \pmod 2$. The symmetric operators to which $\ket{x_i ,w_i, z_i} \bra{y_i, s_i, t_i}$ belongs are indicated by the numbers marked in red in \cref{fig:p=3connection}. 
    \item The value of $T_{i,j}^{(2)}$ is determined by the difference between $x_i - y_i$ and $x_j - y_j$, which are marked in blue in \cref{fig:p=3connection}.
\end{itemize} 

With these observations, we can easily determine the values of $T_{i,j}^{(1)}$ and $T_{i,j}^{(2)}$ according to \cref{fig:p=3connection}.  For example, consider $\ket{x_i ,w_i, z_i } \bra{y_i, s_i ,t_i} = \ket{0,0,1}\bra{0,0,1}, \; \ket{x_j, w_j, z_j} \bra{y_j ,s_j, t_j} = \ket{0,1,1}\bra{1,0,1}$.  We have  $T_{i,j}^{(1)} = 0$ because both operators belongs to $\ V_1(\pi_{2})$, and $T_{i,j}^{(2)} = (x_i - y_i) + (x_j - y_j) = 0 - 1 = -1,$ which can be easily read from the figure. Then, according to \cref{eq: cij}, the corresponding coefficient $c_{i,j}$ is $1-p_e$.

Moreover, some operators share identical red and blue marks, and hence contribute equally to $c_{i,j}$. For example, both $\ket{0,0,1}\bra{1,0,0}$ and $\ket{1,1,0}\bra{0,1,1}$ belong to $V_1(\pi_{3}) \cap V_1(\pi_4)$ and have $x_l - y_l \equiv 1\ ({mod}\ 2)$. Thus, they can be grouped together. In total, the 20 local operators can be grouped into 10 categories, each corresponding to a specific pattern of red and blue labels, as summarized in \cref{Tab:1}. 
As concluded from the figure, the coefficients $c_{i,j}$ are determined as follows:  

\begin{enumerate}
    \item $c_{i,j} = p_{e}$ if either of the following conditions is satisfied:  
    \begin{itemize}
        \item $ \ket{x_i , w_i , z_i}\bra{y_i , s_i , t_i} \in B_{k,r} $ and $ \ket{x_j ,w_j ,z_j}\bra{y_j ,s_j ,t_j} \in B_{3,r} $ or $ B_{k,3} $, for all $k, r = 1, 2 $.  
        \item $ \ket{x_i , w_i , z_i}\bra{y_i , s_i , t_i} \in B_{3,3} $ and $ \ket{x_j ,w_j ,z_j}\bra{y_j ,s_j ,t_j}  \in B_{1,3} + B_{3,1} + B_{3,2} + B_{2,3} $.  
    \end{itemize}

    \item $ c_{i,j} = 1 - p_{e} $ if either of the following conditions is satisfied:  
    \begin{itemize}
        \item $\ket{x_i , w_i , z_i}\bra{y_i , s_i , t_i} \in B_{k,r} $ and $ \ket{x_j ,w_j ,z_j}\bra{y_j ,s_j ,t_j} \in B_{3, 3-r} $ or $ B_{3-k, 3} $, for all $ k, r = 1, 2 $.  
        \item $ \ket{x_i , w_i , z_i}\bra{y_i , s_i , t_i} \in B_{0,0} $ and $ \ket{x_j ,w_j ,z_j}\bra{y_j ,s_j ,t_j} \in  B_{1,3} + B_{3,1} + B_{3,2} + B_{2,3} $.  
    \end{itemize}

    \item $ c_{i,j} = 0$ if any of the following conditions holds:  
    \begin{itemize}
        \item $ \ket{x_i , w_i , z_i}\bra{y_i , s_i , t_i} \in B_{k,r} $ and $ \ket{x_j ,w_j ,z_j}\bra{y_j ,s_j ,t_j} \in B_{k, 3-r} $ or $ B_{3-k, r} $ for all $ k, r = 1, 2 $.  
        \item $ \ket{x_i , w_i , z_i}\bra{y_i , s_i , t_i} \in B_{3,1} $ and $ \ket{x_j ,w_j ,z_j}\bra{y_j ,s_j ,t_j} \in B_{3,2} $.  
        \item $ \ket{x_i , w_i , z_i}\bra{y_i , s_i , t_i} \in B_{1,3} $ and $ \ket{x_j ,w_j ,z_j}\bra{y_j ,s_j ,t_j} \in B_{2,3} $.  
    \end{itemize}
\end{enumerate}

These conclusions remain valid upon exchanging $i$ and $j$.

\begin{table}[h]
\centering
\caption{Ten categories $B_{k,r}$ of three-copy operators,  where operators of the same category have the same red and blue marks in \cref{fig:p=3connection}.}
\label{Tab:1}
\begin{tabularx}{\textwidth}{SC|SC|SC} 
\hhline{|===|}
\multicolumn{3}{c}{$B_{0,0} = \ket{0,0,0}\bra{0,0,0}+\ket{1,1,1}\bra{1,1,1} = \frac{1}{4} (\id_2 \id_2 \id_2 + \id_2 ZZ + Z \id_2 Z + ZZ \id_2)$} \\[3pt]
\hline
$\begin{aligned}
B_{1,1} &=\id_2 \id_2 X \cdot B_{0,0} \cdot \id_2 \id_2 X \\
&= \ket{0,0,1}\bra{0,0,1}+\ket{1,1,0}\bra{1,1,0}
\end{aligned}$ &
$\begin{aligned}
B_{1,2} &=\id_2 \id_2 X \cdot B_{0,0} \cdot \id_2 X \id_2 \\
&= \ket{0,0,1}\bra{0,1,0}+\ket{1,1,0}\bra{1,0,1}
\end{aligned}$ &
$\begin{aligned}
B_{1,3} &=\id_2 \id_2 X \cdot B_{0,0} \cdot X \id_2 \id_2 \\
&= \ket{0,0,1}\bra{1,0,0}+\ket{1,1,0}\bra{0,1,1}
\end{aligned}$ \\
\hline
$\begin{aligned}
B_{2,1} &=\id_2 X \id_2 \cdot B_{0,0} \cdot \id_2 \id_2 X \\
&= \ket{0,1,0}\bra{0,0,1}+\ket{1,0,1}\bra{1,1,0}
\end{aligned}$ &
$\begin{aligned}
B_{2,2} &=\id_2 X \id_2 \cdot B_{0,0} \cdot \id_2 X \id_2 \\
&= \ket{0,1,0}\bra{0,1,0}+\ket{1,0,1}\bra{1,0,1}
\end{aligned}$ &
$\begin{aligned}
B_{2,3} &=\id_2 X \id_2 \cdot B_{0,0} \cdot X \id_2 \id_2 \\
&= \ket{0,1,0}\bra{1,0,0}+\ket{1,0,1}\bra{0,1,1}
\end{aligned}$ \\
\hline
$\begin{aligned}
B_{3,1} &=X \id_2 \id_2 \cdot B_{0,0} \cdot \id_2 \id_2 X \\
&= \ket{1,0,0}\bra{0,0,1}+\ket{0,1,1}\bra{1,1,0}
\end{aligned}$ &
$\begin{aligned}
B_{3,2} &=X \id_2 \id_2 \cdot B_{0,0} \cdot \id_2 X \id_2 \\
&= \ket{1,0,0}\bra{0,1,0}+\ket{0,1,1}\bra{1,0,1}
\end{aligned}$ &
$\begin{aligned}
B_{3,3} &=X \id_2 \id_2 \cdot B_{0,0} \cdot X \id_2 \id_2 \\
&= \ket{1,0,0}\bra{1,0,0}+\ket{0,1,1}\bra{0,1,1}
\end{aligned}$ \\
\hhline{|===|}
\end{tabularx}
\end{table}

Given an operator $\ket{\mb{x,w,z}}\bra{\mb{y,s,t}}=\bigotimes_{l=0}^{n-1} \ket{x_l, w_l, z_l} \bra{y_l, s_l, t_l}$, we denote $N_I^{(k,r)}$ as the number of elements in the set $\{\ket{x_l, w_l, z_l} \bra{y_l, s_l, t_l}\}_{l=0}^{n-1}$ that belong to the subset $B_{k,r}$. Consider the $n$-qubit operator
\begin{equation}\label{eq: mbB}
    \mb{B} = \bigotimes_{k,r=1,2,3} B_{k,r}^{\otimes I^{(k,r)}} \otimes B_{0,0}^{\otimes I^{(0,0)}}, 
\end{equation}
where the index sets satisfy the condition $I^{(0,0)} + \sum_{k,r=1,2,3} I^{(k,r)} = [n]$.
All the operators $\ket{\mb{x,w,z}}\bra{\mb{y,s,t}}$ belonging to $\mb{B}$ has counts $N_I^{(k,r)} = |I^{(k,r)}|$, and the associated coefficients $(1 -  p_e)^{N^{(1)}_{\mb{x,w,z,y,s,t}}} (p_e)^{N^{(2)}_{\mb{x,w,z,y,s,t}}} 0^{N^{(3)}_{\mb{x,w,z,y,s,t}}}
$ are the same. Consequently, the values of $N^{(1)}_{\mb{x,w,z,y,s,t}}, N^{(2)}_{\mb{x,w,z,y,s,t}}, N^{(3)}_{\mb{x,w,z,y,s,t}}$ are determined by $\mb{B}$, and we denote them by $N^{(1,2,3)}_{\mb{B}}$:
\begin{equation}\label{Eq:N123}
    \begin{split}
    N^{(1)}_{\mb{B}}\coloneq  N^{(1)}_{\mb{x,w,z,y,s,t}} &= \sum_{k,r=1}^{2} N_I^{(k,r)} [N_I^{(3,3-r)} + N_I^{(3-k,3)}] + N_I^{(0,0)} [N_I^{(1,3)} + N_I^{(3,1)} + N_I^{(3,2)} + N_I^{(2,3)}], \\
    N^{(2)}_{\mb{B}}\coloneq N^{(2)}_{\mb{x,w,z,y,s,t}} &= \sum_{k,r=1}^{2} N_I^{(k,r)} [N_I^{(3,r)} + N_I^{(k,3)}] + N_I^{(3,3)} [N_I^{(1,3)} + N_I^{(3,1)} + N_I^{(3,2)} + N_I^{(2,3)}], \\
    N^{(3)}_{\mb{B}}\coloneq N^{(3)}_{\mb{x,w,z,y,s,t}} &= [N_I^{(1,1)} + N_I^{(2,2)}] [N_I^{(1,2)} + N_I^{(2,1)}] + N_I^{(3,1)} N_I^{(3,2)} + N_I^{(1,3)} N_I^{(2,3)}.
    \end{split}
\end{equation}
This leads to the third noisy moment function as
\begin{equation}\label{Eq:M3noisy}
\begin{split}
        \widetilde{\mathbf{M}}_{\mc{E}_\text{phase}}^{(3)}&=2^{-3n}\sum_{\mb{x+w+z\equiv y+s+t} \pmod{4}} \ket{\mb{x,w,z}}\bra{\mb{y,s,t}}(1-p_{e})^{N^{(1)}_{\mb{x,w,z,y,s,t}}} (p_{e})^{N^{(2)}_{\mb{x,w,z,y,s,t}}} 0^{N^{(3)}_{\mb{x,w,z,y,s,t}}}\\
        &= 2^{-3n} \sum_{\mb{B}:\substack{I^{(0,0)} + \sum_{k,r=1,2,3} I^{(k,r)} = [n]}} \mb{B} (1-p_{e})^{N^{(1)}_{\mb{B}}} (p_{e})^{N^{(2)}_{\mb{B}}} 0^{N^{(3)}_{\mb{B}}},\\
\end{split}
\end{equation}
where the second summation is over all $\mb{B}$ determined by the partition $ \substack{I^{(0,0)} + \sum_{k,r=1,2,3} I^{(k,r)} = [n]} $ according to \cref{eq: mbB}. 

\subsection*{C. Proof of  Eq. (9)}

The variance of $\widehat{o_f}_{\text{robust}} = \tr(O \widehat{\rho_f}_{\text{robust}})$ is 
    \begin{equation}
        \begin{split}
            &\text{Var}(\widehat{o_f}_{\text{robust}}) \leq\mbb{E}_{\{U,\mathbf{b}\}} [\widehat{o_f}_{\text{robust}}^2] = \mbb{E}_{\{U,\mathbf{b}\}} \tr(O\widehat{\rho_f}_{\text{robust}})^2
            =\mbb{E}_{U\sim \mc{E}_\text{phase}}\sum_{\mb{b}}\tr(\rho {\Phi}_{\widetilde{U},\mb{b}})\tr(O\widehat{\rho_f})^2\\
            &=\mbb{E}_{U\sim \mc{E}_\text{phase}}\sum_{\mb{b}}\tr[(\rho\otimes O_f\otimes O_f) \text{ }({\Phi}_{\widetilde{U},\mb{b}}\otimes \widehat{\rho_f}\otimes\widehat{\rho_f})].\\
        \end{split}
        \label{Eq:invrobustvariance0}
    \end{equation}
Recall that $\widehat{\rho_f}_{\text{robust}} = \sum_{P \in \mathbf{P}_n/\mathcal{Z}_n} \sigma_P^{-1} \, \tr(\Phi_{U,\mathbf{b}} P) \, P$ and ${\Phi}_{\widetilde{U},\mb{b}} = \frac{1}{2^n} \sum_{P\in \mathbf{P}_n} \tr({\Phi}_{\widetilde{U},\mb{b}} P)  P$, we have 
\begin{equation}
    \begin{split}
        \mbb{E}_{U\sim \mc{E}_\text{phase}}\sum_{\mb{b}}{\Phi}_{\widetilde{U},\mb{b}}\otimes \widehat{\rho_f}\otimes\widehat{\rho_f}&= \frac{1}{2^n}\mbb{E}_{U\sim \mc{E}_\text{phase}}\sum_{\mb{b}}\sum_{P_0,P_1,P_2\in{\mb{P}_n}} \sigma_{P_1}^{-1}\sigma_{P_2}^{-1}\tr({\Phi}_{\widetilde{U},\mb{b}} P_0)\tr(\Phi_{U,\mb{b}} P_1)\tr(\Phi_{U,\mb{b}} P_2)P_0 \otimes P_1 \otimes P_2\\
        &= \sum_{P_0,P_1,P_2\in{\mb{P}_n}} \sigma_{P_1}^{-1}\sigma_{P_2}^{-1}\tr( \widetilde{\mathbf{M}}_{\mc{E}_\text{phase}}^{(3)} \text{ }P_0\otimes P_1\otimes P_2)P_0\otimes P_1\otimes P_2,
    \end{split}
    \label{Eq:invrobustvariance1}
\end{equation}
where we set $\sigma_P^{-1} = 0$ for $P\in\mc{Z}_n$ for mathematical completeness. Then, by plugging  \cref{Eq:invrobustvariance1} into  \cref{Eq:invrobustvariance0}, we arrive at 
\begin{equation}
    \begin{split}
            \text{Var}(\widehat{o_f}_{\text{robust}})&=\mbb{E}_{U\sim \mc{E}_\text{phase}}\sum_{\mb{b}}[\tr(\rho\otimes O_f\otimes O_f )\text{ }({\Phi}_{\widetilde{U},\mb{b}}\otimes \widehat{\rho_f}\otimes\widehat{\rho_f})]\\
            &=\sum_{P_0,P_1,P_2\in{\mb{P}_n}} \sigma_{P_1}^{-1}\sigma_{P_2}^{-1}\tr( \widetilde{\mathbf{M}}_{\mc{E}_\text{phase}}^{(3)} \text{ }P_0\otimes P_1\otimes P_2)\tr[(\rho\otimes O_f\otimes O_f )(P_0\otimes P_1\otimes P_2)]\\
            &=\tr[2^{3n}\widetilde{\mathbf{M}}_{\mc{E}_\text{phase}}^{(3)}\text{ }(\rho\otimes \mathring{O_f}\otimes \mathring{O_f})],
        \end{split}
\end{equation}
where $\mathring{O_f} = \frac{1}{2^n} \sum_{P\in \mathbf{P}_n}\sigma_{P}^{-1}\tr(P O_f) P$. 
We now bound the variance by dividing \cref{Eq:M3noisy} into two terms based on whether $N^{(2)}_{\mathbf{x,w,z,y,s,t}} = 0$ as 
\begin{equation}    \label{Eq:MajorVariance}
    \begin{split}
        &\tr[2^{3n}\widetilde{\mathbf{M}}_{\mc{E}_\text{phase}}^{(3)}\text{ }(\rho \otimes \mathring{O_f}\otimes \mathring{O_f})] \\
        &=\mathop {\underbrace{\tr[(\rho \otimes \mathring{O_f}\otimes \mathring{O_f})\text{ }\sum_{\mb{x}+\mb{w}+\mb{z}=\mb{y}+\mb{s}+\mb{t},{N^{(2)}_{\mb{x,w,z,y,s,t}}}={N^{(3)}_{\mb{x,w,z,y,s,t}}}=0 }(1-p_{e})^{N^{(1)}_{\mb{x,w,z,y,s,t}}}\ket{\mb{x,w,z}}\bra{\mb{y,s,t}}  ]}} \limits_{\circled{1}} \\
        &+\mathop {\underbrace{\tr[(\rho \otimes \mathring{O_f}\otimes \mathring{O_f})\text{ }\sum_{\mb{x}+\mb{w}+\mb{z}=\mb{y}+\mb{s}+\mb{t},{N^{(2)}_{\mb{x,w,z,y,s,t}}}\neq0,{N^{(3)}_{\mb{x,w,z,y,s,t}}}=0 }(1-p_{e})^{N^{(1)}_{\mb{x,w,z,y,s,t}}} p_e^{N^{(2)}_{\mb{x,w,z,y,s,t}}}\ket{\mb{x,w,z}}\bra{\mb{y,s,t}}  ]}} \limits_{\circled{2}}.\\
    \end{split}
\end{equation}
We first give the upper bound of the term~$\circled{1}$. The condition $N^{(2)}_{\mathbf{x,w,z,y,s,t}} = N^{(3)}_{\mathbf{x,w,z,y,s,t}} = 0$ is equivalent to that $T_{i,j}^{(1)} = 0$ for all $i, j \in [n]$. According to Proposition 1, the summation in term~$\circled{1}$ is supported on $\bigcup_{\pi \in S_3} V_n(\pi)$.
Hence, we can rewrite term~$\circled{1}$ as
\begin{equation}    \label{Eq:MajorNoiseVar}
    \begin{split}
        \circled{1}&=\tr[(\rho \otimes \mathring{O_f}\otimes \mathring{O_f})\text{ }\sum_{\mb{x}+\mb{w}+\mb{z}=\mb{y}+\mb{s}+\mb{t},{N^{(2)}_{\mb{x,w,z,y,s,t}}}={N^{(3)}_{\mb{x,w,z,y,s,t}}}=0 }(1-p_{e})^{N^{(1)}_{\mb{x,w,z,y,s,t}}}\ket{\mb{x,w,z}}\bra{\mb{y,s,t}} ]\\
        &=\tr[(\rho \otimes \mathring{O_f}\otimes \mathring{O_f})\text{ }\sum_{\ket{\mb{x,w,z}}\bra{\mb{y,s,t}}\in\bigcup_{\pi \in S_3}V_n(\pi)}(1-p_{e})^{N^{(1)}_{\mb{x,w,z,y,s,t}}}\ket{\mb{x,w,z}}\bra{\mb{y,s,t}}  ]\\
        &=\tr[(\rho \otimes \mathring{O_f}\otimes \mathring{O_f})\text{ }\sum_{\ket{\mb{x,w,z}}\bra{\mb{y,s,t}}\in V_n(\pi_1)\cup V_n(\pi_4)\cup V_n(\pi_5)}(1-p_{e})^{N^{(1)}_{\mb{x,w,z,y,s,t}}}\ket{\mb{x,w,z}}\bra{\mb{y,s,t}}  ],\\
    \end{split}
\end{equation}
where the final equality holds because the diagonal elements of $\mathring{O}_f$ are zero, which results in the trace to be zero if $\ket{\mb{x,w,z}}\bra{\mb{y,s,t}}\in V_{n}(\pi_0)\cup V_{n}(\pi_2)\cup V_{n}(\pi_3)$. Then, similar to the derivation in \cref{Eq:varfinal}, we have

\begin{equation}
    \begin{split}
        \circled{1}&=\tr[(\rho \otimes \mathring{O_f}\otimes \mathring{O_f})\text{ }\sum_{\ket{\mb{x,w,z}}\bra{\mb{y,s,t}}\in V_n(\pi_1)\cup V_n(\pi_4)\bigcup V_n(\pi_5)}(1-p_{e})^{N^{(1)}_{\mb{x,w,z,y,s,t}}}\ket{\mb{x,w,z}}\bra{\mb{y,s,t}}  ]\\
        &=\tr[(\rho \otimes \mathring{O_f}\otimes \mathring{O_f})\text{ }(\sum_{\mb{x,w,z}}(1-p_{e})^{N^{(1)}_{\mb{x,w,z,x,z,w}}}\ket{\mb{x,w,z}}\bra{\mb{x,z,w}}+(1-p_{e})^{N^{(1)}_{\mb{x,w,z,z,x,w}}}\ket{\mb{x,w,z}}\bra{\mb{z,x,w}}+(1-p_{e})^{N^{(1)}_{\mb{x,w,z,w,z,x}}}\ket{\mb{x,w,z}}\bra{\mb{w,z,x}})]\\
        &-\tr[(\rho \otimes \mathring{O_f}\otimes \mathring{O_f})\text{ }(\sum_{\mb{x},\mb{z}}(1-p_{e})^{N^{(1)}_{\mb{x,x,z,x,z,x}}}\ket{\mb{x,x,z}}\bra{\mb{x,z,x}}+\sum_{\mb{x},\mb{w}}(1-p_{e})^{N^{(1)}_{\mb{x,w,x,x,x,w}}}\ket{\mb{x,w,x}}\bra{\mb{x,x,w}})]\\
        &=\sum_{\mb{x,w,z}}(1-p_{e})^{N^{(1)}_{\mb{x,w,z,z,x,w}}}\bra{\mb{z}}\rho\ket{\mb{x}}\bra{\mb{x}}\mathring{O_f}\ket{\mb{w}}\bra{\mb{w}}\mathring{O_f}\ket{\mb{z}}+\sum_{\mb{x,w,z}}(1-p_{e})^{N^{(1)}_{\mb{x,w,z,w,z,x}}}\bra{\mb{w}}\rho\ket{\mb{x}}\bra{\mb{z}}\mathring{O_f}\ket{\mb{w}}\bra{\mb{x}}\mathring{O_f}\ket{\mb{z}}\\
        &+\sum_{\mb{x,w,z}}(1-p_{e})^{N^{(1)}_{\mb{x,w,z,x,z,w}}}\bra{\mb{x}}\rho\ket{\mb{x}}\bra{\mb{z}}\mathring{O_f}\ket{\mb{w}}\bra{\mb{w}}\mathring{O_f}\ket{\mb{z}}\\
        &-[\sum_{\mb{x,z}}(1-p_{e})^{N^{(1)}_{\mb{x,x,z,x,z,x}}}\bra{\mb{x}}\rho\ket{\mb{x}}\bra{\mb{z}}\mathring{O_f}\ket{\mb{x}}\bra{\mb{x}}\mathring{O_f}\ket{\mb{z}}+\sum_{\mb{x,w}}(1-p_{e})^{N^{(1)}_{\mb{x,w,x,x,x,w}}}\bra{\mb{x}}\rho\ket{\mb{x}}\bra{\mb{x}}\mathring{O_f}\ket{\mb{w}}\bra{\mb{w}}\mathring{O_f}\ket{\mb{x}}].\\
    \end{split}
\end{equation}
Notice that the last two terms are non-negative, so we eliminate them by bounding them above with zero. Also, in the current setting where $ T^{(1)}_{i,j} = 0 $, the quantity $N^{(1)}_{\mathbf{x,w,z,y,s,t}} $ depends only on $ \mathbf{x} $ and $ \mathbf{y} $. This is because $N^{(1)}_{\mathbf{x,w,z,y,s,t}} $ now counts the number of index pairs $(i,j)$ such that $T_{i,j}^{(2)} = x_i + x_j - y_i - y_j \neq 0,$ which involves only $\mb{x}$ and $\mb{y}$. Therefore, we can define $\dot{\rho}$ in the computational basis from $\rho$ by setting $ \bra{\mb{x}}\dot{\rho}\ket{\mb{y}}\coloneq (1-p_{e})^{N^{(1)}_{\mb{x,w,z,y,s,t}}}\bra{\mb{x}}\rho\ket{\mb{y}}$, and obtain the final bound:
\begin{equation}\label{Eq: MajorVariance_final}
    \begin{split}
         \circled{1}&\leq\sum_{\mb{x,w,z}}(1-p_{e})^{N^{(1)}_{\mb{xwzzxw}}}\bra{\mb{z}}\rho\ket{\mb{x}}\bra{\mb{x}}\mathring{O_f}\ket{\mb{w}}\bra{\mb{w}}\mathring{O_f}\ket{\mb{z}}+\sum_{\mb{x,w,z}}(1-p_{e})^{N^{(1)}_{\mb{xwzwzx}}}\bra{\mb{w}}\rho\ket{\mb{x}}\bra{\mb{z}}\mathring{O_f}\ket{\mb{w}}\bra{\mb{x}}\mathring{O_f}\ket{\mb{z}}\\
         &+\sum_{\mb{x,w,z}}(1-p_{e})^{N^{(1)}_{\mb{xwzxzw}}}\bra{\mb{x}}\rho\ket{\mb{x}}\bra{\mb{z}}\mathring{O_f}\ket{\mb{w}}\bra{\mb{w}}\mathring{O_f}\ket{\mb{z}}\\
         &=\sum_{\mb{x,w,z}}\bra{\mb{z}}\dot{\rho}\ket{\mb{x}}\bra{\mb{x}}\mathring{O_f}\ket{\mb{w}}\bra{\mb{w}}\mathring{O_f}\ket{\mb{z}}+\sum_{\mb{x,w,z}}\bra{\mb{w}}\dot{\rho}\ket{\mb{x}}\bra{\mb{z}}\mathring{O_f}\ket{\mb{w}}\bra{\mb{x}}\mathring{O_f}\ket{\mb{z}}+\sum_{\mb{x,w,z}}\bra{\mb{x}}\rho\ket{\mb{x}}\bra{\mb{z}}\mathring{O_f}\ket{\mb{w}}\bra{\mb{w}}\mathring{O_f}\ket{\mb{z}}\\
        &= 2\tr(\dot{\rho} \mathring{O}_f^2)+\tr(O_f^2)\leq3\tr(\mathring{O}_f^2).\\
    \end{split}
\end{equation}
\begin{figure}
    \centering
    \includegraphics[width=0.78\linewidth]{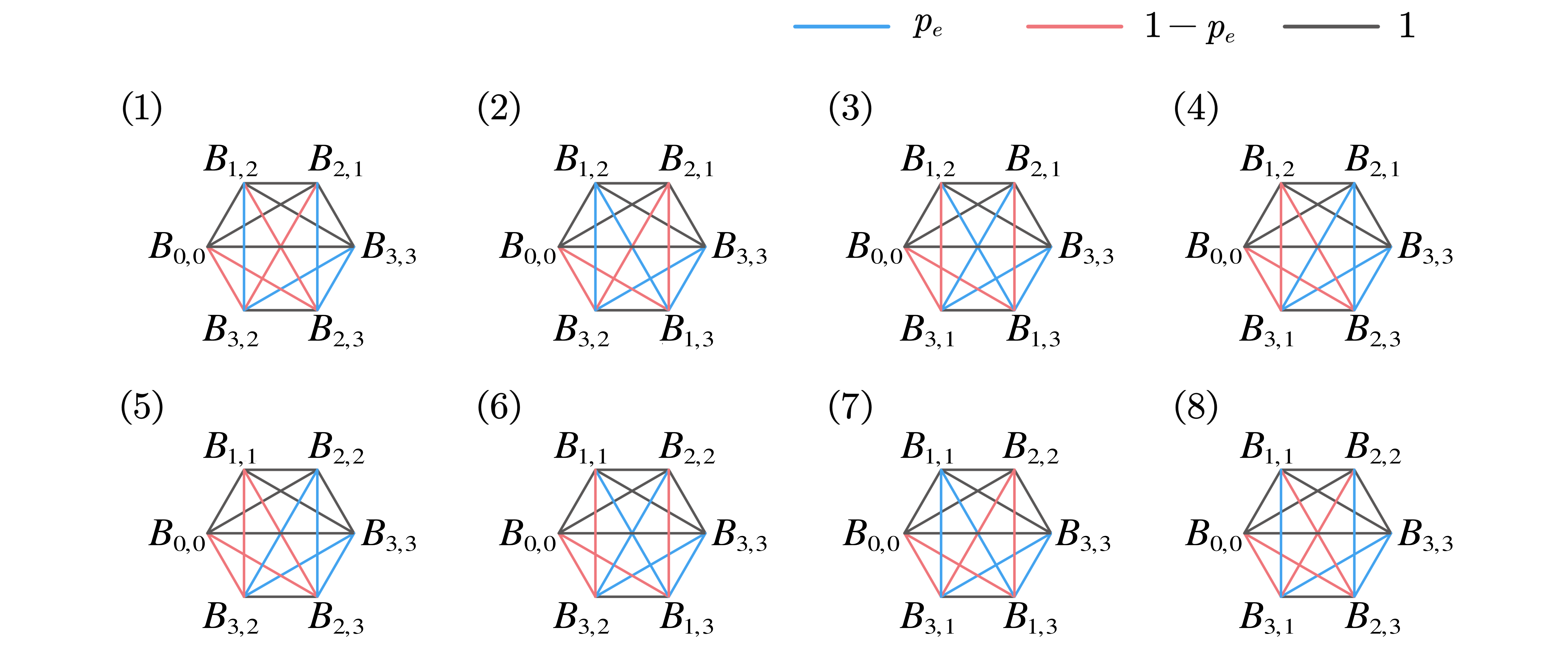}
    \caption{ Here, we illustrate all possible classes $\mc{C}_\mc{B}$ defined in  \cref{Eq:CB}.  The constraint in \cref{Eq:3or} involves three logical “or” conditions, resulting in eight distinct configurations, shown in panels (1)–(8). Each graph represents one possible case, where the nodes correspond to the subsets $B_{k,r}$, and edges are labeled by their associated coefficients $c_{i,j}$. For a valid class $\mc{C}_\mc{B}$, $\mc{B}$ corresponds to a subgraph of one of these configurations and must contain at least one blue edge.}
    \label{fig:eight_cases}
\end{figure}
Next, we provide an upper bound of term $\circled{2}$ as
 \begin{equation}\label{Eq: PhaseVariance_1}
 \begin{split}
        \circled{2}&=\tr[(\rho \otimes \mathring{O_f}\otimes \mathring{O_f})\text{ }\sum_{\mb{x}+\mb{w}+\mb{z}=\mb{y}+\mb{s}+\mb{t},{N^{(2)}_{\mb{x,w,z,y,s,t}}}\neq0,{N^{(3)}_{\mb{x,w,z,y,s,t}}}=0 }(1-p_{e})^{N^{(1)}_{\mb{x,w,z,y,s,t}}}p_e^{N^{(2)}_{\mb{x,w,z,y,s,t}}}\ket{\mb{x,w,z}}\bra{\mb{y,s,t}}  ]\\
        &=\tr[(\rho \otimes \mathring{O_f}\otimes \mathring{O_f}) \sum_{\mb{B}:\substack{I^{(0,0)} + \sum_{k,r=1,2,3} I^{(k,r)} = [n]},N^{(2)}_{\mb{B}}\neq0,N^{(3)}_{\mb{B}}=0} \mb{B} (1-p_{e})^{N^{(1)}_{\mb{B}}} (p_{e})^{N^{(2)}_{\mb{B}}} 0^{N^{(3)}_{\mb{B}}}],\\
 \end{split}
 \end{equation}
where the summation is over 
\begin{equation}\label{Eq: circ_2_set}
     \{\mb{B}|\substack{I^{(0,0)} + \sum_{k,r=1,2,3} I^{(k,r)} = [n]},N^{(2)}_{\mb{B}}\neq0,N^{(3)}_{\mb{B}}=0\},
\end{equation} with $\mb{B} $ determined by $I^{(0,0)}, I^{(k,r)}$ as \cref{eq: mbB}. 
We now analyze the condition  $N^{(2)}_{\mb{B}}\neq0,N^{(3)}_{\mb{B}}=0$. According to  \cref{Eq:N123}, $N^{(3)}_{\mb{x,w,z,y,s,t}} = 0$ if and only if the following three conditions are satisfied:
\begin{equation}\label{Eq:3or}
    \begin{split}
        N_I^{(1,1)}=N_I^{(2,2)}=0&\text{ or }N_I^{(1,2)}=N_I^{(2,1)}=0,\\
        N_I^{(3,1)}=0 &\text{ or }N_I^{(3,2)}=0,\\
        N_I^{(1,3)}=0 &\text{ or }N_I^{(2,3)}=0.\\
    \end{split}
\end{equation}
According to \cref{Eq:3or}, the condition  $N^{(3)}_{\mb{x,w,z,y,s,t}} = 0$ can be divided into 8 cases as shown in \cref{fig:eight_cases}.  For example, \cref{fig:eight_cases}~(1) corresponds to the case of $N_I^{(1,1)}=N_I^{(2,2)}=0, N_I^{(3,1)}=0, N_I^{(1,3)}=0$, and the tensor product of $\mb{B}$ contains only elements from a subset $\mc{B} \subseteq \{B_{0,0}, B_{1,2}, B_{2,1}, B_{3,3}, B_{2,3}, B_{3,2}\}$, where $\{B_{0,0}, B_{1,2}, B_{2,1}, B_{3,3}, B_{2,3}, B_{3,2}\}$ are labels of the vertices of the graph and $\mc{B}$ can be considered a subgraph. We use colored edges in the graphs to represent the coefficients introduced by a pair of $B$'s. The condition $N^{(2)}_{\mb{B}}\neq0$ means that there is at least one blue edge in the subgraphs. 

With the above analysis based on \cref{fig:eight_cases}, we can partition the set in \cref{Eq: circ_2_set} into subsets $ C_{\mc{B}} $.  Let $ \mc{B} \subseteq \{B_{0,0}\} \cup \{B_{k,r} \mid k=1,2,3,\; r=1,2,3\} $. We define the class $ C_{\mc{B}} $ as
\begin{equation}\label{Eq:CB}
    C_{\mc{B}} \coloneq \{\mb{B} = \bigotimes_{k,r=1,2,3} B_{k,r}^{\otimes I^{(k,r)}} \otimes B_{0,0}^{\otimes I^{(0,0)}}|N_I^{(k,r)} \neq 0 \text{ if }B_{k,r} \in \mc{B}, N_I^{(k,r)}=0 \text{ otherwise}\},
\end{equation}
where $N_I^{(k,r)}=|I^{(k,r)}|$ and $\mc{B}$ is a subgraph of one (or some) of the eight graphs in  \cref{fig:eight_cases} with at least one blue edge. Note that $ C_{\mc{B}} \cap C_{\mc{B}'} = \emptyset $ if $ \mc{B} \neq \mc{B}' $. As a result, we can write the term$~\circled{2}$ as

\begin{equation}\label{Eq: PhaseVariance_1p}
  \begin{split}
         \circled{2}&=\tr[(\rho \otimes \mathring{O_f}\otimes \mathring{O_f})  \sum_{\mb{B}:\substack{I^{(0,0)} + \sum_{k,r=1,2,3} I^{(k,r)} = [n]},N^{(2)}_{\mb{B}}\neq 0,N^{(3)}_{\mb{B}}=0} \mb{B} (1-p_{e})^{N^{(1)}_{\mb{B}}} (p_{e})^{N^{(2)}_{\mb{B}}} 0^{N^{(3)}_{\mb{B}}}]\\
         &=\tr[(\rho \otimes \mathring{O_f}\otimes \mathring{O_f})\text{ }\sum_{\mc{C}_{\mc{B}}}\sum_{\mb{B}\in\mc{C}_{\mb{B}} }\mb{B}(1-p_{e})^{N^{(1)}_{\mb{B}}} (p_{e})^{N^{(2)}_{\mb{B}}} 0^{N^{(3)}_{\mb{B}}}].
  \end{split}
 \end{equation}
The total number of subsets $\mc{C}_{\mc{B}}$ is constant. Therefore, our analysis will focus on bounding the values of
\begin{equation}
    \tr[(\rho \otimes \mathring{O_f}\otimes \mathring{O_f})\text{ }\sum_{\mb{B}\in\mc{C}_{\mc{B}}}\mb{B}(1-p_{e})^{N^{(1)}_{\mb{B}}} (p_{e})^{N^{(2)}_{\mb{B}}} 0^{N^{(3)}_{\mb{B}}}] = \sum_{\mb{B}\in C_{\mc{B}}}\tr[{(1-p_{e})^{N^{(1)}_{\mb{B}}} (p_{e})^{N^{(2)}_{\mb{B}}}}\mb{B}\text{ }\rho\otimes \mathring{O_f}\otimes \mathring{O_f}] \label{eq: tmptmp}
\end{equation}
for every $\mc{C}_{\mc{B}}$.
Here, we give the upper bound of $\tr[{(1-p_{e})^{N^{(1)}_{\mb{B}}} (p_{e})^{N^{(2)}_{\mb{B}}}}\mb{B}\text{ }(\rho\otimes \mathring{O_f}\otimes \mathring{O_f})] $.

An interesting property derived from \cref{Tab:1} is that any elements from $\{B_{0,0}\} \cup \{B_{k,r} \mid k,r = 1,2,3\}$ satisfy the following form of a sum of Pauli operators: 
\begin{equation}\label{Eq: Bijdef}
    B_{k,r}=\frac{1}{4}\sum_{P_0\in\mb{P}_{k,r}^{(0)},P_1\in\mb{P}_{k,r}^{(1)}} (-1)^{s_{k,r}}P_0P_1\otimes P_0\otimes P_1,
\end{equation}
where $\mb{P}_{k,r}^{(0)}$ and $\mb{P}_{k,r}^{(1)}$ are sets containing two Pauli operators and $s_{k,r}\in\{0,1\}$. In this way, $\mb{B}$ can be expressed as 
\begin{equation}
    \mb{B}=\frac{1}{4^n}\sum_{P_0\in\mb{P}_{\mb{B}}^{(0)},P_1\in\mb{P}_{\mb{B}}^{(1)}} (-1)^{\sum_{k,r}s_{k,r}|I^{(k,r)}|} P_0P_1\otimes P_0\otimes P_1,
\end{equation}
where $\mb{P}_{\mb{B}}^{(0)}$ and $\mb{P}_{\mb{B}}^{(1)}$ are subsets of $2^n$ elements from  $\mb{P}_n$.  It is therefore crucial to evaluate
\begin{equation}
    \begin{split}
    &\tr[{(1-p_{e})^{N^{(1)}_{\mb{B}}} (p_{e})^{N^{(2)}_{\mb{B}}}}\mb{B}\text{ }(\rho\otimes \mathring{O_f}\otimes \mathring{O_f})] \\
    =&4^{-n}\tr[{(1-p_{e})^{N^{(1)}_{\mb{B}}} (p_{e})^{N^{(2)}_{\mb{B}}}}\sum_{P_0\in\mb{P}_{\mb{B}}^{(0)},P_1\in\mb{P}_{\mb{B}}^{(1)}}(-1)^{\sum_{k,r}s_{k,r}|I^{(k,r)}|}  (P_0P_1\otimes P_0\otimes P_1)\text{ }(\rho\otimes \mathring{O_f}\otimes \mathring{O_f})] \\
    \leq&4^{-n}\sum_{P_0\in\mb{P}_{\mb{B}}^{(0)},P_1\in\mb{P}_{\mb{B}}^{(1)}} (1-p_{e})^{N^{(1)}_{\mb{B}}} (p_{e})^{N^{(2)}_{\mb{ B}}}|\tr(\rho P_0P_1)\tr(\mathring{O_f}P_0) \tr(\mathring{O_f}P_1)|\\
    \leq&4^{-n}(1-p_{e})^{N^{(1)}_{\mb{B}}} (p_{e})^{N^{(2)}_{\mb{B}}}\sqrt{[\sum_{P_0\in\mb{P}_{\mb{B}}^{(0)},P_1\in\mb{P}_{\mb{B}}^{(1)}}|\tr(\rho P_0P_1)|^2][\sum_{P_0\in\mb{P}_{\mb{B}}^{(0)},P_1\in\mb{P}_{\mb{B}}^{(1)}}\tr(\mathring{O_f}P_0)^2 \tr(\mathring{O_f}P_1)^2 }\\
    \leq&4^{-n}(1-p_{e})^{N^{(1)}_{\mb{B}}} (p_{e})^{N^{(2)}_{\mb{B}}}\sqrt{2^n \sum_{P\in{\mb{P}_n}}\tr(\rho P)^2}\sum_{P\in{\mb{P}_n}}\tr(P \mathring{O_f})^2\\
    =& (1-p_{e})^{N^{(1)}_{\mb{B}}} (p_{e})^{N^{(2)}_{\mb{B}}} \sqrt{\tr(\rho^2)}\tr(\mathring{O_f}^2)\\
    \leq&  (p_{e})^{N^{(2)}_{\mb{B}}}\tr( \mathring{O_f}^2).\\
    \end{split}
\label{Eq:B_bound}
\end{equation}
In the fifth line in  \cref{Eq:B_bound}, we have used $\sum_{P_0\in\mb{P}_{\mb{B}}^{(0)},P_1\in\mb{P}_{\mb{B}}^{(1)}}|\tr(\rho P_0P_1)|^2\leq \sum_{P_0\in\mb{P}_{\mb{B}}^{(0)},P_1\in\mb{P}_{n}}|\tr(\rho P_0P_1)|^2 = {2^n \sum_{P\in{\mb{P}_n}}\tr(\rho P)^2}$, because $\mb{P}_{\mb{B}}^{(0)},\mb{P}_{\mb{B}}^{(1)}$ are subset of $\mb{P}_n$ and $|\mb{P}_{\mb{B}}^{(0)}|=|\mb{P}_{\mb{B}}^{(1)}|=2^n$.

With  \cref{Eq:B_bound}, we can finally bound $\sum_{\mb{B}\in C_{\mc{B}}}\tr[{(1-p_{e})^{N^{(1)}_{\mb{B}}} (p_{e})^{N^{(2)}_{\mb{B}}}}\mb{B}\text{ }\rho\otimes \mathring{O_f}\otimes \mathring{O_f}]$ .
Here, we partition $\mc{B}$ into two mutually disjoint subsets, i.e., $\mc{B} = \mc{B}(1) + \mc{B}(2)$. The elements of $\mc{B}(2)$ correspond to vertices in the graphs of \cref{fig:eight_cases} without blue-edge connection with other elements in $\mc{B}$, while elements in $\mc{B}(1)$ correspond to vertices have at least one blue-edge connection with other elements in $\mc{B}$. We find that $\mc{B}(1)\geq 2$ due to the condition  $N^{(2)}_{\mb{B}}\neq 0$, and $|\mc{B}(1)|\leq 5$ by observing  \cref{fig:eight_cases}. 

Moreover, we can prove that $|\mc{B}(2)|\leq2$. $|\mc{B}(2)|$ reaches it maximum when $|\mc{B}(1)|$ reaches the minimum $|\mc{B}(1)|=2$. Exhausting all possible cases where $|\mc{B}(1)|=2$ from  \cref{Eq:N123}, we find that

\begin{enumerate}
    \item  suppose $\mc{B}(1) = \{B_{k,r}, B_{k,3}\}$ with $k,r \in \{1, 2\}$. \\ The configurations that maximize $ |\mc{B}(2)| $ are:
    $\mc{B}(2) = \{B_{0,0}, B_{3-k, 3-r}\}, \quad \text{or} \quad \mc{B}(2) = \{B_{0,0}, B_{3, 3-r}\}.$
    \item Suppose $\mc{B}(1) = \{B_{k,r}, B_{3,r}\}$ with $k,r \in \{1, 2\}$.\\ The configurations yielding the largest $ |\mc{B}(2)| $ are:
    $
        \mc{B}(2) = \{B_{0,0}, B_{3-k, 3-r}\}, \quad \text{or}  \quad \mc{B}(2) = \{B_{0,0}, B_{3-k, 3}\}.
    $
    \item Suppose $\mc{B}(1) = \{B_{3,3}, B_{3,r}\}$ with $r \in \{1, 2\}$.\\ The configurations that maximize $ |\mc{B}(2)| $ are:
    $
        \mc{B}(2) = \{B_{0,0}, B_{1, 3-r}\}, \quad \text{or}  \quad \mc{B}(2) = \{B_{0,0}, B_{2, 3-r}\}.
    $
    \item Suppose $\mc{B}(1) = \{B_{3,3}, B_{k,3}\}$ with $k \in \{1, 2\}$. \\The configurations with the largest $ |\mc{B}(2)| $ are:
    $
        \mc{B}(2) = \{B_{0,0}, B_{3-k, 1}\}, \quad \text{or}  \quad \mc{B}(2) = \{B_{0,0}, B_{3-k, 2}\}.
    $
\end{enumerate}
In all cases, the maximum value of $ |\mc{B}(2)| $ is 2, and if the maximum holds, then $B_{0,0}\in \mc{B}(2)$. 

Finally, we prove the upper bound of \cref{eq: tmptmp}. Given a specific operator $\mb{B}\in C_{\mc{B}}$, we define $N_{I,\mb{B}}^{(1)}\coloneq\sum_{B_{k,r}\in\mc{B}(1)}N_{\mb{B}}^{(k,r)}$ and $N_{I,\mb{B}}^{(2)}\coloneq\sum_{B_{k,r}\in\mc{B}(2)}N_{\mb{B}}^{(k,r)}$. Using the inequality $a\times b\geq a+b-1$ for any positive integers $a,b$, it follows directly from  \cref{Eq:N123} that 
\begin{equation}\label{Eq: N2}
    N^{(2)}_{\mb{B}}\geq  N_{I,\mb{B}}^{(1)}-\frac{1}{2}|\mc{B}(1)|.
\end{equation} 
Next, we use this inequality to prove that \cref{eq: tmptmp} is upper bounded as a constant of $\tr(\mathring{O_f}^2)$. We do this by considering $|\mc{B}(2)| = 0, 1, 2$ case by case.

\textbf{Case 1:} $|\mc{B}(2)|=0$.
In the case of $|\mc{B}(2)|=0$, we have $N_{I,\mb{B}}^{(1)}=n$. In this way, we can obtain that 
\begin{equation}\label{Eq:B_CB_num0}
\begin{split}
    &\sum_{\mb{B}\in C_{\mc{B}}}\tr[{(1-p_{e})^{N^{(1)}_{\mb{B}}} (p_{e})^{N^{(2)}_{\mb{B}}}}\mb{B}\text{ }\rho\otimes \mathring{O_f}\otimes \mathring{O_f}]\\
    &\leq \sum_{\mb{B}\in C_{\mc{B}}}(p_{e})^{N^{(2)}_{\mb{B}}}\tr( \mathring{O_f}^2) \text{ via  \cref{Eq:B_bound}}\\
    &\leq \sum_{\mb{B}\in C_{\mc{B}}} (p_{e})^{n-\frac{1}{2}|\mc{B}(1)|}\tr( \mathring{O_f}^2) \text{ via Eq.  \cref{Eq: N2}}\\
    &<|\mc{B}(1)|^{n} (p_{e})^{n-\frac{1}{2}|\mc{B}(1)|}\tr( \mathring{O_f}^2).\\
\end{split}
\end{equation}
Since $p_e\ll 1$ and $|\mc{B}(1)|$ is constant, this value is exponentially small and can be neglected. 

\textbf{Case 2:} $|\mc{B}(2)|=1$. For the case where $|\mc{B}(2)|=1$, we can similarly obtain
\begin{equation}\label{Eq:B_CB_num1}
\begin{split}
    &\sum_{\mb{B}\in C_{\mc{B}}}\tr[{(1-p_e)^{N^{(1)}_{\mb{B}}} (p_e)^{N^{(2)}_{\mb{B}}}}\mb{B}\text{ }\rho\otimes \mathring{O_f}\otimes \mathring{O_f}]\\
    &\leq \sum_{\mb{B}\in C_{\mc{B}}}(p_e)^{N^{(2)}_{\mb{B}}}\tr( \mathring{O_f}^2)\\
    &\leq \sum_{\mb{B}\in C_{\mc{B}}} (p_e)^{ N_{I,\mb{B}}^{(1)}-\frac{1}{2}|\mc{B}(1)|}\tr( \mathring{O_f}^2)\\
    &\leq\sum_{k=|{\mc{B}(1)}|}^{n-1}C_{n}^{k} |{\mc{B}(1)}|^{k} (p_e)^{ k-\frac{1}{2}|{\mc{B}(1)}|}\tr( \mathring{O_f}^2),\\
\end{split}
\end{equation}
where the last inequality holds because, given a class $\mc{C}_\mc{B}$, the number of $\mb{B}$ with $N_{I,\mb{B}}^{(1)}=k$ is less than $C_{n}^{k} |{\mc{B}(1)}|^{k}$, and the number $k$ ranges from $|\mc{B}(1)|$ ($N_I^{(k,r)}=1$ for all $B_{k,r}\in \mc{B}(1)$) to $n-1$ (one for the sole element in $\mc{B}(2)$). In this context, we assume that $ n^2p_e$ is a constant number, while $np_e\ll 1$. Hence, we bound the scale of  \cref{Eq:B_CB_num1} as 
\begin{equation}
    \sum_{k=|{\mc{B}(1)}|}^{n-1}C_{n}^{k} |{\mc{B}(1)}|^{k} (p_e)^{ k-\frac{1}{2}|{\mc{B}(1)}|}< \sum_{k=|{\mc{B}(1)}|}^{n-1}{n}^{k} |{\mc{B}(1)}|^{k} (p_e)^{ k-\frac{1}{2}|{\mc{B}(1)}|}<[n|{\mc{B}(1)}|\sqrt{p_e}]^{|{\mc{B}(1)}|} \frac{1}{1-n|{\mc{B}(1)}|p_e}\sim (n\sqrt{p_e})^{|{\mc{B}(1)}|},
\end{equation}
which can also be considered a constant.

\textbf{Case 3:} $|\mc{B}(2)|=2$. In this case, we reorganize the sum over $\mb{B}\in C_{\mc{B}}$ by collecting all valid labelings $\{I^{(k,r)}\}$ such that $\bigcup_{B_{k,r}\in \mc{B}} I^{(k,r)} = [n]$. Since there are exactly two elements in $\mc{B}(2)$, we define $ I^{(2)}_{\mb{B}} \coloneq \bigcup_{B_{k,r}\in \mc{B}(2)} I^{(k,r)} $, and write

\begin{equation}\label{Eq:B_CB_num2_1}
\begin{split}
    &\sum_{\mb{B}\in C_{\mc{B}}}\tr[{(1-p_{e})^{N^{(1)}_{\mb{B}}} (p_{e})^{N^{(2)}_{\mb{B}}}}\mb{B}\text{ }(\rho\otimes \mathring{O_f}\otimes \mathring{O_f})]\\
    =&\sum_{\mb{B}:\sum_{B_{k,r}\in \mc{B}}I^{(k,r)}=[n],|I^{(k,r)}|>0}\tr[{(1-p_{e})^{N^{(1)}_{\mb{B}}} (p_{e})^{N^{(2)}_{\mb{B}}}}\bigotimes_{B_{k,r}\in \mc{B}} B_{k,r}^{\otimes I^{(k,r)}}\text{ }(\rho\otimes \mathring{O_f}\otimes \mathring{O_f})]\\
     =&\sum_{I^{(2)}_{\mb{B}}+\sum_{B_{k,r}\in \mc{B}(1)}I^{(k,r)}=[n],|I^{(k,r)}|>0}\tr[{ (1-p_{e})^{N^{(1)}_{\mb{B}}}(p_{e})^{N^{(2)}_{\mb{B}}}}\bigotimes_{B_{k,r}\in \mc{B}(1)} B_{k,r}^{\otimes I^{(k,r)}}\otimes (\sum_{B_{k,r}\in \mc{B}(2)} B_{k,r})^{\otimes I_{\mb{B}}^{(2)}}\text{ }(\rho\otimes \mathring{O_f}\otimes \mathring{O_f})].\\
\end{split}
\end{equation} 
Since we already know that $B_{0,0}\in \mc{B}(2)$ if $|\mc{B}(2)|=2$, we now consider what the other element is. If the other element of $\mc{B}(2)$ is $B_{k,k}$ for $k=1,2,3$, then from  \cref{Tab:1}, we can see that $B_{0,0}+B_{k,k}=\frac{1}{2}(\mbb{I}_2\mbb{I}_2\mbb{I}_2+Z^{\delta_{k\neq3}}Z^{\delta_{k\neq2}}Z^{\delta_{k\neq1}})$.

If $k=3$, i.e., $\mc{B}(2)=\{B_{0,0},B_{3,3}\}$, then as observed from \cref{fig:eight_cases}, there exists no $\mc{B}(1)$ such that $|\mc{B}(1)|\geq 2$. 
If $k=1$ or $2$, then $\mc{B}(1)\subseteq\{B_{3-k,3},B_{3,3-k},B_{3,3},B_{3-k,3-k}\}$, and we find that: the second Pauli operator in the tensor of $B\in \mc{B}(1)$ is within $\{\id_2, Z\}$ if $k=2$, and the third Pauli operator in the tensor of $B\in \mc{B}(1)$ is within $\{\id_2, Z\}$ if $k=1$. 
In other words, when we decompose  $\bigotimes_{B_{k,r}\in \mc{B}(1)} B_{k,r}^{\otimes I^{(k,r)}}\otimes (\sum_{B_{k,r}\in \mc{B}_2} B_{k,r})^{\otimes I_{\mb{B}}^{(2)}}$ as a sum of 3-copy Pauli operators, one of its last two copies consists solely of Pauli elements from $\mc{Z}_n$. Hence, for the off-diagonal operator $\mathring{O_f}$, the  trace 
\begin{equation}
        \tr[\bigotimes_{B_{k,r}\in \mc{B}(1)} B_{k,r}^{\otimes I^{(k,r)}}\otimes (\sum_{B_{k,r}\in \mc{B}(2)} B_{k,r})^{\otimes I_{\mb{B}}^{(2)}}\text{ }(\rho\otimes \mathring{O_f}\otimes \mathring{O_f})]=0
\end{equation}
vanishes. Therefore, only the mixed case $\mc{B}(2)=\{B_{0,0},B_{k,r}\}$ with $k\neq r$ remains.  According to \cref{Tab:1}, such a sum admits a decomposition:
\begin{equation}
    B_{0,0}+B_{k,r} = \frac{1}{4}\sum_{P_0\in\{\mbb{I}_2,Z\},P_1\in\{\mbb{I}_2,X,Y,Z\}} (-1)^{s_{k,r}} V_1(\pi) (P_0P_1\otimes P_0\otimes P_1),
\end{equation} 
for some $\pi\in S_3$. Similarly, an element $B_{k',r'}\in\mc{B}(1)$ can also be written as
\begin{equation}
    B_{k',r'}=\frac{1}{4}\sum_{P_0\in\mb{P}_{k',r'}^{(0)},P_1\in\mb{P}_{k',r'}^{(1)}} (-1)^{s_{k',r'}} V_1(\pi) (P_0P_1\otimes P_0\otimes P_1).
\end{equation}
Based on this, we calculate the $n$-qubit operator to find
\begin{equation}
   \bigotimes_{B_{k,r}\in \mc{B}(1)} B_{k,r}^{\otimes I^{(k,r)}}\otimes (\sum_{B_{k,r}\in \mc{B}_2} B_{k,r})^{\otimes I_{\mb{B}}^{(2)}} = \frac{1}{4^n}\sum_{P_0\in \mb{P}_{\mb{B}}^{(0)},P_1\in \mb{P}_{\mb{B}}^{(1)}}(-1)^{s_{k,r}|I_{\mb{B}}^{(2)}|+\sum_{B_{k',r'}\in\mc{B}(1)}s_{k',r'}|I^{(k',r')}|}V_n(\pi)  P_0P_1\otimes P_0\otimes P_1,
\end{equation}
where $|P_0|=2^n$, and $|P_1|=2^{n+|I_{\mb{B}}^{(2)}|}\leq 4^n$. Then, we denote $(O_a,O_b,O_c)=\pi (\rho,\mathring{O_f},\mathring{O_f})$, and
\begin{equation}
\begin{split}\label{Eq: B2_num2_single}
    &\tr[\bigotimes_{B_{k,r}\in \mc{B}(1)} B_{k,r}^{\otimes I^{(k,r)}}\otimes (\sum_{B_{k,r}\in \mc{B}_2} B_{k,r})^{\otimes I_{\mb{B}}^{(2)}}\text{ }(\rho\otimes \mathring{O_f}\otimes \mathring{O_f})]\\
    &=4^{-n}\sum_{P_0\in \mb{P}_{\mb{B}}^{(0)},P_1\in \mb{P}_{\mb{B}}^{(1)}}(-1)^{s_{k,r}|I_{\mb{B}}^{(2)}|+\sum_{B_{k',r'}\in\mc{B}(1)}s_{k',r'}|I^{(k',r')}|} \tr(P_0P_1 O_{a}) \tr(P_0 O_{b})\tr(P_1 O_{c}) \\
    &\leq 4^{-n}\sqrt{\sum_{P_0\in \mb{P}_{\mb{B}}^{(0)},P_1\in \mb{P}_{\mb{B}}^{(1)}}|\tr(P_0P_1 O_{a})|^2}\sqrt{\sum_{P_0\in \mb{P}_{\mb{B}}^{(0)},P_1\in \mb{P}_{\mb{B}}^{(1)}}\tr(P_0 O_{b})^2\tr(P_1 O_{c})^2}\\
    &\leq 2^{-n}\sqrt{\sum_{P_0\in \mb{P}_{\mb{B}}^{(0)},P_1\in \mb{P}_{\mb{B}}^{(1)}}|\tr(P_0P_1 O_{a})|^2}\sqrt{[2^{-n}\sum_{P_0\in \mb{P}_{\mb{B}}^{(0)}} \tr(P_0 O_{b})^2 ][2^{-n}\sum_{\mb{P}_1\in P_{\mb{B}}^{(1)}} \tr(P_1 O_{c})^2 ]} \\
    &\leq 2^{-n}\sqrt{\sum_{P_0\in \mb{P}_{\mb{B}}^{(0)},P_1\in \mb{P}_{n}}|\tr(P_0P_1 O_{a})|^2}\sqrt{[2^{-n}\sum_{P_0\in \mb{P}_{\mb{B}}^{(0)}} \tr(P_0 O_{b})^2 ][2^{-n}\sum_{\mb{P}_1\in P_{\mb{B}}^{(1)}} \tr(P_1 O_{c})^2]} \\
    &= 2^{-n}\sqrt{2^n\sum_{P\in \mb{P}_{n}}|\tr(P O_{a})|^2}\sqrt{[2^{-n}\sum_{P_0\in \mb{P}_{\mb{B}}^{(0)}} \tr(P_0 O_{b})^2 ][2^{-n}\sum_{\mb{P}_1\in P_{\mb{B}}^{(1)}} \tr(P_1 O_{c})^2 ]} \\
    &=\sqrt{\tr(O_{a}^2)\tr(O_{b}^2)\tr(O_{c}^2)}=\sqrt{\tr(\rho^2)\tr(\mathring{O_f}^2)\tr(\mathring{O_f}^2)}\leq \tr(\mathring{O_f}^2).
\end{split}
\end{equation}
In this way, we plug  \cref{Eq: B2_num2_single} into  \cref{Eq:B_CB_num2_1} to obtain
\begin{equation}\label{Eq:B_CB_num2}
\begin{split}
    &\sum_{\mb{B}\in C_{\mc{B}}}\tr[{(1-p_{e})^{N^{(1)}_{\mb{B}}} (p_{e})^{N^{(2)}_{\mb{B}}}}\mb{B}\text{ }(\rho\otimes \mathring{O_f}\otimes \mathring{O_f})]\\
    &=\sum_{I^{(2)}_{\mb{B}}+\sum_{B_{k,r}\in \mc{B}(1)}I^{(k,r)}=[n],|I^{(k,r)}|>0}\tr[{ (1-p_{e})^{N^{(1)}_{\mb{B}}}(p_{e})^{N^{(2)}_{\mb{B}}}}\bigotimes_{B_{k,r}\in \mc{B}(1)} B_{k,r}^{\otimes I^{(k,r)}}\otimes (\sum_{B_{k,r}\in \mc{B}(2)} B_{k,r})^{\otimes I_{\mb{B}}^{(2)}}\text{ }(\rho\otimes \mathring{O_f}\otimes \mathring{O_f})]\\
    &\leq \sum_{I^{(2)}_{\mb{B}}+\sum_{B_{k,r}\in \mc{B}(1)}I^{(k,r)}=[n],|I^{(k,r)}|>0}(p_{e})^{N^{(2)}_{\mb{B}}}\tr( \mathring{O_f}^2)\\
    &\leq \sum_{I^{(2)}_{\mb{B}}+\sum_{B_{k,r}\in \mc{B}(1)}I^{(k,r)}=[n],|I^{(k,r)}|>0} (p_e)^{ N_{I,\mb{B}}^{(1)}-\frac{1}{2}|\mc{B}(1)|}\tr( \mathring{O_f}^2)\\
    &\leq\sum_{k=|{\mc{B}(1)}|}^{n-1}C_{n}^{k} |{\mc{B}(1)}|^{k} (p_e)^{ k-\frac{1}{2}|{\mc{B}(1)}|}\tr( \mathring{O_f}^2) = \Theta(1)\tr( \mathring{O_f}^2).\\
\end{split}
\end{equation}
Since there exist only finite numbers of different subsets $C_{\mc{B}}$, and the number is independent of the qubit number, the total contribution to the variance of the term $\circled{2}$ is bounded by a constant multiple of $\tr(\mathring{O_f}^2)$. Combine the upper bound of both terms $\circled{1}$ and $\circled{2}$, we finally obtain
\begin{equation}\label{eq:VarTh}  
\text{Var}_{\mc{E}_{{\text{phase}}},\text{robust}}\left(\widehat{O_f}\right)\lesssim [3+\Theta(1)]\tr( \mathring{O_f}^2)\approx[3+\Theta(1)]e^{n^2 p_e/2}\tr(O_f^2). 
\end{equation}
We employ the asymptotic notation `$\lesssim$' under the conditions $np_e \ll 1$ and $n^2p_e = \Theta(1)$, as discussed in the main text. Under the weak noise regime where $np_e \ll 1$, the coefficient $\sigma_P$ admits the approximation
\begin{equation}
    \sigma_P^{-1} \approx (1-p_e)^{-n_3(n-n_3)} \approx e^{n_3(n-n_3)p_e} \leq e^{\frac{n^2p_e}{4}},
\end{equation}
as derived in  \cref{Eq:appsigma}. This approximation enables quantitative characterization of the estimation variance, with numerical simulations confirming its validity for the parameter regime of interest in Supplementary Note 8.

\section*{Supplementary Note 7. --Efficient post-processing of the robust phase shadow}\label{Ap:EC}
In shadow estimation, the post-processing scheme can be efficiently implemented on classical computers. In this section, we find that although the reversed measurement map is now a combination of exponential terms, we can still efficiently compute the value of the estimator $\widehat{o_f}=\tr(\widehat{\rho_f} O)$, where the observable $O$ is a Pauli operator or a stabilizer state.

\textit{Pauli operators.} Suppose that $Q$ denotes an $n$-qubit non-${Z}$-type Pauli operator $Q \in \mathbf{P}_n/\mathcal{Z}_n$, the estimator of $Q$ with the RPS protocol is
\begin{equation}
\begin{split}
\tr(\widehat{\rho_f} Q) &= \sum_{P\in\mathbf{P}_n/\mathcal{Z}_n}\sigma_P^{-1}\tr( \Phi_{U,\mb{b}}P) \tr(PQ)\\
&=D \sigma_Q^{-1}\bra{\mb{b}}UQU^{\dagger}\ket{\mb{b}},
         \label{Eq: InvNoisychannelonPauli}
\end{split}
\end{equation}
where we should choose $P=Q$ in the summation. Note that it takes $\mc{O}(n^2)$ time to compute the value of $\sigma_Q$ according to Proposition 2. In addition, calculating the inner product of stabilizer states $\bra{\mb{b}}UQU^{\dagger}\ket{\mb{b}}$ requires $\mc{O}(n^3)$ time using the Tableau formalism ~\cite{aaronson2004improved}. Therefore, we summarize that the overall post-processing time for robust phase shadows with Pauli observables is $\mc{O}(n^3)$. This efficient approach can naturally extend to polynomial terms of the Pauli operators in $O$.

\textit{Stabilizer states.} On the other hand, suppose that $\Psi$ is a stabilizer state $\Psi = V^{\dagger}|\mb{0}\rangle \langle\mb{0}|V$ with $V$ being a Clifford unitary, we have
\begin{equation}
\begin{split}
\tr(\widehat{\rho_f} \Psi) &= \sum_{P\in\mathbf{P}_n/\mathcal{Z}_n}\sigma_P^{-1}\tr( \Phi_{U,\mb{b}}P) \tr(P\Psi)\\
&=\sum_{P\in\mathbf{P}_n/\mathcal{Z}_n}\sigma_P^{-1}\tr[ (\Phi_{U,\mb{b}}\otimes V^{\dagger}|\mb{0}\rangle \langle\mb{0}|V )\text{ }(P\otimes P)].
         \label{Eq: InvNoisychannelPauli}
\end{split}
\end{equation}
Since $|\mb{0}\rangle \langle\mb{0}| = D^{-1}\sum_{\mb{a}}Z^{\mb{a}}$, we have $\Psi =  D^{-1}\sum_{\mb{a}}V^{\dag}Z^{\mb{a}}V=D^{-1}\sum_{S_V\in\mb{S}_V}S_V$, where $S_V$ are Pauli operators that stabilize $\Psi$, and $\mb{S}_V$ denotes the stabilizer states. In addition, suppose that $\Phi_{U,\mb{b}} = U^{\dagger}\ket{\mb{b}}\bra{\mb{b}}U=U^{\dagger}X^{\mb{b}}\ket{\mb{0}}\bra{\mb{0}}X^{\mb{b}}U=D^{-1}\sum_{S_{{X}^{\mb{b}}U}\in \mb{S}_{{X}^{\mb{b}}U}}S_{{X}^{\mb{b}}U}$, we arrive at
    \begin{equation}
    \begin{split}
       \tr(\widehat{\rho_f} \Psi) &= D^{-2}\sum_{P\in\mathbf{P}_n/\mathcal{Z}_n}\sigma_P^{-1}\sum_{\mb{a,a'}\in\{0,1\}^n}\tr[( U^{\dagger}X^{\mb{b}}Z^{\mb{a}}X^{\mb{b}}U\otimes V^{\dagger}Z^{\mb{a'}}V) \text{ }(P\otimes P)]\\
       &=D^{-2}\sum_{P\in\mathbf{P}_n/\mathcal{Z}_n}\sigma_P^{-1}\sum_{\mb{a,a'}\in\{0,1\}^n}\tr( U^{\dagger}X^{\mb{b}}Z^{\mb{a}}X^{\mb{b}}U P)\tr( V^{\dagger}Z^{\mb{a'}}V P)\\
       &=D^{-2}\sum_{P\in\mathbf{P}_n/\mathcal{Z}_n}\sigma_P^{-1}\sum_{S_{{X}^{\mb{b}}U}\in \mb{S}_{{X}^{\mb{b}}U},S_V\in \mb{S}_V}\tr( S_{{X}^{\mb{b}}U} P)\tr( S_V P).\\
    \end{split}
    \label{Eq:CliffrodEff1}
\end{equation}
Here, for a 
Pauli element $P\in\{\pm1,\pm i\}\times \{\id_2,{X},{Y},{Z}\}^{\otimes n }$, we denote $[{P}] $ as the corresponding phaseless Pauli operator that $[{P}]\in\{\id_2,{X},{Y},{Z}\}^{\otimes n }$, and $[\mb{S}_U]\coloneq\{[P]|P\in\mb{S}_U\}$. Since ${X}^{\mb{b}}{Z}^{\mb{a}}{X}^{\mb{b}} = (-1)^{\mb{a}\cdot\mb{b}}{Z}^{\mb{a}}$, when disregarding the phases, we find $[\mb{S}_U] = [\mb{S}_{{X}^{\mb{b}}U}]$. As such,   \cref{Eq:CliffrodEff1} can be transformed to 
\begin{equation}\label{Eq:CliffrodEff2}
\begin{split}
        \tr(\widehat{\rho_f} \Psi) &=\sum_{P\in[\mb{S}_U]\cap[\mb{S}_V]\cap\mathbf{P}_n/\mathcal{Z}_n}\sigma_P^{-1}\bra{\mb{b}}UPU^{\dagger}\ket{\mb{b}}\bra{\mb{0}}VPV^{\dagger}\ket{\mb{0}}.\\
\end{split} 
\end{equation}
We observe that to obtain $\tr(\widehat{\rho}_f \Psi)$, one needs to consider at most $|[\mb{S}_U]\cap[\mb{S}_V]|$ terms by considering the operators $P$ in the intersection of $|[\mb{S}_U]\cap[\mb{S}_V]|$. To determine these terms, we utilize the Tableau formalism ~\cite{aaronson2004improved}, where an $n$-qubit Clifford element can be specified with four $n \times n$  binary matrices ($\mb{\alpha},\mb{\beta},\mb{\gamma},\mb{\delta}$) and two $n$-dimensional binary vectors ($\mb{r,s}$), such that 
\begin{equation}
 {{U}^{\dag }}{{X}_{i}}U={{\left( -1 \right)}^{{{r}_{i}}}}\underset{j=0}{\overset{n-1}{\prod }}\,\sqrt{-1}^{\alpha_{i,j}\beta_{i,j}}{X}_{j}^{{{\alpha }_{i,j}}}{Z}_{j}^{{{\beta }_{i,j}}}\text{   }\!\!\And\!\!\text{   }{{U}^{\dag }}{{Z}_{i}}U={{\left( -1 \right)}^{{{s}_{i}}}}\underset{j=0}{\overset{n-1}{ \prod }}\,\sqrt{-1}^{\gamma_{i,j}\delta_{i,j}}{X}_{j}^{{{\gamma }_{i,j}}}{Z}_{j}^{{{\delta }_{i,j}}}.
 \label{CLF1}
\end{equation}
The operators ${X}_i$, ${Z}_i$ are respectively defined as a Pauli matrix with $X$ applied to the $i$-th qubit, and as a Pauli matrix with $Z$ applied to the $i$-th qubit. The parameters form a $2n \times (2n+1)$ binary matrix, which is called the Tableau of a Clifford element
\begin{equation}
\left( \begin{matrix}
   \begin{matrix}
   \text{A} \\
  \text{C} \\
\end{matrix} & \begin{matrix}
   \text{B}  \\
   \text{D}  \\
\end{matrix} & \begin{matrix}
   \mb{r}  \\
   \mb{s}  \\
\end{matrix}  \\
\end{matrix} \right) = \left( \begin{matrix}
   \begin{matrix}
   {{[{{\alpha }_{i,j}}]}_{n\times n}}  \\
   {{[{{\gamma }_{i,j}}]}_{n\times n}}  \\
\end{matrix} & \begin{matrix}
   {{[{{\beta }_{i,j}}]}_{n\times n}}  \\
   {{[{{\delta }_{i,j}}]}_{n\times n}}  \\
\end{matrix} & \begin{matrix}
   {{[{{r}_{i,j}}]}_{n\times 1}}  \\
   {{[{{s}_{i,j}}]}_{n\times 1}}  \\
\end{matrix}  \\
\end{matrix} \right).
 \label{Tableau}
\end{equation}
Using the Tableau formalism~\cite{aaronson2004improved}, we can efficiently determine the generators of the elements of $[\mb{S}_U]\cap[\mb{S}_V]$, thus computing the estimator $\tr(\widehat{\rho_f} \Psi) $, which is shown in   Algorithm 1 in the \emph{End Matter}.


 Algorithm 1 requires a $\mc{O}(n^3)+\mc{O}(n^3 \cdot 2^{n_g(U,V)})$ time cost. Here, $n_g(U,V)$ denotes the rank of the subspace $\{\mb{a}\}$ that satisfies $[UV^{\dag}(Z^{\mb{a}})VU^{\dag}]\in \mc{Z}_n$, and $|[\mb{S}_U]\cap[\mb{S}_V]|=2^{n_g(U,V)}$. In phase shadow, we sample random rotations $U$ uniformly from the phase circuit ensemble $\mc{E}_\text{phase}$. In this way, we prove Lemma 1 in the \textit{End Matter} with a stronger formulation as follows.

\begin{lemma}[Lemma 1 in the \textit{End Matter}]
\label{lemma:time_complexity}
Let $\Psi = V^\dagger\ket{\mathbf{0}}\bra{\mathbf{0}}V$ be an $n$-qubit stabilizer observable and let $U$ be sampled uniformly from the phase ensemble $\mathcal{E}_{\text{phase}}$. The variable  $2^{n_g(U,V)}$ for calculating ${\tr}(\widehat{\rho_f} \Psi)$ satisfies
\begin{equation}\label{Eq:time_com_1}
    \mathbb{E}_{U\sim\mathcal{E}_{\text{phase}}}[2^{n_g(U,V)}] = \mathcal{O}(1),
\end{equation}
and there exist a constant $c>0$ such that
\begin{equation}\label{Eq:chev}
\mathbb{P}_{U\sim\mathcal{E}_{\text{phase}}}\left(2^{n_g(U,V)}\geq 2^c\right) = \mc{O}(2^{-2c}).
\end{equation}
\end{lemma}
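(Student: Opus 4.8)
\textbf{Proof strategy for Lemma~\ref{lemma:time_complexity}.} The plan is to recast $2^{n_g(U,V)}$ as the size of a random intersection of two Lagrangian (maximal isotropic) subspaces of $\mathbb{F}_2^{2n}$, bound its first and second moments by absolute constants, and get the tail from Markov's inequality applied to the square. First I would exploit the structure $U=H^{\otimes n}U_A$ with $U_A$ the diagonal circuit of Eq.~\eqref{Eq:diagonalcircuits}: phaselessly, conjugation by $U_A$ sends $X^{\mathbf{c}}\mapsto X^{\mathbf{c}}Z^{A\mathbf{c}}$, where $A$ is the symmetric $\mathbb{F}_2$-matrix whose off-diagonal entries are the $CZ$-indicators $A_{i,j}$ and whose diagonal entries are the $S$-indicators $A_{k,k}$, i.e.\ $A$ is uniformly random over all symmetric binary matrices. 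Hence the phaseless stabilizer group is the graph $[\mathbf{S}_U]=\Gamma_A:=\{(\mathbf{c},A\mathbf{c}):\mathbf{c}\in\mathbb{F}_2^n\}$ (in symplectic coordinates), and since $[\mathbf{S}_V]$ is the fixed Lagrangian subspace stabilizing $\Psi$, we have $2^{n_g(U,V)}=|[\mathbf{S}_U]\cap[\mathbf{S}_V]|=|K_A|$ with $K_A:=\{\mathbf{c}:(\mathbf{c},A\mathbf{c})\in[\mathbf{S}_V]\}$, a subspace of $\mathbb{F}_2^n$. It then suffices to show $\mathbb{E}_A|K_A|=\mathcal{O}(1)$ and $\mathbb{E}_A|K_A|^2=\mathcal{O}(1)$, since $\mathbb{P}(2^{n_g}\ge 2^c)=\mathbb{P}(|K_A|^2\ge 2^{2c})\le 2^{-2c}\,\mathbb{E}_A|K_A|^2$.

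The analytic core is a joint-uniformity fact about the random matrix $A$. (i) For fixed $\mathbf{c}\neq\mathbf{0}$ the $n$ linear functionals $A\mapsto(A\mathbf{c})_i$ are $\mathbb{F}_2$-independent, so $A\mathbf{c}$ is uniform on $\mathbb{F}_2^n$. (ii) For $\mathbf{c},\mathbf{c}'$ linearly independent, the $2n$ functionals $(A\mathbf{c})_i,(A\mathbf{c}')_j$ span a codimension-one space, the unique relation being $(\mathbf{c}')^{T}A\mathbf{c}+\mathbf{c}^{T}A\mathbf{c}'=0$, which holds identically since $A$ is symmetric and $\mathrm{char}=2$; hence $(A\mathbf{c},A\mathbf{c}')$ is uniform on the $(2n-1)$-dimensional subspace $\mathcal{L}_{\mathbf{c},\mathbf{c}'}:=\{(\mathbf{y},\mathbf{y}'):\mathbf{c}'\!\cdot\!\mathbf{y}+\mathbf{c}\!\cdot\!\mathbf{y}'=0\}$. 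Both reduce to a short computation: a relation $\sum_i\lambda_i(A\mathbf{c})_i+\sum_j\mu_j(A\mathbf{c}')_j=0$ (with $\ell=(\lambda_i)$, $m=(\mu_j)$) forces the symmetric-matrix identity $\ell\mathbf{c}^{T}+\mathbf{c}\ell^{T}=m(\mathbf{c}')^{T}+\mathbf{c}'m^{T}$ together with $\lambda_ic_i=\mu_ic_i'$ for all $i$, whose only solutions for independent $\mathbf{c},\mathbf{c}'$ are $(\ell,m)\in\{(\mathbf{0},\mathbf{0}),(\mathbf{c}',\mathbf{c})\}$.

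The moments are then bookkeeping. Write $[\mathbf{S}_V]_0:=[\mathbf{S}_V]\cap(\{\mathbf{0}\}\times\mathbb{F}_2^n)$ for the pure-$Z$ stabilizers and $d_0:=\dim[\mathbf{S}_V]_0$, so the $X$-projection $\pi_X[\mathbf{S}_V]$ has dimension $r:=n-d_0$. For $\mathbf{c}\in\mathbb{F}_2^n$, $(\mathbf{c},A\mathbf{c})\in[\mathbf{S}_V]$ requires $\mathbf{c}\in\pi_X[\mathbf{S}_V]$, and then the admissible $Z$-parts form a coset of $[\mathbf{S}_V]_0$ of size $2^{d_0}$; by (i), $\mathbb{P}(\mathbf{c}\in K_A)=2^{d_0-n}$ for such $\mathbf{c}\neq\mathbf{0}$, while $\mathbf{c}=\mathbf{0}$ contributes $1$. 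Summing over the $2^{r}$ vectors in $\pi_X[\mathbf{S}_V]$ gives $\mathbb{E}_A|K_A|=1+(2^{r}-1)2^{d_0-n}<2$. For $\mathbb{E}_A|K_A|^2=\sum_{\mathbf{c},\mathbf{c}'}\mathbb{P}(\mathbf{c},\mathbf{c}'\in K_A)$ I would split the ordered pairs into: both zero (contributing $1$); exactly one zero or $\mathbf{c}=\mathbf{c}'\neq\mathbf{0}$ (each contributing $<1$ by the first-moment estimate); and $\mathbf{c},\mathbf{c}'$ linearly independent, of which there are at most $2^{2r}$, both lying in $\pi_X[\mathbf{S}_V]$. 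For the last class the admissible $(\mathbf{z},\mathbf{z}')$ range over a coset $C\times C'$ of $[\mathbf{S}_V]_0\times[\mathbf{S}_V]_0$, and $C\times C'\subseteq\mathcal{L}_{\mathbf{c},\mathbf{c}'}$ because any two elements of $[\mathbf{S}_V]$ commute (this is where isotropy of $[\mathbf{S}_V]$ is used); so by (ii) $\mathbb{P}(\mathbf{c},\mathbf{c}'\in K_A)=|C\times C'|/|\mathcal{L}_{\mathbf{c},\mathbf{c}'}|=2^{2d_0-2n+1}$, and this class contributes $<2^{2r}\cdot 2^{2d_0-2n+1}=2$. Altogether $\mathbb{E}_A|K_A|^2<6$, hence $\mathbb{P}(2^{n_g(U,V)}\ge 2^c)<6\cdot 2^{-2c}$ for every $c\ge 0$, which gives both assertions of the lemma.

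The step I expect to be the real obstacle is part (ii): proving that the $2n$ forms $(A\mathbf{c})_i,(A\mathbf{c}')_j$ lose \emph{exactly} one dimension — equivalently, that the automatic commutation constraint on $\Gamma_A$ is the \emph{only} relation — together with the observation $C\times C'\subseteq\mathcal{L}_{\mathbf{c},\mathbf{c}'}$. This exact codimension count is what makes the exponent $2r=2(n-d_0)$ from the number of pairs cancel against $2d_0-2n+1$ from the joint probability, leaving an absolute constant; getting the codimension slightly wrong would break the bound. Everything else — the first moment, the degenerate pair classes, and the Markov step — is routine.
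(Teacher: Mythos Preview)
Your proof is correct and takes a genuinely different route from the paper. The paper does not work with the random symmetric matrix $A$ at all; instead it first embeds $\mathcal{E}_{\text{phase}}$ into the set of all $n$-qubit stabilizer states (using that distinct $A$ give distinct phaseless stabilizer groups $\Gamma_A$) to obtain $\mathbb{E}_{U\sim\mathcal{E}_{\text{phase}}}[2^{m\,n_g}]\le \tfrac{|\mathcal{E}_{\mathrm{Cl}}|}{|\mathcal{E}_{\text{phase}}|}\,\mathbb{E}_{U\sim\mathcal{E}_{\mathrm{Cl}}}[2^{m\,n_g}]$ with $|\mathcal{E}_{\mathrm{Cl}}|/|\mathcal{E}_{\text{phase}}|<e$, and then invokes the stabilizer inner-product counting of Garc\'ia--Markov--Cross to bound \emph{all} moments $\mathbb{E}_{U\sim\mathcal{E}_{\mathrm{Cl}}}[2^{m\,n_g}]=\mathcal{O}(1)$ at once; the tail bound then follows from Markov applied to the second moment, just as in your argument.

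Your approach is more elementary and self-contained: it replaces the reduction to the full stabilizer ensemble and the cited counting formula by the direct computation of the first two moments of $|\Gamma_A\cap[\mathbf{S}_V]|$ from the joint law of $(A\mathbf{c},A\mathbf{c}')$. The codimension-one fact (your point~(ii)) together with the isotropy step $C\times C'\subseteq\mathcal{L}_{\mathbf{c},\mathbf{c}'}$ is exactly what makes the exponent $2r$ cancel against $2d_0-2n+1$; this is the mechanism that the paper's route hides inside the inner-product formula. Your method gives explicit small constants ($\mathbb{E}<2$, second moment $<6$) and exposes why Lagrangianity of $[\mathbf{S}_V]$ is essential. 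The paper's route, in exchange, bounds every moment simultaneously and would transfer verbatim to any sub-ensemble of stabilizer states whose size is within a constant factor of $|\mathcal{E}_{\mathrm{Cl}}|$.
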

\begin{proof}

Sampling $U$ uniformally from $\mc{E}_\text{phase}$, the expectation is
\begin{equation}
\begin{split}
        \mathbb{E}_{U\sim\mc{E}_\text{phase}} 2^{n_g(U,V)} &= \sum_{U\in\mc{E}_\text{phase}} \frac{1}{|\mc{E}_\text{phase}|}2^{n_g(U,V)}\\
        &=\frac{|\mc{E}_\text{Cl}|}{|\mc{E}_\text{phase}|}\sum_{U\in\mc{E}_\text{phase}} \frac{1}{|\mc{E}_\text{Cl}|}2^{n_g(U,V)}\\
        &<\frac{|\mc{E}_\text{Cl}|}{|\mc{E}_\text{phase}|}\sum_{U\in\mc{E}_\text{Cl}} \frac{1}{|\mc{E}_\text{Cl}|}2^{n_g(U,V)} = \frac{|\mc{E}_\text{Cl}|}{|\mc{E}_\text{phase}|} \mathbb{E}_{U\sim\mc{E}_\text{Cl}} 2^{n_g(U,V)},\\
\end{split}
\end{equation}
where $\mc{E}_\text{Cl}$ is the the ensemble of $n$-qubit stabilizer circuits. To complete the random phase gates, one should impose $\frac{n(n-1)}{2}$ random ${CZ}$ gates and $n$ random $S$ gates. Therefore, $|\mc{E}_\text{phase}|=2^{\frac{n(n-1)}{2}}4^n=2^{\frac{n(n+3)}{2}}$. Now we determine the value of $|\mc{E}_\text{Cl}|$. From Ref.~\cite{aaronson2004improved}, we obtain that $|\mc{E}_\text{Cl}| = 2^n\prod_{k=0}^{n-1}(2^{n-k}+1)$. So 
\begin{equation}
    \frac{|\mc{E}_\text{Cl}|}{|\mc{E}_\text{phase}|} = \frac{2^n\prod_{k=0}^{n-1}(2^{n-k}+1)}{2^{\frac{n(n+3)}{2}}}=\prod_{k=0}^{n-1}(1+\frac{1}{2^{n-k}})<\prod_{k=0}^{n-1}\exp({2^{-n+k}})<e,
\end{equation}
which is a constant. This means that from now on, we only need to show that $\mathbb{E}_{U\sim\mc{E}_\text{Cl}} 2^{n_g(U,V)} = \mathcal{O}(1)$, which requires to figure out the number of $U\in\mc{E}_{\text{Cl}}$ that satisfy $|[\mb{S}_U]\cap[\mb{S}_V]|=2^{n_g(U,V)}$. 

We use a result from Ref.~\cite{garcia2017geometry}: Given two arbitrary stabilizer states, $\ket{\psi} = U^{\dagger}\ket{\mb{0}}$ and $\ket{\phi} = V^{\dagger}\ket{\mb{0}}$, their inner product satisfies  $|\langle\phi|\psi\rangle|^2 = 2^{n_g(U,V)-n}$ if and only if 1) $|[\mb{S}_U]\cap[\mb{S}_V]|=2^{n_g(U,V)}$, and 2) $[\mb{S}_U\cap\mb{S}_V]=[\mb{S}_U]\cap[\mb{S}_V]$. Also, if the second constraint is violated, then $ |\langle\phi|\psi\rangle| = 0$, which happens when there exists a $P$ satisfying $P\in\mb{S}_U$ and $-P\in \mb{S}_V$. 
Therefore, given $V$, the number of $U$ satisfying $|[\mb{S}_U]\cap[\mb{S}_V]|=2^{n_g(U,V)}$ is $2^{n_g(U,V)}$ times the number of $V$ with $|\langle\phi|\psi\rangle|^2 = 2^{n_g(U,V)-n}$.

Theorem 15 of Ref.~\cite{garcia2017geometry} states that, given a stabilizer state $\ket{\phi} = V^{\dagger} \ket{\mb{0}}$, the number of stabilizer states $\ket{\psi} = U^{\dagger}\ket{\mb{0}}$ with inner product $|\langle\phi|\psi\rangle|^2 = 2^{k-n}$ is given by  
\begin{equation}
  \mc{L}_n(n-k) = \prod_{j=1}^{n-k} \frac{2}{2^k} \frac{4^n/2^{j-1}-2^n}{2^{n-k}-2^{j-1}}.  
\end{equation}
Thus, the number of stabilizer states $\ket{\psi} = U^{\dagger}\ket{\mb{0}}$ satisfying $ n_g(U,V) = k$ is $2^{k} \mc{L}_n(n-k) $. Theorem 16 of Ref.~\cite{garcia2017geometry} gives a limit of $\mc{L}_n(n-k)$ that $\frac{\mc{L}_n(n-k)}{|\mc{E}_{\text{Cl}}|}<\frac{\mc{O}(1)}{2^{k(k+3)/2}}$.
Using this result, we conclude that given an arbitrary moment $m$ of the random variable $2^{n_g(U,V)}$,
\begin{equation}\label{Eq:postconst}
    \begin{split}
        \mathbb{E}_{U\sim\mc{E}_\text{phase}} 2^{n_g(U,V) m} &<e \mathbb{E}_{U\sim\mc{E}_\text{Cl}} 2^{n_g(U,V)m}\\
        &= e\sum_{k=0}^{n} \frac{2^k\mc{L}_n(n-k)}{|\mc{E}_\text{Cl}|} 2^{km}\\
        &<e\sum_{k=0}^{n} 2^{k(m+1)} \frac{\mc{O}(1)}{2^{k(k+3)/2}}\\
        &<e\sum_{k=0}^{n} \frac{\mc{O}(1)}{2^{k(k+1-2m)/2}} = \mc{O}(1),
    \end{split}
\end{equation}
when $m$ is a constant. Specially, when $m=1$, $\mathbb{E}_{U\sim\mc{E}_\text{phase}} 2^{n_g(U,V)}$ is a constant, thus  \cref{Eq:time_com_1} is proved. When $m=2$, $\mathbb{E}_{U\sim\mc{E}_\text{phase}} 2^{2n_g(U,V)}$ is also a constant, then we utilize the Markov inequality to obtain that for all $c>0$.
\begin{equation}
    \mathbb{P}_{U\sim\mathcal{E}_\text{phase}}\left(2^{n_g(U,V)}\geq 2^c\right) = \mathbb{P}_{U\sim\mathcal{E}_\text{phase}}\left(2^{2n_g(U,V)}\geq 2^{2c}\right)\leq \frac{\mbb{E}_{U\sim\mathcal{E}_\text{phase}}2^{2n_g(U,V)}}{2^{2c}}=\mc{O}(2^{-2c}),
\end{equation} 
which proves  \cref{Eq:chev}.
\end{proof}

Therefore, the average total computational complexity is still $\mc{O}(n^3)$ for all stabilizer states. One may be concerned that the computational complexity will be exponential for a single snapshot when $n_g\rightarrow n$. But in shadow estimation, we evaluate the average complexity depending on the method of random measurement, and the probability of $n_g\rightarrow n$ is exponentially small according to  \cref{Eq:chev}. 


\section*{Supplementary Note 8. --Details on the numerical examples}\label{Ap:Numerical}
In this section, we numerically evaluate the performance of the RPS and the PS protocols. Specifically, we assume that each two-qubit gate in the circuit is subject to gate-dependent noises with an error rate $p_e$, as defined in  \cref{Eq:noimodel}. The performances depend on two key variables: the number of qubits $n$ in the system and the gate error rate $p_e$.

In the extensive numerical simulations, we evolve some input state $\rho$ by a random (noisy) unitary $U(\widetilde{U})$ sampled from $\mc{E}_{{\text{phase}}}$, measure the involved state in the computational basis for one shot and record the measurement outcome $\mathbf{b}$. This process gives us the snapshot $ \Phi_{U,\mathbf{b}} = U^{\dag} \ket{\mathbf{b}}\bra{\mathbf{b}} U $. Then, we compute ${\tr}(O\widehat{\rho_f})$, where $O$ is either a Pauli operator or a stabilizer state and $\widehat{\rho_f}$ is a shadow snapshot as in Theorem 1 (noiseless case) and Theorem 2 (noisy case) in the main text. When $O$ is a stabilizer state, we employ Algorithm 1 to efficiently compute the value of ${\tr}(O_f \widehat{\rho})$ in the noisy case. We repeat this process for $N$ rounds and compute the empirical expectation and variance
according to
\begin{align}
\text{Bias}[{\tr}(\widehat{\rho} O_f)] &=  \left| \frac{1}{N}\sum_{i=1}^{N}{\tr}(O \rho_f^{(i)}) - {\tr}(O_f \rho) \right|, \\
\text{Var}[{\tr}(\widehat{\rho} O_f)] &= \frac{1}{N}\sum_{i=1}^{N} \left[ {\tr}(O \rho_f^{(i)}) - \frac{1}{N}\sum_{i=1}^{N}{\tr}(O \rho_f^{(i)}) \right]^2,
\end{align}
where $\rho_f^{(i)}$ is the $i$-th shadow estimation. Three input states are considered in our simulation, including the graph GHZ state vector $\ket{\text{GHZ*}} = \prod_{j=1}^{n-1}\text{CZ}_{0,j}\ket{+}^{\otimes n}$, which is equivalent to the canonical GHZ state vector $\ket{\text{GHZ}}=
\frac{1}{\sqrt{2}}[\ket{0}^{\otimes n}+\ket{1}^{\otimes n}] $
up to local transformations~\cite{van2004graphical,Graphs}, the one-dimensional cluster state vector $\ket{\text{C}_1} = \prod_{j=0}^{n-1}\text{CZ}_{j,j+1}\ket{+}^{\otimes n}$, and the product state vector $|+\rangle^{\otimes n}$. Note that since the input states are stabilizer states, and the phase gates are Clifford operations and the noise is stochastic Pauli noise, classical simulation is efficient using the Clifford Tableau techniques~\cite{aaronson2004improved}.

In the beginning, we simulate the bias and the variance of the PS protocol, say, the noiseless case. As is shown in  \cref{fig:ApHadamard-bias}(a), the estimation variance of the PS protocol is constant (independent of the qubit number), which is similar to the Clifford measurement. In contrast, the estimation variance of the Pauli measurement scales exponentially with $n$. This result is consistent with  Theorem 1 in the main text.

Next, we focus on the RPS protocol. This simulation encompasses two aspects. First, we compared the RPS protocol with the PS protocol in terms of the estimation bias in the noisy scenario. We simulate fidelity estimation using the RPS and PS protocol for $|\text{GHZ}^*\rangle$ with $n=25, 35, 45$ qubits. As shown in \cref{fig:ApHadamard-bias}(b),  when the circuits are noisy, the RPS protocol still provides an unbiased estimate, whereas the PS protocol does not. These simulated results highlight the protocol's reliability in NISQ-era applications, where devices typically have tens of qubits and error rates in the range of $10^{-3}$ to $10^{-2}$.

Secondly, to evaluate post-processing efficiency, \cref{fig:ApHadamard-bias}(c) shows the cube root of the average post-processing time to estimate the fidelity of $ \ket{\text{C}_1} $,  $\ket{\text{GHZ*}} $ and $ \ket{+}^{\otimes n} $ on $N = 10,\!000$ snapshots. The cubic scaling confirms that post-processing remains efficient even under noise, as mentioned in Proposition 3. 

\begin{figure}
    \centering
    \includegraphics[width=0.5\linewidth]{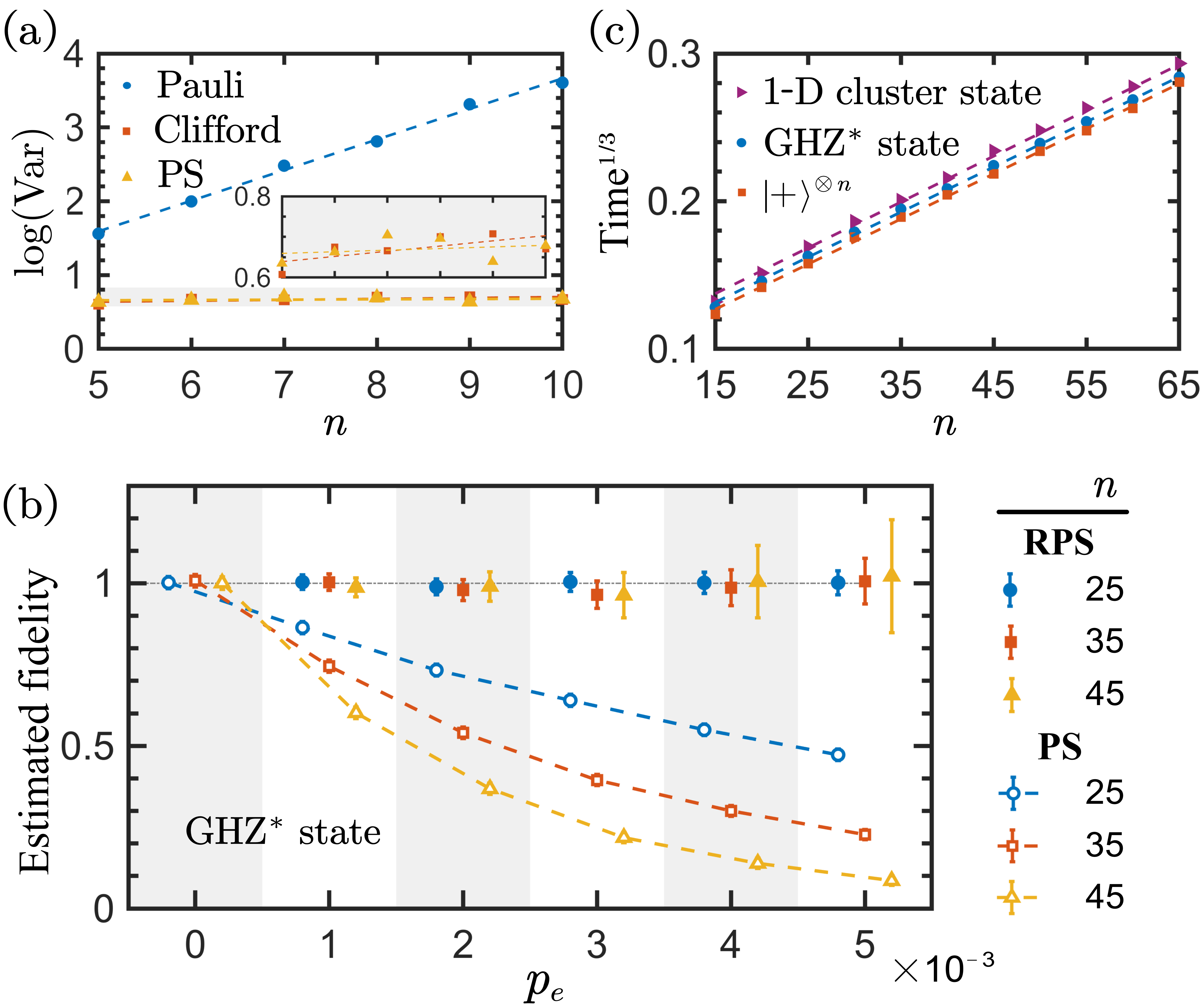}
    \caption{(Fig. 3 in the main text) (a) Variance of fidelity estimation for the GHZ* state (noiseless case), comparing Pauli, Clifford measurements, and PS~\cite{huang2020predicting} under different qubit number $n$ of the system.  (b) Fidelity estimation of the GHZ* state using  RPS and PS measurements for $n \in \{25,35,45\}$ qubits under different noise level $p_e$ ($N = 50,\!000$ snapshots for each data point).  (c) 
    Cubic root of the post-processing time for fidelity estimation of three graph states using RPS ($n = 15$--$65$, $N = 10,\!000$). }
    \label{fig:ApHadamard-bias}
\end{figure}

Lastly, we examine the estimation variance of the RPS protocol. \cref{fig:variance}(a) shows the variance of the fidelity estimation for 10 random stabilizer states across $n = 20\text{--}30$ qubits and $p_e = 0\text{--}0.01$. The magnitude of estimated variance is numerically below the theoretical upper bound in Eq. (9), where we choose $\Theta(1) = 3$.
Furthermore, it is important to quantitatively verify that the estimation variance exhibits the predicted exponential scaling behavior $e^{n^2 p_e / 2}$, as this provides precise guidance for practical implementations. To do so, we examine the dependence of the variance on the qubit number $n$ and the error rate $p_e$, respectively. As shown in \cref{fig:variance}(c), we find that $\sqrt{\log(\text{Var})}$ increases linearly with $n$, with a fitted slope of $0.044$, which is slightly smaller than the theoretical upper bound $\sqrt{p_e/2} = 0.05$. This indicates that the growth rate of the variance is below the theoretical worst-case scenario. Similarly, \cref{fig:variance}(c) shows that $\log(\text{Var})$ increases linearly with $p_e$, with a fitted slope of $172$. This again falls below the theoretical slope $n^2 / 2 = 200$, further confirming that the empirical growth rate of the variance with respect to $p_e$ remains consistently below the upper bound. Taken together, these results validate the scaling behaviors of the variance bound derived in Eq. (9), which provides a more favorable variance profile than the worst-case bound suggests, reinforcing the practical utility of the RPS protocol for realistic applications. 

\begin{figure}
    \centering
    \includegraphics[width=0.8\linewidth]{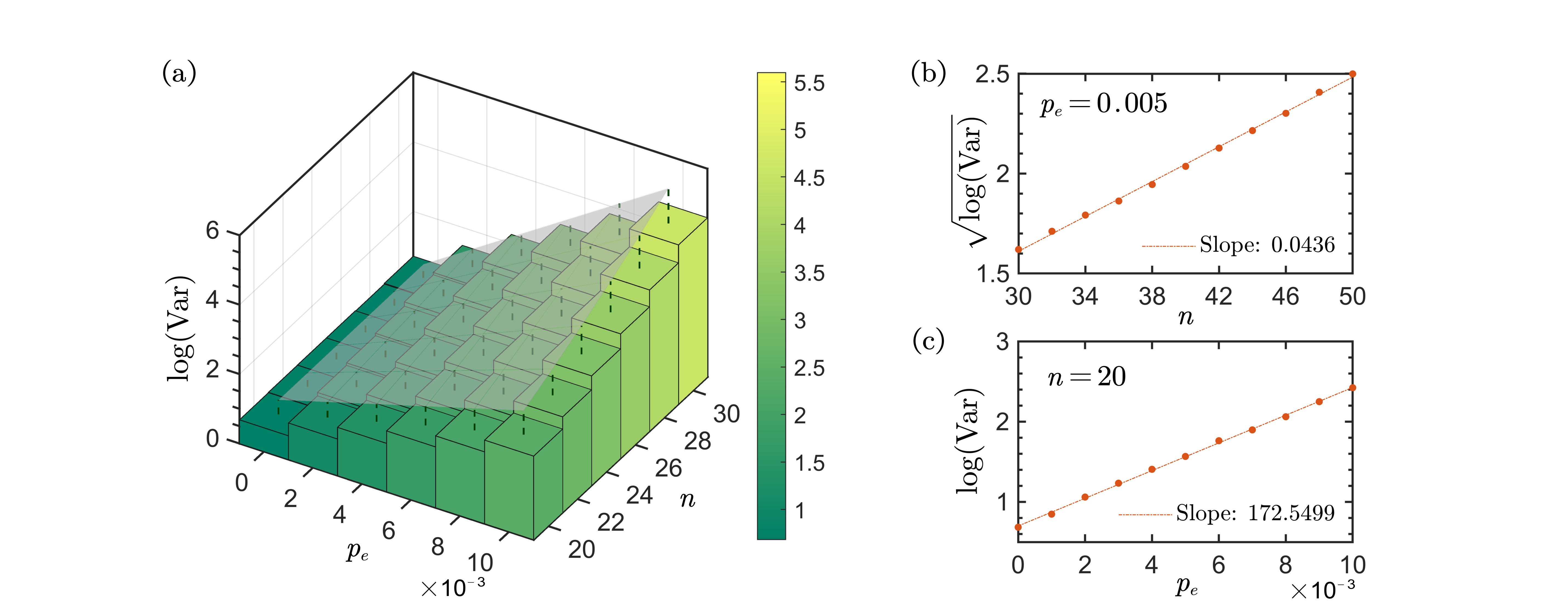}
    \caption{ An illustration of the estimation variance using robust phase shadow. For each task, it is repeated for $N=50,\!000$ snapshots. (a)  The variance of estimated fidelity for random stabilizer states using the RPS protocol as a function of error rate $p_e$ and $n$. The number of qubits ranges from $20$ to $30$, and the error rate $p_e$ varies from $0$ to $0.01$. The numerical results are shown as three-dimensional colored bars, and the theoretical bound $3e^{n^2p_e/2}$ is shown in lightgray.
    (b) Fidelity estimation of the GHZ state using the RPS protocol for systems with $n = 30$ to $50$ qubits, at a fixed error rate of $p_e = 0.005$. The dashed line represents a linear fit to the numerical data, yielding a slope of $0.0436$, which is close to the theoretical upper bound of $0.05$. 
    (c) Fidelity estimation of the GHZ state using the RPS protocol at a fixed qubit number $n = 20$, with error rate $p_e$ ranging from $0$ to $0.01$. The fitted slope from numerical results is $172.54$, which is close to but remains under the theoretical upper bound $200$. 
  }
    \label{fig:variance}
\end{figure}

\section*{Supplementary Note 9. --Extension of noise models}\label{Ap:NoiseEx}

\subsection*{A. Pauli-Z-type noise model}

In the RPS protocol, we model the gate-dependent noise model as the ZZ Pauli error. Such noise naturally arises in realistic physical systems and exhibits a favorable mathematical structure.  Here, we present an extended biased noise model that incorporates IZ and ZI errors, and the corresponding robust implementation of the phase shadow tomography. The extended noise model is native in Rydberg atom platforms~\cite{sahay2023high,cong2022hardware}. 
In this way, we denote the noisy two-qubit gates
\begin{equation}\label{Eq:erasure_errmodel}
    \Lambda(\rho) = (1-\frac{3}{4}p_e)\rho +\frac{p_e}{4} {ZI} \rho {ZI} +\frac{p_e}{4} {IZ} \rho 
    {IZ} +\frac{p_e}{4} {ZZ} \rho {ZZ}.
\end{equation}
Here, we aim to obtain the measurement channel. The derivation should follow the procedures in Supplementary Note 5. First, we first utilize the twirling $\widetilde{\Lambda}^{(2)}_{{A,ij}}$ defined in Supplementary Note 5(A) to calculate that
\begin{equation}
\begin{aligned}
\widetilde{\Lambda}^{(2)}_{{A,ij}}(\ket{\mb{x,w}}\bra{\mb{y,s}})
&=\mathbb{E}_{A_{i,j}\in\{0,1\}} \widetilde{{CZ}}^{\dagger A_{i,j}}_{i,j}\otimes {CZ}_{i,j}^{\dagger A_{i,j}}(\ket{\mb{x,w}}\bra{\mb{y,s}})\widetilde{{CZ}}_{i,j}^{A_{i,j}}\otimes {CZ}_{i,j}^{A_{i,j}}\\
&= \{\frac{1}{2} +\frac{1}{2}(-1)^{x_i x_j +w_i w_j-y_i y_j -s_i s_j}[\frac{p_e}{4}(-1)^{x_i  -y_i}+\frac{p_e}{4}(-1)^{x_i+x_j -y_i-y_j}+\frac{p_e}{4}(-1)^{x_j  -y_j}+(1-\frac{3p_e}{4})]\}  \ket{\mb{x,w}}\bra{\mb{y,s}}\\
&= \frac{1}{2}  \ket{\mb{x,w}}\bra{\mb{y,s}}+\frac{1}{2}(-1)^{x_i x_j +w_i w_j-y_i y_j -s_i s_j}[\frac{p_e}{4}[(-1)^{x_i  -y_i}+1][(-1)^{x_j  -y_j}+1]+(1-p_e)]  \ket{\mb{x,w}}\bra{\mb{y,s}}\\
&=\begin{cases}
    \ket{\mb{x,w}}\bra{\mb{y,s}} & \text{if } T_{i,j}^{(1)}\equiv0 \pmod{2}, \widetilde{T}_{i,j}^{(2)}\equiv0 \pmod{2} \\
(1-\frac{p_e}{2})\ket{\mb{x,w}}\bra{\mb{y,s}} & \text{if } T_{i,j}^{(1)}\equiv0\pmod{2}, \widetilde{T}_{i,j}^{(2)}\equiv1 \pmod{2}\\
0& \text{if } T_{i,j}^{(1)}\equiv1\pmod{2}, \widetilde{T}_{i,j}^{(2)}\equiv0 \pmod{2}\\
\frac{p_e}{2}\ket{\mb{x,w}}\bra{\mb{y,s}}& \text{if } T_{i,j}^{(1)}\equiv1\pmod{2}, \widetilde{T}_{i,j}^{(2)}\equiv1 \pmod{2}\\
\end{cases}.\\
\end{aligned}
\label{Eq: channel_noise_ij_fulldephasing}
\end{equation}
We now define the coefficient $\widetilde{T}_{i,j}^{(2)}\coloneq (x_i  -y_i)(x_j  -y_j)$, instead of ${T}_{i,j}^{(2)}=x_i  -y_i+x_j  -y_j$ in \cref{Eq: channel_noise_ij}. In this way, $c_{i,j} = 1-\frac{p_e}{2}$ instead of $1$ when $\ket{x_i, w_i}\bra{y_i,s_i}$ and $\ket{x_j ,w_j}\bra{y_j,s_j}$ both belong to $\mbb{S}_2-\Delta_2$. Moreover, the coefficient changes from $p_e$ to $\frac{p_e}{2}$. Consequently, the noisy moment function becomes

\begin{equation}
    \widetilde{\mathbf{M}}_{\mc{E}_\text{phase}}^{(2)} =2^{-2n} \sum_{I_i,I_j,I_k} (1-\frac{p_e}{2})^{i\times k+\frac{k(k-1)}{2}}(\frac{p_e}{2})^{j\times k}\Delta_2^{I_i}\otimes (\id_4-\Delta_2)^{I_j} \otimes (\mbb{S}_2-\Delta_2)^{I_k},
\label{Eq: channel_noise_ij2prime}
\end{equation}

Mention that following the proof of Eq. (8) in Supplementary Note 6, we can set 
\begin{equation}
    \sigma_P =  \sum_{s=0}^{n_1 + n_2} \sum_{t=\max(0,s-n_1)}^{\min(s, n_2)} (-1)^t \binom{n_1}{s-t} \binom{n_2}{t}  (1 - \frac{p_e}{2})^{(n_1 + n_2 - s)n_3+\frac{n_3(n_3-1)}{2}} (\frac{p_e}{2})^{sn_3}
\end{equation}
for Pauli operators in the form $ P =\id_2^{\otimes n_1}\otimes  Z^{\otimes n_2}\otimes \{X,Y\}^{\otimes n_3}$. In this way, one can also achieve an unbiased estimation.

Furthermore, we numerically investigate the variance of the robust protocol for the extended noise model. The numerical results show that the upper bound of the variance also scales exponentially with $n^2p_e$, following the result in Theorem 2 in the \emph{main text}. This scaling is demonstrated via the numerical simulation to estimate the fidelity of stabilizer states. As shown in  \cref{fig:furthermore}, the logarithmic estimation variance increases linearly with the error rate $p_e$, with a slope close to a linear function of $n^2$.

\begin{figure}
    \centering
    \includegraphics[width=.8\linewidth]{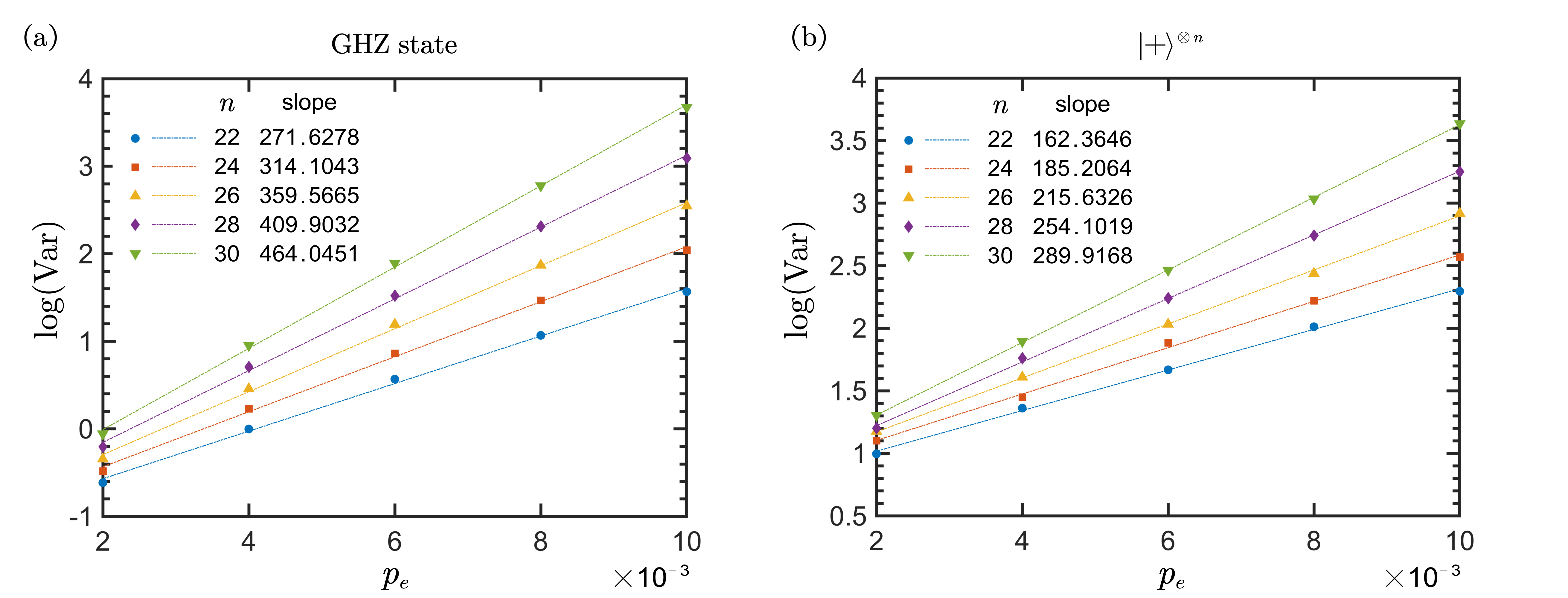}
    \caption{Estimation variance of the RPS protocol using the noise model in  \cref{Eq:erasure_errmodel}, with snapshot $N=50,\!000$ for each task. (a) The estimated fidelity of the GHZ state. Here, the logarithmic variance scales linearly with $p_e$ with the slope close to $n^2/2$. (b) The estimated fidelity to $\ket{+}^{\otimes n}$. The logarithmic variance also scales linearly with $p_e$, and the slope fits a linear function of $n^2$.} 
    \label{fig:furthermore}
\end{figure}

\section*{Supplementary Note 10. Generalized robust shadow estimation}\label{Ap:Paulinoise}

One of the key obstacles to the broader applicability of the RPS method lies in the restricted noise model considered in the current analysis. As detailed in the \emph{main text} and in Supplementary Note 9, the theoretical guarantees of RPS are derived only for Pauli-$Z$–type noises. To address this limitation, we introduce an extended formulation of the protocol that accommodates general noise models.  While our extension is primarily developed with Pauli noise in mind, it is applicable to arbitrary noise if the noise model is invariant under Pauli twirling. Specifically, we can leverage randomized compiling~\cite{wallman2016noise} to convert arbitrary noise into Pauli noise, thereby enabling the extended method to be applied to scenarios involving arbitrary noise. 

To this end, we first propose a modified estimator whose only difference from Algorithm 1 in the \emph{main text} is that the coefficient $\sigma_{P}$ is replaced by a gate-dependent coefficient $\sigma(P,U)$, which satisfies that 
\begin{equation}
    \widetilde{U}^\dagger P \widetilde{U}
  = \sigma(P,U)\, U^\dagger P U ,
\end{equation}
where $\tilde{U}$ means the noisy Clifford circuit with gate-dependent Pauli noises. In the rest of this section, we first describe how $\sigma(P,U)$ is computed. Then we prove that one can obtain an unbiased estimation using the estimator
\begin{equation}
  \widehat{\rho_f}_{\mathrm{gen}}
  = D^{-1} \sum_{P \in \mb{P}_n / \mathcal{Z}_n}
    \sigma(P,U)^{-1}\, \tr(\Phi_{U,\mathbf{b}} P)\, P ,
\end{equation}
as well as proving the upper bound of the estimation variance. Finally, we provide the details on the numerical simulations.

\subsection*{A. Computation of the gate-dependent coefficient $\sigma(P,U)$}
Consider a Pauli operator $P\in\mathcal{P}_n$ and a Clifford circuit $U$ consisting of a sequence of gates $g_1,\ldots,g_l$. Under ideal evolution, the conjugation $U^\dagger P U$ is again a Pauli operator. In the presence of gate-level Pauli noise, we assume that each gate $g_j$ is followed by a Pauli channel $\Lambda_j$, which acts as $\tilde{g_j} = g_j\circ\Lambda_j$, where $\Lambda_j(Q) = \alpha_{j,Q}Q$ for all Pauli operators $Q\in\mbb{P}_k$, and $k$ is the qubit numbers on which $g_j$ acts. It satisfy that $0<\alpha_{j,Q}\leq1$ for all $Q$. Tracking the propagation of Pauli observable $P$ through the noisy circuit therefore yields a sequence
\begin{equation}
P_0 = P,\qquad 
P_{j} = g_j P_{j-1} g_j^\dagger,\qquad j=1,\ldots,m ,
\end{equation}
together with a multiplicative attenuation factor $\alpha_{j,P_j}$. In this way, the total coefficient associated with the propagation of $P$ through $U$ is 
\begin{equation}\label{Eq:sigmaPU}
    \sigma(P,U) = \prod_{j=1}^{m}\alpha_{j,P_j}.
\end{equation}

This attenuation factor $\sigma(P, U)$ can be efficiently computed using Algorithm~\ref{Algo:sigmaPU} given the circuit and noise models. For example, in RPS, if each two-qubit CZ gate in $U$ is followed by an independent depolarizing channel $\Lambda(\rho)= (1-p)\rho + p\,\mathrm{Tr}(\rho)\id_{4}/4$, then any non-identity Pauli operator supported on the affected qubits acquires a factor of $(1-p)$. Thus, if $P$ is propagated through three such CZ layers and remains non-identity on the corresponding supports, then $\sigma(P,U)=(1-p)^3$.
In~\cref{fig:PauliNoise}, we provide another example on computing the coefficient $\sigma(P,U)$, where we assume all CZ gates are subject to depolarizing noise $\Lambda(\rho) = (1-p)\rho + p\,\mathrm{Tr}(\rho)\id_{4}/4$, and all single-qubit gates are subject to depolarizing noise $\Lambda'(\rho) = (1-p_s)\rho + p_s\,\mathrm{Tr}(\rho)\id_{2}/2$.

\begin{figure}
    \centering
\includegraphics[width=0.9\linewidth]{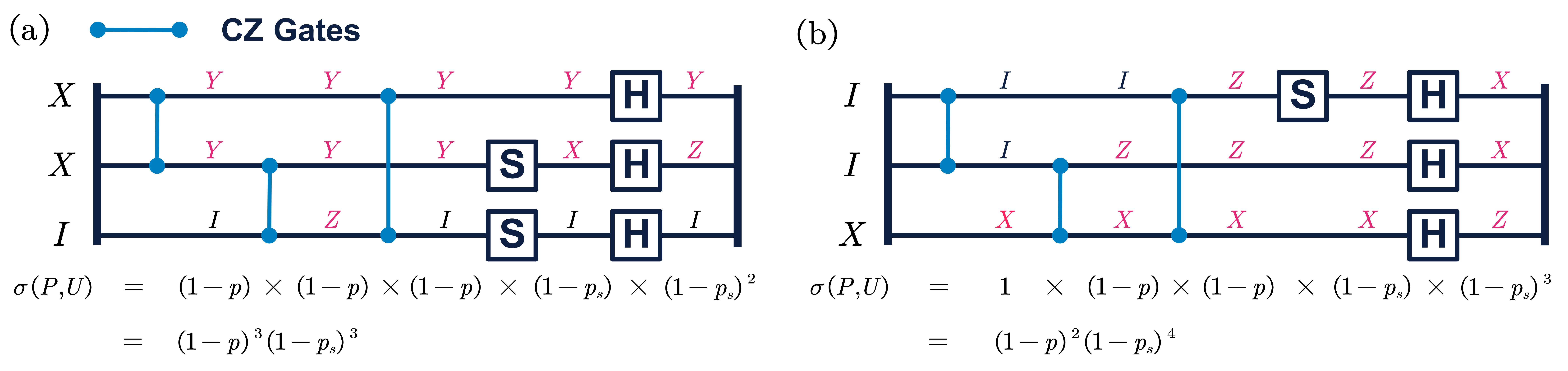}
    \caption{Calculation of $\sigma(P,U)$ under depolarizing noise.  Non-identity Pauli operators (red) accumulate a decay factor [$(1-p)$ or $(1-p_s)$], while identity operators (black) contribute a factor of 1. The total coefficient is the product of all decay terms in the circuit instance.}
    \label{fig:PauliNoise}
\end{figure}

\setcounter{algocf}{1}
\begin{algorithm}[H]
\caption{Computation of $\sigma(P,U)$ for arbitrary Pauli noise}
\label{Algo:sigmaPU}
\KwInput{Pauli operator $P$ and Clifford circuit $U=(g_1,\ldots,g_l)$ with corresponding Pauli channels $(\Lambda_1,\ldots,\Lambda_l)$.}
\KwOutput{Coefficient $\sigma(P,U)$.}

$\sigma\leftarrow 1,\qquad P_0\leftarrow P$\;

\For{$j=1$ {\bf to} $l$}{
    Compute $P_j=g_j P_{j-1} g_j^\dagger$\;
    Extract $\alpha_{j,P_j}$ from the Pauli channel $\Lambda_j$\;
    $\sigma\leftarrow \sigma\cdot \alpha_{j,P_j}$\;
}
\KwRet{$\sigma$}\;
\end{algorithm}

\medskip

Since the propagation of a single Pauli operator through a Clifford circuit requires only $O(l)$ operations, the computation of $\sigma(P,U)$ has complexity $O(l)$. For RPS, the length of the circuit satisfies $l=O(n^2)$. Therefore, replacing $\sigma_P$ by $\sigma(P,U)$ does not change the overall post-processing complexity of RPS, which remains $O(n^3)$. This ensures that the generalized RPS protocol retains the same asymptotic efficiency while supporting arbitrary Pauli noise channels.

The proposed generalized RPS framework substantially extends the utility of the original RPS protocol. By relaxing the structural and noise assumptions, it provides a versatile tool for robust shadow estimation.

\begin{enumerate}
\item \textbf{Extension to more noise models.}
    The method no longer requires the idealized assumption that all quantum gates share an identical noise channel. Instead, each gate $g_j$ in the circuit-including single-qubit gates like Hadamard and Phase gates-may be assigned an individual Pauli channel $\Lambda_j$ that reflects its specific physical error profile. This implies that, unlike previous robust shadow estimation ~\cite{chen2021robust} approaches which rely on a single global average noise parameter, our framework permits the individual addressing of errors for each gates. One can characterize these gate noise channels using established protocols~\cite{blume2017demonstration, greenbaum2015introduction,erhard2019characterizing}, and subsequently utilize this fine-grained information to construct a strictly unbiased robust shadow estimator.
    Furthermore, while the generalized RPS framework is primarily developed with Pauli noise in mind, it can be extended to handle arbitrary noise. This capability is achieved by integrating randomized compiling—provided that the noise model remains independent of single-qubit gates. 

\item \textbf{Extension to more circuit structures.}
    The introduction of the instance-specific correction factor $\sigma(P,U)$ removes the structural constraint that the underlying circuit must follow the specific ${\rm S}{\rm\text{-}}{\rm CZ}{\rm\text{-}}{\rm H}$ architecture. Since the propagation rule $P_j = g_j P_{j-1} g_j^\dagger$ applies to arbitrary Clifford gates, our framework can be directly applied to random Clifford measurements, and even \emph{shallow circuit invariants} (e.g., finite-depth brick-wall architectures) or other hardware-native connectivity graphs. This capability allows for a reduction in circuit depth from $O(n^2)$ to $O(n\log n)$ or even $O(n)$, addressing the gate-count limitations of near-term devices while maintaining robustness.
\end{enumerate}

Next, we utilize the coefficient $\sigma(P,U)$ in RPS and prove Theorem 3 in the \emph{main text}.

\subsection*{B. Proof of Theorem 3}

We now formally establish the statistical guarantees of the generalized RPS framework in Theorem 3 in the \emph{main text}. This theorem encapsulates two fundamental results. First, Proposition~\ref{prop:GRPSunbias} demonstrates the strict unbiasedness of the generalized estimator under gate-dependent noise. Second, Proposition~\ref{prop:GRPSVar} provides a rigorous upper bound on the estimation variance. 

\begin{prop}[Unbiasedness of the generalized RPS estimator]\label{prop:GRPSunbias}
Suppose that one conducts 
robust phase shadow estimation using the random unitary ensemble $\mc{E}_{{\text{phase}}}$ with noisy ${CZ},S$ and $H$ gates, then the unbiased estimator of \(\rho_f\) in a single-shot is given by
\begin{equation} \label{Eq:GenInvNoisychannelPauli}  
\widehat{\rho_f}_{\text{gen}} := \sum_{P \in \mb{P}_n/\mathcal{Z}_n} \sigma(P,U)^{-1} \, \tr(\Phi_{U,\mathbf{b}} {P}) \, {P},  
\end{equation}  

\end{prop}
where \(\sigma(P,U)\) is the Pauli coefficient of the forward noisy channel from \cref{Eq:sigmaPU}, ensuring \(\mathbb{E}_{\{U,\mathbf{b}\}} \widehat{\rho_f}_{\text{robust}} = \rho_f\). 
\begin{proof}
We begin from the noisy snapshot expansion in the Pauli basis.
For each $U\in\mathcal{E}_{\mathrm{phase}}$, the noisy measurement
operator admits the decomposition
\begin{equation}
    \Phi_{\widetilde{U},\mathbf{b}}
    = 2^{-n}\!\sum_{P\in\mathbb{P}_n}
      \sigma(P,U)\,
      \tr(\Phi_{U,\mathbf{b}}P)\,P,
\end{equation}
and the probability of measuring $\mathbf{b}$ is 
\begin{equation}
\Pr(\mathbf{b}\mid U)
= \tr(\Phi_{\widetilde{U},\mathbf{b}}\rho)
= 2^{-n}\!\sum_{P}\sigma(P,U)\,
  \tr(\rho P)\,\tr(\Phi_{U,\mathbf{b}}P). 
  \label{eq: prob b grps}
\end{equation}
Now, we consider the generalized RPS estimator
\begin{equation}
    \widehat{\rho_f}_{\mathrm{gen}}
    = \sum_{P\in\mathbb{P}_n/\mathcal{Z}_n}
      \sigma(P,U)^{-1}\,\tr(\Phi_{U,\mathbf{b}}P)\,P.
\end{equation}
Taking the expectation over both $U$ and $\mathbf{b}$, we obtain
\begin{equation}
\mathbb{E}_{U,\mathbf{b}}
\bigl[\widehat{\rho_f}_{\mathrm{gen}}\bigr]
=
\sum_{P\notin\mathcal{Z}_n}
\mathbb{E}_{U,\mathbf{b}}
\bigl[
  \sigma(P,U)^{-1}\tr(\Phi_{U,\mathbf{b}}P)
\bigr]\,P.
\label{Eq:gen-exp-start}
\end{equation}
Using the probability given by Eq.~\ref{eq: prob b grps}, we have 
\begin{equation}\label{Eq:genUnbias1}
\begin{aligned}
\mathbb{E}_{U,\mathbf{b}}
\bigl[
  \sigma(P,U)^{-1}\tr(\Phi_{U,\mathbf{b}}P)
\bigr]
&=
\mathbb{E}_{U}\sum_{\mathbf{b}}
\Pr(\mathbf{b}\mid U)\,
\sigma(P,U)^{-1}\,\tr(\Phi_{U,\mathbf{b}}P)
\\
&=
2^{-n}\mathbb{E}_{U}\sum_{\mathbf{b}}
\sum_{Q}\sigma(Q,U)\tr(\rho Q)
\,\tr(\Phi_{U,\mathbf{b}}Q)
\,\sigma(P,U)^{-1}\tr(\Phi_{U,\mathbf{b}}P).
\end{aligned}
\end{equation}

At this point, the key property is the following
orthogonality relation 
\begin{equation}
\begin{split}
    \sum_{\mb{b}}
\bigl[
  \tr(\Phi_{U,\mathbf{b}}Q)\,
  \tr(\Phi_{U,\mathbf{b}}P)
\bigr]&
=\sum_{\mb{b}}\bra{\mb{b}}UQU^{\dagger}\ket{\mb{b}}\bra{\mb{b}}UPU^{\dagger}\ket{\mb{b}}\\
&=\tr\bigl[(UPU^{\dagger}\otimes UQU^{\dagger})\text{ }\sum_{\mb{b}}\ket{\mb{b}}\bra{\mb{b}}\otimes\ket{\mb{b}}\bra{\mb{b}} \bigl]\\
&=2^{-n}\tr\bigl[(UPU^{\dagger}\otimes UQU^{\dagger})\text{ }\Delta_2^{\otimes n} \bigl],\\
\end{split}
\label{Eq:orthogonality}
\end{equation}
where $\Delta_2=(\id_2\otimes\id_2+Z\otimes Z)/2$.
Notice that $\Delta_2$ is a linear combination of two-copy Pauli tensors. Since $U$ is Clifford, both $UPU^\dagger$ and $UQU^\dagger$ are Pauli operators. Therefore, by Pauli orthogonality, the trace in \cref{Eq:orthogonality} vanishes whenever $P\neq Q$
\begin{equation}
    \tr\!\left[(UPU^{\dagger}\otimes UQU^{\dagger})\;\Delta_2^{\otimes n}\right]=0
\qquad (P\neq Q).
\end{equation}
On the other hand, when $P=Q$, we have $UPU^{\dagger}=UQU^{\dagger}$, and hence

 \begin{equation}
     \tr\bigl[(UPU^{\dagger}\otimes UQU^{\dagger})\text{ }\Delta_2^{\otimes n} \bigl]=2^n\tr\bigl[{UPU^{\dagger}}^{\otimes 2}\text{ }\sum_{\mb{b}}\ket{\mb{b}}\bra{\mb{b}}\otimes\ket{\mb{b}}\bra{\mb{b}} \bigl]=2^n\sum_{\mb{b}}\bra{\mb{b}}UPU^{\dagger}\ket{\mb{b}}^2.
 \end{equation} 
 
 In summary, we can conclude from \cref{Eq:orthogonality} that $2^{-n}\tr\bigl[(UPU^{\dagger}\otimes UQU^{\dagger})\text{ }\Delta_2^{\otimes n} \bigl]=\delta_{P=Q}\sum_{\mb{b}}\bra{\mb{b}}UPU^{\dagger}\ket{\mb{b}}^2$. Substituting this identity into the expectation in \cref{Eq:genUnbias1}, we obtain

\begin{equation}
\begin{split}
    \mathbb{E}_{U,\mathbf{b}}
\bigl[\widehat{\rho_f}_{\mathrm{gen}}\bigr]
&=
\sum_{P\notin\mathcal{Z}_n}
2^{-n}\mathbb{E}_{U}\sum_{\mathbf{b}}
\sum_{Q}\sigma(Q,U)\tr(\rho Q)
\,\tr(\Phi_{U,\mathbf{b}}Q)
\,\sigma(P,U)^{-1}\tr(\Phi_{U,\mathbf{b}}P)\,P\\
&= \sum_{P\notin\mathcal{Z}_n}2^{-n}\mathbb{E}_{U}\sum_{\mb{b}}\sigma(P,U)\sigma(P,U)^{-1}
\bra{\mb{b}}UPU^{\dagger}\ket{\mb{b}}^2\tr(\rho P)P\\
&=2^{-n}\sum_{P\notin\mathcal{Z}_n}\mathbb{E}_{U}\sum_{\mb{b}}\bra{\mb{b}}UPU^{\dagger}\ket{\mb{b}}^2\tr(\rho P)\,P\\
&=2^{-2n}\sum_{P\notin\mathcal{Z}_n}\tr[(\id_4^{\otimes n}+\mbb{S}_2^{\otimes n}-\Delta_2^{\otimes n})\text{ }P^{\otimes 2}]\tr(\rho P)\,P=2^{-n}\sum_{P\notin\mathcal{Z}_n}\tr(\rho P)\,P.
\end{split}
\end{equation}
This is precisely the Pauli expansion of the off-diagonal component
$\rho_f=\rho-\rho_d$, namely
\begin{equation}
\rho_f
= \sum_{P\notin\mathcal{Z}_n}2^{-n}\tr(\rho P)\,P.
\end{equation}
Therefore
\begin{equation}
\boxed{
\mathbb{E}_{U,\mathbf{b}}
\bigl[\widehat{\rho_f}_{\mathrm{gen}}\bigr]
= \rho_f,
}
\end{equation}
which completes the proof.
\end{proof}

\begin{prop}[Variance of the generalized RPS estimator for stabilizer observables]\label{prop:GRPSVar}
Let the observable  $O = V^\dagger\ket{\mathbf{0}}\bra{\mathbf{0}}V$ be an $n$-qubit stabilizer state with $V$ being a Clifford operator, then the estimation variance of the generalized RPS
\begin{equation}
  \mathrm{Var}\bigl(\widehat{o_f}_{\mathrm{gen}}\bigr)
  \;\le\;
  \Theta(1)\,
  \sqrt{\,
    \mathbb{E}_{U\sim\mathcal{E}_{\mathrm{phase}}}
    \Bigl[
      \max_{P\in\mathbb{P}_n/\mathcal{Z}_n}\sigma(P,U)^{-4}
    \Bigr]
  }.
  \label{Eq:stab-var-final}
\end{equation}
\end{prop}

\begin{proof}
Fix a random unitary $U\sim\mathcal{E}_{\mathrm{phase}}$.  
As in Suppoelemtary Note~7, let $n_g(U,V)$ denote the dimension of the overlap subspace between the stabilizer groups associated with $U$ and $V$.  
For this fixed $U$, the generalized RPS estimator for the stabilizer fidelity can be written as a finite signed sum
\begin{equation}
  \widehat{o_f}_{\mathrm{gen}}(U,\mathbf{b})
  =
  \sum_{j=0}^{2^{n_g(U,V)}-1} \sigma(P_j,U)^{-1}\, s_j(U,\mathbf{b}),
\end{equation}
where each $P_j\in\mathbf{P}_n/\mathcal{Z}_n$, and
$s_j(U,\mathbf{b})\in\{-1,+1\}$ collects the stabilizer signs
$\bra{\mathbf{b}}U^\dagger P_j U\ket{\mathbf{b}}\bra{\mathbf{0}}V^\dagger P_j V\ket{\mathbf{0}}$.  
As $0<\sigma(P_j,U)\le 1$, all coefficients $\sigma(P_j,U)^{-1}$ lie in $[1,+\infty)$.

For such a signed sum, we have the deterministic bound
\begin{equation}
  \bigl|\widehat{o_f}_{\mathrm{gen}}(U,\mathbf{b})\bigr|
  = \Bigl|\sum_{j=0}^{2^{n_g(U,V)}-1} \sigma(P_j,U)^{-1}s_j(U,\mathbf{b})\Bigr|
  \le \sum_{j=0}^{2^{n_g(U,V)}-1} |\sigma(P_j,U)^{-1}|
  \le 2^{n_g(U,V)}\max_{P\in \mathbf{P}_n/\mathcal{Z}_n } \sigma(P,U)^{-1},
\end{equation}
hence, we arrive at
\begin{equation}
  \bigl(\widehat{o_f}_{\mathrm{gen}}(U,\mathbf{b})\bigr)^2
  \le4^{\,n_g(U,V)}\Bigl(\max_{P\in\mathbb{P}_n/\mathcal{Z}_n}\sigma(P,U)^{-1}\Bigr)^2.
\end{equation}
Taking the expectation over $\mathbf{b}$ for fixed $U$ and using $\mathrm{Var}(X)\le\mathbb{E}[X^2]$ gives
\begin{equation}
  \mathrm{Var}\bigl(\widehat{o_f}_{\mathrm{gen}}\mid U\bigr)
  \le
  \mathbb{E}_{\mathbf{b}\mid U}
  \bigl[\bigl(\widehat{o_f}_{\mathrm{gen}}(U,\mathbf{b})\bigr)^2\bigr]\le4^{\,n_g(U,V)}\Bigl(\max_{P\in\mathbb{P}_n/\mathcal{Z}_n}\sigma(P,U)^{-1}\Bigr)^2.
\end{equation}
Averaging over $U$ then yields
\begin{equation}
  \mathrm{Var}\bigl(\widehat{o_f}_{\mathrm{gen}}\bigr)
  = \mathbb{E}_{U}\bigl[\mathrm{Var}(\widehat{o_f}_{\mathrm{gen}}\mid U)\bigr]
  \le
  \mathbb{E}_{U\sim\mathcal{E}_{\mathrm{phase}}}
  \Bigl[
    4^{\,n_g(U,V)}
    \Bigl(\max_{P\in\mathbb{P}_n/\mathcal{Z}_n}\sigma(P,U)^{-1}\Bigr)^2
  \Bigr].
  \label{Eq:stab-var-CS-step0}
\end{equation}

To bound the right-hand side of \cref{Eq:stab-var-CS-step0}, we apply the Cauchy--Schwarz inequality to the average over $U$:
\begin{equation}
\begin{split}
  &\mathbb{E}_{U}
  \Bigl[
    4^{\,n_g(U,V)}
    \Bigl(\max_{P}\sigma(P,U)^{-2}\Bigr)
  \Bigr] \\
  &\qquad\le
  \Bigl(
    \mathbb{E}_{U}\bigl[4^{\,2n_g(U,V)}\bigr]
  \Bigr)^{1/2}
  \Bigl(
    \mathbb{E}_{U}
    \Bigl[
      \max_{P}\sigma(P,U)^{-4}
    \Bigr]
  \Bigr)^{1/2}.
\end{split}
\end{equation}
We now define
\begin{equation}
  C_{\mathrm{stab}}
  :=
  \Bigl(
    \mathbb{E}_{U\sim\mathcal{E}_{\mathrm{phase}}}
    \bigl[4^{\,2n_g(U,V)}\bigr]
  \Bigr)^{1/2},
\end{equation}
which depends only on the ensemble $\mathcal{E}_{\mathrm{phase}}$ and the fixed stabilizer observable $O = V^\dagger\ket{\mathbf{0}}\bra{\mathbf{0}}V$.  
As shown in \cref{Eq:postconst}, $C_{\mathrm{stab}}$ is finite (and in fact $\Theta(1)$).  
Substituting this definition into the previous inequality yields \cref{Eq:stab-var-final}, completing the proof.
\end{proof}

For detailed noise models, one can bound the coefficient$\Bigl(\mathbb{E}_{U\sim\mathcal{E}_{\mathrm{phase}}}\bigl[\max_{P\in\mathbb{P}_n/\mathcal{Z}_n}\sigma(P,U)^{-4}
    \bigr]
\Bigr)^{1/2}
$ as follows. In the case of the $ZZ$-type Pauli noise 
$\Lambda(\rho)=(1-p)\rho+p\,ZZ\rho ZZ$, 
each noisy $\mathrm{CZ}$ gate contributes a multiplicative factor
$(1-2p)$ whenever the propagated Pauli operator anti-commutes with the ZZ Pauli.  
Thus $\sigma(P,U)^{-1}=(1-2p)^{-k(P,U)}\leq(1-2p)^{-L(U)}$, where $k(P,U)$ counts the number of ZZ-error events along the Pauli propagation through $U$, and $L(U)$ counts the number of CZ gates in circuit $U$.  For random $U$, the variable $L(U)$ follows a binomial distribution 
$L(U)\sim\mathrm{Bin}(L,q)$, where 
$L=n(n-1)/2$ is the maximal number of $\mathrm{CZ}$ gates in the phase-shadow circuit.  Hence, one arrives at
\begin{equation}
\begin{aligned}
\mathbb{E}_U\bigl[\sigma(P,U)^{-4}\bigr]\leq\mathbb{E}_U\bigl[(1-2p)^{-4L(U)}\bigr]
  &= 2^{-L}\sum_{q=0}^{L} (1-2p)^{-4q}
     C_L^q = \bigr[\frac{1+(1-2p)^{-4}}{2}\bigr]^L.
\end{aligned}
\label{Eq:BinMGF}
\end{equation}
For small noise $p\ll 1$, using the Taylor expansion $ (1-2p)^{-4} = 1 + 8p + O(p^2)$, one can bound that the estimation variance scales

\begin{equation}
\begin{aligned}
\bigr(\mathbb{E}_U\bigl[\sigma(P,U)^{-4}\bigr]\bigr)^{1/2}\leq \bigl[1+4p+O(p^2)\bigl]^{L/2}\lesssim e^{n^2p}.
\end{aligned}
\label{Eq:BinMGF2}
\end{equation}
which matches the square of the variance scaling $\exp(n^2 p_e/2)$ used in the main text up to constant factors in the exponent.

\begin{figure*}
    \centering
    \includegraphics[width=\linewidth]{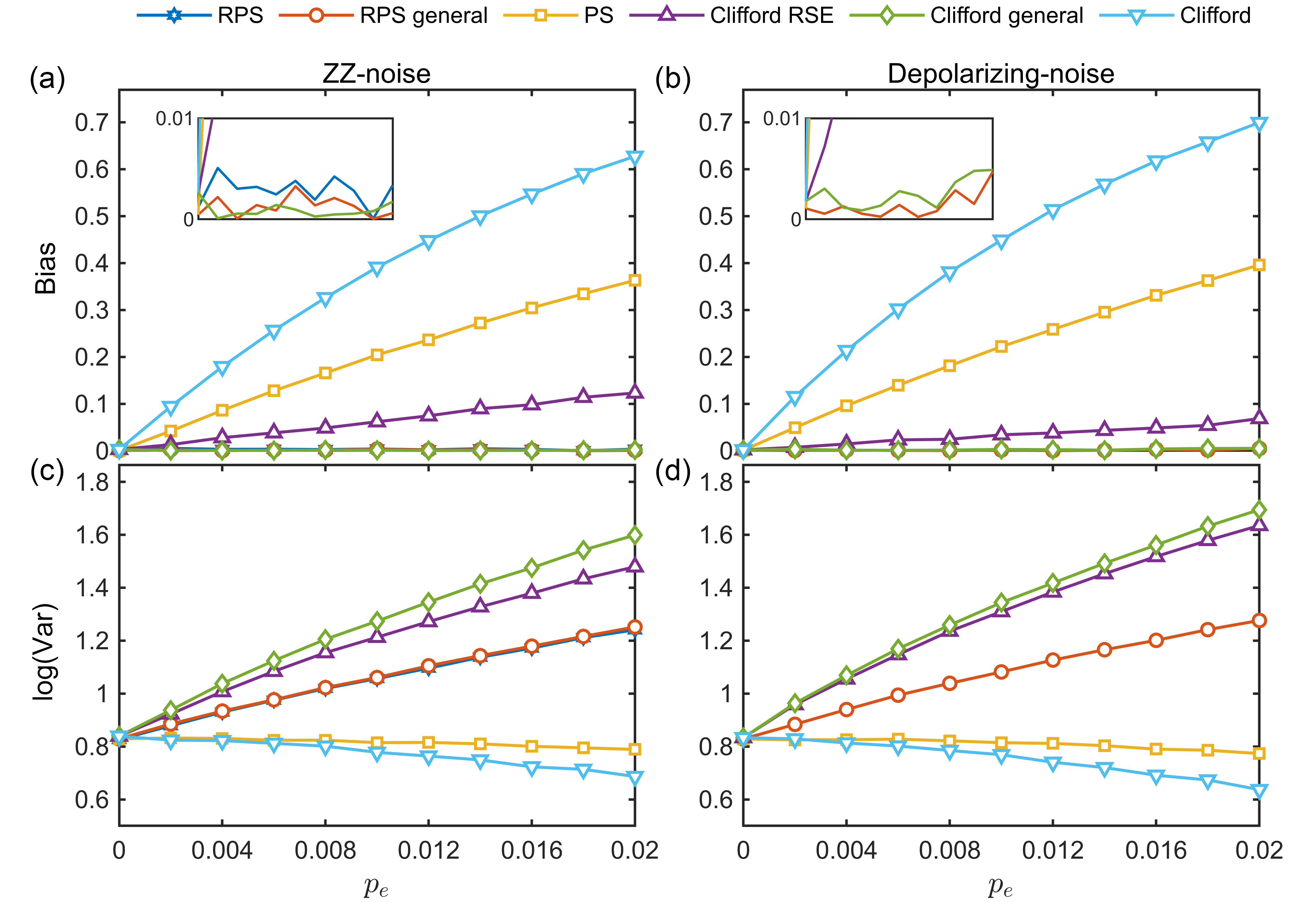}
    \caption{
    Numerical comparison of different shadow protocols under gate-dependent noise for $n=10$ random stabilizer states, with $N=10,\!000$ shots per state and per protocol, averaged over 100 random instances.  
    The upper panels show the average estimation bias as a function of the reference gate error rate $p_e$, and the lower panels show the logarithm of the empirical variance.  
    (a,c) Gate-dependent $ZZ$ Pauli noise acting only on two-qubit ${\rm CZ}$ gates.  
    (b,d) Gate-dependent depolarizing noise acting on both single- and two-qubit gates, with position-dependent error rate.  }
    \label{fig:ApGateDep}
\end{figure*}

\subsection*{C. Details on the numerical results}

To rigorously validate the efficacy of the proposed framework, we present a comprehensive benchmarking analysis against existing protocols. We choose the input state and the observable to be the same random stabilizer state, $\rho = O$, on $n=10$ qubits. For each reference error rate $p_e \in \{0, 0.002, \dots, 0.02\}$, we sample 100 independent instances and collect $N=10,\!000$ measurement shots per protocol.

We employ two families of gate-dependent Pauli noise models to explicitly test robustness against realistic noises. For the Pauli-ZZ error model, we consider that only the two-qubit entangling gates are noisy. Specifically, each ${\rm CZ}$ gate is followed by a $ZZ$ channel $\Lambda_{ZZ}(\rho) = (1-p_e)\rho + p_e \, ZZ\,\rho\, ZZ$, while single-qubit gates are assumed to be ideal. For the gate-dependent depolarizing noise model, we consider a more challenging scenario where both single-qubit and two-qubit gates are noisy. To introduce strong heterogeneity, we assign position-dependent error rates: $p_e^{(2)}(i,j) = p_e\,[1+\cos (ij)]$ for two-qubit gates acting on qubits $i,j$, and $p_e^{(1)}(i) = \frac{p_e}{10}\,[1+\cos (i^2)]$ for single-qubit gates on qubit $i$. We emphasize that under this depolarizing model, the error rates vary significantly across different gates in the device. This setup mimics the non-uniform noise landscape of realistic hardware, creating a regime where standard approximations (which assume uniform error rates) are expected to fail.

We compare the following six schemes:

(i) the original RPS protocol tailored only to $ZZ$ noise (“RPS”); 

(ii) the proposed generalized robust phase shadow on phase-shadow circuits (“generalized RPS”); 

(iii) the unmitigated phase shadow protocol (“PS”); 

(iv) the robust shadow estimation protocol of Ref.~\cite{chen2021robust} (“RSE”), implemented with an explicit calibration step; 

(v) the Clifford variant of generalized RPS (“Clifford general”), obtained by applying the generalized RPS formalism to random Clifford measurement circuits; and 

(vi) the unmitigated shadow estimation with random Clifford measurement (“Clifford”).

We summarize the compared schemes in Table~\ref{tab:schemes}.

\begin{table}[h]
\centering
\caption{Summary of shadow estimation protocols compared in~\cref{fig:ApGateDep}.}
\label{tab:schemes}
\begin{tabular}{l l l}
\hline \hline
\textbf{Abbr.} & \textbf{Circuit Ensemble} & \textbf{Mitigation Strategy} \\ \hline
RPS & Phase Shadow & Original RPS for $ZZ$-noise inversion (Ours) \\
Generalized RPS & Phase shadow & Generalized gate-dependent noise inversion (Ours) \\
PS & Phase shadow & Unmitigated \\ \hline
RSE & Clifford & Gate-independent noise inversion~\cite{chen2021robust} \\
Clifford general & Clifford & Generalized gate-dependent noise inversion (Ours) \\
Clifford & Clifford & Unmitigated \\
\hline \hline
\end{tabular}
\end{table}

For RSE, we use $10^6$ calibration shots per setting, so that the calibration error is at the negligible $10^{-3}$ level.

We illustrate the numerical performance on the native phase-shadow circuit architecture in Fig.~\ref{fig:ApGateDep}(a) and (b). As shown in the bias plots, the proposed \textbf{generalized RPS} (orange) maintains strict unbiasedness across all error rates, effectively neutralizing both the coherent ZZ noise and the highly heterogeneous depolarizing noise. In contrast, the unmitigated baseline (yellow) exhibits a linear accumulation of bias. The insets display the logarithmic estimation variance, indicating that the Generalized RPS incurs a controlled sampling overhead to achieve this strictly unbiased recovery.

We further explicitly benchmark our method against the \emph{robust shadow estimation} (RSE) protocol~\cite{chen2021robust}. The results highlight the critical advantage of our approach under realistic conditions. While RSE (purple) suppresses noise compared to the unmitigated baseline, it retains a measurable residual systematic bias. This occurs because standard RSE relies on a gate-independent assumption. In contrast, our generalized Clifford estimator (orange, green) accounts for the specific error channel of each individual gate, thereby eliminating the residual bias completely.

To demonstrate the universality of our framework, we extend the numerical validation beyond the native -S-CZ-H- structure to standard random Clifford measurements. This step is crucial to show that the proposed generalized estimator is not limited to specific circuits but is applicable to arbitrary ensembles. By applying our generalized formalism to the random Clifford ensemble (green), we verify that the property of strict unbiasedness holds regardless of the underlying circuit structure. Furthermore, as shown in Fig.~\ref{fig:ApGateDep}(c, d), the generalized RPS (orange) exhibits a significantly lower variance compared to the generalized Clifford estimator. This advantage arises because, compared with Clifford circuits, the phase shadow ensemble features a reduction in circuit depth.

\section*{Supplementary Note 11. --Comment on fermionic analogues}

While the scheme of the main text has been concerned with qubits, as they are common in a wealth of physical platforms, it should be 
clear that one can also devise fermionic analogues \cite{MatchgateShadows,MatchgateHelsen}. The analog of Clifford circuits in this setting are matchgate circuits or fermionic Gaussian unitaries.
For randomly picked such unitaries, one can show that the 
first three moments of the 
Haar distribution over the continuous group of matchgate circuits are equal to those of the discrete uniform distribution over only the matchgate circuits that are also Clifford unitaries. For this reason, the latter forms what is called a matchgate 3-design \cite{MatchgateShadows}, useful for fermionic classical shadows. A reasonable analog of the present scheme would 
be one that takes the form that first a layer of gates is applied that act as $U(f_1,\dots , 
f_n)^T$ on the fermionic operators $f_1,\dots f_n$ as 
\begin{equation}
U=\frac{1}{2}(\id + i Y)
 \oplus \id,
\end{equation} 
where non-trivial unitaries are applied to any pairs of modes $1,\dots, n$ and the identity otherwise. This is being followed by a diagonal 
unitary $V={\rm diag}(e^{i\theta_1},\dots,e^{i\theta_n})$ for real $\theta_1,\dots, \theta_n$, so that the overall circuit is captured by $UV$. Similarly as above, such restricted fermionic circuits give rise to classical shadows.

\end{appendix}

%


\begin{thebibliography}{74}%
\makeatletter
\providecommand \@ifxundefined [1]{%
 \@ifx{#1\undefined}
}%
\providecommand \@ifnum [1]{%
 \ifnum #1\expandafter \@firstoftwo
 \else \expandafter \@secondoftwo
 \fi
}%
\providecommand \@ifx [1]{%
 \ifx #1\expandafter \@firstoftwo
 \else \expandafter \@secondoftwo
 \fi
}%
\providecommand \natexlab [1]{#1}%
\providecommand \enquote  [1]{``#1''}%
\providecommand \bibnamefont  [1]{#1}%
\providecommand \bibfnamefont [1]{#1}%
\providecommand \citenamefont [1]{#1}%
\providecommand \href@noop [0]{\@secondoftwo}%
\providecommand \href [0]{\begingroup \@sanitize@url \@href}%
\providecommand \@href[1]{\@@startlink{#1}\@@href}%
\providecommand \@@href[1]{\endgroup#1\@@endlink}%
\providecommand \@sanitize@url [0]{\catcode `\\12\catcode `\$12\catcode `\&12\catcode `\#12\catcode `\^12\catcode `\_12\catcode `\%12\relax}%
\providecommand \@@startlink[1]{}%
\providecommand \@@endlink[0]{}%
\providecommand \url  [0]{\begingroup\@sanitize@url \@url }%
\providecommand \@url [1]{\endgroup\@href {#1}{\urlprefix }}%
\providecommand \urlprefix  [0]{URL }%
\providecommand \Eprint [0]{\href }%
\providecommand \doibase [0]{https://doi.org/}%
\providecommand \selectlanguage [0]{\@gobble}%
\providecommand \bibinfo  [0]{\@secondoftwo}%
\providecommand \bibfield  [0]{\@secondoftwo}%
\providecommand \translation [1]{[#1]}%
\providecommand \BibitemOpen [0]{}%
\providecommand \bibitemStop [0]{}%
\providecommand \bibitemNoStop [0]{.\EOS\space}%
\providecommand \EOS [0]{\spacefactor3000\relax}%
\providecommand \BibitemShut  [1]{\csname bibitem#1\endcsname}%
\let\auto@bib@innerbib\@empty
\bibitem [{\citenamefont {Eisert}\ \emph {et~al.}(2020)\citenamefont {Eisert}, \citenamefont {Hangleiter}, \citenamefont {Walk}, \citenamefont {Roth}, \citenamefont {Markham}, \citenamefont {Parekh}, \citenamefont {Chabaud},\ and\ \citenamefont {Kashefi}}]{eisert2020quantum}%
  \BibitemOpen
  \bibfield  {author} {\bibinfo {author} {\bibfnamefont {J.}~\bibnamefont {Eisert}}, \bibinfo {author} {\bibfnamefont {D.}~\bibnamefont {Hangleiter}}, \bibinfo {author} {\bibfnamefont {N.}~\bibnamefont {Walk}}, \bibinfo {author} {\bibfnamefont {I.}~\bibnamefont {Roth}}, \bibinfo {author} {\bibfnamefont {D.}~\bibnamefont {Markham}}, \bibinfo {author} {\bibfnamefont {R.}~\bibnamefont {Parekh}}, \bibinfo {author} {\bibfnamefont {U.}~\bibnamefont {Chabaud}},\ and\ \bibinfo {author} {\bibfnamefont {E.}~\bibnamefont {Kashefi}},\ }\bibfield  {title} {\bibinfo {title} {Quantum certification and benchmarking},\ }\href {https://www.nature.com/articles/s42254-020-0186-4} {\bibfield  {journal} {\bibinfo  {journal} {Nature Rev. Phys.}\ }\textbf {\bibinfo {volume} {2}},\ \bibinfo {pages} {382} (\bibinfo {year} {2020})}\BibitemShut {NoStop}%
\bibitem [{\citenamefont {Kliesch}\ and\ \citenamefont {Roth}(2021)}]{kliesch2021theory}%
  \BibitemOpen
  \bibfield  {author} {\bibinfo {author} {\bibfnamefont {M.}~\bibnamefont {Kliesch}}\ and\ \bibinfo {author} {\bibfnamefont {I.}~\bibnamefont {Roth}},\ }\bibfield  {title} {\bibinfo {title} {Theory of quantum system certification},\ }\href {https://journals.aps.org/prxquantum/abstract/10.1103/PRXQuantum.2.010201} {\bibfield  {journal} {\bibinfo  {journal} {PRX quantum}\ }\textbf {\bibinfo {volume} {2}},\ \bibinfo {pages} {010201} (\bibinfo {year} {2021})}\BibitemShut {NoStop}%
\bibitem [{\citenamefont {Ohliger}\ \emph {et~al.}(2013)\citenamefont {Ohliger}, \citenamefont {Nesme},\ and\ \citenamefont {Eisert}}]{Efficient}%
  \BibitemOpen
  \bibfield  {author} {\bibinfo {author} {\bibfnamefont {M.}~\bibnamefont {Ohliger}}, \bibinfo {author} {\bibfnamefont {V.}~\bibnamefont {Nesme}},\ and\ \bibinfo {author} {\bibfnamefont {J.}~\bibnamefont {Eisert}},\ }\bibfield  {title} {\bibinfo {title} {Efficient and feasible state tomography of quantum many-body systems},\ }\href {https://doi.org/10.48550/arXiv.1204.5735} {\bibfield  {journal} {\bibinfo  {journal} {New J. Phys.}\ }\textbf {\bibinfo {volume} {15}},\ \bibinfo {pages} {015024} (\bibinfo {year} {2013})}\BibitemShut {NoStop}%
\bibitem [{\citenamefont {Elben}\ \emph {et~al.}(2023)\citenamefont {Elben}, \citenamefont {Flammia}, \citenamefont {Huang}, \citenamefont {Kueng}, \citenamefont {Preskill}, \citenamefont {Vermersch},\ and\ \citenamefont {Zoller}}]{elben2023randomized}%
  \BibitemOpen
  \bibfield  {author} {\bibinfo {author} {\bibfnamefont {A.}~\bibnamefont {Elben}}, \bibinfo {author} {\bibfnamefont {S.~T.}\ \bibnamefont {Flammia}}, \bibinfo {author} {\bibfnamefont {H.-Y.}\ \bibnamefont {Huang}}, \bibinfo {author} {\bibfnamefont {R.}~\bibnamefont {Kueng}}, \bibinfo {author} {\bibfnamefont {J.}~\bibnamefont {Preskill}}, \bibinfo {author} {\bibfnamefont {B.}~\bibnamefont {Vermersch}},\ and\ \bibinfo {author} {\bibfnamefont {P.}~\bibnamefont {Zoller}},\ }\bibfield  {title} {\bibinfo {title} {The randomized measurement toolbox},\ }\href {https://www.nature.com/articles/s42254-022-00535-2} {\bibfield  {journal} {\bibinfo  {journal} {Nature Rev. Phys.}\ }\textbf {\bibinfo {volume} {5}},\ \bibinfo {pages} {9} (\bibinfo {year} {2023})}\BibitemShut {NoStop}%
\bibitem [{\citenamefont {Cie{\'s}li{\'n}ski}\ \emph {et~al.}(2024)\citenamefont {Cie{\'s}li{\'n}ski}, \citenamefont {Imai}, \citenamefont {Dziewior}, \citenamefont {G{\"u}hne}, \citenamefont {Knips}, \citenamefont {Laskowski}, \citenamefont {Meinecke}, \citenamefont {Paterek},\ and\ \citenamefont {V{\'e}rtesi}}]{cieslinski2024analysing}%
  \BibitemOpen
  \bibfield  {author} {\bibinfo {author} {\bibfnamefont {P.}~\bibnamefont {Cie{\'s}li{\'n}ski}}, \bibinfo {author} {\bibfnamefont {S.}~\bibnamefont {Imai}}, \bibinfo {author} {\bibfnamefont {J.}~\bibnamefont {Dziewior}}, \bibinfo {author} {\bibfnamefont {O.}~\bibnamefont {G{\"u}hne}}, \bibinfo {author} {\bibfnamefont {L.}~\bibnamefont {Knips}}, \bibinfo {author} {\bibfnamefont {W.}~\bibnamefont {Laskowski}}, \bibinfo {author} {\bibfnamefont {J.}~\bibnamefont {Meinecke}}, \bibinfo {author} {\bibfnamefont {T.}~\bibnamefont {Paterek}},\ and\ \bibinfo {author} {\bibfnamefont {T.}~\bibnamefont {V{\'e}rtesi}},\ }\bibfield  {title} {\bibinfo {title} {Analysing quantum systems with randomised measurements},\ }\href {https://www.sciencedirect.com/science/article/pii/S0370157324003326} {\bibfield  {journal} {\bibinfo  {journal} {Phys. Rep.}\ }\textbf {\bibinfo {volume} {1095}},\ \bibinfo {pages} {1} (\bibinfo {year} {2024})}\BibitemShut {NoStop}%
\bibitem [{\citenamefont {Aaronson}(2019)}]{aaronson2019shadow}%
  \BibitemOpen
  \bibfield  {author} {\bibinfo {author} {\bibfnamefont {S.}~\bibnamefont {Aaronson}},\ }\bibfield  {title} {\bibinfo {title} {Shadow tomography of quantum states},\ }\href {https://epubs.siam.org/doi/abs/10.1137/18M120275X} {\bibfield  {journal} {\bibinfo  {journal} {SIAM J. Comp.}\ }\textbf {\bibinfo {volume} {49}},\ \bibinfo {pages} {STOC18} (\bibinfo {year} {2019})}\BibitemShut {NoStop}%
\bibitem [{\citenamefont {Huang}\ \emph {et~al.}(2020)\citenamefont {Huang}, \citenamefont {Kueng},\ and\ \citenamefont {Preskill}}]{huang2020predicting}%
  \BibitemOpen
  \bibfield  {author} {\bibinfo {author} {\bibfnamefont {H.-Y.}\ \bibnamefont {Huang}}, \bibinfo {author} {\bibfnamefont {R.}~\bibnamefont {Kueng}},\ and\ \bibinfo {author} {\bibfnamefont {J.}~\bibnamefont {Preskill}},\ }\bibfield  {title} {\bibinfo {title} {Predicting many properties of a quantum system from very few measurements},\ }\href {https://www.nature.com/articles/s41567-020-0932-7} {\bibfield  {journal} {\bibinfo  {journal} {Nature Phys.}\ }\textbf {\bibinfo {volume} {16}},\ \bibinfo {pages} {1050} (\bibinfo {year} {2020})}\BibitemShut {NoStop}%
\bibitem [{\citenamefont {Hu}\ and\ \citenamefont {You}(2022)}]{Hu2022Hamiltonian}%
  \BibitemOpen
  \bibfield  {author} {\bibinfo {author} {\bibfnamefont {H.-Y.}\ \bibnamefont {Hu}}\ and\ \bibinfo {author} {\bibfnamefont {Y.-Z.}\ \bibnamefont {You}},\ }\bibfield  {title} {\bibinfo {title} {Hamiltonian-driven shadow tomography of quantum states},\ }\href {https://doi.org/10.1103/PhysRevResearch.4.013054} {\bibfield  {journal} {\bibinfo  {journal} {Phys. Rev. Research}\ }\textbf {\bibinfo {volume} {4}},\ \bibinfo {pages} {013054} (\bibinfo {year} {2022})}\BibitemShut {NoStop}%
\bibitem [{\citenamefont {Hu}\ \emph {et~al.}(2023)\citenamefont {Hu}, \citenamefont {Choi},\ and\ \citenamefont {You}}]{hu2023classical}%
  \BibitemOpen
  \bibfield  {author} {\bibinfo {author} {\bibfnamefont {H.-Y.}\ \bibnamefont {Hu}}, \bibinfo {author} {\bibfnamefont {S.}~\bibnamefont {Choi}},\ and\ \bibinfo {author} {\bibfnamefont {Y.-Z.}\ \bibnamefont {You}},\ }\bibfield  {title} {\bibinfo {title} {Classical shadow tomography with locally scrambled quantum dynamics},\ }\href {https://journals.aps.org/prresearch/abstract/10.1103/PhysRevResearch.5.023027} {\bibfield  {journal} {\bibinfo  {journal} {Phys. Rev. Res.}\ }\textbf {\bibinfo {volume} {5}},\ \bibinfo {pages} {023027} (\bibinfo {year} {2023})}\BibitemShut {NoStop}%
\bibitem [{\citenamefont {Zhang}\ \emph {et~al.}(2024)\citenamefont {Zhang}, \citenamefont {Liu},\ and\ \citenamefont {Zhou}}]{zhang2024minimal}%
  \BibitemOpen
  \bibfield  {author} {\bibinfo {author} {\bibfnamefont {Q.}~\bibnamefont {Zhang}}, \bibinfo {author} {\bibfnamefont {Q.}~\bibnamefont {Liu}},\ and\ \bibinfo {author} {\bibfnamefont {Y.}~\bibnamefont {Zhou}},\ }\bibfield  {title} {\bibinfo {title} {{Minimal-Clifford shadow estimation by mutually unbiased bases}},\ }\href {https://journals.aps.org/prapplied/abstract/10.1103/PhysRevApplied.21.064001} {\bibfield  {journal} {\bibinfo  {journal} {Phys. Rev. Appl.}\ }\textbf {\bibinfo {volume} {21}},\ \bibinfo {pages} {064001} (\bibinfo {year} {2024})}\BibitemShut {NoStop}%
\bibitem [{\citenamefont {Park}\ \emph {et~al.}(2023)\citenamefont {Park}, \citenamefont {Teo},\ and\ \citenamefont {Jeong}}]{park2023resource}%
  \BibitemOpen
  \bibfield  {author} {\bibinfo {author} {\bibfnamefont {G.}~\bibnamefont {Park}}, \bibinfo {author} {\bibfnamefont {Y.~S.}\ \bibnamefont {Teo}},\ and\ \bibinfo {author} {\bibfnamefont {H.}~\bibnamefont {Jeong}},\ }\bibfield  {title} {\bibinfo {title} {Resource-efficient shadow tomography using equatorial measurements},\ }\href {https://inspirehep.net/literature/2726189} {\bibfield  {journal} {\bibinfo  {journal} {arXiv:2311.14622}\ } (\bibinfo {year} {2023})}\BibitemShut {NoStop}%
\bibitem [{\citenamefont {Zhang}\ \emph {et~al.}(2021)\citenamefont {Zhang}, \citenamefont {Sun}, \citenamefont {Fang}, \citenamefont {Zhang}, \citenamefont {Yuan},\ and\ \citenamefont {Lu}}]{zhang2021experimental}%
  \BibitemOpen
  \bibfield  {author} {\bibinfo {author} {\bibfnamefont {T.}~\bibnamefont {Zhang}}, \bibinfo {author} {\bibfnamefont {J.}~\bibnamefont {Sun}}, \bibinfo {author} {\bibfnamefont {X.-X.}\ \bibnamefont {Fang}}, \bibinfo {author} {\bibfnamefont {X.-M.}\ \bibnamefont {Zhang}}, \bibinfo {author} {\bibfnamefont {X.}~\bibnamefont {Yuan}},\ and\ \bibinfo {author} {\bibfnamefont {H.}~\bibnamefont {Lu}},\ }\bibfield  {title} {\bibinfo {title} {Experimental quantum state measurement with classical shadows},\ }\href {https://journals.aps.org/prl/abstract/10.1103/PhysRevLett.127.200501} {\bibfield  {journal} {\bibinfo  {journal} {Phys. Rev. Lett.}\ }\textbf {\bibinfo {volume} {127}},\ \bibinfo {pages} {200501} (\bibinfo {year} {2021})}\BibitemShut {NoStop}%
\bibitem [{\citenamefont {Hadfield}\ \emph {et~al.}(2022)\citenamefont {Hadfield}, \citenamefont {Bravyi}, \citenamefont {Raymond},\ and\ \citenamefont {Mezzacapo}}]{hadfield2022measurements}%
  \BibitemOpen
  \bibfield  {author} {\bibinfo {author} {\bibfnamefont {C.}~\bibnamefont {Hadfield}}, \bibinfo {author} {\bibfnamefont {S.}~\bibnamefont {Bravyi}}, \bibinfo {author} {\bibfnamefont {R.}~\bibnamefont {Raymond}},\ and\ \bibinfo {author} {\bibfnamefont {A.}~\bibnamefont {Mezzacapo}},\ }\bibfield  {title} {\bibinfo {title} {{Measurements of quantum Hamiltonians with locally-biased classical shadows}},\ }\href {https://link.springer.com/article/10.1007/s00220-022-04343-8} {\bibfield  {journal} {\bibinfo  {journal} {Comm. Math. Phys.}\ }\textbf {\bibinfo {volume} {391}},\ \bibinfo {pages} {951} (\bibinfo {year} {2022})}\BibitemShut {NoStop}%
\bibitem [{\citenamefont {Nguyen}\ \emph {et~al.}(2022)\citenamefont {Nguyen}, \citenamefont {B{\"o}nsel}, \citenamefont {Steinberg},\ and\ \citenamefont {G{\"u}hne}}]{nguyen2022optimizing}%
  \BibitemOpen
  \bibfield  {author} {\bibinfo {author} {\bibfnamefont {H.~C.}\ \bibnamefont {Nguyen}}, \bibinfo {author} {\bibfnamefont {J.~L.}\ \bibnamefont {B{\"o}nsel}}, \bibinfo {author} {\bibfnamefont {J.}~\bibnamefont {Steinberg}},\ and\ \bibinfo {author} {\bibfnamefont {O.}~\bibnamefont {G{\"u}hne}},\ }\bibfield  {title} {\bibinfo {title} {Optimizing shadow tomography with generalized measurements},\ }\href {https://journals.aps.org/prl/abstract/10.1103/PhysRevLett.129.220502} {\bibfield  {journal} {\bibinfo  {journal} {Phys. Rev. Lett.}\ }\textbf {\bibinfo {volume} {129}},\ \bibinfo {pages} {220502} (\bibinfo {year} {2022})}\BibitemShut {NoStop}%
\bibitem [{\citenamefont {Wu}\ \emph {et~al.}(2023)\citenamefont {Wu}, \citenamefont {Sun}, \citenamefont {Huang},\ and\ \citenamefont {Yuan}}]{wu2023overlapped}%
  \BibitemOpen
  \bibfield  {author} {\bibinfo {author} {\bibfnamefont {B.}~\bibnamefont {Wu}}, \bibinfo {author} {\bibfnamefont {J.}~\bibnamefont {Sun}}, \bibinfo {author} {\bibfnamefont {Q.}~\bibnamefont {Huang}},\ and\ \bibinfo {author} {\bibfnamefont {X.}~\bibnamefont {Yuan}},\ }\bibfield  {title} {\bibinfo {title} {Overlapped grouping measurement: A unified framework for measuring quantum states},\ }\href {https://doi.org/10.22331/q-2023-01-13-896} {\bibfield  {journal} {\bibinfo  {journal} {Quantum}\ }\textbf {\bibinfo {volume} {7}},\ \bibinfo {pages} {896} (\bibinfo {year} {2023})}\BibitemShut {NoStop}%
\bibitem [{\citenamefont {Van~Kirk}\ \emph {et~al.}(2022)\citenamefont {Van~Kirk}, \citenamefont {Cotler}, \citenamefont {Huang},\ and\ \citenamefont {Lukin}}]{van2022hardware}%
  \BibitemOpen
  \bibfield  {author} {\bibinfo {author} {\bibfnamefont {K.}~\bibnamefont {Van~Kirk}}, \bibinfo {author} {\bibfnamefont {J.}~\bibnamefont {Cotler}}, \bibinfo {author} {\bibfnamefont {H.-Y.}\ \bibnamefont {Huang}},\ and\ \bibinfo {author} {\bibfnamefont {M.~D.}\ \bibnamefont {Lukin}},\ }\bibfield  {title} {\bibinfo {title} {Hardware-efficient learning of quantum many-body states},\ }\href@noop {} {\bibfield  {journal} {\bibinfo  {journal} {arXiv preprint arXiv:2212.06084}\ } (\bibinfo {year} {2022})}\BibitemShut {NoStop}%
\bibitem [{\citenamefont {Zhou}\ and\ \citenamefont {Liu}(2023)}]{liu2023group}%
  \BibitemOpen
  \bibfield  {author} {\bibinfo {author} {\bibfnamefont {Y.}~\bibnamefont {Zhou}}\ and\ \bibinfo {author} {\bibfnamefont {Q.}~\bibnamefont {Liu}},\ }\bibfield  {title} {\bibinfo {title} {Performance analysis of multi-shot shadow estimation},\ }\href {https://quantum-journal.org/papers/q-2023-06-29-1044/pdf/} {\bibfield  {journal} {\bibinfo  {journal} {Quantum}\ }\textbf {\bibinfo {volume} {7}},\ \bibinfo {pages} {1044} (\bibinfo {year} {2023})}\BibitemShut {NoStop}%
\bibitem [{\citenamefont {Gresch}\ and\ \citenamefont {Kliesch}(2025)}]{gresch2025guaranteed}%
  \BibitemOpen
  \bibfield  {author} {\bibinfo {author} {\bibfnamefont {A.}~\bibnamefont {Gresch}}\ and\ \bibinfo {author} {\bibfnamefont {M.}~\bibnamefont {Kliesch}},\ }\bibfield  {title} {\bibinfo {title} {{Guaranteed efficient energy estimation of quantum many-body Hamiltonians using ShadowGrouping}},\ }\href {https://www.nature.com/articles/s41467-024-54859-x.pdf} {\bibfield  {journal} {\bibinfo  {journal} {Nature Comm.}\ }\textbf {\bibinfo {volume} {16}},\ \bibinfo {pages} {689} (\bibinfo {year} {2025})}\BibitemShut {NoStop}%
\bibitem [{\citenamefont {Helsen}\ and\ \citenamefont {Walter}(2023)}]{helsen2023thrifty}%
  \BibitemOpen
  \bibfield  {author} {\bibinfo {author} {\bibfnamefont {J.}~\bibnamefont {Helsen}}\ and\ \bibinfo {author} {\bibfnamefont {M.}~\bibnamefont {Walter}},\ }\bibfield  {title} {\bibinfo {title} {{Thrifty shadow estimation: Reusing quantum circuits and bounding tails}},\ }\href {https://journals.aps.org/prl/abstract/10.1103/PhysRevLett.131.240602} {\bibfield  {journal} {\bibinfo  {journal} {Phys. Rev. Lett.}\ }\textbf {\bibinfo {volume} {131}},\ \bibinfo {pages} {240602} (\bibinfo {year} {2023})}\BibitemShut {NoStop}%
\bibitem [{\citenamefont {Mao}\ \emph {et~al.}(2025)\citenamefont {Mao}, \citenamefont {Yi},\ and\ \citenamefont {Zhu}}]{mao2025qudit}%
  \BibitemOpen
  \bibfield  {author} {\bibinfo {author} {\bibfnamefont {C.}~\bibnamefont {Mao}}, \bibinfo {author} {\bibfnamefont {C.}~\bibnamefont {Yi}},\ and\ \bibinfo {author} {\bibfnamefont {H.}~\bibnamefont {Zhu}},\ }\bibfield  {title} {\bibinfo {title} {Qudit shadow estimation based on the clifford group and the power of a single magic gate},\ }\href {https://doi.org/10.1103/PhysRevLett.134.160801} {\bibfield  {journal} {\bibinfo  {journal} {Phys. Rev. Lett.}\ }\textbf {\bibinfo {volume} {134}},\ \bibinfo {pages} {160801} (\bibinfo {year} {2025})}\BibitemShut {NoStop}%
\bibitem [{\citenamefont {Bertoni}\ \emph {et~al.}(2024)\citenamefont {Bertoni}, \citenamefont {Haferkamp}, \citenamefont {Hinsche}, \citenamefont {Ioannou}, \citenamefont {Eisert},\ and\ \citenamefont {Pashayan}}]{bertoni2024shallow}%
  \BibitemOpen
  \bibfield  {author} {\bibinfo {author} {\bibfnamefont {C.}~\bibnamefont {Bertoni}}, \bibinfo {author} {\bibfnamefont {J.}~\bibnamefont {Haferkamp}}, \bibinfo {author} {\bibfnamefont {M.}~\bibnamefont {Hinsche}}, \bibinfo {author} {\bibfnamefont {M.}~\bibnamefont {Ioannou}}, \bibinfo {author} {\bibfnamefont {J.}~\bibnamefont {Eisert}},\ and\ \bibinfo {author} {\bibfnamefont {H.}~\bibnamefont {Pashayan}},\ }\bibfield  {title} {\bibinfo {title} {{Shallow shadows: Expectation estimation using low-depth random Clifford circuits}},\ }\href {https://journals.aps.org/prl/abstract/10.1103/PhysRevLett.133.020602} {\bibfield  {journal} {\bibinfo  {journal} {Phys. Rev. Lett.}\ }\textbf {\bibinfo {volume} {133}},\ \bibinfo {pages} {020602} (\bibinfo {year} {2024})}\BibitemShut {NoStop}%
\bibitem [{\citenamefont {Akhtar}\ \emph {et~al.}(2023)\citenamefont {Akhtar}, \citenamefont {Hu},\ and\ \citenamefont {You}}]{akhtar2023scalable}%
  \BibitemOpen
  \bibfield  {author} {\bibinfo {author} {\bibfnamefont {A.~A.}\ \bibnamefont {Akhtar}}, \bibinfo {author} {\bibfnamefont {H.-Y.}\ \bibnamefont {Hu}},\ and\ \bibinfo {author} {\bibfnamefont {Y.-Z.}\ \bibnamefont {You}},\ }\bibfield  {title} {\bibinfo {title} {Scalable and flexible classical shadow tomography with tensor networks},\ }\href {https://quantum-journal.org/papers/q-2023-06-01-1026/} {\bibfield  {journal} {\bibinfo  {journal} {Quantum}\ }\textbf {\bibinfo {volume} {7}},\ \bibinfo {pages} {1026} (\bibinfo {year} {2023})}\BibitemShut {NoStop}%
\bibitem [{\citenamefont {Schuster}\ \emph {et~al.}(2025)\citenamefont {Schuster}, \citenamefont {Haferkamp},\ and\ \citenamefont {Huang}}]{schuster2024random}%
  \BibitemOpen
  \bibfield  {author} {\bibinfo {author} {\bibfnamefont {T.}~\bibnamefont {Schuster}}, \bibinfo {author} {\bibfnamefont {J.}~\bibnamefont {Haferkamp}},\ and\ \bibinfo {author} {\bibfnamefont {H.-Y.}\ \bibnamefont {Huang}},\ }\bibfield  {title} {\bibinfo {title} {Random unitaries in extremely low depth},\ }\href {https://www.science.org/doi/abs/10.1126/science.adv8590} {\bibfield  {journal} {\bibinfo  {journal} {Science}\ }\textbf {\bibinfo {volume} {389}},\ \bibinfo {pages} {92} (\bibinfo {year} {2025})}\BibitemShut {NoStop}%
\bibitem [{\citenamefont {Chen}\ \emph {et~al.}(2021)\citenamefont {Chen}, \citenamefont {Yu}, \citenamefont {Zeng},\ and\ \citenamefont {Flammia}}]{chen2021robust}%
  \BibitemOpen
  \bibfield  {author} {\bibinfo {author} {\bibfnamefont {S.}~\bibnamefont {Chen}}, \bibinfo {author} {\bibfnamefont {W.}~\bibnamefont {Yu}}, \bibinfo {author} {\bibfnamefont {P.}~\bibnamefont {Zeng}},\ and\ \bibinfo {author} {\bibfnamefont {S.~T.}\ \bibnamefont {Flammia}},\ }\bibfield  {title} {\bibinfo {title} {Robust shadow estimation},\ }\href {https://journals.aps.org/prxquantum/pdf/10.1103/PRXQuantum.2.030348} {\bibfield  {journal} {\bibinfo  {journal} {PRX Quantum}\ }\textbf {\bibinfo {volume} {2}},\ \bibinfo {pages} {030348} (\bibinfo {year} {2021})}\BibitemShut {NoStop}%
\bibitem [{\citenamefont {Koh}\ and\ \citenamefont {Grewal}(2022)}]{koh2022classical}%
  \BibitemOpen
  \bibfield  {author} {\bibinfo {author} {\bibfnamefont {D.~E.}\ \bibnamefont {Koh}}\ and\ \bibinfo {author} {\bibfnamefont {S.}~\bibnamefont {Grewal}},\ }\bibfield  {title} {\bibinfo {title} {Classical shadows with noise},\ }\href {http://quantum-journal.org/papers/q-2022-08-16-776/} {\bibfield  {journal} {\bibinfo  {journal} {Quantum}\ }\textbf {\bibinfo {volume} {6}},\ \bibinfo {pages} {776} (\bibinfo {year} {2022})}\BibitemShut {NoStop}%
\bibitem [{\citenamefont {Onorati}\ \emph {et~al.}(2024)\citenamefont {Onorati}, \citenamefont {Kitzinger}, \citenamefont {Helsen}, \citenamefont {Ioannou}, \citenamefont {Werner}, \citenamefont {Roth},\ and\ \citenamefont {Eisert}}]{onorati2024noise}%
  \BibitemOpen
  \bibfield  {author} {\bibinfo {author} {\bibfnamefont {E.}~\bibnamefont {Onorati}}, \bibinfo {author} {\bibfnamefont {J.}~\bibnamefont {Kitzinger}}, \bibinfo {author} {\bibfnamefont {J.}~\bibnamefont {Helsen}}, \bibinfo {author} {\bibfnamefont {M.}~\bibnamefont {Ioannou}}, \bibinfo {author} {\bibfnamefont {A.}~\bibnamefont {Werner}}, \bibinfo {author} {\bibfnamefont {I.}~\bibnamefont {Roth}},\ and\ \bibinfo {author} {\bibfnamefont {J.}~\bibnamefont {Eisert}},\ }\bibfield  {title} {\bibinfo {title} {Noise-mitigated randomized measurements and self-calibrating shadow estimation},\ }\href {https://arxiv.org/pdf/2403.04751} {\bibfield  {journal} {\bibinfo  {journal} {arXiv:2403.04751}\ } (\bibinfo {year} {2024})}\BibitemShut {NoStop}%
\bibitem [{\citenamefont {Hu}\ \emph {et~al.}(2025)\citenamefont {Hu}, \citenamefont {Gu}, \citenamefont {Majumder}, \citenamefont {Ren}, \citenamefont {Zhang}, \citenamefont {Wang}, \citenamefont {You}, \citenamefont {Minev}, \citenamefont {Yelin},\ and\ \citenamefont {Seif}}]{hu2025demonstration}%
  \BibitemOpen
  \bibfield  {author} {\bibinfo {author} {\bibfnamefont {H.-Y.}\ \bibnamefont {Hu}}, \bibinfo {author} {\bibfnamefont {A.}~\bibnamefont {Gu}}, \bibinfo {author} {\bibfnamefont {S.}~\bibnamefont {Majumder}}, \bibinfo {author} {\bibfnamefont {H.}~\bibnamefont {Ren}}, \bibinfo {author} {\bibfnamefont {Y.}~\bibnamefont {Zhang}}, \bibinfo {author} {\bibfnamefont {D.~S.}\ \bibnamefont {Wang}}, \bibinfo {author} {\bibfnamefont {Y.-Z.}\ \bibnamefont {You}}, \bibinfo {author} {\bibfnamefont {Z.}~\bibnamefont {Minev}}, \bibinfo {author} {\bibfnamefont {S.~F.}\ \bibnamefont {Yelin}},\ and\ \bibinfo {author} {\bibfnamefont {A.}~\bibnamefont {Seif}},\ }\bibfield  {title} {\bibinfo {title} {Demonstration of robust and efficient quantum property learning with shallow shadows},\ }\href {https://www.nature.com/articles/s41467-025-57349-w.pdf} {\bibfield  {journal} {\bibinfo  {journal} {Nature Comm.}\ }\textbf {\bibinfo {volume} {16}},\ \bibinfo {pages} {2943} (\bibinfo {year} {2025})}\BibitemShut {NoStop}%
\bibitem [{\citenamefont {Farias}\ \emph {et~al.}(2025)\citenamefont {Farias}, \citenamefont {Peddinti}, \citenamefont {Roth},\ and\ \citenamefont {Aolita}}]{farias2024robust}%
  \BibitemOpen
  \bibfield  {author} {\bibinfo {author} {\bibfnamefont {R.~M.~S.}\ \bibnamefont {Farias}}, \bibinfo {author} {\bibfnamefont {R.~D.}\ \bibnamefont {Peddinti}}, \bibinfo {author} {\bibfnamefont {I.}~\bibnamefont {Roth}},\ and\ \bibinfo {author} {\bibfnamefont {L.}~\bibnamefont {Aolita}},\ }\bibfield  {title} {\bibinfo {title} {Robust ultra-shallow shadows},\ }\href {https://doi.org/10.1088/2058-9565/adc14f} {\bibfield  {journal} {\bibinfo  {journal} {Quant. Sc. Tech.}\ }\textbf {\bibinfo {volume} {10}},\ \bibinfo {pages} {025044} (\bibinfo {year} {2025})}\BibitemShut {NoStop}%
\bibitem [{\citenamefont {Helsen}\ \emph {et~al.}(2022{\natexlab{a}})\citenamefont {Helsen}, \citenamefont {Roth}, \citenamefont {Onorati}, \citenamefont {Werner},\ and\ \citenamefont {Eisert}}]{helsen2022general}%
  \BibitemOpen
  \bibfield  {author} {\bibinfo {author} {\bibfnamefont {J.}~\bibnamefont {Helsen}}, \bibinfo {author} {\bibfnamefont {I.}~\bibnamefont {Roth}}, \bibinfo {author} {\bibfnamefont {E.}~\bibnamefont {Onorati}}, \bibinfo {author} {\bibfnamefont {A.~H.}\ \bibnamefont {Werner}},\ and\ \bibinfo {author} {\bibfnamefont {J.}~\bibnamefont {Eisert}},\ }\bibfield  {title} {\bibinfo {title} {General framework for randomized benchmarking},\ }\href {https://journals.aps.org/prxquantum/abstract/10.1103/PRXQuantum.3.020357} {\bibfield  {journal} {\bibinfo  {journal} {PRX Quantum}\ }\textbf {\bibinfo {volume} {3}},\ \bibinfo {pages} {020357} (\bibinfo {year} {2022}{\natexlab{a}})}\BibitemShut {NoStop}%
\bibitem [{\citenamefont {Liu}\ \emph {et~al.}(2024{\natexlab{a}})\citenamefont {Liu}, \citenamefont {Xie}, \citenamefont {Xu},\ and\ \citenamefont {Ma}}]{liu2024group}%
  \BibitemOpen
  \bibfield  {author} {\bibinfo {author} {\bibfnamefont {G.}~\bibnamefont {Liu}}, \bibinfo {author} {\bibfnamefont {Z.}~\bibnamefont {Xie}}, \bibinfo {author} {\bibfnamefont {Z.}~\bibnamefont {Xu}},\ and\ \bibinfo {author} {\bibfnamefont {X.}~\bibnamefont {Ma}},\ }\bibfield  {title} {\bibinfo {title} {Group twirling and noise tailoring for multiqubit controlled phase gates},\ }\href {https://journals.aps.org/prresearch/abstract/10.1103/PhysRevResearch.6.043221} {\bibfield  {journal} {\bibinfo  {journal} {Phys. Rev. Res.}\ }\textbf {\bibinfo {volume} {6}},\ \bibinfo {pages} {043221} (\bibinfo {year} {2024}{\natexlab{a}})}\BibitemShut {NoStop}%
\bibitem [{\citenamefont {Cioli}\ \emph {et~al.}(2024)\citenamefont {Cioli}, \citenamefont {Ercolessi}, \citenamefont {Ippoliti}, \citenamefont {Turkeshi},\ and\ \citenamefont {Piroli}}]{cioli2024approximate}%
  \BibitemOpen
  \bibfield  {author} {\bibinfo {author} {\bibfnamefont {R.}~\bibnamefont {Cioli}}, \bibinfo {author} {\bibfnamefont {E.}~\bibnamefont {Ercolessi}}, \bibinfo {author} {\bibfnamefont {M.}~\bibnamefont {Ippoliti}}, \bibinfo {author} {\bibfnamefont {X.}~\bibnamefont {Turkeshi}},\ and\ \bibinfo {author} {\bibfnamefont {L.}~\bibnamefont {Piroli}},\ }\bibfield  {title} {\bibinfo {title} {Approximate inverse measurement channel for shallow shadows},\ }\href {https://arxiv.org/pdf/2407.11813} {\bibfield  {journal} {\bibinfo  {journal} {arXiv:2407.11813}\ } (\bibinfo {year} {2024})}\BibitemShut {NoStop}%
\bibitem [{\citenamefont {Brieger}\ \emph {et~al.}(2025)\citenamefont {Brieger}, \citenamefont {Heinrich}, \citenamefont {Roth},\ and\ \citenamefont {Kliesch}}]{brieger2025stability}%
  \BibitemOpen
  \bibfield  {author} {\bibinfo {author} {\bibfnamefont {R.}~\bibnamefont {Brieger}}, \bibinfo {author} {\bibfnamefont {M.}~\bibnamefont {Heinrich}}, \bibinfo {author} {\bibfnamefont {I.}~\bibnamefont {Roth}},\ and\ \bibinfo {author} {\bibfnamefont {M.}~\bibnamefont {Kliesch}},\ }\bibfield  {title} {\bibinfo {title} {Stability of classical shadows under gate-dependent noise},\ }\href {https://journals.aps.org/prl/abstract/10.1103/PhysRevLett.134.090801} {\bibfield  {journal} {\bibinfo  {journal} {Phys. Rev. Lett.}\ }\textbf {\bibinfo {volume} {134}},\ \bibinfo {pages} {090801} (\bibinfo {year} {2025})}\BibitemShut {NoStop}%
\bibitem [{\citenamefont {Mandel}\ \emph {et~al.}(2003)\citenamefont {Mandel}, \citenamefont {Greiner}, \citenamefont {Widera}, \citenamefont {Rom}, \citenamefont {H{\"a}nsch},\ and\ \citenamefont {Bloch}}]{Mandel-Nature-2003}%
  \BibitemOpen
  \bibfield  {author} {\bibinfo {author} {\bibfnamefont {O.}~\bibnamefont {Mandel}}, \bibinfo {author} {\bibfnamefont {M.}~\bibnamefont {Greiner}}, \bibinfo {author} {\bibfnamefont {A.}~\bibnamefont {Widera}}, \bibinfo {author} {\bibfnamefont {T.}~\bibnamefont {Rom}}, \bibinfo {author} {\bibfnamefont {T.~W.}\ \bibnamefont {H{\"a}nsch}},\ and\ \bibinfo {author} {\bibfnamefont {I.}~\bibnamefont {Bloch}},\ }\bibfield  {title} {\bibinfo {title} {Controlled collisions for multi-particle entanglement of optically trapped atoms},\ }\href {https://doi.org/10.1038/nature02008} {\bibfield  {journal} {\bibinfo  {journal} {Nature}\ }\textbf {\bibinfo {volume} {425}},\ \bibinfo {pages} {937} (\bibinfo {year} {2003})}\BibitemShut {NoStop}%
\bibitem [{\citenamefont {Evered}\ \emph {et~al.}(2023)\citenamefont {Evered}, \citenamefont {Bluvstein}, \citenamefont {Kalinowski}, \citenamefont {Ebadi}, \citenamefont {Manovitz}, \citenamefont {Zhou}, \citenamefont {Li}, \citenamefont {Geim}, \citenamefont {Wang}, \citenamefont {Maskara} \emph {et~al.}}]{evered2023high}%
  \BibitemOpen
  \bibfield  {author} {\bibinfo {author} {\bibfnamefont {S.~J.}\ \bibnamefont {Evered}}, \bibinfo {author} {\bibfnamefont {D.}~\bibnamefont {Bluvstein}}, \bibinfo {author} {\bibfnamefont {M.}~\bibnamefont {Kalinowski}}, \bibinfo {author} {\bibfnamefont {S.}~\bibnamefont {Ebadi}}, \bibinfo {author} {\bibfnamefont {T.}~\bibnamefont {Manovitz}}, \bibinfo {author} {\bibfnamefont {H.}~\bibnamefont {Zhou}}, \bibinfo {author} {\bibfnamefont {S.~H.}\ \bibnamefont {Li}}, \bibinfo {author} {\bibfnamefont {A.~A.}\ \bibnamefont {Geim}}, \bibinfo {author} {\bibfnamefont {T.~T.}\ \bibnamefont {Wang}}, \bibinfo {author} {\bibfnamefont {N.}~\bibnamefont {Maskara}}, \emph {et~al.},\ }\bibfield  {title} {\bibinfo {title} {High-fidelity parallel entangling gates on a neutral-atom quantum computer},\ }\href {http://nature.com/articles/s41586-023-06481-y} {\bibfield  {journal} {\bibinfo  {journal} {Nature}\ }\textbf {\bibinfo {volume} {622}},\ \bibinfo {pages} {268} (\bibinfo {year} {2023})}\BibitemShut {NoStop}%
\bibitem [{\citenamefont {Figgatt}\ \emph {et~al.}(2019)\citenamefont {Figgatt}, \citenamefont {Ostrander}, \citenamefont {Linke}, \citenamefont {Landsman}, \citenamefont {Zhu}, \citenamefont {Maslov},\ and\ \citenamefont {Monroe}}]{figgatt2019parallel}%
  \BibitemOpen
  \bibfield  {author} {\bibinfo {author} {\bibfnamefont {C.}~\bibnamefont {Figgatt}}, \bibinfo {author} {\bibfnamefont {A.}~\bibnamefont {Ostrander}}, \bibinfo {author} {\bibfnamefont {N.~M.}\ \bibnamefont {Linke}}, \bibinfo {author} {\bibfnamefont {K.~A.}\ \bibnamefont {Landsman}}, \bibinfo {author} {\bibfnamefont {D.}~\bibnamefont {Zhu}}, \bibinfo {author} {\bibfnamefont {D.}~\bibnamefont {Maslov}},\ and\ \bibinfo {author} {\bibfnamefont {C.}~\bibnamefont {Monroe}},\ }\bibfield  {title} {\bibinfo {title} {Parallel entangling operations on a universal ion-trap quantum computer},\ }\href {https://www.nature.com/articles/s41586-019-1427-5} {\bibfield  {journal} {\bibinfo  {journal} {Nature}\ }\textbf {\bibinfo {volume} {572}},\ \bibinfo {pages} {368} (\bibinfo {year} {2019})}\BibitemShut {NoStop}%
\bibitem [{\citenamefont {Nakata}\ \emph {et~al.}(2014)\citenamefont {Nakata}, \citenamefont {Koashi},\ and\ \citenamefont {Murao}}]{nakata2014generating}%
  \BibitemOpen
  \bibfield  {author} {\bibinfo {author} {\bibfnamefont {Y.}~\bibnamefont {Nakata}}, \bibinfo {author} {\bibfnamefont {M.}~\bibnamefont {Koashi}},\ and\ \bibinfo {author} {\bibfnamefont {M.}~\bibnamefont {Murao}},\ }\bibfield  {title} {\bibinfo {title} {Generating a state t-design by diagonal quantum circuits},\ }\href {https://iopscience.iop.org/article/10.1088/1367-2630/16/5/053043/pdf} {\bibfield  {journal} {\bibinfo  {journal} {New J. Phys.}\ }\textbf {\bibinfo {volume} {16}},\ \bibinfo {pages} {053043} (\bibinfo {year} {2014})}\BibitemShut {NoStop}%
\bibitem [{\citenamefont {Nechita}\ and\ \citenamefont {Singh}(2021)}]{nechita2021graphical}%
  \BibitemOpen
  \bibfield  {author} {\bibinfo {author} {\bibfnamefont {I.}~\bibnamefont {Nechita}}\ and\ \bibinfo {author} {\bibfnamefont {S.}~\bibnamefont {Singh}},\ }\bibfield  {title} {\bibinfo {title} {A graphical calculus for integration over random diagonal unitary matrices},\ }\href {https://www.sciencedirect.com/science/article/pii/S0024379520305681} {\bibfield  {journal} {\bibinfo  {journal} {Lin. Alg. Appl.}\ }\textbf {\bibinfo {volume} {613}},\ \bibinfo {pages} {46} (\bibinfo {year} {2021})}\BibitemShut {NoStop}%
\bibitem [{\citenamefont {Liu}\ \emph {et~al.}(2024{\natexlab{b}})\citenamefont {Liu}, \citenamefont {Hao},\ and\ \citenamefont {Hu}}]{liu2024predicting}%
  \BibitemOpen
  \bibfield  {author} {\bibinfo {author} {\bibfnamefont {Z.}~\bibnamefont {Liu}}, \bibinfo {author} {\bibfnamefont {Z.}~\bibnamefont {Hao}},\ and\ \bibinfo {author} {\bibfnamefont {H.-Y.}\ \bibnamefont {Hu}},\ }\bibfield  {title} {\bibinfo {title} {Predicting arbitrary state properties from single hamiltonian quench dynamics},\ }\href {https://journals.aps.org/prresearch/pdf/10.1103/PhysRevResearch.6.043118} {\bibfield  {journal} {\bibinfo  {journal} {Phys. Rev. Res.}\ }\textbf {\bibinfo {volume} {6}},\ \bibinfo {pages} {043118} (\bibinfo {year} {2024}{\natexlab{b}})}\BibitemShut {NoStop}%
\bibitem [{\citenamefont {Mele}(2024)}]{mele2024introduction}%
  \BibitemOpen
  \bibfield  {author} {\bibinfo {author} {\bibfnamefont {A.~A.}\ \bibnamefont {Mele}},\ }\bibfield  {title} {\bibinfo {title} {Introduction to haar measure tools in quantum information: A beginner's tutorial},\ }\href {https://quantum-journal.org/papers/q-2024-05-08-1340/pdf/} {\bibfield  {journal} {\bibinfo  {journal} {Quantum}\ }\textbf {\bibinfo {volume} {8}},\ \bibinfo {pages} {1340} (\bibinfo {year} {2024})}\BibitemShut {NoStop}%
\bibitem [{\citenamefont {Bremner}\ \emph {et~al.}(2016)\citenamefont {Bremner}, \citenamefont {Montanaro},\ and\ \citenamefont {Shepherd}}]{bremner2016average}%
  \BibitemOpen
  \bibfield  {author} {\bibinfo {author} {\bibfnamefont {M.~J.}\ \bibnamefont {Bremner}}, \bibinfo {author} {\bibfnamefont {A.}~\bibnamefont {Montanaro}},\ and\ \bibinfo {author} {\bibfnamefont {D.~J.}\ \bibnamefont {Shepherd}},\ }\bibfield  {title} {\bibinfo {title} {Average-case complexity versus approximate simulation of commuting quantum computations},\ }\href {https://journals.aps.org/prl/abstract/10.1103/PhysRevLett.117.080501} {\bibfield  {journal} {\bibinfo  {journal} {Phys. Rev. Lett.}\ }\textbf {\bibinfo {volume} {117}},\ \bibinfo {pages} {080501} (\bibinfo {year} {2016})}\BibitemShut {NoStop}%
\bibitem [{\citenamefont {Zhou}\ \emph {et~al.}(2019)\citenamefont {Zhou}, \citenamefont {Zhao}, \citenamefont {Yuan},\ and\ \citenamefont {Ma}}]{zhou2019detecting}%
  \BibitemOpen
  \bibfield  {author} {\bibinfo {author} {\bibfnamefont {Y.}~\bibnamefont {Zhou}}, \bibinfo {author} {\bibfnamefont {Q.}~\bibnamefont {Zhao}}, \bibinfo {author} {\bibfnamefont {X.}~\bibnamefont {Yuan}},\ and\ \bibinfo {author} {\bibfnamefont {X.}~\bibnamefont {Ma}},\ }\bibfield  {title} {\bibinfo {title} {Detecting multipartite entanglement structure with minimal resources},\ }\href {https://www.nature.com/articles/s41534-019-0200-9.pdf} {\bibfield  {journal} {\bibinfo  {journal} {npj Quant. Inf.}\ }\textbf {\bibinfo {volume} {5}},\ \bibinfo {pages} {83} (\bibinfo {year} {2019})}\BibitemShut {NoStop}%
\bibitem [{\citenamefont {Gluza}\ \emph {et~al.}(2020)\citenamefont {Gluza}, \citenamefont {Schweigler}, \citenamefont {Rauer}, \citenamefont {Krumnow}, \citenamefont {Schmiedmayer},\ and\ \citenamefont {Eisert}}]{gluza2020quantum}%
  \BibitemOpen
  \bibfield  {author} {\bibinfo {author} {\bibfnamefont {M.}~\bibnamefont {Gluza}}, \bibinfo {author} {\bibfnamefont {T.}~\bibnamefont {Schweigler}}, \bibinfo {author} {\bibfnamefont {B.}~\bibnamefont {Rauer}}, \bibinfo {author} {\bibfnamefont {C.}~\bibnamefont {Krumnow}}, \bibinfo {author} {\bibfnamefont {J.}~\bibnamefont {Schmiedmayer}},\ and\ \bibinfo {author} {\bibfnamefont {J.}~\bibnamefont {Eisert}},\ }\bibfield  {title} {\bibinfo {title} {Quantum read-out for cold atomic quantum simulators},\ }\href {https://www.nature.com/articles/s42005-019-0273-y.pdf} {\bibfield  {journal} {\bibinfo  {journal} {Comm. Phys.}\ }\textbf {\bibinfo {volume} {3}},\ \bibinfo {pages} {12} (\bibinfo {year} {2020})}\BibitemShut {NoStop}%
\bibitem [{\citenamefont {Lotshaw}\ \emph {et~al.}(2023)\citenamefont {Lotshaw}, \citenamefont {Battles}, \citenamefont {Gard}, \citenamefont {Buchs}, \citenamefont {Humble},\ and\ \citenamefont {Herold}}]{lotshaw2023modeling}%
  \BibitemOpen
  \bibfield  {author} {\bibinfo {author} {\bibfnamefont {P.~C.}\ \bibnamefont {Lotshaw}}, \bibinfo {author} {\bibfnamefont {K.~D.}\ \bibnamefont {Battles}}, \bibinfo {author} {\bibfnamefont {B.}~\bibnamefont {Gard}}, \bibinfo {author} {\bibfnamefont {G.}~\bibnamefont {Buchs}}, \bibinfo {author} {\bibfnamefont {T.~S.}\ \bibnamefont {Humble}},\ and\ \bibinfo {author} {\bibfnamefont {C.~D.}\ \bibnamefont {Herold}},\ }\bibfield  {title} {\bibinfo {title} {{Modeling noise in global M{\o}lmer-S{\o}rensen interactions applied to quantum approximate optimization}},\ }\href {https://arxiv.org/pdf/2211.00133} {\bibfield  {journal} {\bibinfo  {journal} {Phys. Rev. A}\ }\textbf {\bibinfo {volume} {107}},\ \bibinfo {pages} {062406} (\bibinfo {year} {2023})}\BibitemShut {NoStop}%
\bibitem [{\citenamefont {Cong}\ \emph {et~al.}(2022)\citenamefont {Cong}, \citenamefont {Levine}, \citenamefont {Keesling}, \citenamefont {Bluvstein}, \citenamefont {Wang},\ and\ \citenamefont {Lukin}}]{cong2022hardware}%
  \BibitemOpen
  \bibfield  {author} {\bibinfo {author} {\bibfnamefont {I.}~\bibnamefont {Cong}}, \bibinfo {author} {\bibfnamefont {H.}~\bibnamefont {Levine}}, \bibinfo {author} {\bibfnamefont {A.}~\bibnamefont {Keesling}}, \bibinfo {author} {\bibfnamefont {D.}~\bibnamefont {Bluvstein}}, \bibinfo {author} {\bibfnamefont {S.-T.}\ \bibnamefont {Wang}},\ and\ \bibinfo {author} {\bibfnamefont {M.~D.}\ \bibnamefont {Lukin}},\ }\bibfield  {title} {\bibinfo {title} {{Hardware-efficient, fault-tolerant quantum computation with Rydberg atoms}},\ }\href {https://journals.aps.org/prx/pdf/10.1103/PhysRevX.12.021049} {\bibfield  {journal} {\bibinfo  {journal} {Phys. Rev. X}\ }\textbf {\bibinfo {volume} {12}},\ \bibinfo {pages} {021049} (\bibinfo {year} {2022})}\BibitemShut {NoStop}%
\bibitem [{\citenamefont {Cai}\ \emph {et~al.}(2023)\citenamefont {Cai}, \citenamefont {Babbush}, \citenamefont {Benjamin}, \citenamefont {Endo}, \citenamefont {Huggins}, \citenamefont {Li}, \citenamefont {McClean},\ and\ \citenamefont {O’Brien}}]{cai2023quantum}%
  \BibitemOpen
  \bibfield  {author} {\bibinfo {author} {\bibfnamefont {Z.}~\bibnamefont {Cai}}, \bibinfo {author} {\bibfnamefont {R.}~\bibnamefont {Babbush}}, \bibinfo {author} {\bibfnamefont {S.~C.}\ \bibnamefont {Benjamin}}, \bibinfo {author} {\bibfnamefont {S.}~\bibnamefont {Endo}}, \bibinfo {author} {\bibfnamefont {W.~J.}\ \bibnamefont {Huggins}}, \bibinfo {author} {\bibfnamefont {Y.}~\bibnamefont {Li}}, \bibinfo {author} {\bibfnamefont {J.~R.}\ \bibnamefont {McClean}},\ and\ \bibinfo {author} {\bibfnamefont {T.~E.}\ \bibnamefont {O’Brien}},\ }\bibfield  {title} {\bibinfo {title} {Quantum error mitigation},\ }\href {https://journals.aps.org/rmp/abstract/10.1103/RevModPhys.95.045005} {\bibfield  {journal} {\bibinfo  {journal} {Rev. Mod. Phys.}\ }\textbf {\bibinfo {volume} {95}},\ \bibinfo {pages} {045005} (\bibinfo {year} {2023})}\BibitemShut {NoStop}%
\bibitem [{\citenamefont {Aaronson}\ and\ \citenamefont {Gottesman}(2004)}]{aaronson2004improved}%
  \BibitemOpen
  \bibfield  {author} {\bibinfo {author} {\bibfnamefont {S.}~\bibnamefont {Aaronson}}\ and\ \bibinfo {author} {\bibfnamefont {D.}~\bibnamefont {Gottesman}},\ }\bibfield  {title} {\bibinfo {title} {Improved simulation of stabilizer circuits},\ }\href {https://arxiv.org/pdf/quant-ph/0406196.pdf} {\bibfield  {journal} {\bibinfo  {journal} {Phys. Rev. A}\ }\textbf {\bibinfo {volume} {70}},\ \bibinfo {pages} {052328} (\bibinfo {year} {2004})}\BibitemShut {NoStop}%
\bibitem [{\citenamefont {Wallman}\ and\ \citenamefont {Emerson}(2016)}]{wallman2016noise}%
  \BibitemOpen
  \bibfield  {author} {\bibinfo {author} {\bibfnamefont {J.~J.}\ \bibnamefont {Wallman}}\ and\ \bibinfo {author} {\bibfnamefont {J.}~\bibnamefont {Emerson}},\ }\bibfield  {title} {\bibinfo {title} {Noise tailoring for scalable quantum computation via randomized compiling},\ }\href {https://journals.aps.org/pra/abstract/10.1103/PhysRevA.94.052325} {\bibfield  {journal} {\bibinfo  {journal} {Phys. Rev. A}\ }\textbf {\bibinfo {volume} {94}},\ \bibinfo {pages} {052325} (\bibinfo {year} {2016})}\BibitemShut {NoStop}%
\bibitem [{\citenamefont {Blume-Kohout}\ \emph {et~al.}(2017)\citenamefont {Blume-Kohout}, \citenamefont {Gamble}, \citenamefont {Nielsen}, \citenamefont {Mizrahi}, \citenamefont {Sterk},\ and\ \citenamefont {Maunz}}]{blume2017demonstration}%
  \BibitemOpen
  \bibfield  {author} {\bibinfo {author} {\bibfnamefont {R.}~\bibnamefont {Blume-Kohout}}, \bibinfo {author} {\bibfnamefont {J.~K.}\ \bibnamefont {Gamble}}, \bibinfo {author} {\bibfnamefont {E.}~\bibnamefont {Nielsen}}, \bibinfo {author} {\bibfnamefont {J.}~\bibnamefont {Mizrahi}}, \bibinfo {author} {\bibfnamefont {J.~D.}\ \bibnamefont {Sterk}},\ and\ \bibinfo {author} {\bibfnamefont {P.}~\bibnamefont {Maunz}},\ }\bibfield  {title} {\bibinfo {title} {Demonstration of qubit operations below a rigorous fault tolerance threshold with gate set tomography},\ }\href {https://www.nature.com/articles/ncomms14485} {\bibfield  {journal} {\bibinfo  {journal} {Nature Comm.}\ }\textbf {\bibinfo {volume} {8}},\ \bibinfo {pages} {14485} (\bibinfo {year} {2017})}\BibitemShut {NoStop}%
\bibitem [{\citenamefont {Erhard}\ \emph {et~al.}(2019)\citenamefont {Erhard}, \citenamefont {Wallman}, \citenamefont {Postler}, \citenamefont {Meth}, \citenamefont {Stricker}, \citenamefont {Martinez}, \citenamefont {Schindler}, \citenamefont {Monz}, \citenamefont {Emerson},\ and\ \citenamefont {Blatt}}]{erhard2019characterizing}%
  \BibitemOpen
  \bibfield  {author} {\bibinfo {author} {\bibfnamefont {A.}~\bibnamefont {Erhard}}, \bibinfo {author} {\bibfnamefont {J.~J.}\ \bibnamefont {Wallman}}, \bibinfo {author} {\bibfnamefont {L.}~\bibnamefont {Postler}}, \bibinfo {author} {\bibfnamefont {M.}~\bibnamefont {Meth}}, \bibinfo {author} {\bibfnamefont {R.}~\bibnamefont {Stricker}}, \bibinfo {author} {\bibfnamefont {E.~A.}\ \bibnamefont {Martinez}}, \bibinfo {author} {\bibfnamefont {P.}~\bibnamefont {Schindler}}, \bibinfo {author} {\bibfnamefont {T.}~\bibnamefont {Monz}}, \bibinfo {author} {\bibfnamefont {J.}~\bibnamefont {Emerson}},\ and\ \bibinfo {author} {\bibfnamefont {R.}~\bibnamefont {Blatt}},\ }\bibfield  {title} {\bibinfo {title} {Characterizing large-scale quantum computers via cycle benchmarking},\ }\href {https://www.nature.com/articles/s41467-019-13068-7} {\bibfield  {journal} {\bibinfo  {journal} {Nature Comm.}\ }\textbf {\bibinfo {volume} {10}},\ \bibinfo {pages} {5347} (\bibinfo {year} {2019})}\BibitemShut {NoStop}%
\bibitem [{\citenamefont {Magesan}\ \emph {et~al.}(2012)\citenamefont {Magesan}, \citenamefont {Gambetta}, \citenamefont {C{\'o}rcoles},\ and\ \citenamefont {Chow}}]{magesan2012efficient}%
  \BibitemOpen
  \bibfield  {author} {\bibinfo {author} {\bibfnamefont {E.}~\bibnamefont {Magesan}}, \bibinfo {author} {\bibfnamefont {J.~M.}\ \bibnamefont {Gambetta}}, \bibinfo {author} {\bibfnamefont {A.}~\bibnamefont {C{\'o}rcoles}},\ and\ \bibinfo {author} {\bibfnamefont {J.~M.}\ \bibnamefont {Chow}},\ }\bibfield  {title} {\bibinfo {title} {Efficient measurement of quantum gate error by interleaved randomized benchmarking},\ }\href {https://doi.org/10.1103/PhysRevLett.109.080505} {\bibfield  {journal} {\bibinfo  {journal} {Phys. Rev. Lett.}\ }\textbf {\bibinfo {volume} {109}},\ \bibinfo {pages} {080505} (\bibinfo {year} {2012})}\BibitemShut {NoStop}%
\bibitem [{\citenamefont {Bluvstein}\ \emph {et~al.}(2024)\citenamefont {Bluvstein}, \citenamefont {Evered}, \citenamefont {Geim}, \citenamefont {Li}, \citenamefont {Zhou}, \citenamefont {Manovitz}, \citenamefont {Ebadi}, \citenamefont {Cain}, \citenamefont {Kalinowski}, \citenamefont {Hangleiter} \emph {et~al.}}]{bluvstein2024logical}%
  \BibitemOpen
  \bibfield  {author} {\bibinfo {author} {\bibfnamefont {D.}~\bibnamefont {Bluvstein}}, \bibinfo {author} {\bibfnamefont {S.~J.}\ \bibnamefont {Evered}}, \bibinfo {author} {\bibfnamefont {A.~A.}\ \bibnamefont {Geim}}, \bibinfo {author} {\bibfnamefont {S.~H.}\ \bibnamefont {Li}}, \bibinfo {author} {\bibfnamefont {H.}~\bibnamefont {Zhou}}, \bibinfo {author} {\bibfnamefont {T.}~\bibnamefont {Manovitz}}, \bibinfo {author} {\bibfnamefont {S.}~\bibnamefont {Ebadi}}, \bibinfo {author} {\bibfnamefont {M.}~\bibnamefont {Cain}}, \bibinfo {author} {\bibfnamefont {M.}~\bibnamefont {Kalinowski}}, \bibinfo {author} {\bibfnamefont {D.}~\bibnamefont {Hangleiter}}, \emph {et~al.},\ }\bibfield  {title} {\bibinfo {title} {Logical quantum processor based on reconfigurable atom arrays},\ }\href {https://www.nature.com/articles/s41586-023-06927-3_reference.pdf} {\bibfield  {journal} {\bibinfo  {journal} {Nature}\ }\textbf {\bibinfo {volume} {626}},\ \bibinfo {pages} {58} (\bibinfo {year} {2024})}\BibitemShut {NoStop}%
\bibitem [{\citenamefont {Gao}\ \emph {et~al.}(2025)\citenamefont {Gao}, \citenamefont {Fan}, \citenamefont {Zha}, \citenamefont {Bei}, \citenamefont {Cai}, \citenamefont {Cai}, \citenamefont {Cao}, \citenamefont {Chen}, \citenamefont {Chen}, \citenamefont {Chen} \emph {et~al.}}]{gao2025establishing}%
  \BibitemOpen
  \bibfield  {author} {\bibinfo {author} {\bibfnamefont {D.}~\bibnamefont {Gao}}, \bibinfo {author} {\bibfnamefont {D.}~\bibnamefont {Fan}}, \bibinfo {author} {\bibfnamefont {C.}~\bibnamefont {Zha}}, \bibinfo {author} {\bibfnamefont {J.}~\bibnamefont {Bei}}, \bibinfo {author} {\bibfnamefont {G.}~\bibnamefont {Cai}}, \bibinfo {author} {\bibfnamefont {J.}~\bibnamefont {Cai}}, \bibinfo {author} {\bibfnamefont {S.}~\bibnamefont {Cao}}, \bibinfo {author} {\bibfnamefont {F.}~\bibnamefont {Chen}}, \bibinfo {author} {\bibfnamefont {J.}~\bibnamefont {Chen}}, \bibinfo {author} {\bibfnamefont {K.}~\bibnamefont {Chen}}, \emph {et~al.},\ }\bibfield  {title} {\bibinfo {title} {{Establishing a new benchmark in quantum computational advantage with 105-qubit Zuchongzhi 3.0 processor}},\ }\href {https://journals.aps.org/prl/abstract/10.1103/PhysRevLett.134.090601} {\bibfield  {journal} {\bibinfo  {journal} {Phys. Rev. Lett.}\ }\textbf {\bibinfo {volume} {134}},\ \bibinfo {pages} {090601} (\bibinfo {year} {2025})}\BibitemShut
  {NoStop}%
\bibitem [{\citenamefont {Zhang}\ \emph {et~al.}(2023)\citenamefont {Zhang}, \citenamefont {He}, \citenamefont {Sun}, \citenamefont {Zheng}, \citenamefont {Liu}, \citenamefont {Luo}, \citenamefont {Wang}, \citenamefont {Zhu}, \citenamefont {Qiu}, \citenamefont {Shen} \emph {et~al.}}]{zhang2023scalable}%
  \BibitemOpen
  \bibfield  {author} {\bibinfo {author} {\bibfnamefont {W.-Y.}\ \bibnamefont {Zhang}}, \bibinfo {author} {\bibfnamefont {M.-G.}\ \bibnamefont {He}}, \bibinfo {author} {\bibfnamefont {H.}~\bibnamefont {Sun}}, \bibinfo {author} {\bibfnamefont {Y.-G.}\ \bibnamefont {Zheng}}, \bibinfo {author} {\bibfnamefont {Y.}~\bibnamefont {Liu}}, \bibinfo {author} {\bibfnamefont {A.}~\bibnamefont {Luo}}, \bibinfo {author} {\bibfnamefont {H.-Y.}\ \bibnamefont {Wang}}, \bibinfo {author} {\bibfnamefont {Z.-H.}\ \bibnamefont {Zhu}}, \bibinfo {author} {\bibfnamefont {P.-Y.}\ \bibnamefont {Qiu}}, \bibinfo {author} {\bibfnamefont {Y.-C.}\ \bibnamefont {Shen}}, \emph {et~al.},\ }\bibfield  {title} {\bibinfo {title} {Scalable multipartite entanglement created by spin exchange in an optical lattice},\ }\href {https://journals.aps.org/prl/abstract/10.1103/PhysRevLett.131.073401} {\bibfield  {journal} {\bibinfo  {journal} {Phys. Rev. Lett.}\ }\textbf {\bibinfo {volume} {131}},\ \bibinfo {pages} {073401} (\bibinfo {year}
  {2023})}\BibitemShut {NoStop}%
\bibitem [{\citenamefont {Dalzell}\ \emph {et~al.}(2024)\citenamefont {Dalzell}, \citenamefont {Hunter-Jones},\ and\ \citenamefont {Brand{\~a}o}}]{dalzell2024random}%
  \BibitemOpen
  \bibfield  {author} {\bibinfo {author} {\bibfnamefont {A.~M.}\ \bibnamefont {Dalzell}}, \bibinfo {author} {\bibfnamefont {N.}~\bibnamefont {Hunter-Jones}},\ and\ \bibinfo {author} {\bibfnamefont {F.~G.}\ \bibnamefont {Brand{\~a}o}},\ }\bibfield  {title} {\bibinfo {title} {Random quantum circuits transform local noise into global white noise},\ }\href {https://link.springer.com/article/10.1007/s00220-024-04958-z} {\bibfield  {journal} {\bibinfo  {journal} {Comm. Math. Phys.}\ }\textbf {\bibinfo {volume} {405}},\ \bibinfo {pages} {78} (\bibinfo {year} {2024})}\BibitemShut {NoStop}%
\bibitem [{\citenamefont {Hinsche}\ \emph {et~al.}(2025)\citenamefont {Hinsche}, \citenamefont {Ioannou}, \citenamefont {Jerbi}, \citenamefont {Leone}, \citenamefont {Eisert},\ and\ \citenamefont {Carrasco}}]{CrossDevice}%
  \BibitemOpen
  \bibfield  {author} {\bibinfo {author} {\bibfnamefont {M.}~\bibnamefont {Hinsche}}, \bibinfo {author} {\bibfnamefont {M.}~\bibnamefont {Ioannou}}, \bibinfo {author} {\bibfnamefont {S.}~\bibnamefont {Jerbi}}, \bibinfo {author} {\bibfnamefont {L.}~\bibnamefont {Leone}}, \bibinfo {author} {\bibfnamefont {J.}~\bibnamefont {Eisert}},\ and\ \bibinfo {author} {\bibfnamefont {J.}~\bibnamefont {Carrasco}},\ }\bibfield  {title} {\bibinfo {title} {{Efficient distributed inner product estimation via Pauli sampling}},\ }\href {https://doi.org/10.48550/arXiv.2405.06544} {\bibfield  {journal} {\bibinfo  {journal} {PRX Quantum}\ }\textbf {\bibinfo {volume} {6}},\ \bibinfo {pages} {in press} (\bibinfo {year} {2025})}\BibitemShut {NoStop}%
\bibitem [{\citenamefont {Elben}\ \emph {et~al.}(2020)\citenamefont {Elben}, \citenamefont {Vermersch}, \citenamefont {Van~Bijnen}, \citenamefont {Kokail}, \citenamefont {Brydges}, \citenamefont {Maier}, \citenamefont {Joshi}, \citenamefont {Blatt}, \citenamefont {Roos},\ and\ \citenamefont {Zoller}}]{elben2020cross}%
  \BibitemOpen
  \bibfield  {author} {\bibinfo {author} {\bibfnamefont {A.}~\bibnamefont {Elben}}, \bibinfo {author} {\bibfnamefont {B.}~\bibnamefont {Vermersch}}, \bibinfo {author} {\bibfnamefont {R.}~\bibnamefont {Van~Bijnen}}, \bibinfo {author} {\bibfnamefont {C.}~\bibnamefont {Kokail}}, \bibinfo {author} {\bibfnamefont {T.}~\bibnamefont {Brydges}}, \bibinfo {author} {\bibfnamefont {C.}~\bibnamefont {Maier}}, \bibinfo {author} {\bibfnamefont {M.~K.}\ \bibnamefont {Joshi}}, \bibinfo {author} {\bibfnamefont {R.}~\bibnamefont {Blatt}}, \bibinfo {author} {\bibfnamefont {C.~F.}\ \bibnamefont {Roos}},\ and\ \bibinfo {author} {\bibfnamefont {P.}~\bibnamefont {Zoller}},\ }\bibfield  {title} {\bibinfo {title} {Cross-platform verification of intermediate scale quantum devices},\ }\href {https://journals.aps.org/prl/abstract/10.1103/PhysRevLett.124.010504} {\bibfield  {journal} {\bibinfo  {journal} {Phys. Rev. Lett.}\ }\textbf {\bibinfo {volume} {124}},\ \bibinfo {pages} {010504} (\bibinfo {year} {2020})}\BibitemShut {NoStop}%
\bibitem [{\citenamefont {Gullans}\ \emph {et~al.}(2021)\citenamefont {Gullans}, \citenamefont {Krastanov}, \citenamefont {Huse}, \citenamefont {Jiang},\ and\ \citenamefont {Flammia}}]{gullans2021quantum}%
  \BibitemOpen
  \bibfield  {author} {\bibinfo {author} {\bibfnamefont {M.~J.}\ \bibnamefont {Gullans}}, \bibinfo {author} {\bibfnamefont {S.}~\bibnamefont {Krastanov}}, \bibinfo {author} {\bibfnamefont {D.~A.}\ \bibnamefont {Huse}}, \bibinfo {author} {\bibfnamefont {L.}~\bibnamefont {Jiang}},\ and\ \bibinfo {author} {\bibfnamefont {S.~T.}\ \bibnamefont {Flammia}},\ }\bibfield  {title} {\bibinfo {title} {Quantum coding with low-depth random circuits},\ }\href {https://journals.aps.org/prx/abstract/10.1103/PhysRevX.11.031066} {\bibfield  {journal} {\bibinfo  {journal} {Phys. Rev. X}\ }\textbf {\bibinfo {volume} {11}},\ \bibinfo {pages} {031066} (\bibinfo {year} {2021})}\BibitemShut {NoStop}%
\bibitem [{\citenamefont {Eisert}\ and\ \citenamefont {Preskill}(2025)}]{MindTheGaps}%
  \BibitemOpen
  \bibfield  {author} {\bibinfo {author} {\bibfnamefont {J.}~\bibnamefont {Eisert}}\ and\ \bibinfo {author} {\bibfnamefont {J.}~\bibnamefont {Preskill}},\ }\bibfield  {title} {\bibinfo {title} {Mind the gaps: The fraught road to quantum advantage},\ }\href {https://arxiv.org/abs/2510.19928} {\bibfield  {journal} {\bibinfo  {journal} {arXiv:2510.19928}\ } (\bibinfo {year} {2025})}\BibitemShut {NoStop}%
\bibitem [{\citenamefont {Wu}\ \emph {et~al.}(2022)\citenamefont {Wu}, \citenamefont {Kolkowitz}, \citenamefont {Puri},\ and\ \citenamefont {Thompson}}]{wu2022erasure}%
  \BibitemOpen
  \bibfield  {author} {\bibinfo {author} {\bibfnamefont {Y.}~\bibnamefont {Wu}}, \bibinfo {author} {\bibfnamefont {S.}~\bibnamefont {Kolkowitz}}, \bibinfo {author} {\bibfnamefont {S.}~\bibnamefont {Puri}},\ and\ \bibinfo {author} {\bibfnamefont {J.~D.}\ \bibnamefont {Thompson}},\ }\bibfield  {title} {\bibinfo {title} {Erasure conversion for fault-tolerant quantum computing in alkaline earth rydberg atom arrays},\ }\href {https://www.nature.com/articles/s41467-022-32094-6.pdf} {\bibfield  {journal} {\bibinfo  {journal} {Nature Comm.}\ }\textbf {\bibinfo {volume} {13}},\ \bibinfo {pages} {4657} (\bibinfo {year} {2022})}\BibitemShut {NoStop}%
\bibitem [{RPS()}]{RPSData}%
  \BibitemOpen
  \href@noop {} {}\bibinfo {howpublished} {Available at \href{https://github.com/Yomure/RPS-Data/}{https://github.com/Yomure/RPS-Data/}}\BibitemShut {NoStop}%
\bibitem [{\citenamefont {Roffe}\ \emph {et~al.}(2023)\citenamefont {Roffe}, \citenamefont {Cohen}, \citenamefont {Quintavalle}, \citenamefont {Chandra},\ and\ \citenamefont {Campbell}}]{roffe2023bias}%
  \BibitemOpen
  \bibfield  {author} {\bibinfo {author} {\bibfnamefont {J.}~\bibnamefont {Roffe}}, \bibinfo {author} {\bibfnamefont {L.~Z.}\ \bibnamefont {Cohen}}, \bibinfo {author} {\bibfnamefont {A.~O.}\ \bibnamefont {Quintavalle}}, \bibinfo {author} {\bibfnamefont {D.}~\bibnamefont {Chandra}},\ and\ \bibinfo {author} {\bibfnamefont {E.~T.}\ \bibnamefont {Campbell}},\ }\bibfield  {title} {\bibinfo {title} {{Bias-tailored quantum LDPC codes}},\ }\href {https://quantum-journal.org/papers/q-2023-05-15-1005/} {\bibfield  {journal} {\bibinfo  {journal} {Quantum}\ }\textbf {\bibinfo {volume} {7}},\ \bibinfo {pages} {1005} (\bibinfo {year} {2023})}\BibitemShut {NoStop}%
\bibitem [{\citenamefont {Bonilla~Ataides}\ \emph {et~al.}(2021)\citenamefont {Bonilla~Ataides}, \citenamefont {Tuckett}, \citenamefont {Bartlett}, \citenamefont {Flammia},\ and\ \citenamefont {Brown}}]{bonilla2021xzzx}%
  \BibitemOpen
  \bibfield  {author} {\bibinfo {author} {\bibfnamefont {J.~P.}\ \bibnamefont {Bonilla~Ataides}}, \bibinfo {author} {\bibfnamefont {D.~K.}\ \bibnamefont {Tuckett}}, \bibinfo {author} {\bibfnamefont {S.~D.}\ \bibnamefont {Bartlett}}, \bibinfo {author} {\bibfnamefont {S.~T.}\ \bibnamefont {Flammia}},\ and\ \bibinfo {author} {\bibfnamefont {B.~J.}\ \bibnamefont {Brown}},\ }\bibfield  {title} {\bibinfo {title} {{The XZZX surface code}},\ }\href {https://www.nature.com/articles/s41467-021-22274-1.pdf} {\bibfield  {journal} {\bibinfo  {journal} {Nature Comm.}\ }\textbf {\bibinfo {volume} {12}},\ \bibinfo {pages} {2172} (\bibinfo {year} {2021})}\BibitemShut {NoStop}%
\bibitem [{\citenamefont {Maslov}\ and\ \citenamefont {Nam}(2018)}]{maslov2018use}%
  \BibitemOpen
  \bibfield  {author} {\bibinfo {author} {\bibfnamefont {D.}~\bibnamefont {Maslov}}\ and\ \bibinfo {author} {\bibfnamefont {Y.}~\bibnamefont {Nam}},\ }\bibfield  {title} {\bibinfo {title} {Use of global interactions in efficient quantum circuit constructions},\ }\href {https://arxiv.org/pdf/1707.06356} {\bibfield  {journal} {\bibinfo  {journal} {New J. Phys.}\ }\textbf {\bibinfo {volume} {20}},\ \bibinfo {pages} {033018} (\bibinfo {year} {2018})}\BibitemShut {NoStop}%
\bibitem [{\citenamefont {Bravyi}\ \emph {et~al.}(2022)\citenamefont {Bravyi}, \citenamefont {Maslov},\ and\ \citenamefont {Nam}}]{bravyi2022constant}%
  \BibitemOpen
  \bibfield  {author} {\bibinfo {author} {\bibfnamefont {S.}~\bibnamefont {Bravyi}}, \bibinfo {author} {\bibfnamefont {D.}~\bibnamefont {Maslov}},\ and\ \bibinfo {author} {\bibfnamefont {Y.}~\bibnamefont {Nam}},\ }\bibfield  {title} {\bibinfo {title} {{Constant-cost implementations of Clifford operations and multiply-controlled gates using global interactions}},\ }\href {https://arxiv.org/pdf/2207.08691} {\bibfield  {journal} {\bibinfo  {journal} {Phys. Rev. Lett.}\ }\textbf {\bibinfo {volume} {129}},\ \bibinfo {pages} {230501} (\bibinfo {year} {2022})}\BibitemShut {NoStop}%
\bibitem [{\citenamefont {Zhang}\ \emph {et~al.}(2025)\citenamefont {Zhang}, \citenamefont {Tang}, \citenamefont {Liu}, \citenamefont {Yuan}, \citenamefont {Shen}, \citenamefont {Wu},\ and\ \citenamefont {Zhang}}]{zhang2025robust}%
  \BibitemOpen
  \bibfield  {author} {\bibinfo {author} {\bibfnamefont {W.}~\bibnamefont {Zhang}}, \bibinfo {author} {\bibfnamefont {G.}~\bibnamefont {Tang}}, \bibinfo {author} {\bibfnamefont {K.}~\bibnamefont {Liu}}, \bibinfo {author} {\bibfnamefont {X.}~\bibnamefont {Yuan}}, \bibinfo {author} {\bibfnamefont {Y.}~\bibnamefont {Shen}}, \bibinfo {author} {\bibfnamefont {Y.}~\bibnamefont {Wu}},\ and\ \bibinfo {author} {\bibfnamefont {X.-M.}\ \bibnamefont {Zhang}},\ }\bibfield  {title} {\bibinfo {title} {{Robust M{\o}lmer-S{\o}rensen gate against symmetric and asymmetric errors}},\ }\href {https://doi.org/10.1088/2058-9565/adce29} {\bibfield  {journal} {\bibinfo  {journal} {Quant. Sc. Tech.}\ }\textbf {\bibinfo {volume} {10}},\ \bibinfo {pages} {035009} (\bibinfo {year} {2025})}\BibitemShut {NoStop}%
\bibitem [{\citenamefont {Mart{\'\i}nez-Garc{\'\i}a}\ \emph {et~al.}(2022)\citenamefont {Mart{\'\i}nez-Garc{\'\i}a}, \citenamefont {Gerster}, \citenamefont {Vodola}, \citenamefont {Hrmo}, \citenamefont {Monz}, \citenamefont {Schindler},\ and\ \citenamefont {M{\"u}ller}}]{martinez2022analytical}%
  \BibitemOpen
  \bibfield  {author} {\bibinfo {author} {\bibfnamefont {F.}~\bibnamefont {Mart{\'\i}nez-Garc{\'\i}a}}, \bibinfo {author} {\bibfnamefont {L.}~\bibnamefont {Gerster}}, \bibinfo {author} {\bibfnamefont {D.}~\bibnamefont {Vodola}}, \bibinfo {author} {\bibfnamefont {P.}~\bibnamefont {Hrmo}}, \bibinfo {author} {\bibfnamefont {T.}~\bibnamefont {Monz}}, \bibinfo {author} {\bibfnamefont {P.}~\bibnamefont {Schindler}},\ and\ \bibinfo {author} {\bibfnamefont {M.}~\bibnamefont {M{\"u}ller}},\ }\bibfield  {title} {\bibinfo {title} {{Analytical and experimental study of center-line miscalibrations in M{\o}lmer-S{\o}rensen gates}},\ }\href {https://link.aps.org/accepted/10.1103/PhysRevA.105.032437} {\bibfield  {journal} {\bibinfo  {journal} {Phys. Rev. A}\ }\textbf {\bibinfo {volume} {105}},\ \bibinfo {pages} {032437} (\bibinfo {year} {2022})}\BibitemShut {NoStop}%
\bibitem [{\citenamefont {Manovitz}\ \emph {et~al.}(2017)\citenamefont {Manovitz}, \citenamefont {Rotem}, \citenamefont {Shaniv}, \citenamefont {Cohen}, \citenamefont {Shapira}, \citenamefont {Akerman}, \citenamefont {Retzker},\ and\ \citenamefont {Ozeri}}]{manovitz2017fast}%
  \BibitemOpen
  \bibfield  {author} {\bibinfo {author} {\bibfnamefont {T.}~\bibnamefont {Manovitz}}, \bibinfo {author} {\bibfnamefont {A.}~\bibnamefont {Rotem}}, \bibinfo {author} {\bibfnamefont {R.}~\bibnamefont {Shaniv}}, \bibinfo {author} {\bibfnamefont {I.}~\bibnamefont {Cohen}}, \bibinfo {author} {\bibfnamefont {Y.}~\bibnamefont {Shapira}}, \bibinfo {author} {\bibfnamefont {N.}~\bibnamefont {Akerman}}, \bibinfo {author} {\bibfnamefont {A.}~\bibnamefont {Retzker}},\ and\ \bibinfo {author} {\bibfnamefont {R.}~\bibnamefont {Ozeri}},\ }\bibfield  {title} {\bibinfo {title} {{Fast dynamical decoupling of the M{\o}lmer-S{\o}rensen entangling gate}},\ }\href {https://link.aps.org/accepted/10.1103/PhysRevLett.119.220505} {\bibfield  {journal} {\bibinfo  {journal} {Phys. Rev. Lett.}\ }\textbf {\bibinfo {volume} {119}},\ \bibinfo {pages} {220505} (\bibinfo {year} {2017})}\BibitemShut {NoStop}%
\bibitem [{\citenamefont {Sahay}\ \emph {et~al.}(2023)\citenamefont {Sahay}, \citenamefont {Jin}, \citenamefont {Claes}, \citenamefont {Thompson},\ and\ \citenamefont {Puri}}]{sahay2023high}%
  \BibitemOpen
  \bibfield  {author} {\bibinfo {author} {\bibfnamefont {K.}~\bibnamefont {Sahay}}, \bibinfo {author} {\bibfnamefont {J.}~\bibnamefont {Jin}}, \bibinfo {author} {\bibfnamefont {J.}~\bibnamefont {Claes}}, \bibinfo {author} {\bibfnamefont {J.~D.}\ \bibnamefont {Thompson}},\ and\ \bibinfo {author} {\bibfnamefont {S.}~\bibnamefont {Puri}},\ }\bibfield  {title} {\bibinfo {title} {High-threshold codes for neutral-atom qubits with biased erasure errors},\ }\href {https://journals.aps.org/prx/pdf/10.1103/PhysRevX.13.041013} {\bibfield  {journal} {\bibinfo  {journal} {Phys. Rev. X}\ }\textbf {\bibinfo {volume} {13}},\ \bibinfo {pages} {041013} (\bibinfo {year} {2023})}\BibitemShut {NoStop}%
\bibitem [{\citenamefont {Garc{\'\i}a}\ \emph {et~al.}(2017)\citenamefont {Garc{\'\i}a}, \citenamefont {Markov},\ and\ \citenamefont {Cross}}]{garcia2017geometry}%
  \BibitemOpen
  \bibfield  {author} {\bibinfo {author} {\bibfnamefont {H.~J.}\ \bibnamefont {Garc{\'\i}a}}, \bibinfo {author} {\bibfnamefont {I.~L.}\ \bibnamefont {Markov}},\ and\ \bibinfo {author} {\bibfnamefont {A.~W.}\ \bibnamefont {Cross}},\ }\bibfield  {title} {\bibinfo {title} {On the geometry of stabilizer states},\ }\href {https://arxiv.org/pdf/1711.07848} {\bibfield  {journal} {\bibinfo  {journal} {arXiv:1711.07848}\ } (\bibinfo {year} {2017})}\BibitemShut {NoStop}%
\bibitem [{\citenamefont {Van~den Nest}\ \emph {et~al.}(2004)\citenamefont {Van~den Nest}, \citenamefont {Dehaene},\ and\ \citenamefont {De~Moor}}]{van2004graphical}%
  \BibitemOpen
  \bibfield  {author} {\bibinfo {author} {\bibfnamefont {M.}~\bibnamefont {Van~den Nest}}, \bibinfo {author} {\bibfnamefont {J.}~\bibnamefont {Dehaene}},\ and\ \bibinfo {author} {\bibfnamefont {B.}~\bibnamefont {De~Moor}},\ }\bibfield  {title} {\bibinfo {title} {{Graphical description of the action of local Clifford transformations on graph states}},\ }\href {https://journals.aps.org/pra/abstract/10.1103/PhysRevA.69.022316} {\bibfield  {journal} {\bibinfo  {journal} {Phys. Rev. A}\ }\textbf {\bibinfo {volume} {69}},\ \bibinfo {pages} {022316} (\bibinfo {year} {2004})}\BibitemShut {NoStop}%
\bibitem [{\citenamefont {Hein}\ \emph {et~al.}(2004)\citenamefont {Hein}, \citenamefont {Eisert},\ and\ \citenamefont {Briegel}}]{Graphs}%
  \BibitemOpen
  \bibfield  {author} {\bibinfo {author} {\bibfnamefont {M.}~\bibnamefont {Hein}}, \bibinfo {author} {\bibfnamefont {J.}~\bibnamefont {Eisert}},\ and\ \bibinfo {author} {\bibfnamefont {H.~J.}\ \bibnamefont {Briegel}},\ }\bibfield  {title} {\bibinfo {title} {Multi-particle entanglement in graph states},\ }\href {https://doi.org/10.1103/PhysRevA.69.062311} {\bibfield  {journal} {\bibinfo  {journal} {Phys. Rev. A}\ }\textbf {\bibinfo {volume} {69}},\ \bibinfo {pages} {062311} (\bibinfo {year} {2004})}\BibitemShut {NoStop}%
\bibitem [{\citenamefont {Greenbaum}(2015)}]{greenbaum2015introduction}%
  \BibitemOpen
  \bibfield  {author} {\bibinfo {author} {\bibfnamefont {D.}~\bibnamefont {Greenbaum}},\ }\bibfield  {title} {\bibinfo {title} {Introduction to quantum gate set tomography},\ }\href {https://arxiv.org/abs/1509.02921} {\bibfield  {journal} {\bibinfo  {journal} {arXiv preprint arXiv:1509.02921}\ } (\bibinfo {year} {2015})}\BibitemShut {NoStop}%
\bibitem [{\citenamefont {Wan}\ \emph {et~al.}(2023)\citenamefont {Wan}, \citenamefont {Huggins}, \citenamefont {Lee},\ and\ \citenamefont {Babbush}}]{MatchgateShadows}%
  \BibitemOpen
  \bibfield  {author} {\bibinfo {author} {\bibfnamefont {K.}~\bibnamefont {Wan}}, \bibinfo {author} {\bibfnamefont {W.~J.}\ \bibnamefont {Huggins}}, \bibinfo {author} {\bibfnamefont {J.}~\bibnamefont {Lee}},\ and\ \bibinfo {author} {\bibfnamefont {R.}~\bibnamefont {Babbush}},\ }\bibfield  {title} {\bibinfo {title} {Matchgate shadows for fermionic quantum simulation},\ }\href {https://doi.org/10.1007/s00220-023-04844-0} {\bibfield  {journal} {\bibinfo  {journal} {Comm. Math. Phys.}\ }\textbf {\bibinfo {volume} {404}},\ \bibinfo {pages} {629} (\bibinfo {year} {2023})}\BibitemShut {NoStop}%
\bibitem [{\citenamefont {Helsen}\ \emph {et~al.}(2022{\natexlab{b}})\citenamefont {Helsen}, \citenamefont {Nezami}, \citenamefont {Reagor},\ and\ \citenamefont {Walter}}]{MatchgateHelsen}%
  \BibitemOpen
  \bibfield  {author} {\bibinfo {author} {\bibfnamefont {J.}~\bibnamefont {Helsen}}, \bibinfo {author} {\bibfnamefont {S.}~\bibnamefont {Nezami}}, \bibinfo {author} {\bibfnamefont {M.}~\bibnamefont {Reagor}},\ and\ \bibinfo {author} {\bibfnamefont {M.}~\bibnamefont {Walter}},\ }\bibfield  {title} {\bibinfo {title} {Matchgate benchmarking: Scalable benchmarking of a continuous family of many-qubit gates},\ }\href {https://doi.org/10.22331/q-2022-02-21-657} {\bibfield  {journal} {\bibinfo  {journal} {Quantum}\ }\textbf {\bibinfo {volume} {6}},\ \bibinfo {pages} {657} (\bibinfo {year} {2022}{\natexlab{b}})}\BibitemShut {NoStop}%
\end{thebibliography}

\end{document}